\definecolor{dukeblue}{rgb}{0.0, 0.0, 0.61}
\definecolor{internationalkleinblue}{rgb}{0.0, 0.18, 0.65}
\definecolor{coolblack}{rgb}{0.0, 0.18, 0.39}
\definecolor{darkblue}{rgb}{0.0, 0.0, 0.55}
\newcommand{\leqnos}{\tagsleft@true\let\veqno\@@leqno}
\newcommand{\reqnos}{\tagsleft@false\let\veqno\@@eqno}
\newtheoremstyle{boldremark}
{\dimexpr\topsep/2\relax} 
{\dimexpr\topsep/2\relax} 
{}          
{}          
{\bfseries} 
{.}         
{.5em}      
{}          
\theoremstyle{plain}
\newtheorem{thm}{Theorem}[section] 
\newtheorem{lemma}[thm]{Lemma} 
\newtheorem{prop}[thm]{Proposition} 
\newtheorem{cor}[thm]{Corollary} 
\newtheorem{sass}{Standing Assumption}
\theoremstyle{definition}
\newtheorem{defn}[thm]{Definition} 
\numberwithin{equation}{section}
\theoremstyle{remark}
\theoremstyle{boldremark}
\newtheorem{rem}[thm]{Remark}
\newcommand{\R}{\mathds{R}}
\newcommand{\N}{\mathds{N}}
\DeclareMathOperator{\id}{id}
\DeclareMathOperator*{\sgn}{sgn}
\DeclareMathOperator*{\argmax}{argmax}
\renewcommand{\epsilon}{\varepsilon}
\renewcommand{\P}{\mathbb{P}}
\newcommand{\E}{\mathbb{E}}
\newcommand{\hC}{\hat{C}}
\newcommand{\hX}{\hat{X}}
\newcommand{\hQ}{\hat{Q}}
\newcommand{\hPhi}{\hat{\Phi}}
\newcommand{\NN}{\mathds{N}}
\newcommand{\OO}{\mathds{O}}
\newcommand{\PP}{\mathds{P}}
\newcommand{\RR}{\mathds{R}}
\newcommand{\bS}{\mathds{S}}
\newcommand{\UU}{\mathds{U}}
\newcommand{\VV}{\mathds{V}}
\newcommand{\cF}{\mathcal{F}}
\newcommand{\sA}{\mathscr{A}}
\newcommand{\sC}{\mathscr{C}}
\newcommand{\sP}{\mathscr{P}}
\newcommand{\sS}{\mathscr{S}}
\newcommand{\sO}{\mathcal O}
\newcommand{\hq}{q_M}
\newcommand{\hatm}{m_M}
\newcommand{\diff}{\mathrm{d}}
\newcommand{\dd}{\,\mathrm{d}}
\newcommand*{\EX}[2][]{\E^{#1}\left [ #2 \right ]}
\newcommand*{\cEX}[2][]{\E\left[ #2 \,\middle\vert\,\cF_{#1} \right]}
\newcommand*{\as}[1]{#1\text{-a.s.}}
\newcommand*{\ol}[1]{\overline{#1}}
\newcommand*{\ul}[1]{\underline{#1}}
\definecolor{pipurple}{RGB}{128,0,128}
\definecolor{vargreen}{RGB}{0,150,0}
\definecolor{varyellow}{RGB}{180,120,0}
\g@addto@macro\bfseries{\boldmath}
\begin{document}

\title{Portfolio Optimization under Transaction Costs with Recursive Preferences}
	
	\author{Martin Herdegen\thanks{University of Warwick, Department of Statistics, Coventry, CV4 7AL, UK. Email: m.herdegen@warwick.ac.uk.} \and
		David Hobson\thanks{University of Warwick, Department of Statistics, Coventry, CV4 7AL, UK. Email:  d.hobson@warwick.ac.uk.} \and
		Alex S.L. Tse\thanks{University College London, Department of Mathematics, London, WC1H 0AY, UK. Email: alex.tse@ucl.ac.uk.}
}
	\date{February 12, 2024}

	\maketitle
	
\begin{abstract}
The Merton investment-consumption problem is fundamental, both in the field of finance, and in stochastic control. An important extension of the problem adds transaction costs, which is highly relevant from a financial perspective but also challenging from a control perspective because the solution now involves singular control. A further significant extension takes us from additive utility to stochastic differential utility (SDU), which allows time preferences and risk preferences to be disentangled.

In this paper, we study this extended version of the Merton problem with proportional transaction costs and Epstein-Zin SDU. We fully characterise all parameter combinations for which the problem is well posed (which may depend on the level of transaction costs) and provide a full verification argument that relies on no additional technical assumptions and uses primal methods only. The case with SDU requires new mathematical techniques as duality methods break down.

Even in the special case of (additive) power utility, our arguments are significantly simpler, more elegant and more far-reaching than the ones in the extant literature. This means that we can easily analyse aspects of the problem which previously have been very challenging, including comparative statics, boundary cases which heretofore have required separate treatment and the situation beyond the small transaction cost regime. A key and novel idea is to parametrise consumption and the value function in terms of the shadow fraction of wealth, which may be of much wider applicability.
\end{abstract}

\bigskip
\noindent\textbf{Mathematics Subject Classification (2010): 93E20 
, 60H20
, 49L20
, 91B16
, 91G10
}

\bigskip
\noindent\textbf{Keywords:} Merton problem, lifetime investment and consumption, transaction costs, Epstein--Zin stochastic differential utility
\section{Introduction}
\label{sec:intro}
One of the fundamental problems of economic theory is to determine the optimal consumption and investment behaviour of agents who are endowed with an initial capital, who face a stochastic opportunity set and who are averse to fluctuations of consumption both with respect to time and with respect to stochastic innovations. In a classic formulation of the problem, known as the Merton problem~\cite{Merton:69,Merton:71}, an agent with constant relative risk aversion (CRRA) seeks to maximise the expected, integrated, discounted utility of consumption over the infinite horizon, where consumption is financed from initial wealth and from investments in a risky asset driven by Brownian motion. The important outputs are the optimal instantaneous rate of consumption and the optimal fraction of wealth the agent should invest in the risky asset.

As well as being a key problem in economics, the Merton problem is also a classical problem in stochastic control. Merton~\cite{Merton:69} showed how to derive the Hamilton-Jacobi-Bellman equation for the problem and found the value function and optimal strategy. Although his arguments are heuristic\footnote{See Herdegen et al~\cite{herdegen2020elementary} for a modern, rigorous treatment of the problem.}, his solution is correct: namely, when the problem is well-posed the agent should consume at a rate which is a constant proportion of wealth and invest a constant fraction of wealth in the risky asset, where these constants can be expressed in terms of the parameters of the financial market and the agent's preferences.

Merton's solution of the investment-consumption problem is a beautiful application of stochastic control, and was one of the inspirations behind many subsequent developments of the theory, including the use of Hamilton-Jacobi-Bellman equations, verification arguments and the martingale optimality principle, and the dual approach, see, for example, Fleming and Soner~\cite{FlemingSoner}, and Pham~\cite{Pham}. But, from a practical perspective, the final answer is unsatisfactory in two important ways. First, the suggestion that agents should invest a constant fraction of wealth in the risky asset is infeasible in markets with transaction costs: if the asset price follows a semi-martingale with non-trivial local martingale term, then to keep a constant fraction of wealth in the risky asset involves a trading strategy with infinite variation. Such a strategy would immediately exhaust the wealth of the agent if there are positive transaction costs. Second, the fraction of wealth that Merton proposes that agents should invest in the risky asset does not match the actual fraction that the representative agent invests in the financial market (as taken from US market data) -- this is the equity premium puzzle of Mehra and Prescott~\cite{mehra1985equity}. The economics and mathematics literature has responded to these issues in two directions, first by adding transaction costs to the model, and second by assigning more sophisticated preferences to the agent in the form of recursive utility, and its continuous-time analogue stochastic differential utility.

Magill and Constantinides~\cite{MagillConstantinides:76} were the first to introduce transaction costs into the investment-consumption model and to formulate the intuition which underpins the solution, namely that in $(X, Y \Phi)$ = ({Cash}, {Value of investments in the risky asset}) space\footnote{Here $Y$ is the price of the risky asset, and $\Phi$ is the number of shares held.}, instead of trading to be on the Merton line $\frac{Y \Phi}{X+ Y \Phi}=p$ (where $p$ is a constant) agents should trade to keep the fraction of wealth in the risky asset in an interval (or equivalently to keep $(X, Y \Phi)$ in a no-transaction wedge). Davis and Norman~\cite{DavisNorman:90} formalised the model, and explained that the resulting problem involved singular stochastic control, where the controlled process $(X_t, Y_t \Phi_t)_{t \geq 0}$ is kept in the no-transaction region using local time pushes on the boundary. For some parameter combinations, Davis and Norman~\cite{DavisNorman:90} showed that the value function could be expressed via the solution of two coupled first order ODEs with free-boundaries, but in terms of understanding the solution, and even of numerically solving the problem, further progress is challenging with this formulation. Shreve and Soner~\cite{shreve:soner} improved on the Davis-Norman solution (in terms of the range of parameter combinations that were covered) using methods from viscosity solutions, but issues with giving a simple, accessible formulation of the value function and optimal strategy remained.
Instead efforts turned to using asymptotic methods to give an expansion for the value function (and thence to derive an expansion for the optimal no-transaction region and consumption rate) in the regime of small transaction costs. The small transaction cost literature will be discussed in more detail in Section~\ref{sec:smalltc}, but includes \cite{Rogers:04,shreve:soner,JanecekShreve:04,GerholdMuhleKarbeSchachermayer:18,Choi:14,hobson:tse:zhu:19A}.

Returning to the case of general proportional transaction costs, significant progress was made when Choi et al~\cite{ChoiSirbuZitkovic:13} and Hobson et al~\cite{hobson:tse:zhu:19A} separately showed how the problem could be decoupled into first, solving a initial-value problem for a first order ODE (and then adjusting the initial value so that an integral equation is satisfied) and second, solving a further first order ODE. Indeed, very important quantities such as the locations of the boundaries of the no-transaction region can be determined from the solution of the first ODE only. Choi et al~\cite{ChoiSirbuZitkovic:13} took a dual approach based on the shadow price for the asset with transaction costs whereas Hobson et al~\cite{hobson:tse:zhu:19A} use the primal approach. Understanding the final form of the problem (including the set of possible behaviours of the first order ODE, and how they relate to the fundamental properties of the financial problem, eg, when is it well-posed) is simpler in the setting used by Hobson et al~\cite{hobson:tse:zhu:19A}, but both approaches  can by criticised for appearing to use `magic' to transform the problem into a simpler version --- especially in \cite{hobson:tse:zhu:19A} these transformations appear to have no economic motivation.

The second response in the literature is to extend the analysis to allow for recursive preferences. Additive preferences can be criticised for the fact that a single parameter is used to describe the agents attitude to fluctuations in consumption due to stochastic risk and their attitude to fluctuations in consumption over time. Recursive preferences postulate that the utility of current consumption depends on the value of future consumption. This makes it possible to separate risk aversion from the elasticity of inter-temporal substitution (EIS). In fact, for notational convenience we express our problem in terms of the elasticity of
intertemporal complementarity (EIC) which is the reciprocal of the EIS. The analogue of recursive utility in continuous-time is stochastic differential utility, and in this setting the natural generalisation of the additive CRRA utility is Epstein-Zin stochastic differential utility (EZ-SDU) as proposed in Epstein and Zin~\cite{epstein1989substitution}.

The extension to EZ-SDU brings several challenges. In the finite-horizon setting, for a given consumption the associated value function is the solution of a backward stochastic differential equation (BSDE). But, over the infinite horizon it is not completely clear how to formulate an analogue of the terminal condition. Then, there are significant issues which arise concerning existence and uniqueness of solutions. Indeed, for many parameter combinations for EZ-SDU uniqueness fails for essentially every consumption stream and we must identify
which value process we wish to associate to a given consumption stream (using an economic criteria, rather than an ad-hoc selection). Only when these questions have been resolved is it possible to optimise over admissible investment-consumption strategies (i.e. those which keep wealth non-negative). Fortunately these issues have been considered in depth in Herdegen et al~\cite{herdegen:hobson:jerome:23A,herdegen:hobson:jerome:23B,herdegen:hobson:jerome:23C}. Since the real issue is to define a value process given a consumption process, we can `piggyback' on their analysis -- the main impact of the introduction of transaction costs is to change the set of attainable consumption streams, and the value associated to a given consumption stream is unchanged.

Although the literature on optimal investment and consumption with recursive utility is growing rapidly (see Chacko and Viceira~\cite{chacko2005dynamic}, Kraft et al~\cite{kraft2017optimal} and Xing~\cite{xing2017consumption}) it focuses on the frictionless case. The exception is Melnyk et al~\cite{MMKS:20} who look for an asymptotic expansion of the value function in the case of small transaction costs and derive expressions for the asymptotic expansion for the no-transaction region and the optimal consumption, albeit under a restricted set of parameter combinations.

In this article we consider the lifetime investment and consumption problem in a financial market with transaction costs for an agent with recursive preferences.
Our main contributions are as follows:
\begin{enumerate}
\item First, we give a complete solution of the consumption-investment problem in the case of EZ-SDU which is valid for all levels of transaction costs. This includes giving exact conditions for when the problem is well-posed. The methods are an extension of the analysis in the additive case in Hobson et al~\cite{hobson:tse:zhu:19A} and the solution involves finding a family of solutions to a first crossing problem for a first order ODE (indexed by its starting point on a curve) and fixing the solution for a given level of transaction costs by choosing the solution which satisfies an integral equation. The start and end point of this solution determine the boundaries of the no-transaction region. Other important quantities, such as the shadow price of the risky asset and the optimal consumption are given by the solutions of a further first order ODE (this time with fixed start and end points).
\item Second, we introduce a new key variable, namely the shadow fraction of wealth. If we make this the underlying variable (in preference, say, to the shadow price) then all other quantities can be expressed in a simple fashion in terms of this single variable. The idea of characterising problems via the shadow fraction of wealth is new and may have implications well beyond the investment-consumption problem setting, and may lead to significant simplifications in other settings too. One immediate candidate example is maximising the long run rate of portfolio growth under transaction costs.
\item Third, we `break the magician's code' and give interpretations to many of the quantities which arose `by magic' in the analysis in~\cite{hobson:tse:zhu:19A}. There, the transformations lead to a great simplification due to an act of serendipity; here, we explain where they come from and how they may be interpreted. Thus $q$, $n(q)$ and $m(q)$ in \cite{hobson:tse:zhu:19A} are here given interpretations as the shadow fraction of wealth, the optimal consumption rate per unit of wealth as a function of the shadow fraction of wealth, and the optimal consumption rate in a frictionless Merton model if the agent invests a constant fraction $q$ of wealth in the risky asset.
\item Fourth, use of the shadow fraction of wealth as a key variable simplifies many of the arguments. At points in the analysis we have to consider a highly non-standard ODE which passes through a singular point (and has multiple solutions to the left of this point, and just one to the right). In our coordinate system this analysis becomes relatively straightforward (whereas in other systems understanding the phase-space of the solutions can be difficult). One implication is that we only need a one-dimensional version of Tanaka's results to define the underlying processes via a local time, whereas to date much of the literature has been forced to rely on a multidimensional version. Further, we can treat several boundary cases as part of the main analysis, including the case when one of the boundaries of the no-transaction region corresponds to a portfolio with no cash component, and then the leading term in a small-transaction cost expansion for the locations of the boundary of the no-transaction region is of a different order to the standard case.
\item Fifth, we show how risk aversion $R$ mainly governs the optimal investment strategy and the no-transaction region (the co-efficient of elasticity of inter-temporal complimentarity $S$ does not appear directly in any of the first three terms of the expansion in small transaction costs for the boundaries of the no-trade interval) and $S$ mainly governs the rate of consumption. Further, (following on from Jane\v{c}ek and Shreve~\cite{JanecekShreve:04} in the additive case $R=S$ and Melnyk et al~\cite{MMKS:20} for EZ-SDU) we show that consumption may increase or decrease with the introduction of small transaction costs, and we give an explanation which shows that the introduction of transaction costs will increase the instantaneous consumption rate if $S$ is small and decrease it if $S$ is large.
\item Sixth, we investigate analytically the comparative statics with respect to the risk aversion parameter $R$ and the elasticity of inter-temporal complimentarity $S$. One of the advances of this paper is that we do not (only) consider the additive case, and therefore it is possible for the first time to disentangle the impact of risk aversion and the intertemporal substitution. We find that in many circumstances (for example, if $R < 1$ or $R \geq 2$) the boundaries of the no-transaction region are monotonic in the risk aversion parameter in the sense that the agent with higher risk aversion buys and sells the risky asset when the fraction of wealth in the risky asset is at a lower threshold than an agent with lower risk aversion investing and consuming in the same market. However, this is not universal and for $R \in (1,2)$ this monotonic relationship can fail. The dependence of the no transaction wedge on $S$ is less clear cut, and whether the no-transaction region moves towards or away from a cash portfolio depends on a crucial combination of parameters, see Section~\ref{sec:comparative}.
\end{enumerate}
Along the way we give several other interesting, original and important results:
\begin{enumerate}
\item We give precise statements for when the problem under EZ-SDU is well posed for all level of transaction costs, when it is ill-posed for all levels of transaction costs, and when it is well-posed for some levels of transaction costs and ill-posed for others. In the latter case we give an expression for the threshold level of transaction costs at which the problem becomes ill-posed.
\item Unlike most other authors we allow for different transaction costs on buying and selling; often the literature assumes that these costs are the same, or that the transaction cost on selling is zero. Whilst there is a straightforward transformation between the various parameterisations of transaction costs, covering the general case facilitates an easy comparison to the individual cases. It also highlights that, for small transaction costs, the individual transaction costs only enter the expansion at a higher order level, and the first few terms only involve the round-trip transaction cost.
\item We show that if the Merton line lies in the first quadrant (i.e. in the case of zero transaction costs the optimal portfolio involves a long position in both the risky asset and the risk-less asset) then it is always inside the no-transaction region; however if the Merton line is not in the first quadrant then it can lie outside the no transaction wedge (and thus becomes a very poor approximation of a reasonable investment strategy).
\item We show that if (for large enough transaction costs) the positions corresponding to all wealth being held in the risky asset lie inside the no-transaction region then (for large transaction costs) the location of the sale-boundary on the no-transaction region is independent of the transaction cost for buying.
\item We give the asymptotic expansion for the value function and other quantities in the small transaction cost regime to a higher order than has been given elsewhere.
\end{enumerate}

The remainder of the paper is structured as follows. In Section~\ref{sec:ezsdu} we introduce stochastic differential utility and discuss the formulation of the investment consumption problem under Epstein-Zin SDU over the infinite horizon. In Section~\ref{sec:frictionless} we solve the frictionless investment-consumption problem for EZ-SDU. The arguments are extended to the case with frictions in Sections~\ref{sec:tcheuristics} and \ref{sec:tcrigour} with Section~\ref{sec:tcheuristics} focussing on the heuristics and Section \ref{sec:tcrigour} concentrating on the rigorous statements. Section~\ref{sec:examples} gives some illustrative numerical examples and Section~\ref{sec:smalltc} covers the asymptotics in the small transaction regime. In Section~\ref{sec:comparative} we consider the comparative statics with the respect to the two key parameters of EZ-SDU, namely the risk aversion $R$ and the elasticity of intertemporal complementarity $S$. Some of the more technical material is relegated to the appendices.

\section{Epstein-Zin Stochastic Differential Utility}
\label{sec:ezsdu}
Let $R, S \in (0,\infty) \setminus \{1\}$ and $\delta \in \RR$ and set $\VV = (1-R)\ol\R_+$. The Epstein--Zin (EZ) aggregator (Epstein and Zin~\cite{epstein1989substitution} is the function $g_{\rm EZ}:\R_+ \times \R_+ \times \VV \to \VV$, given by
\begin{equation}\label{eq:Epstein--Zin aggregator R and S}
g_{\rm EZ}(t,c,v) \coloneqq e^{-\delta t}\frac{c^{1-S}}{{1-S}}\left( (1-R)v\right)^\frac{S-R}{1-R}.
\end{equation}
Here, $R \in (0,\infty) \setminus \{1\}$ denotes the coefficient of \emph{relative risk aversion}, $S\in (0,\infty) \setminus \{1\}$ denotes the coefficient of \emph{elasticity of intertemporal complementarity} (the reciprocal of the coefficient of intertemporal substitution) and $\delta \in \RR$ denotes some discount parameter.

It is convenient to introduce the parameters $\theta\coloneqq\frac{1-R}{1-S}$ and $\rho=\frac{S-R}{1-R} = 1- \frac{1}{\theta}$, so that \eqref{eq:Epstein--Zin aggregator R and S} simplifies to
\begin{equation}\label{eq:Epstein--Zin aggregator}
g_{\rm EZ}(t,c,v) = e^{-\delta t}\frac{c^{1-S}}{{1-S}}\left( (1-R)v\right)^\rho.
\end{equation}
Note that $\theta = 1$ (or equivalently, $\rho=0$) corresponds to the case $R = S$, which is additive power utility.

Given a nonnegative progressively measurable stochastic process $C =(C_t)_{t \geq 0}$, called a \emph{consumption stream}, we seek to find a utility process $V^C = (V^C_t)_{t \geq 0}$ that solves the equation
\begin{equation}\label{eq:Epstein--Zin SDU}
V_t = \E\left[\left.\int_t^\infty  e^{-\delta s}\frac{C_s^{1-S}}{{1-S}}\left( (1-R)V_s\right)^\rho \dd s\,\right| \,\cF_t \right].
\end{equation}

The problem is to find $\hat C \in \sC$ such that
\begin{equation}\label{eq:problem}
V^{\hat C}_0  =\sup_{C\in\sC}V^C_0,
\end{equation}
where $V^C = (V^C_t)_{t \geq 0}$ denotes an appropriate solution of \eqref{eq:Epstein--Zin SDU}, and $\sC$ is an appropriate class of consumption streams.

Throughout the paper, we make the following standing assumption
\begin{sass}
	\label{sass:rational} We assume $\theta \in (0,\infty)$.\end{sass}

The necessity of this standing assumption can be seen from multiplying \eqref{eq:Epstein--Zin SDU} on both sides by $(1-R)$. Then, with $\tilde{V}=(1-R)V$, which implicitly must be positive for it to be possible to take arbitrary powers, we have $\tilde{V}_t = \theta \E[ \int_t^\infty e^{-\delta s} C_s^{1-S} \tilde{V}_s^\rho \dd s | \cF_t]$ and we must have $\theta>0$ for the two sides of this equation to have the same sign. See Herdegen et al ~\cite{herdegen:hobson:jerome:23A} for further discussion of this issue.

For $\theta \in (0, 1]$, we have existence and uniqueness of a utility process $V^C = (V^C_t)_{t \geq 0}$ solving \eqref{eq:Epstein--Zin SDU} (in a generalised sense) for all consumption streams, see Herdegen et al ~\cite{herdegen:hobson:jerome:23A}. By saying that $V^C$ solves \eqref{eq:Epstein--Zin SDU} in a generalised sense we mean that we allow $(1-R)V^C$ to take the value $+\infty$ provided that $\E\left[\left.\int_t^\infty  e^{-\delta s}\frac{C_s^{1-S}}{{1-S}}\left( (1-R)V^C_s\right)^\rho \dd s\,\right| \,\cF_t \right]$ also takes the value $+\infty$.

\begin{thm}[{\cite[Theorem 6.9]{herdegen:hobson:jerome:23B}}]
\label{thm:SDU:existence:theta<1}
Suppose $\theta \leq 1$. For each $C \in \sP_+$, there exists
a unique generalised utility process $V^C = (V^C_t)_{t \geq 0}$ solving  \eqref{eq:Epstein--Zin SDU} (in a generalised sense).
\end{thm}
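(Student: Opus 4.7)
Plan: I will eliminate the awkward factor $(1-R)$, expose the monotonicity structure of the fixed-point equation, build a fixed point by monotone iteration from a non-trivial envelope, and then use positive homogeneity to collapse the iteration's upper and lower limits to a single process. First I would set $\tilde V \coloneqq (1-R)V \geq 0$ (positivity is forced, otherwise the fractional power in \eqref{eq:Epstein--Zin SDU} is undefined) and multiply through by $(1-R)$ to obtain
\[
\tilde V_t = \E\left[\int_t^\infty \theta e^{-\delta s} C_s^{1-S} \tilde V_s^{\rho} \dd s \,\Big|\, \cF_t \right], \qquad \rho = 1 - \tfrac{1}{\theta} \in (-\infty, 0].
\]
With the conventions $0^\rho = \infty$ and $\infty^\rho = 0$, the map $v \mapsto v^\rho$ is non-increasing on $[0,\infty]$, so the right-hand side defines an order-reversing operator $\Phi$ on the complete lattice $\cM$ of $[0,\infty]$-valued progressively measurable processes, and generalised utility processes are precisely its fixed points; in particular $\Phi^2$ is order-preserving.

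For existence I would identify a coupled envelope $v_- \leq v_+$ in $\cM$ with $v_- \leq \Phi(v_+)$ and $v_+ \geq \Phi(v_-)$. The trivial pair $(0, \infty)$ collapses under the conventions, so I would instead truncate to a finite horizon $T$, solve \eqref{eq:Epstein--Zin SDU} on $[0,T]$ with a small positive terminal value $\epsilon$ (where standard BSDE theory for drivers monotone in $v$ applies), and send $\epsilon \to 0$ and $T \to \infty$ to produce envelopes with the required bounds. From this pair the monotone iteration $\ul V^{(n+1)} = \Phi(\ol V^{(n)})$, $\ol V^{(n+1)} = \Phi(\ul V^{(n)})$ yields, by order-reversal, interlaced monotone sequences $\ul V^{(n)} \uparrow \ul V \leq \ol V \downarrow \ol V^{(n)}$, and monotone convergence inside the conditional expectation (applicable because $v \mapsto v^\rho$ is monotone) gives $\Phi(\ol V) = \ul V$ and $\Phi(\ul V) = \ol V$. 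Thus $\ul V$ and $\ol V$ are fixed points of $\Phi^2$, and any fixed point of $\Phi$ is sandwiched between them.

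The hard part will be showing $\ul V = \ol V$, which delivers both existence (the common value is a fixed point of $\Phi$) and uniqueness (any other fixed point is trapped between $\ul V$ and $\ol V$). The structural tool is positive $\rho$-homogeneity, $\Phi(\alpha v) = \alpha^\rho \Phi(v)$ for $\alpha > 0$. If $\ol V \leq K \ul V$ for some deterministic $K \geq 1$, then order-reversal and homogeneity give
\[
\ul V = \Phi(\ol V) \geq \Phi(K \ul V) = K^\rho \ol V,
\]
so $\ol V \leq K^{-\rho} \ul V$; iterating yields $\ol V \leq K^{(-\rho)^n} \ul V \to \ul V$ provided $|\rho| < 1$, i.e., $\theta \in (1/2, 1]$, settling the easy range. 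The genuinely harder range is $\theta \in (0, 1/2]$ (where $|\rho| \geq 1$), where this scalar contraction breaks down. I would tackle it by localising in time: truncate the generator integral to $[t, t+h]$, absorb the tail using the envelope bounds already obtained, and show that $\Phi^2$ is a strict contraction on an appropriately-chosen complete metric on processes over $[t, t+h]$ (for example, a multiplicative or log-ratio metric tailored to the positive cone) for sufficiently small $h$; a Gronwall/stitching argument then propagates equality to the whole line. Degenerate regions where $\ul V$ vanishes or $\ol V$ is infinite on non-null sets would be identified explicitly with consumption streams whose generalised utility is $0$ or $\infty$, which is exactly the content of the \emph{generalised sense} in the statement. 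As a back-up, \eqref{eq:Epstein--Zin SDU} can be recast as a BSDE with a non-Lipschitz but $v$-monotone driver and tackled by a Briand--Hu / Kobylanski-type comparison theorem.
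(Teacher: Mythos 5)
First, a point of reference: the paper does not prove this statement at all --- it is imported verbatim as \cite[Theorem 6.9]{herdegen:hobson:jerome:23B}, so there is no internal proof to compare against. Your reduction to $\tilde V=(1-R)V$, the observation that for $\rho=1-1/\theta\leq 0$ the operator $\Phi$ is order-reversing on the lattice of $[0,\infty]$-valued processes, and the alternating monotone iteration producing $\ul V\leq\ol V$ with $\Phi(\ul V)=\ol V$ and $\Phi(\ol V)=\ul V$ are all sound and are genuinely the right structural skeleton; the $\rho$-homogeneity contraction is a Thompson-metric argument that is known to work in this kind of problem when $|\rho|<1$.

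There are, however, two genuine gaps. (i) The homogeneity step needs the a priori bound $\ol V\leq K\ul V$ for a \emph{deterministic} finite $K$, and nothing in your construction supplies it: the theorem is asserted for every $C\in\sP_+$ with no integrability whatsoever, so $\ul V$ may vanish (or $\ol V$ be infinite) on non-null sets where the other envelope is not, and then no finite $K$ exists; the $\epsilon$-terminal-value / $T\to\infty$ limits do not obviously produce comparable envelopes, and the order in which the limits are taken interacts badly with the order-reversal. (ii) The range $\theta\in(0,1/2]$, i.e.\ $|\rho|\geq 1$, is left essentially open. The proposed fix --- localise to $[t,t+h]$ and contract in a multiplicative (log-ratio) metric --- does not work as stated, because the Lipschitz constant of $v\mapsto v^{\rho}$ in any such metric is $|\rho|$ itself and is not reduced by shrinking $h$; the rescue you invoke (the tail term dominating the local integral uniformly in $t$) fails for perfectly admissible consumption streams, e.g.\ any $C$ supported on a finite horizon, where the tail vanishes and the entire value comes from the local integral. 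Since $\theta\leq 1/2$ is squarely inside the hypothesis of the theorem, the proof is incomplete precisely where it is hardest; the fallback to a Briand--Hu/Kobylanski comparison is not available either, as the driver $v\mapsto v^{\rho}$ is neither Lipschitz nor of quadratic growth near $v=0$, which is exactly the regime the generalised-solution convention is designed to handle.
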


By contrast, if $\theta > 1$, the situation is more involved because  \eqref{eq:Epstein--Zin SDU} has multiple solutions. For example, $V \equiv 0$ is always a solution to \eqref{eq:Epstein--Zin SDU} when $\rho>0$. To get uniqueness we restrict ourselves to so-called proper solutions; see Herdegen et al~\cite{herdegen:hobson:jerome:23C} for a discussion of the issues.

\begin{defn}\label{def:proper}
	Suppose $C\in\sP_+$ and suppose that $V=(V_t)_{t\geq0}$ is a solution to \eqref{eq:Epstein--Zin SDU}. Then $V$ is called a \textit{proper solution} if $(1-R)V_t>0$ on $\{\cEX[t]{\int_t^\infty  e^{-\delta \theta s} C^{1-R}_s  \dd s } > 0\}$ up to null sets for all $t\geq 0$.
\end{defn}

Denote by $\UU\PP^*$ the collection of right-continuous processes $C \in \sP_+$ for which there exists a unique proper utility process $V^C = (V^C_t)_{t \geq 0}$ solving \eqref{eq:Epstein--Zin SDU}.   
The results of Herdegen et al~\cite{herdegen:hobson:jerome:23C} (especially Proposition 4.8) show that is a wide class of processes. When $\theta>1$ we will optimise over consumption streams in $\UU \PP^*$.

\section{Frictionless problem}
\label{sec:frictionless}
We consider a Black--Scholes--Merton market consisting of a risk free asset with interest rate $r\in\R$, whose price process $Y^0=(Y^0_t)_{t\geq0}$ is given by $Y^0_t = y^0_0 e^{r t}$ (for $y^0_0>0$), and a risky asset given by geometric Brownian motion with drift $\mu\in\R$, volatility $\sigma>0$ and initial value $y_0>0$, whose price process $Y=(Y_t)_{t \geq 0}$ is given by $Y_t = y_0\exp(\sigma B_t + (\mu-\frac{1}{2}\sigma^2))$, for a Brownian motion $B=(B_t)_{t\geq0}$.

At each time $t\geq0$, the agent chooses to consume at a rate $C_t\in\R_+$ per unit time, and to hold $\Phi_t\in\R$ numbers of shares of the risky asset. If we denote total wealth at time $t$ by $Z_t$, the number of shares invested into the riskless asset is given by $\Phi^0_t = \frac{Z_t - \Phi_t Y_t}{Y^0_t}$. Then, by the self-financing condition, the (total) wealth process of the agent $Z=(Z_t)_{t\geq0}$ satisfies the SDE
\begin{equation}\label{eqn:frictionless:wealth process}
\dd Z_t =  \Phi^0_t \dd Y^0_t + \Phi_t \dd Y_t - C_t \dd t = \Phi_t Y_t  \sigma \dd B_t + \left( Z_t  r + \Phi_t Y_t (\mu - r)  - C_t\right)\mathrm{d} t,	
\end{equation}
subject to the initial condition $Z_0 =z$, where $z>0$ is the agent's initial wealth.

\begin{rem}
It is customary in the literature to parameterise trading strategies in terms of fractions of wealth (rather than numbers of shares). However, this parametrisation is less suitable for the problem with transaction costs. For this reason, and to make an easy comparison with the frictionless case, we also parameterise trading in the frictionless problem in numbers of shares.
\end{rem}

\begin{defn}\label{def:frictionlessadmissible}
	Given $z >0$, an \emph{admissible investment-consumption strategy} (for $z$) is a pair $(\Phi,C)= (\Phi_t,C_t)_{t \geq 0}$ of progressively measurable processes, where $\Phi$ is real-valued and $C$ is nonnegative, such that the SDE \eqref{eqn:frictionless:wealth process} has a unique strong solution $Z^{z, \Phi, C} = (Z^{z, \Phi, C}_t)_{t \geq 0}$ that is $\as{\P}$ nonnegative. We denote the set of admissible investment-consumption strategies for $z > 0$ by $\sA_0(z)$.
\end{defn}
Since the utility value associated to an admissible investment-consumption strategy only depends upon consumption, and not upon the amount invested into the risky asset, we also introduce the following definition:
\begin{defn}\label{def:frictionlessattainable}
	Given $z > 0$, an \emph{attainable consumption stream} (for $z$) is a nonnegative progressively measurable stochastic process $C =(C_t)_{t \geq 0}$ for which there exists a progressively measurable process $\Phi = (\Phi_t)_{t\geq0}$ such that $(\Phi, C) \in \sA_0(z)$. Denote the set of attainable consumption streams for $z > 0$ by $\sC_0(z)$.
\end{defn}

The frictionless problem is to find $\hat C \in \sC_0(z)$ such that
\begin{equation}\label{eq:frictionless problem}
V^{\hat C}_0  =\sup_{C\in\sC^*_0(z)}V^C_0,
\end{equation}
where, if $\theta \in (0,1]$, $V^C = (V^C_t)_{t \geq 0}$ denotes the unique generalised utility process from Theorem \ref{thm:SDU:existence:theta<1} solving \eqref{eq:Epstein--Zin SDU} (in a generalised sense) and $\sC^*_0(z) = \sC_0(z)$, and, if $\theta \in (0,1]$, $V^C = (V^C_t)_{t \geq 0}$ denotes the unique proper solution of \eqref{eq:Epstein--Zin SDU} and $\sC^*_0(z) = \sC_0(z) \cap \UU \PP^*$.

\subsection{Solution for constant proportional strategies}
\label{ssec:frictionless}
In order to motivate our solution ansatz in the case with frictions, it is insightful to study the optimisation problem \eqref{eq:frictionless problem} for constant proportional strategies only. To this end, consider the investment-consumption strategy $(\Phi,C)$, where $\Phi = q \frac{Z}{Y}$ and $C = m Z$ for $q \in \RR$, $m \in\RR_{++}$. Then, the wealth process $Z^{z,\Phi,C} = Z = (Z_t)_{t \geq 0}$ from \eqref{eqn:frictionless:wealth process} is given by
\begin{equation}
Z_t  =  z\exp\left(q \sigma B_t + \left({r} + q  \sigma \lambda - m - \frac{q^2 \sigma^2}{2}\right)t \right),
\end{equation}
where $\lambda := \frac{\mu - r}{\sigma}$ denotes the \emph{market price of risk}.
We make the ansatz
\begin{equation}
\label{eq:ansatz}
V_t = a e^{-b t} \frac{Z_t^{1-R}}{1-R}
\end{equation}
for some constants $a, b \in \RR$ to be determined. Substituting this ansatz into \eqref{eq:Epstein--Zin SDU}, and recalling that $\rho = \frac{S-R}{1-R}$ yields
\begin{align}
\label{eq:Z^1-R equation constant strategy}
V_t      = \E\left[\left.\int_t^\infty e^{-\delta s} \frac{(m Z_s)^{1-S}}{{1-S}} \left( a e^{-b s} Z_s^{1-R}\right)^\rho \dd s\,\right| \,\cF_t \right] = a^\rho \frac{m^{1-S}}{1-S}  \E\left[\left.\int_t^\infty  e^{-(\delta+ b \rho)s}   Z_s^{1-R} \dd s\,\right| \,\cF_t \right].
\end{align}
A straightforward calculation gives
\begin{equation}
\label{eq:Z:cond exp}
	\cEX[t]{e^{-(\delta + b \rho)s}Z_{s}^{1-R}} = e^{-(\delta + b\rho)t} Z_t^{1-R} e^{- \theta \tilde{H}_b(q,m)(s-t)}, \quad s > t,
\end{equation}
where  $\tilde{H}_b: \R \times \R_{++} \mapsto \R$ is given by $\tilde{H}_b(q, m) = \frac{1}{\theta}(\delta+b \rho) + (S-1)\left({r} + \lambda \sigma q - m - \frac{q^2\sigma^2}{2}R\right)$.
Provided that $\tilde{H}_b(q, m) > 0$, \eqref{eq:Z^1-R equation constant strategy} simplifies to
\begin{equation}
V_t =  \frac{ a^\rho m^{1-S} }{\tilde{H}_b(q,m)}e^{-(\delta+ b \rho) t} \frac{Z_t^{1-R}}{1-R} .
\end{equation}
Our ansatz \eqref{eq:ansatz} now implies that $\frac{a^\rho m^{1-S} }{\tilde{H}_b(q,m)} = a $ and $\delta+ b \rho = b$. We deduce that $b = \delta \theta$ and that $a = (\frac{m^{1-S} }{\tilde{H}_b(q,m)})^\theta $.
If we substitute this value of $b$ into $\tilde{H}_b$ and define $H = \tilde{H}_{\delta \theta}$ then we have
\begin{equation}
\label{eq:H_nu}
H(q, m) = \delta + (S-1)\left({r} + \lambda \sigma q - m - \frac{q^2\sigma^2}{2}R\right).
\end{equation}
and provided $H(q,m)>0$,
\begin{equation}\label{eq:valfungenstrat}
V_t =  \left(\frac{m^{1-S} }{H(q,m)}\right)^\theta e^{-\delta \theta t} \frac{Z_t^{1-R}}{1-R} .
\end{equation}
If we now maximise \eqref{eq:valfungenstrat} for fixed $q$ over $m$, we obtain that the optimal consumption rate expressed per unit of total wealth is given by $m=m(q)$ where
\begin{align}
\label{eq:m}
m(q) &= \frac{\delta }{S} - \frac{1-S}{S} \left({r} + \lambda \sigma q  - \frac{q^2\sigma^2}{2}R\right) = \frac{\alpha}{S} - \frac{1-S}{S} \lambda \sigma q + R\frac{1-S}{S} \frac{\sigma^2}{2} q^2,
\end{align}
where $\alpha := \delta - r(1-S)$ denotes the \emph{effective impatience rate}.
Plugging \eqref{eq:m} into \eqref{eq:valfungenstrat} yields
\begin{equation}\label{eq:V0:opt:q}
V_t =   e^{-\delta \theta t} \frac{Z_t^{1-R}}{1-R} m(q)^{-\theta S}
\end{equation}
Differentiating \eqref{eq:V0:opt:q} with respect to $q$ and setting the derivative equal to zero,  gives that the turning point of $m$ is at $q_M = \frac{\lambda}{R \sigma}$. Clearly, $q_M$ is the famous \emph{Merton ratio}. Thus, in frictionless market the optimal fraction of wealth to invest in the risky asset is independent of the elasticity of intertemporal complimentarity.

The above analysis yields a candidate value function and a candidate optimal strategy. It is well known that
\begin{thm}[\cite{MMKS:20,herdegen:hobson:jerome:23B,herdegen:hobson:jerome:23C}]  Consider the frictionless market.
Suppose that $m_M = m(q_M) = m\left(\frac{\lambda}{\sigma R}\right)>0$. Then the optimal-investment consumption problem is well-posed, and the value function is given by
\[ V_0 = \sup_{C \in \sC_0^*(z)} V^C_0 = \frac{z^{1-R}}{1-R} m_M^{-\theta S} \]
The optimal strategy is $(C^* = m_M Z, \Phi = \frac{\lambda }{\sigma R} \frac{Z}{Y})$.

If $m_M \leq 0$ then the problem is ill-posed.
\end{thm}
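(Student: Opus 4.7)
My plan is a verification argument built on the ansatz calculation that immediately precedes the theorem, which already supplies the candidate value function $\phi(z) \coloneqq \frac{z^{1-R}}{1-R} m_M^{-\theta S}$ and the candidate optimiser $(\Phi^*, C^*) = \bigl(\tfrac{\lambda}{\sigma R}\tfrac{Z^*}{Y}, m_M Z^*\bigr)$. The first step is to verify attainment. Substituting $q = q_M$ and $m = m_M$ into \eqref{eq:valfungenstrat} yields the candidate utility process $V^*_t = e^{-\delta \theta t}\frac{(Z^*_t)^{1-R}}{1-R} m_M^{-\theta S}$, and the finiteness condition $H(q_M, m_M) > 0$ used in deriving \eqref{eq:valfungenstrat} reduces precisely to $m_M > 0$, since substituting \eqref{eq:m} into \eqref{eq:H_nu} yields the identity $H(q, m(q)) = m(q)$. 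Hence $V^*$ solves \eqref{eq:Epstein--Zin SDU}; it is the generalised solution of Theorem~\ref{thm:SDU:existence:theta<1} when $\theta \leq 1$, and is proper in the sense of Definition~\ref{def:proper} when $\theta > 1$ because $(1-R)V^*_t > 0$ everywhere when $m_M > 0$. In both regimes uniqueness gives $V^{C^*} = V^*$, so $V^{C^*}_0 = \phi(z)$.

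The second and main step is the upper bound $V^C_0 \leq \phi(z)$ for every $C \in \sC^*_0(z)$. The optimisation in the excerpt shows that $\phi$ satisfies the HJB inequality
\begin{equation*}
0 \geq \tfrac{1}{2}\phi''(z)\psi^2 y^2 \sigma^2 + \phi'(z)\bigl(zr + \psi y(\mu - r) - c\bigr) - \delta\theta\, \phi(z) + e^{\delta\theta t}\, g_{\rm EZ}\bigl(t, c, e^{-\delta\theta t}\phi(z)\bigr),
\end{equation*}
for every $(\psi, c) \in \R \times \R_+$, with equality at $(\psi y, c) = (q_M z, m_M z)$. Applying It\^{o}'s formula to $e^{-\delta\theta t}\phi(Z_t^{z, \Phi, C})$ along any admissible $(\Phi, C)$ and combining with the above inequality produces a local supermartingale
\begin{equation*}
M_t \coloneqq e^{-\delta \theta t}\phi(Z_t^{z, \Phi, C}) + \int_0^t g_{\rm EZ}\bigl(s, C_s, e^{-\delta \theta s}\phi(Z_s^{z, \Phi, C})\bigr)\dd s,
\end{equation*}
from which $V^C_0 \leq \phi(z)$ follows after localisation and passage to the infinite horizon, via the comparison theorems for generalised/proper solutions of \eqref{eq:Epstein--Zin SDU} in Herdegen et al.~\cite{herdegen:hobson:jerome:23B,herdegen:hobson:jerome:23C}. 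The chief obstacle is the transversality condition at infinity, which must be verified uniformly in $C$; it is precisely the assumption $m_M > 0$ that furnishes the required integrability and allows $\limsup_{T \to \infty}\E[e^{-\delta \theta T}\phi(Z_T^{z, \Phi, C})]$ to be controlled with the correct sign.

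Finally, for the ill-posedness claim when $m_M \leq 0$, I would exhibit a diverging family of admissible strategies. Restricting to constant-proportional strategies $(q_M, m)$ with $m > 0$ and using the identity $H(q_M, m) = S m_M + (1 - S)m$ (which follows from specialising \eqref{eq:H_nu} at $q = q_M$ and applying $(1 - S)(r + \lambda\sigma q_M - \tfrac{R q_M^2 \sigma^2}{2}) = \delta - S m_M$), the formula \eqref{eq:valfungenstrat} reads $V^C_0 = \frac{z^{1-R}}{1-R}\bigl(\tfrac{m^{1-S}}{S m_M + (1 - S)m}\bigr)^\theta$ on the range of $m > 0$ where the denominator is positive. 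A short case analysis in the signs of $1 - R$ and $1 - S$ shows that driving $m$ towards the zero of the denominator (or, where the denominator has no positive zero, towards $0$ or $\infty$ as appropriate) sends $V^C_0$ either to $+\infty$ or arbitrarily close to a finite supremum that is not attained, confirming ill-posedness in every case.
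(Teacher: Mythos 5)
The paper does not prove this theorem: it is stated as a known result and attributed to \cite{MMKS:20,herdegen:hobson:jerome:23B,herdegen:hobson:jerome:23C}, with Section~\ref{ssec:frictionless} supplying only the heuristic derivation of the candidate. Your outline of the attainment step is fine --- the identity $H(q,m(q))=m(q)$ is correct, so $H(q_M,m_M)>0$ is exactly $m_M>0$, and constant proportional strategies do yield a proper/generalised solution. But two steps of your proposal have genuine gaps.

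First, the upper bound. You correctly identify transversality as the obstacle, but the claim that $m_M>0$ ``furnishes the required integrability'' uniformly in $C$ is not a proof and, as stated, the localisation fails: for $R>1$ the candidate $\phi(z)=\frac{z^{1-R}}{1-R}m_M^{-\theta S}$ is negative and unbounded below as $Z\downarrow 0$, so after stopping you cannot pass to the limit by Fatou, and $\limsup_T\E[e^{-\delta\theta T}\phi(Z_T)]$ has the wrong sign to control for an arbitrary admissible $(\Phi,C)$ that lets wealth approach zero. The cited references (and this paper, in the frictional analogue Proposition~\ref{prop:ver:supersol}) handle this by perturbing the state: one shows $e^{-\delta\theta t}\phi(Z_t+\epsilon \hat Z_t)$ is a supersolution for $C+\epsilon\hat C$, where $\hat Z,\hat C$ come from the candidate optimal strategy, and then invokes the BSDE comparison theorems and lets $\epsilon\downarrow 0$. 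Without that perturbation (or an equivalent device) the supermartingale argument does not close. Second, the ill-posed case. For $R<1$ your construction works: with $m_M<0$ the denominator $Sm_M+(1-S)m$ has a positive zero and $V^{C}_0\to+\infty$ along the family. But for $R,S>1$ the argument is the wrong shape in two respects: (i) ill-posedness there means $V^C_0=-\infty$ for \emph{every} admissible $C$ (cf.\ Proposition~\ref{prop:ill posed:theta<1}(b)), which cannot be established by exhibiting a family of strategies, and ``arbitrarily close to a finite supremum that is not attained'' is not ill-posedness in this paper's sense; (ii) your recipe of driving $m$ to the zero of the denominator is vacuous here, since for $m_M\le 0$ and $S>1$ one has $Sm_M+(1-S)m<0$ for all $m>0$, so $H(q_M,m)$ is never positive and \eqref{eq:valfungenstrat} is never applicable. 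The paper's own route for this case is to cite the frictionless ill-posedness result of \cite{herdegen:hobson:jerome:23B} (Corollary 8.2 there), which bounds all consumption streams at once; some such global upper-bound argument is unavoidable.
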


\begin{rem}
\label{rem:shapeofm}
The shape of $m=m(q)$ will play a key role in later arguments.
The shape of $m$ is determined by the shape of $H(q,m)$ as given in \eqref{eq:H_nu}, and as a function parameterised by $R$ and $S$, the shape of $H$ is determined by the sign of $(S-1)$. In particular, when $S<1$, $H$ is convex in $q$, and this property is inherited by $m(q)$ (see Section~\ref{ssec:WhyCdependsonS} for details of how this inheritance works). Then, $m(q)$ has a {\em minimum} over $q \in \R$ at $q_M$. Conversely, when $S>1$, $H$ is concave in $q$, this property is inherited by $m$, and $m$ has a {\em maximum} (over $q \in \R$ such that $H(q,m)>0$) at $q_M$. Indeed, given the sign of $V$, this must be the case for $V$ in \eqref{eq:V0:opt:q} to have an interior {\em maximum} in $q$.

Economically, the explanation is as follows. If the agent invests a suboptimal fraction of wealth in the risky asset then this will reduce the certainty equivalent value of their holdings. This will have two opposite implications. First, since they are implicitly less wealthy (in certainty equivalent terms), they will reduce instantaneous consumption, and as a fraction of their current wealth their consumption rate will fall. Second, the primary cause of the reduction of certainty equivalent wealth is a reduction in the valuation of future (rather than current) consumption, and this will encourage them to increase the rate of instantaneous consumption. When the elasticity of inter-temporal complementarity $S$ is small (and the coefficient of inter-temporal substitution is large), the desire to smooth consumption over time is small (see Herdegen et al~\cite[Section 4.1]{herdegen:hobson:jerome:23A}) and the second effect dominates. When $S$ is larger, the desire to seek smooth consumption over time is increased and the first effect dominates.

We return to this issue in Remark~\ref{rem:consumption}(c) and Section~\ref{ssec:WhyCdependsonS}.
\end{rem}

\section{Problem with friction: notation and heuristics}
\label{sec:tcheuristics}
Now we consider the problem with transaction costs. We assume that in order to buy one unit of the risky asset at time $t$ one has to pay $\gamma^\uparrow Y_t$, and that the receipts from selling one unit of the risky asset at time $t$ are given by $\frac{1}{\gamma^\downarrow} Y_t$. Here $\gamma^\uparrow, \gamma^\downarrow \in [1, \infty)$ denote the transaction cost parameters for buying and selling. When $\gamma^\uparrow \gamma^\downarrow > 1$, we have to restrict ourselves to trading strategies $\Phi = (\Phi_t)_{t \geq 0}$ (in numbers of shares) that are of finite variation. In this case, we denote by $\Phi^\uparrow = (\Phi^\uparrow_t)_{t \geq 0}$ and $\Phi^\downarrow = (\Phi^\downarrow_t)_{t \geq 0}$ the increasing and decreasing part of $\Phi$, respectively, so that $\Phi = \Phi^\uparrow - \Phi^\downarrow$ where $\Phi^\uparrow$ and $\Phi^\downarrow$ are increasing processes. Moreover, in this case, it is useful not to describe the total wealth process but rather to consider the cash process and the holdings in the risky asset separately.

When choosing the investment-consumption strategy $(\Phi, C)$, the corresponding cash process $X = (X_t)_{t \geq 0}$ satisfies the dynamics
\begin{equation}
\label{eq:friction:cash process}
 \diff X_t= r X_t  \dd t - C_t \dd t - \gamma^\uparrow Y_t \dd \Phi_t^\uparrow  + \frac{Y_t}{\gamma^\downarrow}\dd \Phi_t^\downarrow.
\end{equation}
This is paired with the initial condition
\begin{equation}
\label{eq:frictionIC}
X_0 = x_0 - \gamma^\uparrow y (\Phi_0 - \phi_0)^{+} + \frac{y}{\gamma^\downarrow} (\Phi_0 - \phi_0)^-,
\end{equation}
where $x_0 \in \R$ denotes the initial cash position and $\phi_0$ denotes the initial stock position prior to trading at time $0$. Note that we allow for an initial block trade just before time zero (i.e. at time $0-$) that is not encoded in the strategy $\Phi = (\Phi_t)_{t \geq 0}$. This allows us to consider $\Phi$ to be a right-continuous process.

For $\gamma^\uparrow, \gamma^\downarrow \in [1, \infty)$, we denote by
\begin{equation*}
\sS_{\gamma^\uparrow, \gamma^\downarrow} := \left\{(x, y, \phi) \in \RR \times (0, \infty) \times \RR: x  +  \frac{1}{\gamma^\downarrow}\phi^+ y - \gamma^\uparrow \phi^-  y \geq 0\right\}
\end{equation*}
the \emph{solvency region}. We denote by $\sS^\circ_{ \gamma^\uparrow, \gamma^\downarrow}$ its interior, which satisfies
\begin{equation*}
\sS^\circ_{ \gamma^\uparrow, \gamma^\downarrow} = \left\{(x, y, \phi) \in \RR \times (0, \infty) \times \RR: x  +  \frac{1}{\gamma^\downarrow}\phi^+ y -
\gamma^\uparrow \phi^- y > 0\right\}.
\end{equation*}

By analogy with Definitions~\ref{def:frictionlessadmissible} and \ref{def:frictionlessadmissible}
we can define what it means for an investment-consumption strategy $(\Phi, C)$ to be admissible for $(\gamma^\uparrow, \gamma^\downarrow)$-transaction costs, and a consumption stream $C$ to be attainable for $(\gamma^\uparrow, \gamma^\downarrow)$-transaction costs.

\begin{defn}
	Given $(x, y, \phi) \in \sS^\circ_{\gamma^\uparrow, \gamma^\downarrow}$, a \emph{ $(\gamma^\uparrow, \gamma^\downarrow)$-admissible investment-consumption strategy} is a pair $(\Phi,C)= (\Phi_t,C_t)_{t \geq 0}$ of progressively measurable processes, where $\Phi$ is real-valued and of finite variation and $C$ is nonnegative, such that the SDE \eqref{eq:friction:cash process} has a unique strong solution $X^{x, y, \phi, \Phi, C} = (X^{x, y, \phi, \Phi, C}_t)_{t \geq0} $ such that $(X^{x, y, \phi, \Phi, C}_t, \Phi_t, Y_t) \in \sS_{ \gamma^\uparrow, \gamma^\downarrow}$ $\as{\P}$ for all $t \geq 0$. We denote the set of $(\gamma^\uparrow, \gamma^\downarrow)$-admissible investment-consumption strategies for $(x, s, \phi) \in \sS^\circ_{\gamma^\uparrow, \gamma^\downarrow}$ by $\sA_{\gamma^\uparrow, \gamma^\downarrow}(x, \phi, y)$.
\end{defn}

\begin{defn}
Given $(x, y, \phi) \in \sS^\circ_{\gamma^\uparrow, \gamma^\downarrow}$, a \emph{$(\gamma^\uparrow, \gamma^\downarrow)$-attainable consumption stream} (for $(x, y, \phi) $) is a nonnegative progressively measurable stochastic process $C =(C_t)_{t \geq 0}$ for which there exists a progressively measurable process $\Phi = (\Phi_t)_{t\geq0}$ such that $(\Phi, C) \in \sA_{\gamma^\uparrow, \gamma^\downarrow}(x, \phi, y)$. Denote the set of attainable consumption streams for $(x, y, \phi) \in \sS^\circ_{\gamma^\uparrow, \gamma^\downarrow}$ by $\sC_{\gamma^\uparrow, \gamma^\downarrow}(x, y, \phi)$.
\end{defn}

\begin{rem}
\label{rem:inclusion}
It is straightforward to check that for all $\gamma^\uparrow, \gamma^\downarrow \in [1, \infty)$, $\sA_{\gamma^\uparrow, \gamma^\downarrow}(x, \phi, y) \subseteq \sA_0(x + \phi y)$ and $\sC_{\gamma^\uparrow, \gamma^\downarrow}(x, \phi, y) \subseteq \sC_0(x + \phi y)$. More generally, if $\tilde{\gamma}^\uparrow, \tilde{\gamma}^\downarrow$ are such that $\tilde{\gamma}^\uparrow \geq {\gamma}^\uparrow$ and $\tilde{\gamma}^\downarrow \geq \tilde{\gamma}^\downarrow$ then $\sC_{\tilde{\gamma}^\uparrow, \tilde{\gamma}^\downarrow}(x, \phi, y) \subseteq \sC_{\gamma^\uparrow, \gamma^\downarrow}(x, \phi, y)$.
\end{rem}

For $(x, y, \phi) \in \sS^\circ_{\gamma^\uparrow, \gamma^\downarrow}$, the problem with friction is to find $\hat C \in \sC_{\gamma^\uparrow, \gamma^\downarrow}(x, \phi, y)$ such that
\begin{equation}\label{eq:frictional problem}
V^{\hat C}_0 =\sup_{C\in\sC^*_{\gamma^\uparrow, \gamma^\downarrow}(x, \phi, y)}V^C_0,
\end{equation}
where, if $\theta \in (0,1]$, $V^C = (V^C_t)_{t \geq 0}$ denotes the unique generalised utility process from Theorem \ref{thm:SDU:existence:theta<1} solving \eqref{eq:Epstein--Zin SDU} (in a generalised sense), and $\sC^*_{\gamma^\uparrow, \gamma^\downarrow}(x, \phi, y) = \sC_{\gamma^\uparrow, \gamma^\downarrow}(x, \phi, y)$, and if $\theta>1$, $V^C = (V^C_t)_{t \geq 0}$ denotes the unique proper utility process 
 solving \eqref{eq:Epstein--Zin SDU} and $\sC^*_{\gamma^\uparrow, \gamma^\downarrow}(x, \phi, y) = \sC_{\gamma^\uparrow, \gamma^\downarrow}(x, \phi, y) \cap \UU\PP^*$.

\subsection{Shadow fraction of wealth and relative shadow price}

For a cash process $X = (X_t)_{t \geq 0}$ corresponding to some  $(\gamma^\uparrow, \gamma^\downarrow)$-admissible investment-consumption strategy $(\Phi, C)$, define the \emph{fraction of wealth} $P = (P_t)_{t \geq 0}$  invested into the risky asset by
\begin{equation}
P_{t}= \frac{\Phi_t Y_t}{X_t + \Phi_t Y_t }.
\end{equation}
Let $\tilde Y = (\tilde Y_t)_{t \geq 0}$ denote a \emph{shadow stock price process}, which takes values in $Y [\frac{1}{\gamma^\downarrow}, \gamma^\uparrow]$.

Then the process $Q = (Q_t)_{t \geq 0}$ defined by
\begin{equation}
\label{eq:def:shadow fraction wealth}
Q_{t}= \frac{\Phi_t \tilde Y_t}{X_t + \Phi_t \tilde Y_t },
\end{equation}
denotes the corresponding \emph{shadow fraction of wealth}. For future reference, we note that the ratio of
shadow wealth to real wealth can be expressed as
\begin{equation}
\label{eq:shadow over real}
\frac{X+ \Phi \tilde Y}{X +\Phi Y} =  \frac{\frac{X}{X + \Phi Y}}{\frac{X}{X + \Phi \tilde Y}} = \frac{1 -P}{1- Q}.
\end{equation}
We make the ansatz that
the \emph{relative shadow price} $\tilde{Y}/Y$ can be written in terms of the shadow 
fraction of wealth, and thus is given by
\begin{equation}
\frac{\tilde Y}{Y} = \kappa(Q)
\end{equation}
for some function $\kappa$ to be determined later. Note that by \eqref{eq:shadow over real},
\begin{equation}
\label{eq:J q cons}
\kappa(Q)= \frac{\tilde Y}{Y} = \frac{\Phi \tilde Y}{\Phi Y} = \frac{Q (X +\Phi \tilde Y)}{P(X + \Phi  Y)}=  \frac{Q/(1-Q)}{P/(1-P)}.
\end{equation}
We also make the ansatz that
\begin{equation}
Q = q(P)
\end{equation}
for some function $q$ to be determined later. Then \eqref{eq:J q cons} gives
\begin{equation}
\label{eq:consitency:cond}
\kappa(q(p))= \frac{q(p)/(1-q(p))}{p/(1-p)}
\end{equation}
These equations may be inverted to express $p$ in terms of $q$ in which case we find $P=p(Q)$ where
\begin{equation}
\label{eq:pintermsofq}
p(q) = \frac{q}{1 + q(1- \kappa(q)))}, \hspace{20mm} \kappa(q) = \frac{q}{1-q} \frac{1 - p(q)}{p(q)} .
\end{equation}
Differentiating \eqref{eq:consitency:cond} and rearranging yields the key identity
\begin{equation}
\label{eq:consitency:cond:diff}
p(1-p) q'(p)=\left(\frac{1}{q(p)(1-q(p))}-\frac{\kappa'(q(p))}{\kappa(q(p))}\right)^{-1}.
\end{equation}

\subsection{Solution ansatz}

Motivated by the structure \eqref{eq:V0:opt:q} of the frictionless solution, we make the first ansatz that the value process $V  = (V_t)_{t \geq 0}$ is given by
\begin{equation}
\label{eq:ansatz:val proc}
V_t := e^{- \delta \theta t} \frac{(X_t + \Phi_t \tilde Y_t)^{1-R}}{1-R} n\left(Q_t\right)^{-\theta S}
= e^{- \delta \theta t} \frac{(X_t + \Phi_t Y_t \kappa(Q_t))^{1-R}}{1-R} n\left(Q_t\right)^{-\theta S}
,
\end{equation}
where the function $n$ is to be determined. Here $n$ represents the rate of consumption per unit of wealth, when wealth is measured using the shadow stock price.

Thus, setting $p = \frac{ \phi y}{x +\phi y}$, we look for a value function of the form $V = V(I, X,\Phi,Y)$ where
\begin{equation}
\label{eq:ansatz:val proc2}
V(t, x, \phi, y)=e^{- \delta \theta t}\frac{(x + \phi y \kappa(q(p)))^{1-R}}{1-R} n\left(q(p)\right)^{-\theta S}.
\end{equation}
Inside the no trade region we make the second ansatz that the value function should be independent of the shadow fraction of wealth. Thus, we assume that
\begin{equation}
0 =\frac{\partial V}{\partial q} = (1-R) \frac{\phi y \kappa'(q)}{(x+\phi y \kappa(q))} V - \theta S\frac{n^{\prime}(q)}{n(q)} V,
\end{equation}
which, using $\frac{\phi y}{x + \phi y \kappa} = \frac{q}{\kappa}$, is equivalent to
\begin{equation}
\label{eq:key equation}
\frac{\kappa'(q)}{\kappa(q)} =\frac{S}{1-S} \frac{1}{q} \frac{n^{\prime}(q)}{n(q)}.
\end{equation}
Using \eqref{eq:ansatz:val proc2} together with the second ansatz we can easily compute the first-order partial derivatives of $V$
\begin{align}
V_t &= -\delta \theta V, \label{eq:Vt} \\
x V_{x} & =(1-R) V \frac{x}{(x+\phi y \kappa(q))}=(1-R) V(1-q), \label{eq:Vx}\\
\phi V_{\phi} & =(1-R) V \frac{\phi y \kappa(q)}{(x+\phi y \kappa(q))}=(1-R) V q, \label{eq:Vphi} \\
y V_{y} & =(1-R) V \frac{\phi y \kappa(q)}{(x+\phi y \kappa(q))}=(1-R) V q, \label{eq:Vy}
\end{align}
and then using \eqref{eq:consitency:cond:diff} and \eqref{eq:key equation},
\begin{align}
y^{2} V_{y y} &=(1-R)^{2} q^{2} V-(1-R) V q+(1-R) V p(1-p) q'(p) \\
&= (1-R)V \left((1-R)q^2 - q + \left(\frac{1}{q(1-q)}-\frac{\kappa'(q)}{\kappa(q)}\right)^{-1} \right) \\
&=(1-R)V \left((1-R)q^2 - q + \left(\frac{1}{q(1-q)}-\frac{S}{1-S} \frac{1}{q} \frac{n^{\prime}(q)}{n(q)}\right)^{-1} \right). \label{eq:Vyy}
\end{align}

\subsection{Martingale optimality principle}
Define the process $M = (M_t)_{t \geq 0}$ by
\begin{equation*}
  M_t := \int_0^t e^{-\delta s}\frac{C_s^{1-S}}{1-S} ((1-R)V(s,X_s,Y_s,\Phi_s))^\rho \dd s + V(t, X_t, Y_t \Phi_t).
  \end{equation*}
By the martingale optimality principle, $M$ is a supermartingale for each pair $(C, \Phi)$ and a martingale for the optimal pair $(C^*, \Phi^*)$. It\^{o}'s formula and \eqref{eq:Vt} -- \eqref{eq:Vy} and \eqref{eq:Vyy} give
\begin{align} \diff M_t & =  \frac{C_t^{1-S}}{1-S}((1-R)V)^{\rho} e^{-\delta t} \dd t+ V_t \dd t + V_x \dd X_t + V_y \dd Y_t + V_\phi \dd\Phi_t + \frac{1}{2} V_{yy} \dd \langle Y \rangle_t \\
 & =  \left(\frac{C_t^{1-S}}{1-S}((1-R)V)^\rho e^{-\delta t} -\delta \theta V  + r(1-R)V(1-	q) - V_x  C_t + \mu (1-R)q V \right) \!\dd t  \\
&\hspace{5mm}+ \sigma Y_t V_y\dd B_t+(V_\phi - \gamma^\uparrow V_x Y_t) \dd\Phi^\uparrow_t + \left(\frac{V_x}{\gamma^\downarrow}Y_t - V_\phi \right) \dd\Phi^\downarrow_t \\
     &\hspace{5mm} + \frac{\sigma^2}{2} (1-R) V \left((1-R)Q_t^2 - Q_t+ \left(\frac{1}{Q_t(1-Q_t)}-\frac{S}{1-S} \frac{1}{Q_t} \frac{n^{\prime}(Q_t)}{n(Q_t)}\right)^{-1} \right) \!\dd t \\
 & =  \left(\frac{C_t^{1-S}}{1-S}((1-R)V)^\rho e^{-\delta t} - V_x  C_t  \right) \!\dd t  + \sigma Y_t V_y\dd B_t+(V_\phi - \gamma^\uparrow V_x Y_t) \dd\Phi^\uparrow_t + \left(\frac{V_x}{\gamma^\downarrow}Y_t - V_\phi \right) \dd\Phi^\downarrow_t \\
     &\hspace{5mm} - \theta S V \left(\ell (Q_t) -\frac{\sigma^2}{2}\frac{1-S}{S}\left(\frac{1}{Q_t(1-Q_t)}-\frac{S}{1-S} \frac{1}{Q_t} \frac{n^{\prime}(Q_t)}{n(Q_t)}\right)^{-1} \right)\!\dd t
\label{eq:heuristics:M}
\end{align}
where the function $\ell$ is given by
\begin{equation}
\label{eq:ell}
\ell(q) = \frac{\alpha}{S} - \frac{1-S}{S} \left( \lambda\sigma - \frac{\sigma^2}{2} \right) q - (1-R)\frac{1-S}{S} \frac{\sigma^2}{2} q^2.
\end{equation}

Maximising the first $\diff t$-term on the right hand side of \eqref{eq:heuristics:M} over $C_t$ shows that
the candidate optimal consumption stream  $C^* = (C^*_t)_{t \geq 0}$ satisfies
\begin{equation}
\label{eq:defC*}
C_{t}^{*}:=\left(X_{t}+\Phi_{t} \tilde{Y}_{t}\right) n\left(Q_{t}\right)=\left(X_{t}+\Phi_{t} Y_{t} \kappa\left(Q_{t}\right)\right) n\left(Q_{t}\right),
\end{equation}
This implies in particular that $n$ can be interpreted as the optimal consumption rate in terms of shadow wealth. Then we have
\begin{align}
\left(\frac{(C^*_t)^{1-S}}{1-S}(e^{-\delta t}((1-R)V)^\rho) - V_x  C^*_t \right) &= \frac{S}{1-S} (X_t + \Phi_t \tilde Y_t) n(Q_t) V_x = \theta S V n(Q_t)
\end{align}
Plugging this into \eqref{eq:heuristics:M} and using that the $\diff t$-term must vanish, after dividing by $\theta S V$, we obtain
\begin{equation}
\label{eq:HJB}
0 = n(Q_t) -\ell (Q_t) +\frac{\sigma^2}{2}\frac{1-S}{S}\left(\frac{1}{Q_t(1-Q_t)}-\frac{S}{1-S} \frac{1}{Q_t} \frac{n^{\prime}(Q_t)}{n(Q_t)}\right)^{-1},
\end{equation}
Rearranging \eqref{eq:HJB} and noting that $\ell(q)- m(q)  = \frac{1-S}{S} \frac{\sigma^2}{2} q(1-q)$, we obtain after a rearrangement that $\kappa$ satisfies the ODE
\begin{equation}
\label{eq:ODE kappa}
\kappa'(q) =\frac{\kappa(q)}{q(1-q)} \frac{m(q)-n(q)}{\ell(q)-n(q)}.
\end{equation}
Combining this with \eqref{eq:key equation} we find that $n$ satisfies the autonomous ODE
\begin{equation}
\label{eq:node}
n'(q) = O(n,q) \hspace{5mm} \mbox{where} \hspace{5mm} O(n,q) = \frac{1-S}{S} \frac{n}{1-q} \frac{m(q) - n}{\ell(q) -n}.
\end{equation}
Plugging \eqref{eq:ODE kappa} into \eqref{eq:consitency:cond:diff} and noting the simple identify
\begin{equation}
\frac{m(q(p)) - \ell(q(p))}{m(p) - \ell(p)} = \frac{q(p)(1-q(p))}{p(1-p)} \label{eq:m-l:fraction}
\end{equation}
we obtain after a rearrangement that $q$ satisfies the ODE
\begin{equation}
\label{eq:heur:ODE q}
q'(p) = \frac{\ell(q(p)) - n(q(p))}{\ell(p) - m(p)},
\end{equation}
which may be re-expressed as
\begin{equation}
\label{eq:heur:ODE p}
p'(q) = \frac{\ell(p(q)) - m(p(q))}{\ell(q) - n(q)}.
\end{equation}
We find a candidate solution therefore by first solving the autonomous first equation \eqref{eq:node} for $n$, and then solving \eqref{eq:ODE kappa} to determine $\kappa$. This then yields an expression for $p$ in terms of $q$ via \eqref{eq:pintermsofq} or \eqref{eq:heur:ODE p}. The next issue is to find the initial condition for the ODE \eqref{eq:node} which will depend on the round-trip transaction cost $\xi$.

\subsection{No-trade region and boundary conditions}
We assume that the no-trade region in fraction of shadow wealth is given by $[q_*, q^*]$ for some $q_* < q^* \in \R$. This will correspond to a no-trade region in fractions of real wealth of the form $[p_*, p^*]$ for some $p_* < p^* \in \R$.

Outside the no trade region, the shadow price coincides with the bid or ask price (depending on the side). Hence, we get the following boundary condition for the function $\kappa$.
\begin{equation}
\label{eq:kappa:boundary}
\kappa(q) =  \gamma^\uparrow, \quad q \leq q_*, \quad \text{and} \quad \kappa(q) =  \frac{1}{\gamma^\downarrow},  \quad q \geq q^*
\end{equation}
Considering, \eqref{eq:kappa:boundary} for $q = q_*$ and $q = q^*$, integrating $\kappa'(q)/\kappa(q)$ between  $q_{*}$ and $q^{*}$, and using the ODE \eqref{eq:ODE kappa} for $\kappa$, we obtain
\begin{equation}
\label{eq:intcondition}
\gamma^{\uparrow} \gamma^{\downarrow}= \frac{\kappa(q_*)}{\kappa(q^*)} = \exp\left(-\int_{q_{*}}^{q^{*}}\kappa'(q)/\kappa(q) \dd q\right) = \exp\left(\int_{q_{*}}^{q^{*}} \frac{1}{q(1-q)} \frac{n(q)-m(q)}{\ell(q)-n(q)} \dd q\right).
\end{equation}
Moreover, assuming smooth fit of $\kappa$ at $q_*$ and $q^*$, \eqref{eq:kappa:boundary} implies that $\kappa'(q_*) = 0 = \kappa'(q^*)$. This together with the ODE \eqref{eq:ODE kappa} for $\kappa$, this implies that
\begin{equation}
\label{eq:bc}
n(q_*) = m(q_*) \quad  \text{and} \quad n(q^*) = m(q^*).
\end{equation}
The idea is that the integral condition \eqref{eq:intcondition} together with the boundary condition \eqref{eq:bc} fixes $q_*$ and $q^*$ and then also the particular solution $n=n_{\gamma^{\uparrow} \gamma^{\downarrow}}$ we want. The boundaries $p_*$ and $p^*$ can then be found by
\begin{equation*}
p_* = \frac{q_*}{\gamma^\uparrow  + (1-\gamma^\uparrow) q_*} \quad \text{and} \quad p^* = \frac{\gamma^\downarrow q^*}{1 + (\gamma^\downarrow-1) q^*}.
\end{equation*}
It remains to analyse solutions to $n'(q)=O(n,q)$ started from points $(z,n(z)=m(z))$, and to look for the initial point $z=q_*$ such that \eqref{eq:intcondition} holds.

\begin{rem}
\label{rem:qMnotinwedge}
It will follow from later analysis (see Proposition~\ref{prop:n}) that $q_M \in [q_*,q^*]$ so that the Merton ratio lies inside the no-transaction region expressed in terms of fractions of shadow wealth. But this does imply that $q_M \in [p_*,p^*]$, and it is possible that the Merton line lies outside the no-transaction region. This can only happen when $q_M \notin [0,1]$. For example, in the case $q_M>1$, if $\gamma^\downarrow > \frac{q_M}{q_M-1}$ then the Merton line does not even lie in the solvency region $\sS_{\gamma^\uparrow,\gamma^\downarrow}$. Further, if $q_M>1$ then there are cases when $p^*=p^*(\gamma^\uparrow,\gamma^\downarrow)$ is a non-monotonic function of transaction costs. The fact that it is initially increasing will follow from the small transaction cost results and especially Corollary~\ref{cor:smalltcP}; the fact that it must subsequently decrease will follow from the fact that the no-transaction region must lie inside the solvency region. See also Figure~\ref{fig:eg2_tcost}(d) for a numerical example.
\end{rem}

\section{Problem with friction: rigorous statements}
\label{sec:tcrigour}

We assume henceforth that market price of risk $\lambda:= \frac{\mu -r}{\sigma}$ is non-zero. Otherwise the problem is trivial because the solution for the problem with frictions coincides with the one without friction. At $t=0$ the agent would re-balance their portfolio to leave a portfolio consisting of investments in the riskless asset only; thereafter they would continue to not invest in the risky asset.

\begin{sass}
\label{saa:Merton}
We assume $\lambda \neq 0$.
\end{sass}

Note that Standing Assumption \ref{saa:Merton} implies in particular that the Merton ratio $q_M = \frac{\lambda}{\sigma R}$ is non-zero.

\medskip{}

For better readability, all proofs and some results of a more technical nature can be found in Appendix~\ref{sec:app:tcrigour}.

\subsection{Well-posedness conditions}
We begin our discussion by defining when the problem with friction \eqref{eq:frictional problem} is well posed. It turns out that this does not depend on the transaction cost parameters $\gamma^\uparrow$ and $\gamma^\downarrow$ for buying and selling separately but only on their product $\xi := \gamma^\uparrow \gamma^\downarrow$.
\begin{defn}
\label{def:well posed}
Define the \emph{threshold transaction cost} by\footnote{Here, we agree that $\exp(\int_{q^m_-}^{q^m_+} -\frac{1}{q(1-q)} \frac{m(q)}{\ell(q)} \dd q) := \infty$ in case that $\int_{q^m_-}^{q^m_+} |\frac{1}{q(1-q)} \frac{m(q)}{\ell(q)}| \dd q = \infty$.}
\begin{equation}
\label{eq:ol xi}
\ol \xi := \begin{cases}
1 &\text{if $m$ has at most one zero,} \\
\exp\left(\int_{q^m_-}^{q^m_+} -\frac{1}{q(1-q)} \frac{m(q)}{\ell(q)} \dd q\right) &\text{if $m$ has two distinct zeros $q^m_- < q^m_+$.}
\end{cases}
\end{equation}
Then $\xi \in (1, \infty)$ is said to be in the \emph{well-posedness range of transaction costs} if one of the following two conditions is satisfied
\begin{enumerate}
\item $R <1$, $m(0), m(1) > 0$ and $\xi > \ol \xi$.
\item $R > 1$, $m(q_M) > 0$ and $\xi < \ol \xi$.
\end{enumerate}
\end{defn}
To understand Definition \ref{def:well posed}, let us look at the cases $R < 1$ and $R > 1$ separately.

First, if $R < 1$, the problem with friction cannot be well posed if either investing nothing ($q = 0$) or everything ($q = 1$) in the risky asset leads to infinite wealth in the frictionless problem (which is the case if $m(0) \leq 0$ or $m(1) \leq 0$, respectively) since these are the two cases where the frictionless strategy is also admissible with friction; the corresponding wealth is either the same if $q  = 0$ or reduced by the factor $\frac{1}{\gamma^\downarrow}$ if $q = 1$. But even if $m(0), m(1) > 0$, the frictionless Merton problem will be ill-posed if $m(q_M) \leq 0$. Hence, the problem with friction is only well-posed if transaction costs are sufficiently large.

Next, if $R > 1$, the problem with friction cannot be well posed if the frictionless Merton problem is not well posed, i.e., if $m(q_M) \leq 0$. But even if $m(q_M) > 0$, the problem with friction is only well-posed if transaction costs are sufficiently small.

We shall show in Section \ref{subsec:ill-posed} that if $\xi := \gamma^\uparrow \gamma^\downarrow$ does not lie in the well-posedness range of transaction costs, the problem \eqref{eq:frictional problem} is ill-posed.

\begin{rem}
If $m$ in case (a) of Definition \ref{def:well posed} has two distinct zeros $q^m_- < q^m_+$, then $\ol \xi < \infty$ since either $q^m_- < q^m_+ < 0$ or $0 < q^m_- < q^m_+ < 1$ or $1 < q^m_- < q^m_+$ by the assumption that $m(0), m(1) > 0$.

In case (b) of Definition \ref{def:well posed}, however, if $m(q_M)>0$, then $m$ always has two distinct zeros $q^m_- < q^m_+$. Moreover,  $\ol \xi = \infty$ if and only if $m(0) \geq 0$ or $m(1) \geq 0$ or there exists $q \in (0, 1)$ with $\ell(q) \geq 0$. Using that $m(0) =\ell(0)$ and $m(1) = \ell(1)$ the latter three cases can be summarised by the single condition: $\ol \xi = \infty$ if and only if $\max_{q \in [0, 1]} \ell(q) \geq 0$.
\end{rem}

\subsection{The free boundary problem}

We first establish existence, uniqueness and further properties of the optimal consumption rate $n$. Mathematically, this is linked to a first-order free-boundary problem, which is additionally delicate due to the fact that the right-hand side of the ODE \eqref{eq:prop:n:ODE} is not defined for $q = 1$, which may lie inside the free-boundary region.

The region between the two free boundaries for $n$ turns out to be the shadow no-trade region. Note that the latter does not depend on the transaction costs parameters $\gamma^\uparrow, \gamma^\downarrow$ separately but only on their product $\xi = \gamma^\uparrow \gamma^\downarrow$.
\begin{prop}
\label{prop:n}
Let $\xi\in (1, \infty)$ be in the well-posedness range of transaction costs. Then there exists unique boundary points $q_*(\xi), q^*(\xi) \in \RR$ satisfying
\begin{align}
0 < q_*(\xi) < q_M < q^*(\xi) < 1\quad &\text{if } q_M \in (0, 1),\\
0 < q_*(\xi) <  q^*(\xi) = 1\quad &\text{if } q_M =1,\\
0 < q_*(\xi) < q_M < q^*(\xi) < 2q_M-1  \quad &\text{if } q_M >1, \\
2q_M < q_*(\xi) < q_M < q^*(\xi) < 0 \quad&\text{if } q_M<0,
\end{align}
as well as a unique continuously differentiable function $n_\xi : [q_*(\xi) , q^*(\xi) ] \to (0, \infty)$ with the following properties:
\begin{enumerate}
\item $n_\xi(q) \neq \ell(q)$ for $q \neq 1$.
\item $n_\xi$ is decreasing if $R <1$ and increasing if $R > 1$.
\item On $[q_*(\xi) , q^*(\xi) ] \setminus \{1\}$, $n_\xi$ satisfies the ODE
\begin{equation}
\label{eq:prop:n:ODE}
n'_\xi(q) =  \frac{1-S}{S} \frac{n_\xi}{1-q} \frac{m(q) - n_\xi(q)}{\ell(q) -n_\xi(q)}.
\end{equation}
\item $n_\xi$ satisfies the boundary conditions
\begin{align}
\label{eq:prop:n:bound}
n_\xi(q_*(\xi)) = m(q_*(\xi)) \quad &\text{and} \quad  n_\xi(q^*(\xi)) = m(q^*(\xi)), \\
n'_\xi(q_*(\xi)) = 0  \quad &\text{and} \quad  n'_\xi(q^*(\xi)) = 0.
\label{eq:prop:n:bound:deriv}
\end{align}
\item  If $1 \in [q_*(\xi), q^*(\xi)]$, $n_\xi$ satisfies
\begin{equation}
n_\xi(1) = m(1) \quad \text{and} \quad n'_\xi(1) = m'(1).
\end{equation}
\item $n_\xi$ satisfies the integral constraint
\begin{equation}
\label{eq:prop:n:int const}
\xi = \exp\left(\int_{q_*(\xi)}^{q^*(\xi)} -\frac{1}{q(1-q)} \frac{m(q) -n_\xi(q)}{\ell(q)-n_\xi(q)} \dd q\right). 
\end{equation}
\end{enumerate}
Moreover, $q_*(\xi)$ and $q^*(\xi)$ satisfy the bounds
\begin{align}
q_*(\xi) &< \frac{\xi q^*(\xi)}{1 + (\xi - 1)q^*(\xi)} < \frac{\xi}{\xi -1}, \label{eq:prop:n:q up boundE}\\
q^*(\xi) &> \frac{q_*(\xi)}{\xi - q_*(\xi)(\xi - 1)} > -\frac{1}{\xi-1}.   \label{eq:prop:n:q down boundE}
\end{align}
\end{prop}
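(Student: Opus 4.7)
The overall approach is a one-parameter shooting argument: I parametrise candidate solutions by a single starting point $z$ lying on the curve $\{n = m(q)\}$, propagate the ODE \eqref{eq:prop:n:ODE} in both directions from $z$ until the trajectory re-meets the $m$-curve, and then match the resulting round-trip integral \eqref{eq:prop:n:int const} to the prescribed $\xi$ via continuity and monotonicity of the shooting map.

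\textbf{Step 1 (Local construction at a smooth-fit point).} For each $z$ in a suitable interval around $q_M$ (with $z \neq 1$), I analyse the ODE \eqref{eq:prop:n:ODE} started from $(z, m(z))$. Since $n = m$ makes the numerator vanish while $\ell(z) \neq m(z)$ for $z \neq 0,1$, the right-hand side is $0/(\ell(z)-m(z))$, so $n'(z) = 0$ automatically, giving the smooth-fit condition. A Taylor expansion to second order around $z$ shows that the solution departs from $m$ in a definite direction: when $S<1$ (so $m$ is convex and $m < \ell$ on the relevant interval) the solution stays in the strip $\{m < n < \ell\}$, and when $S > 1$ (so $m$ is concave and $m > \ell$) it stays in $\{\ell < n < m\}$. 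In either case $\ell - n$ is bounded away from zero, so standard ODE theory yields a unique $C^1$ continuation.

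\textbf{Step 2 (Propagation and return to the $m$-curve).} I continue the solution to the right of $z$ (and symmetrically to the left). Monotonicity of $n_\xi$ (item (b)), i.e.\ the sign of $n'$, is determined by the signs of $(1-S)/S$, $(m-n)$ and $(\ell-n)$, all of which are under control inside the strip. The curve $n(q)$ remains strictly between $m$ and $\ell$ until it returns to $m$, at which point $n' = 0$ again; call the next intersection $q^*(z)$. The genuinely delicate case is $q_M > 1$, when $1 \in [q_*(\xi), q^*(\xi)]$. At $q = 1$, the ODE has a singularity but also $\ell(1) = m(1)$, so numerator and denominator degenerate together. I would pass to the associated planar autonomous system $(\dot q, \dot n) = ((1-q)(\ell(q)-n), \tfrac{1-S}{S} n (m(q)-n))$, study its phase portrait at the critical point $(1, m(1))$, and verify that exactly one non-trivial trajectory passes smoothly through with slope $m'(1)$ (giving item (e)). This phase-plane analysis through the critical point is the principal technical obstacle.

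\textbf{Step 3 (Shooting map and uniqueness).} Define the map
\begin{equation*}
\Xi(z) := \exp\left(\int_{q_*(z)}^{q^*(z)} -\frac{1}{q(1-q)} \frac{m(q) - n_z(q)}{\ell(q) - n_z(q)} \dd q\right).
\end{equation*}
Continuity of $\Xi$ in $z$ is standard from continuous dependence of ODE solutions on initial data (with a little care near $q=1$). At $z = q_M$ both endpoints collapse, $q_*(z) = q^*(z) = q_M$, giving $\Xi(q_M) = 1$. As $z$ moves away from $q_M$, a comparison-principle argument applied to two nearby initial conditions shows the two endpoints $q_*(z), q^*(z)$ move apart monotonically and the integrand has definite sign, yielding strict monotonicity of $\Xi$. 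I then verify that $\Xi(z) \to \overline{\xi}$ as $z$ tends to the appropriate limit (e.g.\ $z \to q^m_\pm$ when $m$ has two roots inside the admissible range, or to the natural boundary of the shooting interval otherwise), recovering exactly the Definition~\ref{def:well posed} threshold. The intermediate value theorem then furnishes a unique $z = z(\xi)$ with $\Xi(z) = \xi$ whenever $\xi$ lies in the well-posedness range. Set $q_*(\xi), q^*(\xi), n_\xi$ accordingly; items (a)--(f) have been verified along the way.

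\textbf{Step 4 (Location of the no-trade region and the bounds).} The inequalities $q_*(\xi) < q_M < q^*(\xi)$ (when $q_M \in (0,1)$, and their analogues otherwise) follow because $m$ is strictly monotone on each side of its extremum $q_M$, so the solution $n_z$, which is itself monotone, cannot return to $m$ without $q$ first crossing $q_M$. Positivity of $n_\xi$ then follows because $n_\xi \geq m(q_M) > 0$ (for $R > 1$, case (b)) or $n_\xi$ is trapped above $0$ on the other branch (case (a)). Finally, the bounds \eqref{eq:prop:n:q up boundE}--\eqref{eq:prop:n:q down boundE} come from the requirement that the physical fractions $p_*(\xi), p^*(\xi)$ given by \eqref{eq:pintermsofq} be well-defined and in the solvency region, i.e.\ $\kappa$ stays in $[1/\gamma^\downarrow, \gamma^\uparrow]$; algebraic manipulation of $p^*(\xi) = \gamma^\downarrow q^*(\xi)/(1+(\gamma^\downarrow-1)q^*(\xi))$ and its counterpart for $p_*(\xi)$, together with $\xi = \gamma^\uparrow \gamma^\downarrow$, yields the claimed inequalities.
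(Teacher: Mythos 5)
Your Steps 1--3 are essentially the paper's own argument: it too shoots from a point $(z,m(z))$ on the smooth-fit curve, runs the trajectory until it re-crosses $m$ at $\zeta^+(z)$, and selects $z$ by applying the intermediate value theorem to the continuous, strictly monotone round-trip map $\Sigma^+(z)$ (your $\Xi$), whose limiting values $1$, $\ol\xi$ and $\infty$ at the ends of the shooting interval are exactly what makes Definition~\ref{def:well posed} the correct well-posedness condition. The hard ODE analysis --- including the passage through the singular point $(1,m(1))$ when $q_M>1$ --- is packaged into Propositions~\ref{prop:n first:R<1} and~\ref{prop:n first:R>1}, whose proofs the paper delegates to the additive-utility literature; your phase-plane analysis at $(1,m(1))$ is a reasonable stand-in for that omitted work.

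The genuine gap is in Step 4, in your derivation of \eqref{eq:prop:n:q up boundE} and \eqref{eq:prop:n:q down boundE}. You propose to deduce these from the requirement that $p_*$, $p^*$ be well-defined and in the solvency region and that $\kappa$ stay in $[1/\gamma^\downarrow,\gamma^\uparrow]$. But in the paper's logical order those facts come \emph{after} Proposition~\ref{prop:n}: the inequality $q_*(\xi)<\tau_\xi(q^*(\xi))$ is precisely what Lemma~\ref{lem:p low p up} uses to prove $p_*<p^*$ in the only delicate case $1<q_*(\xi)<q^*(\xi)$, and the properties of $\kappa$ (Corollary~\ref{cor:kappa xi}) and of $p(q)=\tau_{1/\kappa(q)}(q)$ (Proposition~\ref{prop:q of p}) are themselves built on Proposition~\ref{prop:n}. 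As written, your argument is circular. The paper instead proves the bound directly: for $q^*\le 1$, or for $q_*\le 1<q^*$, it is elementary algebra; in the remaining case $1<q_*<q_M<q^*$ one uses that $n_\xi$ and $\ell$ lie on the same side of $m$ there, so $0<\frac{m(q)-n_\xi(q)}{\ell(q)-n_\xi(q)}<1$, and the integral constraint \eqref{eq:prop:n:int const} gives $\ln\xi<\int_{q_*}^{q^*}\frac{\dd q}{q(q-1)}$, which rearranges to \eqref{eq:prop:n:q up boundE}. A further small slip: positivity of $n_\xi$ does not follow from ``$n_\xi\ge m(q_M)>0$'' --- for $R>1$ one has $n_\xi\le m\le m(q_M)$, and for $R<1$ the quantity $m(q_M)$ may be nonpositive in the well-posed regime; positivity follows instead from the monotonicity of $n_\xi$ together with the endpoint values $m(q_*(\xi)),m(q^*(\xi))>0$.
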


We next turn to existence, uniqueness and further properties of the relative shadow price $\kappa$.

\begin{cor}
\label{cor:kappa xi}
Let  $\gamma^\uparrow, \gamma^\downarrow \in [1, \infty)$ with $\xi:= \gamma^\uparrow \gamma^\downarrow  > 1$ be such that $\xi$ lies in the well-posedness range of transaction costs.  Then there exists a unique continuously differentiable function $\kappa_{\gamma^\uparrow,\gamma^\downarrow} : [q_*(\xi) , q^*(\xi) ] \to (0, \infty)$ with the following properties:
\begin{enumerate}
\item $\kappa_{\gamma^\uparrow,\gamma^\downarrow}$ is decreasing.
\item On $[q_*(\xi) , q^*(\xi) ] \setminus \{1\}$, $\kappa_{\gamma^\uparrow,\gamma^\downarrow}$ satisfies the ODE
\begin{equation}
\label{eq:cor:kappa:ODE}
\kappa'_{\gamma^\uparrow,\gamma^\downarrow}(q) = \frac{\kappa_{\gamma^\uparrow,\gamma^\downarrow}(q)}{q(1-q)} \frac{m(q) - n_\xi(q)}{\ell(q)-n_\xi(q)},
\end{equation}
\item $\kappa_{\gamma^\uparrow,\gamma^\downarrow}$ satisfies the boundary conditions
\begin{align}
\label{eq:cor:kappa:bound}
\kappa_{\gamma^\uparrow,\gamma^\downarrow}(q_*(\xi)) = \gamma^\uparrow \quad &\text{and} \quad \kappa_{\gamma^\uparrow,\gamma^\downarrow}(q^*(\xi))  = \frac{1}{\gamma^\downarrow},
\\
\kappa'_{\gamma^\uparrow,\gamma^\downarrow}(q_*(\xi)) = 0 \quad &\text{and} \quad  \kappa'_{\gamma^\uparrow,\gamma^\downarrow}(q^*(\xi)) =0.
\label{eq:cor:kappa:bound:deriv}
\end{align}
\end{enumerate}
Moreover, if $\tilde \gamma^\uparrow,\tilde \gamma^\downarrow \in [1, \infty)$ are such that $\tilde \gamma^\uparrow\tilde \gamma^\downarrow = \xi$, then
\begin{equation}
\kappa_{\tilde \gamma^\uparrow,\tilde \gamma^\downarrow} = \kappa_{\gamma^\uparrow, \gamma^\downarrow} \times \frac{\tilde \gamma^\uparrow}{\gamma^\uparrow} = \kappa_{\gamma^\uparrow, \gamma^\downarrow} \times \frac{ \gamma^\downarrow}{\tilde \gamma^\downarrow}.
\end{equation}
\end{cor}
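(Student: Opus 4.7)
The strategy is to exploit the fact that once $n_\xi$ has been supplied by Proposition \ref{prop:n}, the ODE \eqref{eq:cor:kappa:ODE} is linear in $\kappa$ (the right-hand side being $\kappa$ times a known function of $q$) and hence separable. I would therefore simply define
\begin{equation*}
\kappa_{\gamma^\uparrow,\gamma^\downarrow}(q) := \gamma^\uparrow \exp\left(\int_{q_*(\xi)}^q \frac{1}{s(1-s)} \frac{m(s) - n_\xi(s)}{\ell(s) - n_\xi(s)} \, ds\right), \qquad q \in [q_*(\xi), q^*(\xi)],
\end{equation*}
and then verify all stated properties. Uniqueness is immediate from the theory of first-order linear ODEs given the initial value $\kappa(q_*(\xi)) = \gamma^\uparrow$.

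By construction, $\kappa_{\gamma^\uparrow,\gamma^\downarrow}$ satisfies the ODE \eqref{eq:cor:kappa:ODE} on $[q_*(\xi),q^*(\xi)]\setminus\{1\}$ and the first boundary condition. The second boundary condition $\kappa(q^*(\xi)) = 1/\gamma^\downarrow$ drops out by evaluating the definition at $q = q^*(\xi)$ and invoking the integral constraint \eqref{eq:prop:n:int const}, which forces the exponential to equal $1/\xi$; hence $\kappa(q^*(\xi)) = \gamma^\uparrow/\xi = 1/\gamma^\downarrow$. The derivative conditions $\kappa'_{\gamma^\uparrow,\gamma^\downarrow}(q_*(\xi)) = \kappa'_{\gamma^\uparrow,\gamma^\downarrow}(q^*(\xi)) = 0$ follow directly from \eqref{eq:cor:kappa:ODE} together with Proposition \ref{prop:n}(d) (which ensures the numerator $m - n_\xi$ vanishes at both endpoints) and Proposition \ref{prop:n}(a) (which keeps $\ell - n_\xi$ bounded away from zero there).

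The main technical subtlety is the apparent $1/(1-s)$ singularity of the integrand at $s = 1$ in the case $1 \in (q_*(\xi), q^*(\xi))$ (which arises when $q_M \geq 1$). Here I would invoke Proposition \ref{prop:n}(e): since $n_\xi(1) = m(1)$ and $n'_\xi(1) = m'(1)$, the numerator $m(s) - n_\xi(s)$ vanishes to at least second order at $s = 1$, so the apparent singularity is removable, the integral converges, and $\kappa_{\gamma^\uparrow,\gamma^\downarrow}$ is in fact $C^1$ on all of $[q_*(\xi), q^*(\xi)]$ with $\kappa'_{\gamma^\uparrow,\gamma^\downarrow}(1) = 0$. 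For monotonicity I would combine \eqref{eq:cor:kappa:ODE} with the ODE \eqref{eq:prop:n:ODE} to rewrite
\begin{equation*}
\frac{\kappa'_{\gamma^\uparrow,\gamma^\downarrow}(q)}{\kappa_{\gamma^\uparrow,\gamma^\downarrow}(q)} = \frac{S}{1-S} \, \frac{n'_\xi(q)}{q \, n_\xi(q)}.
\end{equation*}
Under Standing Assumption \ref{sass:rational} the signs of $1-R$ and $1-S$ coincide, so the factor $S/(1-S)$ combined with the sign of $n'_\xi$ given by Proposition \ref{prop:n}(b), and the sign of $q$ determined by the interval location of $(q_*(\xi), q^*(\xi))$ in Proposition \ref{prop:n} (namely $q>0$ when $q_M \geq 0$ and $q<0$ when $q_M<0$), jointly give $\kappa'_{\gamma^\uparrow,\gamma^\downarrow} < 0$ in every case by a short four-way check.

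Finally, the scaling identity is immediate from the construction. The ODE \eqref{eq:cor:kappa:ODE} depends on $(\gamma^\uparrow, \gamma^\downarrow)$ only through $\xi$ (via $n_\xi$) and is linear in $\kappa$, so any scalar multiple of $\kappa_{\gamma^\uparrow,\gamma^\downarrow}$ solves the same ODE. If $\tilde\gamma^\uparrow\tilde\gamma^\downarrow = \xi$, then the multiplier $\tilde\gamma^\uparrow/\gamma^\uparrow$ reproduces the correct initial value $\tilde\gamma^\uparrow$ at $q_*(\xi)$, while $(\tilde\gamma^\uparrow/\gamma^\uparrow)\cdot(1/\gamma^\downarrow) = \tilde\gamma^\uparrow/\xi = 1/\tilde\gamma^\downarrow$ reproduces the terminal value at $q^*(\xi)$, so by the uniqueness already proved $\kappa_{\tilde\gamma^\uparrow,\tilde\gamma^\downarrow} = (\tilde\gamma^\uparrow/\gamma^\uparrow)\kappa_{\gamma^\uparrow,\gamma^\downarrow}$; the second form $(\gamma^\downarrow/\tilde\gamma^\downarrow)\kappa_{\gamma^\uparrow,\gamma^\downarrow}$ agrees with this by the relation $\tilde\gamma^\uparrow/\gamma^\uparrow = \gamma^\downarrow/\tilde\gamma^\downarrow$. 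The principal obstacle is therefore the removable-singularity argument at $q = 1$ and the sign bookkeeping for monotonicity; everything else is essentially mechanical once Proposition \ref{prop:n} is in hand.
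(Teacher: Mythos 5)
Your overall architecture coincides with the paper's: an explicit exponential-integral formula anchored at $(q_*(\xi),\gamma^\uparrow)$, the integral constraint \eqref{eq:prop:n:int const} for the value at $q^*(\xi)$, the vanishing of $m-n_\xi$ and of $n_\xi'$ at the free boundaries for \eqref{eq:cor:kappa:bound:deriv}, the identity $\kappa'/\kappa=\tfrac{S}{1-S}\tfrac{n_\xi'}{qn_\xi}$ for monotonicity, and linearity plus uniqueness for the scaling identity. The one substantive difference is \emph{which} integrand you exponentiate, and that is where your argument has a gap.

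The paper defines $\kappa_{\gamma^\uparrow,\gamma^\downarrow}(q)=\gamma^\uparrow\exp\bigl(\tfrac{S}{1-S}\int_{q_*(\xi)}^q \tfrac{n_\xi'(v)}{vn_\xi(v)}\dd v\bigr)$; this integrand is continuous on all of $[q_*(\xi),q^*(\xi)]$ because $n_\xi$ is $C^1$, strictly positive, and $0\notin[q_*(\xi),q^*(\xi)]$, so there is no singularity to remove. You instead integrate $\tfrac{1}{s(1-s)}\tfrac{m(s)-n_\xi(s)}{\ell(s)-n_\xi(s)}$ and must handle $s=1$. Your justification is deficient in two ways. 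First, Proposition \ref{prop:n}(e) only gives $C^1$ contact of $n_\xi$ with $m$ at $1$, i.e.\ $m(s)-n_\xi(s)=o(|1-s|)$; that does not by itself yield ``vanishing to at least second order'', i.e.\ $O((1-s)^2)$. Second --- and this is the real omission --- the denominator $\ell-n_\xi$ \emph{also} vanishes at $s=1$, to first order, since $\ell(1)=m(1)=n_\xi(1)$ and $\ell(s)-m(s)=\tfrac{1-S}{S}\tfrac{\sigma^2}{2}s(1-s)$. So the integrand behaves like $\tfrac{o(|1-s|)}{(1-s)\cdot c(1-s)}$, which your stated reasons do not control. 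The repair is exactly the identity you invoke later for monotonicity: on $[q_*(\xi),q^*(\xi)]\setminus\{1\}$ the ODE \eqref{eq:prop:n:ODE} gives $\tfrac{1}{s(1-s)}\tfrac{m(s)-n_\xi(s)}{\ell(s)-n_\xi(s)}=\tfrac{S}{1-S}\tfrac{n_\xi'(s)}{sn_\xi(s)}$, whose right-hand side extends continuously across $s=1$ with value $\tfrac{S}{1-S}\tfrac{m'(1)}{m(1)}$ there; using it from the outset (as the paper does) makes the difficulty evaporate. Note that this limit is nonzero whenever $q_M\neq1$, so your incidental claim $\kappa'_{\gamma^\uparrow,\gamma^\downarrow}(1)=0$ is false when $q_M>1$ and $1\in(q_*(\xi),q^*(\xi))$: there $\kappa'(1)=\tfrac{S}{1-S}\tfrac{m'(1)}{m(1)}\kappa(1)\neq0$. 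This does not affect the corollary's statement, but it is a symptom of the same miscounting of orders of vanishing. The remaining steps (boundary values, derivative conditions at $q_*(\xi)$ and $q^*(\xi)$, uniqueness, scaling) are correct and match the paper.
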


\subsection{No-transaction region}
We next turn to the no-transaction region in terms of real (as opposed to shadow) quantities. We shall see that the relationship between the real no-transaction region and the shadow no-transaction region is neatly described by real Möbius transformations.

\begin{defn}
Denote by $\RR \cup \{\ol \infty\}$ the compactified real line. For $c \in (0, \infty)$, define the real Möbius transformation  $\tau_c: \RR \cup \{\ol \infty\} \to \RR \cup \{\ol \infty\}$ by
\begin{equation*}
\tau_c(q): =
\begin{cases}
\frac{c q}{1 + (c-1)q} & \text{if }  q \in \RR \setminus \{\frac{1}{1-c}\}, \\
\ol \infty &\text{if } q = \frac{1}{1-c},\\
\frac{c}{c-1} &\text{if } q = \ol \infty.
\end{cases}
\end{equation*}
\end{defn}
Note that the pair $(\{\tau_c, c\in (0, \infty)\}, \circ)$ forms an Abelian group with the additional property
\begin{equation}
\tau_c \circ \tau_d = \tau_{c d}, \quad c, d \in (0, \infty).
\end{equation}
\begin{rem}
We write $\ol \infty$ to distinguish this  “point at infinity” from the points $-\infty$ and $\infty$ of the extended real line. We also agree that any real number $x$ satisfies both $x > \ol \infty$ and $x < \ol \infty$, and therefore interpret $(\ol \infty,x)$ as $(-\infty, x)$ and  $(x, \ol \infty)$ as $(x, \infty)$. We finally agree that $c/0 := \ol \infty$ for $c \in \RR \setminus \{0\}$ so that $\tau_1 = \id$.
\end{rem}

With the help of Möbius transformations, we may define the boundary points of the real no-trade region (which depends on the transaction cost parameters $\gamma^\uparrow$ and $\gamma^\downarrow$ separately) in term of the boundary points of the shadow no-trade region (which depends on the transaction cost parameters $\gamma^\uparrow$ and $\gamma^\downarrow$ only through their product $\xi = \gamma^\uparrow \gamma^\downarrow$).

\begin{defn}
Let  $\gamma^\uparrow, \gamma^\downarrow \in [1, \infty)$ with $\xi:= \gamma^\uparrow \gamma^\downarrow  > 1$ be such that $\xi$ lies in the well-posedness range of transaction costs.  Let $q_*(\xi), q^*(\xi)$ be as in Proposition \ref{prop:n}. Define $p_*(\gamma^\uparrow, \gamma^\downarrow),  p^*(\gamma^\uparrow, \gamma^\downarrow) \in \RR$ by
\begin{align}
p_*(\gamma^\uparrow, \gamma^\downarrow) &:= \tau_{\frac{1}{\gamma^\uparrow}}(q_*(\xi )) = \frac{q_*(\xi)}{\gamma^\uparrow+ (1-\gamma^\uparrow) q_*(\xi)}, \label{eq:def:p lower star}\\
p^*(\gamma^\uparrow, \gamma^\downarrow) &:=  \tau_{\gamma^\downarrow}(q^*(\xi ))  =\frac{\gamma^\downarrow q^*(\xi)}{1 + (\gamma^\downarrow-1) q^*(\xi)}.
\label{eq:def:p upper star}
\end{align}
\end{defn}

Note that $\tau_{\gamma^\uparrow}(p^*(\gamma^\uparrow, \gamma^\downarrow)) = \tau_{\gamma^\uparrow} \circ \tau_{\frac{1}{\gamma^\uparrow}}(q_*(\xi )) = \tau_1 (q_*(\xi )) = q_*(\xi )$ and similarly $\tau_{\frac{1}{\gamma^\downarrow}}(p^*(\gamma^\uparrow, \gamma^\downarrow)) = q^*(\xi)$.

The following result gives some estimates for the lower and upper bound of the no-transaction region.

\begin{lemma}
\label{lem:p low p up}
Let  $\gamma^\uparrow, \gamma^\downarrow \in [1, \infty)$ with $\xi:= \gamma^\uparrow \gamma^\downarrow  > 1$ be such that $\xi$ lies in the well-posedness range of transaction costs. Then
\begin{align}
0 < p_*(\gamma^\uparrow, \gamma^\downarrow) &<q_M < p^*(\gamma^\uparrow, \gamma^\downarrow) < 1, &&\quad\text{if } q_M \in (0, 1)\\
0 < p_*(\gamma^\uparrow, \gamma^\downarrow) &< p^*(\gamma^\uparrow, \gamma^\downarrow) =1, &&\quad\text{if } q_M =1 \\
0 < p_*(\gamma^\uparrow, \gamma^\downarrow) &< p^*(\gamma^\uparrow, \gamma^\downarrow) < \tau_{\gamma^\downarrow}(2q_M-1)  &&\quad\text{if } q_M >1,  \\
\tau_{\frac{1}{\gamma^\uparrow}} (2 q_M) < p_*(\gamma^\uparrow, \gamma^\downarrow) &< p^*(\gamma^\uparrow, \gamma^\downarrow)  < 0, &&\quad\text{if } q_M<0.
\end{align}
Moreover, $1 \in  [p_*(\gamma^\uparrow, \gamma^\downarrow),  p^*(\gamma^\uparrow, \gamma^\downarrow)]$ if and only if $1 \in  [q_*(\xi),  q^*(\xi)]$.
\end{lemma}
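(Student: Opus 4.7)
The proof hinges on three features of the Möbius transformations $\tau_c$: they are strictly increasing on each connected component of their domain (since $\tau_c'(q) = c/(1+(c-1)q)^2 > 0$), they fix $0$ and $1$, and they obey the Abelian group law $\tau_c \circ \tau_d = \tau_{cd}$. The pole of $\tau_{1/\gamma^\uparrow}$ sits at $\gamma^\uparrow/(\gamma^\uparrow-1) > 1$ and that of $\tau_{\gamma^\downarrow}$ at $1/(1-\gamma^\downarrow) < 0$. Since $\xi \geq \max(\gamma^\uparrow, \gamma^\downarrow)$ and $x \mapsto x/(x-1)$ is decreasing on $(1, \infty)$, the bounds \eqref{eq:prop:n:q up boundE} and \eqref{eq:prop:n:q down boundE} yield $q_*(\xi) < \xi/(\xi-1) \leq \gamma^\uparrow/(\gamma^\uparrow-1)$ and $q^*(\xi) > -1/(\xi-1) \geq 1/(1-\gamma^\downarrow)$, so that $q_*(\xi)$ lies in the connected component of $\tau_{1/\gamma^\uparrow}$'s domain containing both $0$ and $1$, and $q^*(\xi)$ lies in the corresponding component for $\tau_{\gamma^\downarrow}$. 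Hence monotonicity of $\tau_{1/\gamma^\uparrow}$ at $q_*(\xi)$ and of $\tau_{\gamma^\downarrow}$ at $q^*(\xi)$ may be used freely against any reference point in these components.

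Each of the four cases is then handled by applying these monotonic maps to the bounds on $q_*(\xi)$ and $q^*(\xi)$ from Proposition~\ref{prop:n}. For example, in Case~1 ($q_M \in (0,1)$), $p_* = \tau_{1/\gamma^\uparrow}(q_*) > \tau_{1/\gamma^\uparrow}(0) = 0$ and $p^* = \tau_{\gamma^\downarrow}(q^*) < \tau_{\gamma^\downarrow}(1) = 1$; the middle inequalities $p_* < q_M < p^*$ follow from the identity $\tau_c(q) - q = (c-1)q(1-q)/(1+(c-1)q)$, which for $c \leq 1$ and $q \in (0,1)$ yields $\tau_c(q) \leq q$, and for $c \geq 1$ and $q \in (0,1)$ yields $\tau_c(q) \geq q$. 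Case~2 ($q_M = 1$) is immediate from $\tau_c(1) = 1$. In Case~3 ($q_M > 1$), the upper bound $p^* < \tau_{\gamma^\downarrow}(2q_M-1)$ follows from $q^* < 2q_M-1$ and monotonicity (both points lie in the component of $0$), and $p_* > 0$ is as in Case~1. Case~4 is symmetric: $q^* < 0$ gives $p^* < \tau_{\gamma^\downarrow}(0) = 0$, and $q_* > 2q_M$ gives $p_* > \tau_{1/\gamma^\uparrow}(2q_M)$.

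The only inequality that does not reduce to a direct monotonicity argument is $p_* < p^*$ in Cases~3 and~4 (in Cases~1 and~2 it is subsumed by $p_* < q_M \leq p^*$). For this the group law is decisive: \eqref{eq:prop:n:q up boundE} reads $q_*(\xi) < \tau_\xi(q^*(\xi)) = \tau_{\gamma^\uparrow}(\tau_{\gamma^\downarrow}(q^*(\xi))) = \tau_{\gamma^\uparrow}(p^*)$, and since $\tau_{\gamma^\uparrow}(p^*) < \xi/(\xi-1) \leq \gamma^\uparrow/(\gamma^\uparrow-1)$ lies below the pole of $\tau_{1/\gamma^\uparrow}$, applying the increasing map $\tau_{1/\gamma^\uparrow}$ yields $p_* < \tau_{1/\gamma^\uparrow}(\tau_{\gamma^\uparrow}(p^*)) = p^*$. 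Case~4 is analogous, using \eqref{eq:prop:n:q down boundE} and $\tau_{\gamma^\downarrow}$. Finally, the last claim follows from $\tau_c(1) = 1$ together with the component-wise monotonicity of $\tau_{1/\gamma^\uparrow}$ and $\tau_{\gamma^\downarrow}$: one has $p_* \leq 1 \iff q_* \leq 1$ and $p^* \geq 1 \iff q^* \geq 1$, so $1 \in [p_*, p^*] \iff 1 \in [q_*, q^*]$. The sole technical obstacle throughout is the bookkeeping of connected components, dispatched uniformly by the pole bounds derived in the first paragraph.
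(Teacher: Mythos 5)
Your proof is correct and follows essentially the same route as the paper's: monotonicity of the Möbius maps applied to the bounds from Proposition~\ref{prop:n} for the easy inequalities, and the composition identity $\tau_{1/\gamma^\uparrow}\circ\tau_\xi=\tau_{\gamma^\downarrow}$ together with \eqref{eq:prop:n:q up boundE}--\eqref{eq:prop:n:q down boundE} for $p_*<p^*$ when $q_M>1$ or $q_M<0$. Your explicit pole/connected-component bookkeeping (and the proof of the final ``iff'' claim) is slightly more careful than the paper's, which handles the subcase $q_*\leq 1$ separately instead, but the underlying argument is the same.
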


\begin{rem}
\label{rem:locationofNTwedge}
(a)  It follows from Lemma~\ref{lem:p low p up} (see also Davis and Norman~\cite[p704]{DavisNorman:90}) that if the Merton line lies in the first quadrant, then the no-transaction region also lies in the first quadrant (i.e., it is never optimal to have a short position in either the riskless asset or the risky asset) and contains the Merton line. A numerical illustration of this case will be given in Section~\ref{ssec:eg1}. However, if $q_M<0$ or if $q_M>1$, then it may be the case that $q_M$ lies outside the interval $[p_*(\gamma^\uparrow, \gamma^\downarrow),  p^*(\gamma^\uparrow, \gamma^\downarrow)]$ and that the Merton line lies \emph{outside} the no-transaction region. As discussed in Remark~\ref{rem:qMnotinwedge} this has to be the case if $\gamma^\downarrow > \frac{q_M}{q_M-1}$. If $q_M>1$, then Davis and Norman~\cite[p704]{DavisNorman:90} conjectured that the no-transaction region lies in the second quadrant, but this need not be the case (although it must always intersect the second quadrant, and indeed we must have $p^*(\gamma^\uparrow, \gamma^\downarrow)>1$). Conversely, Shreve and Soner~\cite[p675]{shreve:soner} conjectured that $q_M>p^*(\gamma^\uparrow, \gamma^\downarrow)$, whenever $q_M>1$, but this is not true for small transaction costs.

(b) If $q_M>1$ (and $\hatm,m(1)>0$), a new phenomenon arises. It follows from Proposition~\ref{prop:n} that if $\xi$ is large enough so that $q_*(\xi) < 1$, then $n_\xi$ passes through the singular point $(1,m(1))$ and on the interval $(1, q^*(\xi))$, $n_\xi$ does not depend on $\xi$ (and hence also $q^*(\xi)$ does not depend on $\xi$). It follows that once $\xi$ is sufficiently large, $p^*(\gamma^\uparrow,\gamma^\downarrow)$ is independent of $\gamma^\downarrow$.
In the additive case, Hobson et al~\cite[Corollary 5.4]{hobson:tse:zhu:19A} give a financial explanation behind this observation, which also applies to the case of stochastic differential utility. In particular, if $1 \in (q_*(\xi),q^*(\xi))$, then it is possible (indeed inevitable) that at some point the agent, who finances consumption from cash wealth, will exhaust holdings of the risk-free asset. At this point they continue consuming, and finance consumption by borrowing. Subsequently they will sell units of the risky asset in order to stop their short position in the riskless asset from getting too large, but they will always have a short position in cash. Put simply the process $Q$ never returns to the set $(q_*(\xi),1]$ once it has left it; see also Theorem \ref{thm:cand solution 1}(a) below. It follows that if the initial holdings involve a short cash position, then the trajectory of the process $(X_t,\Phi_t)$ will be such that $Q_t$ never hits the lower threshold $q_*(\xi)$, the agent will never purchase units of the risky asset, and the transaction costs on purchases is irrelevant.
\end{rem}

We proceed to describe existence, uniqueness and further properties of the shadow fraction of wealth in terms of the real fraction of wealth.

\begin{prop}
\label{prop:q of p}
Let  $\gamma^\uparrow, \gamma^\downarrow \in [1, \infty)$ with $\xi:= \gamma^\uparrow \gamma^\downarrow  > 1$ be such that $\xi$ lies in the well-posedness range of transaction costs.  Then there exists a unique, increasing, continuously differentiable function  $q_{\gamma^\uparrow, \gamma^\downarrow} :  [p_*(\gamma^\uparrow, \gamma^\downarrow), p^*(\gamma^\uparrow, \gamma^\downarrow)] \to [q_*(\xi), q^*(\xi)]$ with the following properties:
\begin{enumerate}
\item on $[p_*(\gamma^\uparrow, \gamma^\downarrow), p^*(\gamma^\uparrow, \gamma^\downarrow)] \setminus \{1\}$,  $q_{\gamma^\uparrow, \gamma^\downarrow}$ satisfies the ODE
\begin{equation}
\label{eq:prop:q of p}
q'(p) = \frac{\ell(q(p)) - n_{\xi}(q(p))}{\ell(p) - m(p)}.
\end{equation}
\item $q_{\gamma^\uparrow, \gamma^\downarrow}$ satisfies the boundary conditions
\begin{align}
q_{\gamma^\uparrow, \gamma^\downarrow}(p_*(\gamma^\uparrow, \gamma^\downarrow)) = \tau_{\gamma^\uparrow}(p_*(\gamma^\uparrow, \gamma^\downarrow))  = q_*(\xi) \quad &\text{and} \quad q_{\gamma^\uparrow, \gamma^\downarrow}(p^*(\gamma^\uparrow, \gamma^\downarrow)) = \tau_{\frac{1}{\gamma^\downarrow}}(p^*(\gamma^\uparrow, \gamma^\downarrow))  = q^*(\xi), \\
q'_{\gamma^\uparrow, \gamma^\downarrow}(p_*(\gamma^\uparrow, \gamma^\downarrow)) = \tau'_{\gamma^\uparrow}(p_*(\gamma^\uparrow, \gamma^\downarrow))  \quad &\text{and} \quad q'_{\gamma^\uparrow, \gamma^\downarrow}(p^*(\gamma^\uparrow, \gamma^\downarrow)) = \tau'_{\frac{1}{\gamma^\downarrow}}(p^*(\gamma^\uparrow, \gamma^\downarrow)).
\end{align}
\item  If $1 \in [p_*(\gamma^\uparrow, \gamma^\downarrow), p^*(\gamma^\uparrow, \gamma^\downarrow)]$, then $q_{\gamma^\uparrow, \gamma^\downarrow}(1) = 1$

\end{enumerate}
Moreover, if $\tilde \gamma^\downarrow,\tilde \gamma^\uparrow \in [1, \infty)$ are such that $\tilde \gamma^\downarrow\tilde \gamma^\uparrow = \xi$, then
\begin{equation}
q_{\tilde \gamma^\downarrow,\tilde \gamma^\uparrow} = q_{\gamma^\uparrow, \gamma^\downarrow} \circ \tau_{\frac{\tilde \gamma^\downarrow}{\gamma^\downarrow}} = q_{\gamma^\uparrow, \gamma^\downarrow} \circ \tau_{\frac{\gamma^\uparrow}{\tilde \gamma^\uparrow}}.
\end{equation}
\end{prop}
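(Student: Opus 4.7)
The plan is to derive $q_{\gamma^\uparrow,\gamma^\downarrow}$ as the inverse of an explicit function of $q$ built from the relative shadow price $\kappa_{\gamma^\uparrow,\gamma^\downarrow}$ of Corollary~\ref{cor:kappa xi}, and then verify the ODE and boundary conditions. Motivated by the consistency relation \eqref{eq:consitency:cond}, which rearranges to $p = q/(\kappa(q)(1-q)+q)$, define
$$\tilde p(q) := \frac{q}{\kappa_{\gamma^\uparrow,\gamma^\downarrow}(q) + q(1-\kappa_{\gamma^\uparrow,\gamma^\downarrow}(q))}, \qquad q\in[q_*(\xi),q^*(\xi)].$$
Then $\tilde p$ is $C^1$ by Corollary~\ref{cor:kappa xi}, and substituting the boundary values $\kappa_{\gamma^\uparrow,\gamma^\downarrow}(q_*(\xi))=\gamma^\uparrow$ and $\kappa_{\gamma^\uparrow,\gamma^\downarrow}(q^*(\xi))=1/\gamma^\downarrow$ shows $\tilde p(q_*(\xi))=\tau_{1/\gamma^\uparrow}(q_*(\xi))=p_*(\gamma^\uparrow,\gamma^\downarrow)$ and $\tilde p(q^*(\xi))=\tau_{\gamma^\downarrow}(q^*(\xi))=p^*(\gamma^\uparrow,\gamma^\downarrow)$.

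The central computation is to differentiate $\tilde p$, substitute the ODE \eqref{eq:cor:kappa:ODE} for $\kappa_{\gamma^\uparrow,\gamma^\downarrow}$, and use the algebraic identity $\tilde p(1-\tilde p)\,(\kappa+q(1-\kappa))^2 = q\kappa(1-q)$ together with $\ell(q)-m(q) = \frac{1-S}{S}\frac{\sigma^2}{2}q(1-q)$. After simplification one obtains
$$\tilde p'(q) = \frac{\tilde p(q)(1-\tilde p(q))}{q(1-q)}\cdot\frac{\ell(q)-m(q)}{\ell(q)-n_\xi(q)} = \frac{\ell(\tilde p(q))-m(\tilde p(q))}{\ell(q)-n_\xi(q)}.$$
Strict monotonicity is established in two pieces. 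First, $\tilde p$ and $q$ lie on the same side of $\{0,1\}$ by direct inspection of the explicit formula, so $\frac{\tilde p(1-\tilde p)}{q(1-q)}>0$. Second, the ratio $\frac{\ell-m}{\ell-n_\xi}$ equals $1$ at both endpoints (since $n_\xi(q_*)=m(q_*)$ and $n_\xi(q^*)=m(q^*)$), and $\ell-n_\xi$ does not vanish on $[q_*,q^*]\setminus\{1\}$ by Proposition~\ref{prop:n}(a); a sign comparison with $\ell-m$ around $q=1$ (using that both change sign there when $1\in(q_*,q^*)$) shows the ratio remains strictly positive. At $q=1$ itself, Proposition~\ref{prop:n}(e) gives $m-n_\xi = O((1-q)^2)$ while $\ell-m = O(1-q)$, so the apparent $0/0$ resolves to $\tilde p'(1)=\kappa_{\gamma^\uparrow,\gamma^\downarrow}(1)>0$.

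Since $\tilde p$ is a $C^1$ strictly increasing bijection $[q_*(\xi),q^*(\xi)]\to[p_*(\gamma^\uparrow,\gamma^\downarrow),p^*(\gamma^\uparrow,\gamma^\downarrow)]$, define $q_{\gamma^\uparrow,\gamma^\downarrow}:=\tilde p^{-1}$. By the inverse function theorem $q_{\gamma^\uparrow,\gamma^\downarrow}$ is $C^1$ and strictly increasing, and the reciprocal of the derivative formula above yields the ODE \eqref{eq:prop:q of p}. The boundary values are inherited from those of $\tilde p$. For the smooth-fit derivatives, using $n_\xi(q_*)=m(q_*)$ reduces the ODE's numerator at $p_*$ to $\ell(q_*)-m(q_*) = \frac{1-S}{S}\frac{\sigma^2}{2}q_*(1-q_*)$, and a direct calculation from $q_*=\tau_{\gamma^\uparrow}(p_*)$ shows that $\frac{q_*(1-q_*)}{p_*(1-p_*)} = \tau'_{\gamma^\uparrow}(p_*)$; the analogous identity at $p^*$ delivers the second smooth-fit condition. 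Property~(c) is immediate because $\tilde p(1) = 1/(\kappa(1)+1-\kappa(1)) = 1$ regardless of $\kappa(1)$, and $1\in[p_*,p^*]$ iff $1\in[q_*(\xi),q^*(\xi)]$ by Lemma~\ref{lem:p low p up}. Uniqueness follows from Picard's theorem on each side of $p=1$ (where the right-hand side of the ODE is locally Lipschitz in $q$), together with the fact that the $C^1$ requirement plus the second-order matching from Proposition~\ref{prop:n}(e) force a unique branch through the singular point $(1,1)$.

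For the scaling relation, Corollary~\ref{cor:kappa xi} gives $\kappa_{\tilde\gamma^\uparrow,\tilde\gamma^\downarrow} = (\tilde\gamma^\uparrow/\gamma^\uparrow)\,\kappa_{\gamma^\uparrow,\gamma^\downarrow}$; substituting into the defining formula for $\tilde p$ and simplifying yields $\tilde p_{\tilde\gamma^\uparrow,\tilde\gamma^\downarrow} = \tau_{\gamma^\uparrow/\tilde\gamma^\uparrow}\circ\tilde p_{\gamma^\uparrow,\gamma^\downarrow}$. Inverting this identity, using $\tau_c^{-1}=\tau_{1/c}$ and the constraint $\tilde\gamma^\uparrow\tilde\gamma^\downarrow=\gamma^\uparrow\gamma^\downarrow$ (which lets one re-express the scaling parameter in either variable via $\tilde\gamma^\uparrow/\gamma^\uparrow=\gamma^\downarrow/\tilde\gamma^\downarrow$), gives the asserted composition identity. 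The main obstacle throughout is the analysis at the singular locus $(p,q)=(1,1)$: this is where the higher-order matching $n_\xi'(1)=m'(1)$ from Proposition~\ref{prop:n}(e) is essential, both to resolve the $0/0$ in $\tilde p'(1)$ and to secure uniqueness of the $C^1$ inverse through the singularity.
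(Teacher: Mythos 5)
Your proposal is correct and follows essentially the same route as the paper: both construct the inverse map $p(q)=\tau_{1/\kappa_{\gamma^\uparrow,\gamma^\downarrow}(q)}(q)$, differentiate it using the ODE for $\kappa$ and the identity $\frac{p(1-p)}{q(1-q)}=\frac{m(p)-\ell(p)}{m(q)-\ell(q)}$ to obtain $p'(q)=\frac{\ell(p)-m(p)}{\ell(q)-n_\xi(q)}>0$, and then invert. Your explicit resolution of the $0/0$ at $q=1$ and the Picard-plus-matching uniqueness argument are slightly more detailed than the paper's (which gets $C^1$ of $p$ for free from the explicit formula and cites standard ODE theory with the double boundary conditions), but the substance is identical.
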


\subsection{The candidate value function and candidate optimal strategy}
In order define the candidate value function on the whole of $\sS^\circ_{\gamma^\uparrow, \gamma^\downarrow}$ (and not just inside the no-trade region), we need to extend $n$, $\kappa$ and $q$. The idea is to extend $n$ and $\kappa$ so that they are constant outside the no-transaction region. This in turn determines the extension of $q$.

\begin{defn}
\label{def:extension}
Let  $\gamma^\uparrow, \gamma^\downarrow \in [1, \infty)$ with $\xi:= \gamma^\uparrow \gamma^\downarrow  > 1$ be such that $\xi$ lies in the well-posedness range of transaction costs.
\begin{enumerate}
\item Define the function $\ol n_\xi: (-\infty, \infty) \to (0, \infty)$ by
\begin{equation*}
\ol n_\xi(q) :=
\begin{cases}
n_\xi(q_*(\xi)) = m(q_*(\xi)) &\text{if } q \in (-\infty, q_*(\xi)), \\
n_\xi(q) &\text{if } q \in [q_*(\xi), q^*(\xi)], \\
n_\xi(q^*(\xi)) = m(q^*(\xi)) &\text{if } q \in (q^*(\xi), \infty). \\
\end{cases}
\end{equation*}

\item Define the function $\ol \kappa_{\gamma^\uparrow, \gamma^\downarrow}: (-\infty, \infty) \to [\frac{1}{\gamma^\downarrow},
\gamma^\uparrow]$ by
\begin{equation*}
\ol \kappa_{\gamma^\uparrow, \gamma^\downarrow}(q) :=
\begin{cases}
\kappa_{\gamma^\uparrow, \gamma^\downarrow}(q_*(\xi)) = \gamma^\uparrow &\text{if } q \in (-\infty, q_*(\xi)), \\
\kappa_{\gamma^\uparrow, \gamma^\downarrow}(q) &\text{if } q \in [q_*(\xi), q^*(\xi)], \\
\kappa_{\gamma^\uparrow, \gamma^\downarrow}(q^*(\xi)) = \frac{1}{\gamma^\downarrow} &\text{if } q \in (q^*(\xi), \infty). \\
\end{cases}
\end{equation*}
\item Define the function $\ol q_{\gamma^\uparrow, \gamma^\downarrow}: (\tau_{\frac{1}{\gamma^\uparrow}}(\ol \infty), \tau_{\gamma^\downarrow}(\ol \infty)) = (\frac{-1}{\gamma^\uparrow-1}, \frac{\gamma^\downarrow}{\gamma^\downarrow-1}) \to \RR$ by
\begin{equation*}
\ol q_{\gamma^\uparrow, \gamma^\downarrow}(p) :=
\begin{cases}
\tau_{\gamma^\uparrow}(p) &\text{if } p \in (\tau_{\frac{1}{\gamma^\uparrow}(\ol \infty)}, p_*(\gamma^\uparrow, \gamma^\downarrow)), \\
q_{\gamma^\uparrow, \gamma^\downarrow}(p) &\text{if } p \in [p_*(\gamma^\uparrow, \gamma^\downarrow), p^*(\gamma^\uparrow, \gamma^\downarrow)]. \\
\tau_{\frac{1}{\gamma^\downarrow}}(p) &\text{if } p \in (p^*(\gamma^\uparrow, \gamma^\downarrow), \tau_{\gamma^\downarrow}(\ol \infty)), \\
\end{cases}
\end{equation*}

\end{enumerate}

\end{defn}

We can now define candidate value function on the whole of $\sS^\circ_{\gamma^\uparrow, \gamma^\downarrow}$.


\begin{defn}
Let  $\gamma^\uparrow, \gamma^\downarrow \in [1, \infty)$ with $\xi:= \gamma^\uparrow \gamma^\downarrow  > 1$ be such that $\xi$ lies in the well-posedness range of transaction costs. Define the value function $\hat V^{\gamma^\uparrow, \gamma^\downarrow}:  [0, \infty) \times \sS^\circ_{\gamma^\uparrow, \gamma^\downarrow} \to (1-R) (0, \infty)$ by
\begin{equation}
\label{def:hat V}
\hat V^{\gamma^\uparrow, \gamma^\downarrow}(t, x, y, \phi) := e^{- \delta \theta t} \frac{\Big(x + \phi y \ol \kappa_{\gamma^\uparrow, \gamma^\downarrow}\left(\ol q_{\gamma^\uparrow, \gamma^\downarrow}\left(\frac{\phi y}{x+ \phi y}\right)\right)\Big)^{1-R}}{1-R}  \ol n_{\xi}\left(\ol q_{\gamma^\uparrow, \gamma^\downarrow}\left(\frac{\phi y}{x+ \phi y}\right)\right)^{-\theta S}.
\end{equation}
\end{defn}

{We proceed to turn to the candidate optimal strategy. Since we consider all parameter combinations (in particular, $1 \in [p_*(\gamma^\uparrow, \gamma^\downarrow), p^*(\gamma^\uparrow, \gamma^\downarrow)]$ is allowed), this is quite delicate. Even in the additive case, the following result is not covered by standard results in the extant literature (see e.g. \cite{shreve:soner, MMKS:20}). For this reason, we provide a self-contained proof in Appendix~\ref{ssecapp:valuefn}. Note that our proof only uses results on reflected SDEs in one dimension and is therefore technically substantially easier than the results in the extant literature (including \cite{shreve:soner, MMKS:20}), which rely on reflection result in a two dimensional domain.

\begin{thm}
\label{thm:cand solution 1}
Let $\gamma^\uparrow, \gamma^\downarrow \in [1, \infty)$ with $\xi:= \gamma^\uparrow \gamma^\downarrow  > 1$ be such that $\xi$ is in the well-posedness range of transaction costs. Let $(x, y, \phi) \in  \sS^\circ_{\gamma^\uparrow, \gamma^\downarrow}$.

\begin{enumerate}
\item
There exist unique processes $(\hQ,G=G^\uparrow - G^\downarrow)$ with continuous paths and with initial values $\hQ_0 = \min(\max(q(\frac{\phi y}{x + \phi y}), q_*(\xi)), q^*(\xi))$, $G^\uparrow_0=0=G^\downarrow_0$, such that the process $\hat Q = (\hat Q_t)_{t \geq 0}$ defined by
\begin{equation}
\label{eq:dQdef}
d \hat{Q}_t = \frac{2S}{\sigma(1-S)} (\ell(\hQ_t)-n_\xi(\hQ_t)) dB_t + \hQ_t \left( n_\xi(\hQ_t) - 2 \theta S \{\ell(\hQ_t)-n_\xi(\hQ_t) \} \right) \dd t + \dd G^\uparrow_t - \dd G^\downarrow_t
\end{equation}
takes values in $[q_*(\xi), q^*(\xi)]$, and $G^\uparrow, G^\downarrow$ are nondecreasing and increase only when $\hQ_t = q_*(\xi)$ and $\hQ_t = q^*(\xi)$ respectively. Moreover, if $1 \in [q_*(\xi), q^*(\xi)]$, let $\tau := \inf \{t \geq 0: \hat Q_t \geq 1\}$. Then $\hat Q \in [q^*(\xi), 1)$ on
$\llbracket 0, \tau \llbracket$, and $\hat Q = 1$ on $\llbracket \tau, \infty \llbracket$ if $q^*(\xi) = 1$ and $\hat Q \in (1, q_*(\xi)]$ on $\rrbracket \tau, \infty \llbracket$ if $q^*(\xi) >1$.

\item
Set
\begin{equation}
\hPhi_0 :=
\begin{cases}
q_*(\xi)\left( \phi + \frac{x}{ y\gamma^\uparrow} \right) &\text{if} \quad \phi y \gamma^\uparrow (1 - q_*(\xi)) < q_*(\xi) x, \\
\phi & \text{if} \quad \phi y \gamma^\uparrow (1 - q_*(\xi)) \geq q_*(\xi) x \text{ and } \frac{\phi y}{\gamma^\downarrow} (1 - q^*(\xi) \leq q^*(\xi) x,\\
q^*(\xi) \left( \phi + \frac{x \gamma^\downarrow}{y} \right) &\text{if}  \quad \frac{\phi y}{\gamma^\downarrow} (1 - q^*(\xi)) > q^*(\xi) x,
\end{cases}
\end{equation}
and the processes $\hat \Phi = (\hPhi_t)_{t \geq 0}$, $\hat C =(\hC_t)_{t \geq 0}$ and $\hX = (\hX_t)_{t \geq 0}$  by
\begin{align}
\hPhi_t &:= \hPhi_0 \exp \left( \frac{G^\uparrow_t}{q_*} - \frac{G^\downarrow_t}{q^*} \right), \\
\hat C_t &:= \frac{Y_t \kappa(\hQ_t) \hPhi_t}{\hQ_t}  n_\xi(\hQ_t), \\
\label{eq:Xdef1}
\hX_t &:= \frac{1-\hQ_t}{\hQ_t} Y_t \kappa(\hQ_t) \hPhi_t .
\end{align}
Then
\begin{enumerate}
\item
 $(\hat \Phi, \hat C) \in \sA_{\gamma^\uparrow, \gamma^\downarrow}(x, y, \phi)$,
\item $\hX = X^{x, y, \phi, \hat \Phi, \hat C}$ and
\item $\hQ$ satisfies
\begin{equation}
\label{eq:Qselfconsistent}
\hQ_t = \frac{ Y_t  \kappa_{\gamma^\uparrow,\gamma^\downarrow}(\hQ_t) \hPhi_t }{\hX_t + Y_t  \kappa_{\gamma^\uparrow,\gamma^\downarrow}(\hQ_t) \hPhi_t}.
\end{equation}
\end{enumerate}
\end{enumerate}
\end{thm}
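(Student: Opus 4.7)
The plan is to treat part~(1) by one-dimensional reflected SDE theory applied to the interval $[q_*(\xi), q^*(\xi)]$, possibly piecewise around the singular point $q=1$, and then to use the explicit formulas in~(2) to verify admissibility, cash dynamics and self-consistency by an It\^o calculation that leans on the ODEs \eqref{eq:ODE kappa} and \eqref{eq:node}. Write \eqref{eq:dQdef} as $d\hat Q_t = a(\hat Q_t)\, dB_t + b(\hat Q_t)\, dt + dG^\uparrow_t - dG^\downarrow_t$ with $a(q) := \frac{2S}{\sigma(1-S)}(\ell(q) - n_\xi(q))$ and $b(q) := q(n_\xi(q) - 2\theta S[\ell(q) - n_\xi(q)])$. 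In the generic case $1 \notin [q_*(\xi), q^*(\xi)]$, Proposition~\ref{prop:n}(a),(c) shows that $a, b$ are continuously differentiable on the compact interval and $a$ does not vanish (since $n_\xi(q) \neq \ell(q)$ for $q \neq 1$). Existence and uniqueness of a continuous semimartingale $\hat Q$ valued in $[q_*(\xi), q^*(\xi)]$, together with continuous nondecreasing $G^\uparrow, G^\downarrow$ supported on $\{\hat Q = q_*(\xi)\}$ and $\{\hat Q = q^*(\xi)\}$ respectively, then follow from the standard Skorokhod-map construction for one-dimensional two-sided reflection (equivalently, via Tanaka's formula on a bounded interval).

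The main obstacle is the singular case $1 \in [q_*(\xi), q^*(\xi)]$, which arises when $q_M \geq 1$. By Proposition~\ref{prop:n}(e) one has $n_\xi(1) = m(1) = \ell(1)$, so $a(1) = 0$, while $b(1) = m(1) > 0$ (positivity of $m(1)$ follows from the well-posedness hypothesis combined with the shape analysis in Remark~\ref{rem:shapeofm}); thus $q=1$ is a degenerate one-way entrance boundary. I would construct $\hat Q$ by pasting two reflected SDEs. On the stochastic interval $\llbracket 0, \tau \llbracket$ with $\tau := \inf\{t : \hat Q_t \geq 1\}$, solve the one-sided reflected SDE on $[q_*(\xi), 1]$ with reflection only at $q_*(\xi)$; the coefficients are Lipschitz up to $1$ from below and $G^\downarrow \equiv 0$ there. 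At $\tau$ the process reaches $1$; if $q^*(\xi) = 1$ the combination of $a(1)=0$, $b(1) > 0$ and two-sided confinement forces $\hat Q \equiv 1$ on $\llbracket \tau, \infty \llbracket$ with $dG^\downarrow_t = m(1)\, dt$, while if $q^*(\xi) > 1$ one restarts at $1$ and solves the one-sided reflected SDE on $[1, q^*(\xi)]$ with reflection only at $q^*(\xi)$. The non-return across $1$ follows because $a(1)=0$ prevents stochastic return through $1$ and $b(1) > 0$ forces deterministic upward motion; uniqueness is by the same piecewise argument.

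For part~(2), property (c) is purely algebraic: substituting \eqref{eq:Xdef1} for $\hat X$ gives $\hat X_t + Y_t \kappa_{\gamma^\uparrow,\gamma^\downarrow}(\hat Q_t) \hat\Phi_t = Y_t \kappa_{\gamma^\uparrow,\gamma^\downarrow}(\hat Q_t) \hat\Phi_t / \hat Q_t$, which rearranges to \eqref{eq:Qselfconsistent}. The three-case definition of $\hat\Phi_0$ is the minimal pre-trade moving $(x, y, \phi)$ into the shadow no-trade region: an upward pre-trade (case~1), no pre-trade (case~2), or a downward pre-trade (case~3); case-by-case inspection against \eqref{eq:frictionIC} together with the M\"obius formulas \eqref{eq:def:p lower star}--\eqref{eq:def:p upper star} shows $(\hat X_0, Y_0, \hat\Phi_0) \in \sS_{\gamma^\uparrow, \gamma^\downarrow}$ and $\hat Q_0 \in [q_*(\xi), q^*(\xi)]$. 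Since $\hat\Phi_t = \hat\Phi_0 \exp(G^\uparrow_t/q_*(\xi) - G^\downarrow_t/q^*(\xi))$ is continuous of finite variation with nondecreasing pieces $d\hat\Phi^\uparrow_t = (\hat\Phi_t/q_*(\xi))\, dG^\uparrow_t$ and $d\hat\Phi^\downarrow_t = (\hat\Phi_t/q^*(\xi))\, dG^\downarrow_t$, admissibility (a) reduces to verifying the cash SDE \eqref{eq:friction:cash process} and solvency.

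Claim (b) is then an It\^o calculation. Applying It\^o to $\hat X_t = \frac{1-\hat Q_t}{\hat Q_t} Y_t \kappa_{\gamma^\uparrow, \gamma^\downarrow}(\hat Q_t) \hat\Phi_t$ via \eqref{eq:dQdef} and the geometric-Brownian dynamics of $Y$, the ODEs \eqref{eq:ODE kappa}, \eqref{eq:node} together with $\ell(q) - m(q) = \tfrac{1-S}{S}\tfrac{\sigma^2}{2} q(1-q)$ collapse the drift to $r\hat X_t - \hat C_t$ and force the Brownian-integral terms to cancel (which is exactly the consistency requirement that makes $\hat X$ a finite-variation process, and was encoded into the derivation of the ODEs). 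On the support of $dG^\uparrow$ one has $\hat Q = q_*(\xi)$, $\kappa = \gamma^\uparrow$ and $\kappa' = 0$ by Corollary~\ref{cor:kappa xi}, so the $dG^\uparrow$-contribution collapses to $-\gamma^\uparrow Y_t\, d\hat\Phi^\uparrow_t$; symmetrically the $dG^\downarrow$-contribution becomes $(Y_t/\gamma^\downarrow)\, d\hat\Phi^\downarrow_t$. This recovers exactly \eqref{eq:friction:cash process}, giving (b). For (a), $\hat X_t + Y_t \kappa(\hat Q_t) \hat\Phi_t = Y_t \kappa(\hat Q_t) \hat\Phi_t / \hat Q_t \geq 0$ and, using $\kappa(\hat Q_t) \in [1/\gamma^\downarrow, \gamma^\uparrow]$, one checks $(\hat X_t, Y_t, \hat\Phi_t) \in \sS_{\gamma^\uparrow, \gamma^\downarrow}$ $\P$-a.s. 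The hard step is the singular-point construction in part~(1); the verification in part~(2) is lengthy but mechanical because the ODEs \eqref{eq:ODE kappa} and \eqref{eq:node} were designed precisely to make it close.
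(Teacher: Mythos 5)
Your overall route coincides with the paper's: part (1) via one-dimensional reflected-SDE theory plus a separate boundary analysis at $q=1$, and part (2) via an It\^o computation that the ODEs for $n_\xi$ and $\kappa$ are designed to close. Two steps, however, are asserted rather than proved and would not survive as written.

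First, in part (1) the claim that ``$a(1)=0$ prevents stochastic return through $1$'' is not a valid argument on its own: a vanishing diffusion coefficient does not in general make a point inaccessible (squared Bessel processes of low dimension hit the origin). What is needed, and what the paper uses, is the \emph{rate} of degeneracy: $a(q)=O(1-q)$ near $1$ because $\ell'(1)-n_\xi'(1)=\ell'(1)-m'(1)\neq 0$ (Proposition~\ref{prop:n}(e) together with $q_M\neq1$), and then Feller's boundary test shows $1$ is inaccessible from $(1,q^*(\xi)]$. Your pasting construction is a workable alternative to the paper's direct appeal to Tanaka's theorem on the whole interval, but it still requires this quantitative input, and uniqueness of the pasted process among \emph{all} solutions of the two-sided reflected SDE should be addressed; the cleanest fix is to note the coefficients are Lipschitz on the compact interval, so pathwise uniqueness for \eqref{eq:dQdef} holds outright and the pasting is only needed for the qualitative description of the path.

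Second, the solvency check in (2)(a) is genuinely harder than ``$\kappa(\hQ_t)\in[1/\gamma^\downarrow,\gamma^\uparrow]$'' when $q^*(\xi)>1$: after $\tau$ one has $\hQ_t>1$, hence $\hX_t<0$, and the required inequality $\hX_t+\hPhi_tY_t/\gamma^\downarrow\geq0$ reads $\chi(\hQ_t)\geq0$ with $\chi(q)=\frac{1-q}{q}\kappa(q)+\frac{1}{\gamma^\downarrow}$; bounding $\kappa$ above by $\gamma^\uparrow$ makes $\frac{1-q}{q}\kappa(q)$ \emph{more} negative there and gives nothing. The paper proves $\chi>0$ on $[q_*(\xi)\vee1,q^*(\xi))$ by showing $\chi'<0$ on that interval and evaluating $\chi(q^*(\xi))=\frac{1}{q^*(\xi)\gamma^\downarrow}>0$. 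Two smaller omissions: you do not verify that the three cases defining $\hPhi_0$ are mutually exclusive (the paper derives this from the bound \eqref{eq:prop:n:q up boundE}), and the It\^o computation for $\hX$ in the singular case must be run separately on $\llbracket0,\tau\llbracket$ and $\rrbracket\tau,\infty\llbracket$, since the decomposition of $\ln\hX$ involves $\ln(1-\hQ)$ before $\tau$ and $\ln(\hQ-1)$ after.
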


\begin{rem}
(a) The case $ \phi y \gamma^\uparrow (1 - q_*(\xi)) < q_* (\xi) x$ corresponds to starting in the region where the trading process $\Phi$ involves an initial purchase of the risky asset, and the case $\frac{\phi y}{\gamma^\downarrow} (1 - q^*(\xi)) > q^*(\xi) x$ corresponds to the case where the trading process involves an initial sale of the risky asset, and in each case the expression for $\hX_0$ given by \eqref{eq:Xdef1} evaluated at $t=0$ reflects the impact of the trade. Implicit in the statement is the fact that we cannot have both $ \phi y \gamma^\uparrow (1 - q_*(\xi)) < q_* (\xi)x$ and $\frac{\phi y}{\gamma^\downarrow} (1 - q^*(\xi)) > q^*(\xi) x$.

(b) We can have $ 1 \in (q_*(\xi),q^*(\xi))$. Then  the diffusion coefficient in \eqref{eq:dQdef} vanishes and the drift coefficient is positive for $\hat Q = 1$. In particular, the diffusion $\hQ$ is not regular for the domain $(q_*(\xi),q^*(\xi))$. If $\hQ_s \geq 1$ then $\hQ_t >1$ for $t > s$. This is equivalent to the fact that if $\hX_s \leq 0$ then for $t>s$, we have $\hX_t<0$, and the agent always takes a short position in the risky asset, see also Remark \ref{rem:locationofNTwedge}(b).

(c) We could instead start with dynamics for the candidate for $P = \frac{Y \Phi}{X+ Y \Phi}$. These are given by:
\[ \frac{\dd \hat P_t}{\hat P_t (1- \hat P_t)} = \left((\mu- r)  - \sigma^2 \hat P_t +  \frac{ n_\xi(q_{\gamma^\uparrow, \gamma^\downarrow}(\hat P_t))}{1- q_{\gamma^\uparrow, \gamma^\downarrow}(\hat P_t)} \right) \dd t
+ \sigma  \dd B_t +
\frac{dG_t}{q_{\gamma^\uparrow, \gamma^\downarrow}(\hat P_t) (1- q_{\gamma^\uparrow, \gamma^\downarrow}(\hat P_t))}.
\]
However, these dynamics are more complicated than those of $\hat{Q}$, not least because they involves expressions like $n_\xi(q_{\gamma^\uparrow, \gamma^\downarrow}( p))$. Moreover, for $\hat P = 1$, the right hand side of the SDE is not well-defined.
\end{rem}

We now link the candidate value function with the candidate optimal strategy.

\begin{thm}
\label{thm:cand solution 2}
Let  $\gamma^\uparrow, \gamma^\downarrow \in [1, \infty)$ with $\xi:= \gamma^\uparrow \gamma^\downarrow  > 1$ be such that $\xi$ is in the well-posedness range of transaction costs. Let $(x, y, \phi) \in  \sS^\circ_{(\gamma^\uparrow, \gamma^\downarrow)}$ and $(\hat \Phi, \hat C) \in \sA_{(\gamma^\uparrow, \gamma^\downarrow)}(x, y, \phi)$ be the corresponding candidate optimal strategy from Theorem \ref{thm:cand solution 1}. Then $\hat C$ is uniquely evaluable if $\theta \leq 1$ and uniquely proper with right-continuous paths if $\theta > 1$. Moreover,
\begin{align}
\hat V^{\gamma ^\uparrow, \gamma^\downarrow}(t, X^{x, y, \phi, \hat \Phi, \hat \Phi}_t, Y_t, \hat C_t) &= V^{\hat C}_t\;\; \as{\mathbb P}, \quad t \geq 0,
\end{align}
where $V^{\hat C}$ denotes the unique utility process if  $\theta \leq 1$ and the unique proper utility process if $\theta >1$.
\end{thm}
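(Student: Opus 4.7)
The plan is to verify the BSDE characterisation \eqref{eq:Epstein--Zin SDU} for the pair $(\hat V_t := \hat V^{\gamma^\uparrow,\gamma^\downarrow}(t, \hat X_t, Y_t, \hat\Phi_t), \hat C)$ via the martingale optimality principle sketched in Section~\ref{sec:tcheuristics}, then invoke the uniqueness results of Herdegen et al.\ to identify $\hat V$ with $V^{\hat C}$. Using \eqref{eq:defC*} and \eqref{eq:Qselfconsistent}, the candidate value process along the trajectory simplifies to
\[
\hat V_t = e^{-\delta\theta t}\, \frac{\hat Z_t^{\,1-R}}{1-R}\, n_\xi(\hat Q_t)^{-\theta S}, \qquad \hat Z_t := \hat X_t + \hat\Phi_t Y_t \kappa_{\gamma^\uparrow,\gamma^\downarrow}(\hat Q_t),
\]
with $\hat C_t = \hat Z_t\, n_\xi(\hat Q_t)$. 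The first step is to apply It\^o's formula to $\hat V_t$ using the dynamics \eqref{eq:dQdef} for $\hat Q$, the SDE for $\hat X$ driven by the local-time processes $G^\uparrow, G^\downarrow$, and the corresponding expression for $\hat\Phi$. By construction, the ODEs \eqref{eq:prop:n:ODE} and \eqref{eq:cor:kappa:ODE} make the bulk drift cancel exactly against the aggregator, while the smooth-fit conditions \eqref{eq:prop:n:bound:deriv} and \eqref{eq:cor:kappa:bound:deriv} kill the local-time contributions. This should yield
\[
d\hat V_t = -\,e^{-\delta t}\, \frac{\hat C_t^{\,1-S}}{1-S}\, \bigl((1-R)\hat V_t\bigr)^{\rho} dt \,+\, dN_t,
\]
for a local martingale $N$, so that $\hat V_t + \int_0^t e^{-\delta s} g_{\rm EZ}(s, \hat C_s, \hat V_s)\, ds$ is a local martingale.

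The second step is to localise and promote the local-martingale property to a finite-horizon identity $\hat V_t = \E[\hat V_T + \int_t^T e^{-\delta s} g_{\rm EZ}(s, \hat C_s, \hat V_s)\, ds \,|\, \mathcal F_t]$. Since $\hat Q$ is trapped in the compact interval $[q_*(\xi), q^*(\xi)]$ (or passes through $q=1$ in the short-cash regime described in Remark~\ref{rem:locationofNTwedge}(b)), the coefficients $n_\xi(\hat Q), \kappa(\hat Q), \ell(\hat Q), m(\hat Q)$ are bounded. In the regular case, $\hat Z$ satisfies a linear SDE with bounded coefficients plus boundary reflections, so it has all moments; the required uniform integrability follows by Doob-type estimates. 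The case $1 \in (q_*(\xi), q^*(\xi))$ requires a little extra care because $\hat Z$ hits zero at the stopping time $\tau$ of Theorem~\ref{thm:cand solution 1}(a), after which wealth is measured via a short cash position; here one splits the analysis on $\llbracket 0, \tau \llbracket$ and $\llbracket \tau, \infty \llbracket$, noting that on the latter interval the dynamics become degenerate but remain integrable.

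The main obstacle is the transversality condition $\lim_{T\to\infty} \E[\hat V_T \mathbf 1_A] = 0$ (or the appropriate one-sided version depending on the sign of $1-R$), needed to send $T\to\infty$. This is exactly what the integral condition \eqref{eq:prop:n:int const} defining $(q_*(\xi), q^*(\xi))$ and the well-posedness range of Definition~\ref{def:well posed} are designed to provide: the condition $\xi > \ol\xi$ (resp.\ $\xi < \ol\xi$) controls the long-run exponential growth of $\hat Z_T^{\,1-R}$ relative to $e^{-\delta \theta T}$, since $\hat Z$ is essentially a geometric Brownian motion with drift determined by $r + \lambda\sigma \hat Q - \hat Z^{-1}\hat C - \frac12\sigma^2 R \hat Q^2$, which by \eqref{eq:H_nu} averages to a quantity with the right sign on $[q_*(\xi), q^*(\xi)]$ under the well-posedness hypothesis. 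Combining this decay with Jensen's inequality on the aggregator term gives the absolute integrability needed to pass to the limit.

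Once the limit identity \eqref{eq:Epstein--Zin SDU} is established for $(\hat V, \hat C)$, uniqueness delivers the conclusion. If $\theta \leq 1$, Theorem~\ref{thm:SDU:existence:theta<1} provides a unique generalised utility process and $\hat V \equiv V^{\hat C}$ follows. If $\theta > 1$, one needs to verify that $\hat C$ lies in $\UU\PP^*$: right-continuity is immediate from the continuous paths of $(Y, \hat Q, \hat Z)$ and the single possible jump at $t=0$ encoded in $\hat\Phi_0$, while properness amounts to checking $(1-R)\hat V_t > 0$ on the required event, which follows from $n_\xi, \kappa_{\gamma^\uparrow,\gamma^\downarrow} > 0$ together with the strict positivity of $\hat Z$ up to $\tau$. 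The results cited from Herdegen et al.~\cite{herdegen:hobson:jerome:23C} then identify $\hat V$ with the unique proper utility process $V^{\hat C}$.
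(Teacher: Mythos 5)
Your overall strategy -- It\^o's formula plus the HJB identities to show that $\hat M_t = \hat V_t + \int_0^t g_{\rm EZ}(s,\hat C_s,\hat V_s)\,\mathrm{d}s$ is a (local) martingale, then a transversality argument to pass to $T\to\infty$, then uniqueness -- is the same as the paper's, which routes everything through its Proposition~\ref{app:prop:cand solution} and Corollary~\ref{cor:app:proper}. Two points, one of presentation and one of substance.

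On transversality: the paper does not need any Jensen or drift-averaging argument. Because the candidate consumption is exactly $\hat C_t = \hat Z_t\,n_\xi(\hat Q_t)$, It\^o's formula collapses to the closed-form representation \eqref{eq:app:prop:cand solution:hat V}, $\hat V_t = \hat V_0\,\mathcal E(\sigma(1-R)\hat Q\bullet B)_t\,e^{-\theta\int_0^t n_\xi(\hat Q_s)\,\mathrm{d}s}$. Since $\hat Q$ is bounded, Novikov makes the stochastic exponential a true martingale, and since $n_\xi\ge\nu>0$ on the compact shadow no-trade interval (this is where well-posedness enters), one gets $\E[|\hat V_t|^\beta]\le |\hat V_0|^\beta e^{(\beta(\beta-1)\sigma^2(1-R)^2C^2/2-\theta\nu)t}$; $\beta=2$ gives square-integrability of $\hat M$ and $\beta=1$ gives $\E[|\hat V_t|]\to0$, hence uniform integrability and $\hat M_\infty=\int_0^\infty g_{\rm EZ}$. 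Your appeal to the integral constraint \eqref{eq:prop:n:int const} "controlling the long-run growth of $\hat Z_T^{1-R}$" is not the actual mechanism and, as written, would need separate treatment of the two signs of $1-R$; the pointwise lower bound on $n_\xi$ is both the correct reason and much cleaner.

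The genuine gap is in the $\theta>1$ case. The theorem asserts that $\hat C$ is \emph{uniquely} proper, i.e.\ that there is exactly one proper utility process associated with $\hat C$, and the identification $\hat V=V^{\hat C}$ relies on this uniqueness. Checking that your candidate $\hat V$ \emph{is} a proper solution (which is what you do, via $(1-R)\hat V_t>0$) does not rule out other proper solutions; recall that for $\theta>1$ equation \eqref{eq:Epstein--Zin SDU} generically has many solutions and even properness does not give uniqueness for free. The paper closes this by verifying that $(e^{-\delta\theta t}\hat C_t^{1-R})_{t\ge0}$ lies in the self-order class $\bS\OO_+$ (Corollary~\ref{cor:app:proper}, using the explicit form $e^{-\delta\theta t}\hat C_t^{1-R}=(1-R)\hat V_t\,n_\xi(\hat Q_t)^\theta$ and the two-sided bounds on $n_\xi$ to estimate $J^{\exp(\nu\cdot)\hat V}$), which is precisely the hypothesis under which \cite[Theorem 4.7]{herdegen:hobson:jerome:23C} delivers uniqueness of the proper utility process. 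Without this verification your final appeal to "the results cited from Herdegen et al." does not apply, and the conclusion $\hat V=V^{\hat C}$ is not justified for $\theta>1$.
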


\subsection{Verification argument}
In order to complete the verification, we would like to use something like the martingale optimality principle, which in the additive case ($R=S$, $\rho=0$) says that $M$ given by $M_t = \int_0^t e^{-\delta s} \frac{C_s^{1-S}}{1-S} \dd s + \hat{V}(t,X_t,Y_t,\Phi_t)$ is always a supermartingale and a martingale for the optimal strategy. In the general case with $R \neq S$, classical martingale optimality arguments break down due to the fact that the value process enters the integrand. Instead we need a comparison theorem for solutions of BSDEs.

To this end, we recall the notions of a subsolution and a supersolution of a BSDE. Definitions of sub- and supersolutions of BSDEs in the literature can vary slightly depending on the context. The definition we use here is the one used in Herdegen et al~\cite[Definition 5.3]{herdegen:hobson:jerome:23A}, specialised to the case of Epstein-Zin SDU.

\begin{defn}
			Let $C \in \sP_+$ be a consumption stream. A $(1-R)\R_+$-valued, l\`ad, optional process $V$ is called
			\begin{itemize}
                \item  a \textit{subsolution} for $C$  if $\limsup_{t\to\infty}~ \EX{V_{t+}} \leq 0$ and for all bounded stopping times ${\tau_1}\leq{\tau_2}$,
				\begin{equation}\label{eq:subsolution equation}
				V_{\tau_1} ~\leq~ \cEX[{\tau_1}]{V_{{\tau_2}{+}} +  \int_{\tau_1}^{{\tau_2}} e^{-\delta s} \frac{C_s^{1-S}}{1-S} ((1-R)V_s)^\rho \dd s}.
				\end{equation}
				\item   a \textit{supersolution} for $C$  if $\liminf_{t\to\infty}~ \EX{V_{t+}} \geq 0$ and for all bounded stopping times ${\tau_1}\leq{\tau_2}$,
				\begin{equation}\label{eq:supersolution equation}
				V_{\tau_1} ~\geq~ \cEX[{\tau_1}]{V_{{\tau_2}{+}} +  \int_{\tau_1}^{{\tau_2}} e^{-\delta s} \frac{C_s^{1-S}}{1-S} ((1-R)V_s)^\rho \dd s}.
				\end{equation}
                \item a \emph{solution} for $C$ if it is both a subsolution and a supersolution and $\E[\int_0^\infty C_s^{1-S} ((1-R)V_s)^\rho \dd s] < \infty$.
			\end{itemize}
		\end{defn}
}
The next result essentially shows that for any $(\hat \Phi, \hat C) \in \sA_{\gamma^\uparrow, \gamma^\downarrow}(x, y, \phi)$ the process $(\hat V^{\gamma ^\uparrow, \gamma^\downarrow}(t, X_t, Y_t, \Phi_t))_{t \geq 0}$ is a supersolution for $C$. However, for technical reasons, this is more delicate and a stochastic perturbation argument (as in the classical Merton case, see \cite{herdegen2020elementary}) is needed.
\begin{prop}
\label{prop:ver:supersol}
Let  $\gamma^\uparrow, \gamma^\downarrow \in [1, \infty)$ and suppose that $\xi := \gamma^\uparrow \gamma^\downarrow \in (1, \infty)$ lies in the well-posedness range of transaction costs. Let $(x, y, \phi) \in  \sS^\circ_{\gamma^\uparrow, \gamma^\downarrow}$ and $(\hat \Phi, \hat C) \in \sA_{\gamma^\uparrow, \gamma^\downarrow}(x, y, \phi)$ be the corresponding candidate optimal strategy from Theorem \ref{thm:cand solution 1}. Let $(\Phi, C)\in \sA_{\gamma^\uparrow, \gamma^\downarrow}(x, y, \phi)$ and set $X :=  X^{x,y,\phi,\Phi, C}$ and  $\hat X :=  X^{x,y, \phi, \hat \Phi, \hat C}$. Then for every $\epsilon > 0$,
$(\hat V^{\gamma ^\uparrow, \gamma^\downarrow}(t, X_t+\epsilon \hat X_t, Y_t, \Phi_t + \epsilon \hat \Phi_t))_{t \geq 0}$ is a supersolution for $C+\epsilon \hat C$.
\end{prop}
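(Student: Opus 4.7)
The plan is to introduce the perturbed quantities $\tilde X := X + \epsilon \hat X$, $\tilde \Phi := \Phi + \epsilon \hat \Phi$, $\tilde C := C + \epsilon \hat C$ and set $\tilde V_t := \hat V^{\gamma^\uparrow,\gamma^\downarrow}(t, \tilde X_t, Y_t, \tilde \Phi_t)$. I would then show that
\[
N_t := \tilde V_t + \int_0^t e^{-\delta s}\frac{\tilde C_s^{1-S}}{1-S}\bigl((1-R)\tilde V_s\bigr)^\rho \dd s
\]
is a supermartingale at bounded stopping times and that $\liminf_t \E[\tilde V_{t+}] \geq 0$. The $\epsilon$-perturbation plays two roles: by linearity of both the solvency constraint and the cash SDE \eqref{eq:friction:cash process}, $(\tilde X_t, Y_t, \tilde \Phi_t)$ stays strictly inside $\sS^\circ_{\gamma^\uparrow,\gamma^\downarrow}$; and the shadow wealth $\tilde X + \tilde \Phi \tilde Y$ is bounded below by $\epsilon(\hat X + \hat \Phi \tilde Y) > 0$, the moments of which (accessible via Theorem \ref{thm:cand solution 2}) supply the one-sided integrability needed to pass from local supermartingale to supermartingale and to verify the transversality condition.

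The core of the argument is It\^o's formula applied to $\hat V^{\gamma^\uparrow,\gamma^\downarrow}$ along $(\tilde X, Y, \tilde \Phi)$. Because $\tilde X$ and $\tilde \Phi$ are continuous and of finite variation, only the $y$-quadratic variation of $Y$ contributes to the second-order It\^o term. The smooth-fit conditions of Corollary \ref{cor:kappa xi} and Proposition \ref{prop:n} yield $C^1$ regularity of $\hat V^{\gamma^\uparrow,\gamma^\downarrow}$ across the free boundaries $q_*(\xi), q^*(\xi)$, and the polynomial right-hand side of \eqref{eq:prop:n:ODE} bootstraps to $C^{1,2}$ smoothness in $y$ inside each sub-region. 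Using the partial-derivative identities \eqref{eq:Vt}--\eqref{eq:Vyy} together with the ODE for $n_\xi$, inside the no-trade region the drift of $N$ reduces to
\[
f(\tilde C_t) - \theta S\, \tilde V_t\, n_\xi(\tilde Q_t), \qquad f(c) := e^{-\delta t}\frac{c^{1-S}}{1-S}\bigl((1-R)\tilde V_t\bigr)^\rho - \hat V_x(t, \tilde X_t, Y_t, \tilde \Phi_t)\, c,
\]
where $\tilde Q_t$ is the shadow fraction of wealth of the perturbed state. Since $(1-R)\tilde V_t > 0$ and $S > 0$, the map $c \mapsto f(c)$ is strictly concave and attains its unique maximum at $c^* = n_\xi(\tilde Q_t)(\tilde X_t + \tilde \Phi_t \tilde Y_t)$ with value exactly $\theta S\, \tilde V_t\, n_\xi(\tilde Q_t)$ (verified by direct substitution using the identity $\theta(1-S) = 1-R$), so the drift is non-positive for every $\tilde C$. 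In the buy and sell regions, $\hat V^{\gamma^\uparrow,\gamma^\downarrow}$ is a simple power of the shadow wealth at $\gamma^\uparrow Y$ or $Y/\gamma^\downarrow$, and a direct computation---relying on the identity $S m(q_*(\xi)) = \delta - (1-S)(r + \sigma \lambda q_*(\xi)) + (1-S) R\sigma^2 q_*(\xi)^2/2$ together with the bound $q_*(\xi) < q_M$ from Proposition \ref{prop:n}---shows that the drift at the extended optimal consumption is a non-positive multiple of $(\bar q - q_*(\xi))(q_M - (\bar q + q_*(\xi))/2)$, hence $\leq 0$. For the singular trading terms, $\hat V_\phi - \gamma^\uparrow y\, \hat V_x = (1-R)\hat V \cdot y\, (\bar\kappa - \gamma^\uparrow)/(x + \phi y\, \bar\kappa) \leq 0$ and $y \hat V_x/\gamma^\downarrow - \hat V_\phi \leq 0$ globally, since $\bar \kappa_{\gamma^\uparrow,\gamma^\downarrow} \in [1/\gamma^\downarrow, \gamma^\uparrow]$ and $(1-R)\hat V > 0$. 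Consequently $\dd N_t$ has non-positive drift, non-positive singular contributions, and a local martingale part.

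To conclude, I localize by stopping times $(\sigma_n)$ making the local martingale part bounded; the supermartingale inequality $N_{\tau_1} \geq \E[N_{\tau_2 \wedge \sigma_n}\,|\, \cF_{\tau_1}]$ then follows for bounded $\tau_1 \leq \tau_2$. Passing $n \to \infty$ via Fatou (using that $(1-R)\tilde V_s \geq 0$ together with the perturbation-induced lower bound on shadow wealth to control integrability on one side) yields the supersolution inequality \eqref{eq:supersolution equation}. The transversality condition follows from the domination $|\tilde V_t| \leq C e^{-\delta \theta t}(\tilde X_t + \gamma^\uparrow \tilde \Phi_t Y_t)^{1-R}$ and the corresponding moment estimates on the shadow wealth of the candidate optimal strategy. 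The main obstacle is the regularity analysis at the free boundaries and---most delicately---the handling of the singular point $q = 1$ when $1 \in (q_*(\xi), q^*(\xi))$: for the arbitrary admissible strategy, the perturbed shadow fraction $\tilde Q$ may cross $1$ in a non-degenerate way even though the candidate optimal process cannot (by Theorem \ref{thm:cand solution 1}(b)). Here one combines the global $C^1$ smoothness of $\hat V^{\gamma^\uparrow,\gamma^\downarrow}$ with an approximation-by-smoothed-coefficients argument (or a suitable It\^o--Tanaka identity in one dimension) to validate It\^o's formula.
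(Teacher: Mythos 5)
Your proposal follows essentially the same route as the paper's proof: It\^o's formula applied to $\hat V^{\gamma^\uparrow,\gamma^\downarrow}$ along the $\epsilon$-perturbed state, the HJB inequalities (Proposition \ref{prop:hatV:properties}) to make the drift and singular trading terms non-positive, localization plus conditional Fatou/monotone convergence with the lower bound $\hat V^{\gamma^\uparrow,\gamma^\downarrow}(\cdot,\epsilon\hat X,Y,\epsilon\hat\Phi)$ supplying the one-sided integrability via Proposition \ref{app:prop:cand solution}, and the same transversality argument. The only superfluous element is the proposed smoothing/It\^o--Tanaka treatment at $q=1$: since $X+\epsilon\hat X$ and $\Phi+\epsilon\hat\Phi$ have finite variation and $\hat V^{\gamma^\uparrow,\gamma^\downarrow}$ is globally $C^1$ in $(t,x,\phi)$ and $C^2$ in $y$ by Proposition \ref{prop:hatV:properties}, the classical It\^o formula already applies.
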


We now can formulate a rigorous verification theorem. 
\begin{thm}
\label{thm:verification}
Let $\gamma^\uparrow, \gamma^\downarrow \in [1, \infty)$ and suppose that $\xi := \gamma^\uparrow \gamma^\downarrow \in (1, \infty)$ lies in the well-posedness range of transaction costs. Let $(x, y, \phi) \in  \sS^\circ_{\gamma^\uparrow, \gamma^\downarrow}$ and $\hat C \in \sC^*_{\gamma^\uparrow, \gamma^\downarrow}(x, y, \phi)$ be the corresponding candidate optimal strategy. Then
\begin{equation}
\sup_{C \in \sC^*_{\gamma^\uparrow, \gamma^\downarrow}(x, y, \phi)} V^C_0 = V^{\hat C}_0 = \hat V^{\gamma^\uparrow, \gamma^\downarrow}(0, x, y, \phi),
\end{equation}
where $\hat V^{\gamma^\uparrow, \gamma^\downarrow}$ is the candidate value function from \eqref{def:hat V}.

\end{thm}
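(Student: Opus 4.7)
The plan is to prove the identity by establishing the two inequalities separately, using Theorem~\ref{thm:cand solution 2} for achievement and Proposition~\ref{prop:ver:supersol} combined with a BSDE comparison argument for the optimality bound.

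\textbf{Lower bound.} I would first apply Theorem~\ref{thm:cand solution 2} at $t=0$, which gives $V^{\hat C}_0 = \hat V^{\gamma^\uparrow,\gamma^\downarrow}(0, \hat X_0, y, \hat \Phi_0)$ where $(\hat X_0, \hat \Phi_0)$ is the post-initial-trade state of the candidate strategy from Theorem~\ref{thm:cand solution 1}(b). In each of the three cases there, either no initial trade occurs, or the trade is an admissible buy/sell that ends at the nearest no-trade boundary. A direct substitution into~\eqref{def:hat V} shows $\hat V^{\gamma^\uparrow,\gamma^\downarrow}(0, \hat X_0, y, \hat \Phi_0) = \hat V^{\gamma^\uparrow,\gamma^\downarrow}(0, x, y, \phi)$: the extensions $\ol\kappa_{\gamma^\uparrow,\gamma^\downarrow}$ and $\ol q_{\gamma^\uparrow,\gamma^\downarrow}$ of Definition~\ref{def:extension} are engineered so that the shadow wealth $x + \phi y \ol \kappa(\ol q(p))$ is conserved along a trade ray (the ask-price shadow wealth $x + \phi y \gamma^\uparrow$ is invariant under buys within the buy region, and analogously for sells), and $\ol n_\xi$ is constant on each extended region. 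This yields $V^{\hat C}_0 = \hat V^{\gamma^\uparrow,\gamma^\downarrow}(0, x, y, \phi)$.

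\textbf{Upper bound.} Fix any $(\Phi, C) \in \sA_{\gamma^\uparrow,\gamma^\downarrow}(x, y, \phi)$ with $C \in \sC^*_{\gamma^\uparrow,\gamma^\downarrow}(x, y, \phi)$ and set $X := X^{x,y,\phi,\Phi,C}$. For each $\epsilon > 0$, Proposition~\ref{prop:ver:supersol} asserts that the process $V^{(\epsilon)}_t := \hat V^{\gamma^\uparrow,\gamma^\downarrow}(t, X_t + \epsilon \hat X_t, Y_t, \Phi_t + \epsilon \hat \Phi_t)$ is a supersolution of the Epstein--Zin BSDE~\eqref{eq:Epstein--Zin SDU} for the strictly positive consumption stream $C + \epsilon \hat C$. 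Invoking the comparison theorem for (super)solutions of Epstein--Zin BSDEs of Herdegen et al~\cite{herdegen:hobson:jerome:23A,herdegen:hobson:jerome:23B,herdegen:hobson:jerome:23C} (applicable here precisely because strict positivity of $C + \epsilon \hat C$ controls the degeneracy of the driver at $c = 0$), I would deduce
\begin{equation*}
V^{C + \epsilon \hat C}_0 \;\leq\; V^{(\epsilon)}_0 \;=\; \hat V^{\gamma^\uparrow,\gamma^\downarrow}(0, X_0 + \epsilon \hat X_0, y, \Phi_0 + \epsilon \hat \Phi_0).
\end{equation*}
Letting $\epsilon \downarrow 0$, continuity of the SDU map $C \mapsto V^C_0$ under monotone perturbations (from~\cite{herdegen:hobson:jerome:23A} if $\theta \leq 1$ and~\cite{herdegen:hobson:jerome:23C} if $\theta > 1$) gives $V^{C + \epsilon \hat C}_0 \to V^C_0$ on the left, while joint continuity of $\hat V^{\gamma^\uparrow,\gamma^\downarrow}$ in its spatial arguments (immediate from~\eqref{def:hat V} and continuity of $\ol n_\xi$, $\ol\kappa_{\gamma^\uparrow,\gamma^\downarrow}$ and $\ol q_{\gamma^\uparrow,\gamma^\downarrow}$) gives $V^{(\epsilon)}_0 \to \hat V^{\gamma^\uparrow,\gamma^\downarrow}(0, X_0, y, \Phi_0)$ on the right. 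A repeat of the trade-invariance computation from the first step, now applied to the initial trade of the arbitrary strategy from $(x, \phi)$ to $(X_0, \Phi_0)$, then yields $\hat V^{\gamma^\uparrow,\gamma^\downarrow}(0, X_0, y, \Phi_0) \leq \hat V^{\gamma^\uparrow,\gamma^\downarrow}(0, x, y, \phi)$ and completes the upper bound.

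\textbf{Main obstacle.} The hardest ingredient is the comparison step: for the additive case $R = S$ (where $\rho = 0$) one can appeal directly to the martingale optimality principle, but for general Epstein--Zin SDU the aggregator~\eqref{eq:Epstein--Zin aggregator} is non-Lipschitz in $V$ through the factor $((1-R)V)^\rho$, and classical BSDE comparison theorems do not apply. This is precisely why the perturbation by $\epsilon \hat C$ has been introduced (and has no analogue in the additive setting): strict positivity of the perturbed consumption is what engages the specialised Epstein--Zin comparison machinery of~\cite{herdegen:hobson:jerome:23A,herdegen:hobson:jerome:23B,herdegen:hobson:jerome:23C}. The subsequent passage $\epsilon \downarrow 0$ is the second subtle ingredient, as it requires the SDU-specific continuity of $C \mapsto V^C_0$ (distinct for the $\theta \leq 1$ and $\theta > 1$ regimes) rather than a routine dominated convergence argument.
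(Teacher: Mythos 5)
Your overall architecture is the right one and matches the paper's: achievability via Theorem~\ref{thm:cand solution 2} plus the invariance of $\hat V^{\gamma^\uparrow,\gamma^\downarrow}$ under the initial bulk trade (Corollary~\ref{cor:hatV:ineq}(b)), and the upper bound via the perturbed supersolution of Proposition~\ref{prop:ver:supersol} together with a comparison argument. The lower-bound step is correct as written. The gap is in the final passage $\epsilon \downarrow 0$ of the upper bound, which you dispose of by invoking ``continuity of the SDU map $C \mapsto V^C_0$ under monotone perturbations''. No such off-the-shelf continuity result is available, and this is precisely where the paper has to work: the proof splits into four genuinely different cases according to the signs of $\theta-1$ and $R-1$. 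For $\theta\leq 1$ the paper \emph{avoids the limit on the utility side altogether}: when $R<1$ it observes that the supersolution for $C+\epsilon\hat C$ is already a supersolution for $C$ (the driver is increasing in $c$) and uses minimality of the generalised utility process $V^C$ among supersolutions; when $R>1$ it uses the comparison theorem at the level of $C+\epsilon\hat C$ and then the monotonicity $V^{C+\epsilon\hat C}\geq V^C$. In both subcases the only limit taken is in the spatial argument of $\hat V^{\gamma^\uparrow,\gamma^\downarrow}$, which is elementary.

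For $\theta>1$ the limit in $C$ cannot be avoided, and there your sketch would fail as stated. First, when $R<1$ (hence $R<S<1$) the additive perturbation $C+\epsilon\hat C$ does not supply the domination needed for existence and uniform integrability of the extremal solution --- one needs an upper bound on the consumption in the $c^{1-R}$ scale, which $C+\epsilon\hat C\geq\epsilon\hat C$ does not give. The paper therefore replaces the additive perturbation by a truncation from above, $C^\nu := C\wedge \nu^{-1}e^{-t}\hat C$, and verifies a quantitative lower bound of $\hat V$ in terms of $(C^\nu)^{1-R}$ before the comparison result of \cite{herdegen:hobson:jerome:23C} applies; this construction has no counterpart in your proposal. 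Second, in both subcases of $\theta>1$ the limit $V^*:=\lim V^{C+\epsilon\hat C}$ (resp.\ $\lim V^{C^\nu}$) must be identified with $V^{C}$ by hand: one shows via conditional monotone convergence that $V^*$ solves the Epstein--Zin equation for $C$, and then that $V^*$ is \emph{proper}, so that it coincides with the unique proper solution. Your closing remark correctly flags the $\epsilon\downarrow 0$ step as subtle, but the resolution is not a citation to a continuity lemma; it is a case-dependent construction, and in one of the four cases it requires a different perturbation entirely. Relatedly, your explanation that the perturbation is needed ``because strict positivity of $C+\epsilon\hat C$ controls the degeneracy of the driver at $c=0$'' misidentifies the role of $\epsilon\hat C$: what it buys is domination/integrability of the candidate utility process relative to the comparison partner (e.g.\ $(C+\epsilon\hat C)^{1-S}\leq(\epsilon\hat C)^{1-S}$ when $S>1$, or the uniformly integrable martingale lower bound when $R>1$), not regularity of the aggregator in $c$.
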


\subsection{The ill-posed case}
\label{subsec:ill-posed}
We end this section by showing that if $\gamma^\uparrow \gamma^\downarrow  \in (1, \infty)$ does not lie in the well-posedness rage of transaction costs, the problem \eqref{eq:frictional problem} is ill posed. 

\begin{prop}
\label{prop:ill posed:theta<1}
Fix $(x, y, \phi) \in \mathscr{S}^\circ_{\gamma^\uparrow, \gamma^\downarrow}$ and let $\gamma^\uparrow,\gamma^\downarrow\in[1,\infty)$ be such that $\xi := \gamma^\uparrow\gamma^\downarrow > 1$. Suppose that $\xi$ does not lie in the well-posedness range of transaction costs given in Definition \ref{def:well posed}.

\begin{enumerate}[(a)]
	\item If $S \leq R<1$ or $R<S<1$, then there exists a sequence of consumption streams $(C^n)_{n \in \NN} \in\mathscr{C}^*_{\gamma^\uparrow, \gamma^\downarrow}(x, \phi, y)$ such that $\lim_{n \to \infty} V^{C^n}_0= +\infty$.
	
	\item If $S \geq R >1$ or $R>S>1$, then $V^{C}_0=-\infty$ for all  $C\in\mathscr{C}^*_{\gamma^\uparrow, \gamma^\downarrow}(x, \phi, y)$.
\end{enumerate}

\end{prop}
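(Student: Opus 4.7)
My overall strategy is to split the argument by the sign of $1-R$ and then by which clause of Definition~\ref{def:well posed} fails. For part (a), where $R<1$ and $V$ is positive, I would construct explicit sequences of admissible consumption streams with $V_0^{C^n}\to+\infty$; for part (b), where $R>1$ and $V$ is nonpositive, I would show directly that every admissible $C$ gives $V_0^C=-\infty$, leveraging the containment $\sC^*_{\gamma^\uparrow,\gamma^\downarrow}(x,\phi,y)\subseteq\sC^*_0(x+\phi y)$ of Remark~\ref{rem:inclusion} whenever possible.

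For part (a) there are three sub-cases. If $m(0)\leq 0$, first sell any initial shares (an admissible initial block trade), then employ the frictionless all-cash strategy $\Phi\equiv 0$ with $C_t^n=m^nZ_t$; since no further trading occurs, the ansatz of Section~\ref{ssec:frictionless} applies verbatim and yields $V_0^{C^n}=[(m^n)^{1-S}/H(0,m^n)]^{\theta}(x+\phi y/\gamma^\downarrow)^{1-R}/(1-R)$ whenever $H(0,m^n)>0$. Letting $m^n$ decrease to the critical value at which $H(0,\cdot)$ vanishes, together with $\theta,1-R>0$, forces $V_0^{C^n}\to+\infty$. If $m(1)\leq 0$, use the symmetric construction: buy the maximum stock at time~$0$ so that $X_{0+}=0$ and $\Phi_{0+}=\phi+x/(\gamma^\uparrow y)$, and thereafter finance consumption purely from sales via $d\Phi_t=-m^n\gamma^\downarrow\Phi_t\,dt$ and $C_t^n=m^n\Phi_tY_t$; then $Z_t:=\Phi_tY_t$ is a geometric Brownian motion with drift $\mu-m^n\gamma^\downarrow$, and a direct calculation shows the SDU ansatz denominator becomes $H(1,m^n\gamma^\downarrow)$, which can again be driven to zero to yield divergence. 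The remaining sub-case, $m(0),m(1)>0$ with $m(q_M)\leq 0$ and $\xi\leq\bar\xi$, is the most subtle: here $m$ has two zeros $q_-^m<q_+^m$ (with $m<0$ in between) and no corner strategy suffices. I would approximate a constant-proportion strategy near some interior point of $(q_-^m,q_+^m)$ by a reflection strategy on a shrinking shadow interval, verifying admissibility via one-dimensional Skorokhod reflection and computing its utility through the shadow fraction of wealth parametrisation of Section~\ref{sec:tcheuristics}; the condition $\xi\leq\bar\xi$ is precisely what allows the accumulated transaction cost to remain small enough relative to the blow-up driven by $m<0$, with the identity \eqref{eq:ol xi} providing the quantitative match.

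For part (b) with $R>1$ (and hence $S>1$): if $m(q_M)\leq 0$, then by Remark~\ref{rem:shapeofm} $m(q)\leq 0$ for every $q$, and since $H(q,0)=Sm(q)$ one has $H(q,m)=Sm(q)-(S-1)m<0$ for every $m>0$, showing that every constant-proportion strategy gives infinite negative utility in the frictionless market. A BSDE comparison argument (as in Herdegen et al~\cite{herdegen:hobson:jerome:23B,herdegen:hobson:jerome:23C}) upgrades this to $V_0^C=-\infty$ for every $C\in\sC^*_0(x+\phi y)$, and Remark~\ref{rem:inclusion} transfers the conclusion to the friction problem. If instead $m(q_M)>0$ but $\xi\geq\bar\xi$, the frictionless super-problem is well-posed so this direct comparison fails, and I would use a shadow-price duality: any admissible $(\Phi,C)$ paired with any shadow price $\tilde Y$ taking values in $Y[1/\gamma^\downarrow,\gamma^\uparrow]$ gives a frictionless wealth $X+\Phi\tilde Y$, and the hypothesis $\xi\geq\bar\xi$ forces the effective Merton ratio of every such shadow market to lie in the interval $[q_-^m,q_+^m]$ on which $m\geq 0$, making each shadow market itself ill-posed and hence $V_0^C=-\infty$. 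The main obstacle is the two boundary sub-cases (a)(iii) and (b)(ii): both hinge on a careful ODE analysis of Proposition~\ref{prop:n} at the threshold $\bar\xi$, where \eqref{eq:prop:n:int const} quantitatively links the round-trip cost $\gamma^\uparrow\gamma^\downarrow$ to the (non-)existence of a consistent value function, and the construction or obstruction has to be made compatible with the SDU valuation rules.
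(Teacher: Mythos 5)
Your treatment of the three ``corner'' sub-cases --- (a) with $m(0)\leq 0$, (a) with $m(1)\leq 0$, and (b) with $m(q_M)\leq 0$ --- matches the paper's proof essentially verbatim: liquidate to a pure-cash or pure-stock position, run the constant-proportion calculation of Section~\ref{ssec:frictionless} with consumption rate tending to the root of $H$, respectively invoke the frictionless ill-posedness result together with the inclusion $\sC_{\gamma^\uparrow,\gamma^\downarrow}(x,\phi,y)\subseteq\sC_0(x+\phi y)$ of Remark~\ref{rem:inclusion}. The two boundary sub-cases you yourself flag as ``the main obstacle'' are, however, where your proposal has genuine gaps, and in both the paper uses a mechanism you have not identified: monotonicity of the attainable sets in the transaction-cost parameters combined with Corollary~\ref{cor:n}, which says that the optimal shadow consumption rate satisfies $\Vert n_{\xi_n}\Vert_\infty\to 0$ as $\xi_n\to\ol\xi$. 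For (a)(iii) the paper takes $\xi_n\downarrow\ol\xi\geq\xi$, notes that the \emph{already constructed} optimal strategies of the well-posed problems with costs $(\xi_n/\gamma^\downarrow,\gamma^\downarrow)$ are attainable for the smaller cost $\xi$ (Remark~\ref{rem:inclusion}), and reads off $V^{C^n}_0\gtrsim n_{\xi_n}^{-\theta S}\to+\infty$ from Theorem~\ref{thm:cand solution 2}. Your plan to build reflection strategies on shrinking shadow intervals from scratch would force you to re-prove Theorems~\ref{thm:cand solution 1} and~\ref{thm:cand solution 2} for a bespoke family of sub-optimal strategies, and even then you never state the actual source of the blow-up (the vanishing of the consumption rate $n$, entering the value as $n^{-\theta S}$); ``the identity \eqref{eq:ol xi} providing the quantitative match'' is a heuristic, not a step.

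The proposed shadow-price duality for (b) with $m(q_M)>0$ and $\xi\geq\ol\xi$ does not work as stated. First, the logic is inverted: if the ``effective Merton ratio'' of a shadow market lay in $[q^m_-,q^m_+]$, where $m\geq 0$, that market would be \emph{well}-posed for $R>1$, not ill-posed, so the conclusion $V^C_0=-\infty$ would not follow. Second, a shadow price process valued in $Y[\frac{1}{\gamma^\downarrow},\gamma^\uparrow]$ is not a geometric Brownian motion, so there is no ``effective Merton ratio'' and no off-the-shelf frictionless ill-posedness result to apply to it; constructing a single dominating ill-posed shadow market at the threshold is precisely the hard analytic content you would need to supply, and the paper deliberately avoids duality for EZ-SDU. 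What the paper does instead is elementary: take $\xi_n\uparrow\ol\xi\leq\xi$, so that $\sC_{\gamma^\uparrow,\gamma^\downarrow}(x,\phi,y)\subseteq\sC_{\xi_n/\gamma^\downarrow,\gamma^\downarrow}(x,\phi,y)$, bound $V^C_0$ by the explicit value function of the well-posed problem with costs $\xi_n$ via Theorem~\ref{thm:verification}, and let $n\to\infty$ so that the upper bound $\frac{(\xi\ul x)^{1-R}}{1-R}\Vert n_{\xi_n}\Vert_\infty^{-\theta S}$ tends to $-\infty$. You should replace the duality step with this sandwiching argument (or supply a genuinely new construction of an ill-posed dominating market, which would be substantially harder).
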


\section{Diagramatic representations and numerical illustrations}
\label{sec:examples}

In this section, we provide two numerical examples to illustrate some important qualitative properties of the solution, as represented by the function $n_\xi$ and the limits of the no-transaction region $\{q_*(\xi),q^*(\xi)\}$ (in the shadow-quantity of wealth coordinates) and $\{p_*(\gamma^\uparrow,\gamma^\downarrow),p^*(\gamma^\uparrow,\gamma^\downarrow)\}$ in the original coordinates. The examples are not meant to be exhaustive, and indeed several further canonical cases can arise depending on the signs of $\{1-R, q_M, 1-q_M \}$, and the locations of the quadratics $m(q)$ and $\ell(q)$. Some of the cases which are omitted here are treated more comprehensively in Hobson et al~\cite{hobson:tse:zhu:19A,hobson:tse:zhu:19B} in the special case of CRRA utility.

\subsection{First Numerical Example: $R,S<1$ and $q_M \in (0,1)$}
\label{ssec:eg1}
The model parameters are chosen such that $R,S<1$, $m(0)>0$, $m(1)>0$, $\min_{q\in(0,1)} m(q) <0$ and $q_M\in(0,1)$\footnote{Economically this means that the investement-consumption problem is well-posed when only investment in the risk-free asset is allowed, and also when only investment in the risky-asset is allowed, but the problem is ill-posed when transaction costs are zero and investment in both assets is permitted. Further, in the frictionless case, the optimal investment strategy involves a long position in both the risky asset and in cash.}. See the caption of Figure \ref{fig:eg1} for the precise values of each parameter used in the numerics. In this case, the problem is only well-posed for sufficiently large transaction costs (recall Definition \ref{def:well posed}). The solutions to the family of ODEs $(n_{(z)}(q))_{q\geq z}$ introduced in part (a) of Proposition \ref{prop:n first:R<1} are only defined for $z\in(0,q^m_-)$, where $q_-^m$ is the smaller root of $m(q)=0$. Each solution $(n_{(z)}(q))_{q\geq z}$ starts at $(z,m(z))$ and crosses $m$ again at some $(\zeta^+(z),m(\zeta^+(z)))$. Different choices of the left-starting point $z$ correspond to different levels of total transaction costs $\xi=\gamma^\uparrow\gamma^\downarrow$. The correct choice of $z$ can be identified from the equation $\Sigma^+(z)=\gamma^\uparrow\gamma^\downarrow$, where $\Sigma^{+}(\cdot)$ is defined in Proposition \ref{prop:n first:R<1}. Then the purchase and sale boundaries (in shadow portfolio weight) are given by $q_*=z$ and $q^*=\zeta^+(z)$. For example, Figure \ref{fig:eg1_n} shows two possible solutions to the ODE for $n$ with different left-starting points and in turn different points of crossing with $m$. (Note, such solutions cannot cross, as they solve a first-order ODE with regular co-efficients, at least away from 1.) They correspond to the total transaction cost levels of $\xi=1.69$ and $\xi=2.1125$ respectively. Note that if the transaction costs are such that $\xi \leq 1.1057$ then the problem is ill-posed.

Consider three combinations of transaction costs $(\gamma^\uparrow, \gamma^\downarrow)$ given by $(1.3, 1.3)$, $(1.625, 1.3)$ and $(1.625,1.04)$ respectively. They are constructed such that Set 1 and Set 3 give the same value of $\xi=1.69$ whereas Set 2 gives $\xi=2.1125$. Fundamental quantities of the optimal solution under each set of transaction costs are shown in Figure \ref{fig:eg1}. The dotted lines in each plot indicate the extension of the fundamental quantities beyond the no-transaction region (see Definition \ref{def:extension}). The cross, circle and square markers locate the boundaries of the no-transaction region under Set 1, 2 and 3 of transaction costs respectively.

In Figure \ref{fig:eg1_n}, the transaction costs parameters of Set 1 and Set 3 yield identical optimal consumption rates per unit shadow wealth since $n(\cdot)$ and $q_*^*$ only depend on the total transaction costs $\xi$. Set 2 carries a higher total transaction cost and then it follows that the optimal consumption rate (in shadow wealth terms) is higher (and the no-transaction interval $(q_*,q^*)$ is wider). In contrast, Figure \ref{fig:eg1_kappa} and \ref{fig:eg1_pq} show that the three sets of transaction costs result in different ratios of shadow-to-real price as well as mappings between real and shadow portfolio weights . This is because they are affected by $\gamma^\uparrow$ and $\gamma^\downarrow$ individually, not just via the total transaction costs $\xi=\gamma^\uparrow\gamma^\downarrow$. These two quantities are monotonic with respect to the shadow portfolio weight.

Figure \ref{fig:eg1_tcost} shows how the no-transaction region (in both real and shadow portfolio weights) changes with the transaction costs. In this case, $p_*$ and $q_*$ (resp. $p^*$ and $q^*$) are decreasing (resp. increasing) in both $\gamma^\uparrow$ and $\gamma^\downarrow$. But notably, Set 1 and 3 of the transaction costs will result in the same levels of $q_*$ and $q^*$, as indicated by the same levels of the cross and square markers in Figure \ref{fig:eq1_q_buy} and \ref{fig:eq1_q_sell}. In terms of real portfolio weights, Set 1 has a larger $p_*$ and $p^*$ than Set 3. This is due to the fact that $p_*$ and $p^*$ depend on $\gamma^\uparrow$ and $\gamma^\downarrow$ individually.
Note as well that in this example (with Merton ratio below one, i.e. $q_M \in (0,1)$) the no-transaction region $[p_*,p^*]$ contains the Merton ratio for all level of transaction costs, and is always contained in the first quadrant (i.e. $0<p_*<q_M<p^*<1$), recall Remark~\ref{rem:locationofNTwedge}.

\begin{figure}[!htbp]
	\captionsetup[subfigure]{width=0.45\textwidth}
	\centering
	\subcaptionbox{Optimal consumption (in shadow wealth) $n(q)$.\label{fig:eg1_n}}{\includegraphics[scale =0.5] {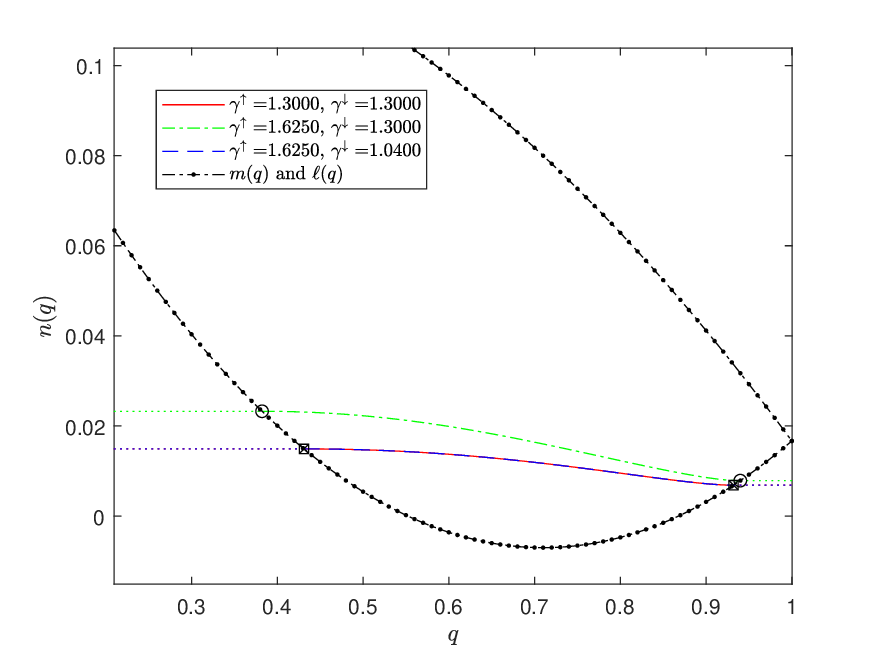}}
	\subcaptionbox{Ratio of shadow-to-real prices $\kappa(q)$.\label{fig:eg1_kappa}}{\includegraphics[scale =0.5] {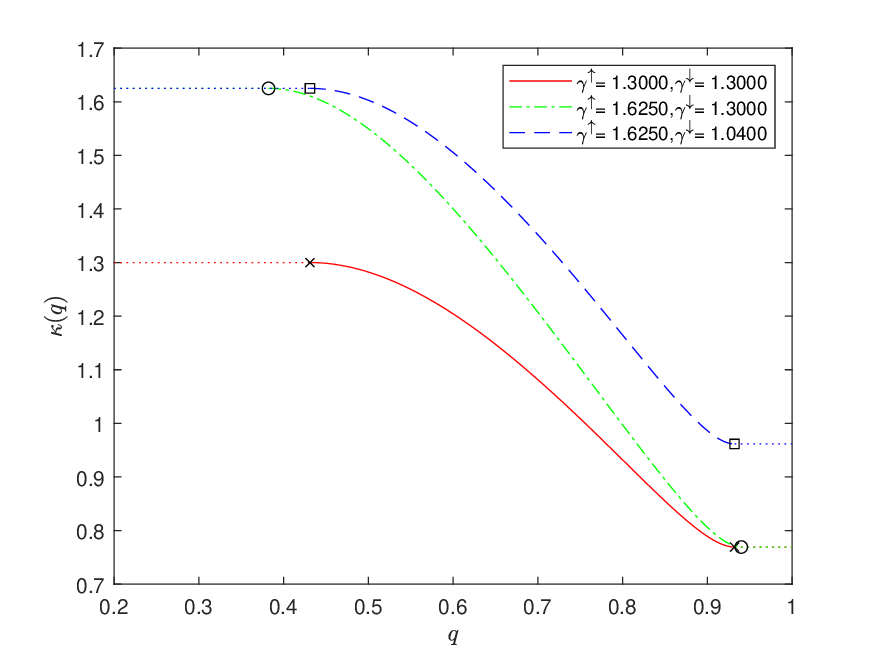}}
	\subcaptionbox{Real and shadow portfolio weight $p(q)$.\label{fig:eg1_pq}}{\includegraphics[scale =0.5] {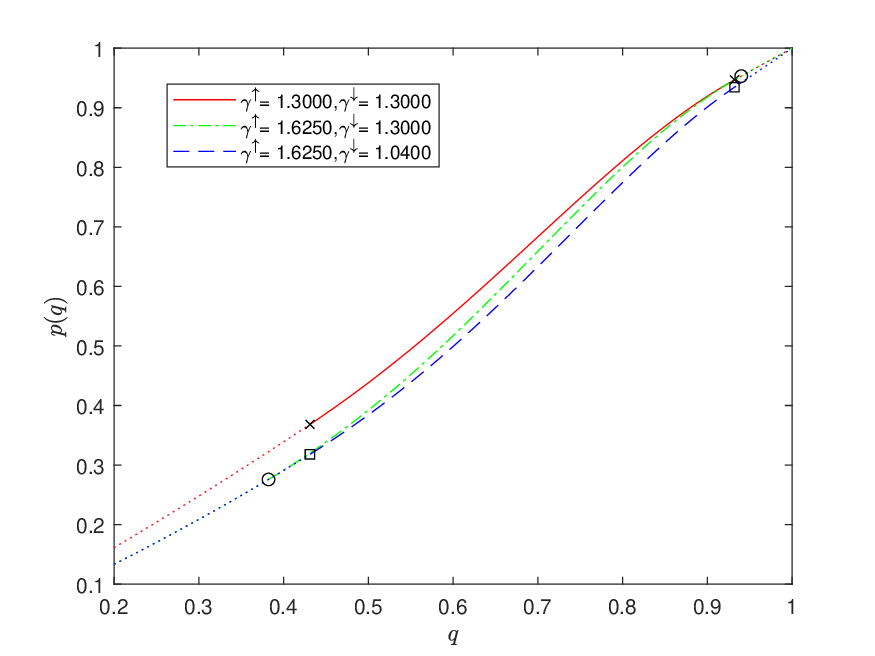}}
	
	\caption{The fundamental quantities as a function of shadow portfolio weights in Example 1. Parameters used: $R=2/3$, $S=1/3$, $\delta=0.045$, $r=0$, $\mu=0.2$, $\sigma=0.65$ (such that $\lambda=0.3077$ and $\alpha=0.045$). The cross, circle and square markers indicate the boundary points of the no-transaction region (in $p_*^*$ or $q_*^*$) under the set of transaction costs $(\gamma^\uparrow, \gamma^\downarrow)$ given by $(1.3, 1.3)$, $(1.625, 1.3)$ and $(1.625, 1.04)$ respectively.}
\label{fig:eg1}
\end{figure}

\begin{figure}[!htbp]
	\captionsetup[subfigure]{width=0.45\textwidth}
	\centering
	\subcaptionbox{$q^*$ and $q_*$ as $\gamma^\uparrow$ varies.\label{fig:eq1_q_buy}}{\includegraphics[scale =0.5] {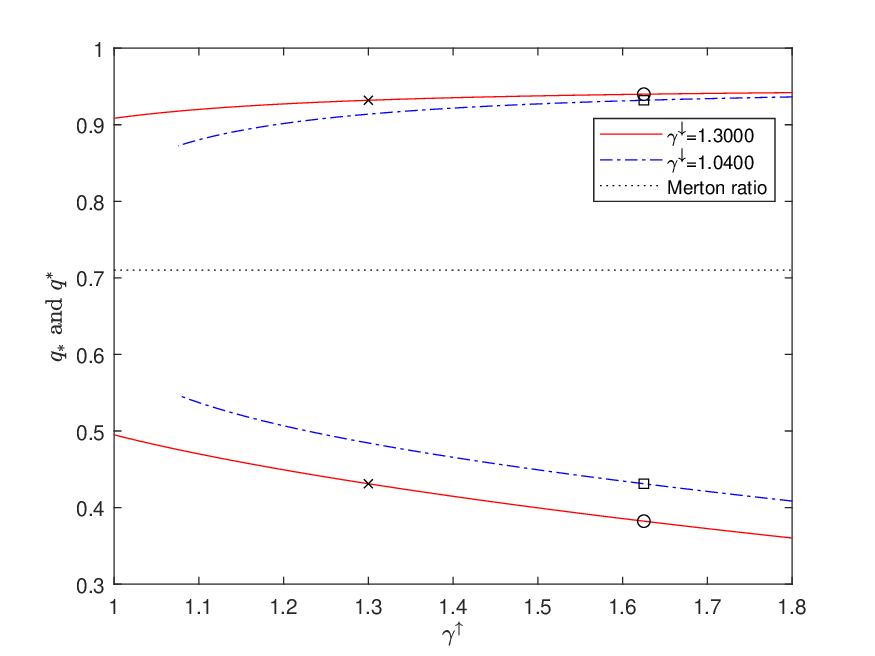}}
	\subcaptionbox{$q^*$ and $q_*$ as $\gamma^\downarrow$ varies.\label{fig:eq1_q_sell}}{\includegraphics[scale =0.5] {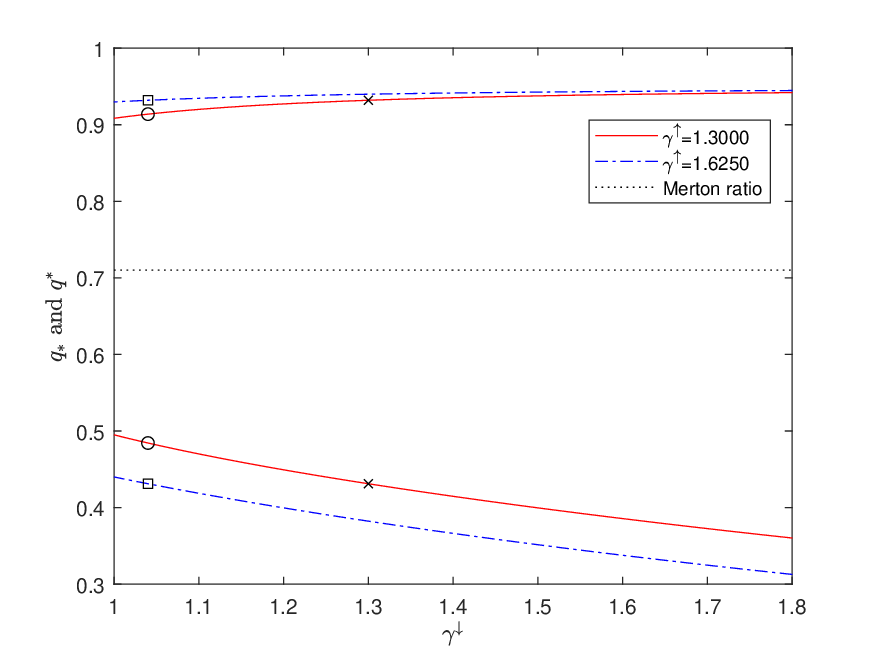}}
	\subcaptionbox{$p^*$ and $p_*$ as $\gamma^\uparrow$ varies.\label{fig:eq1_p_buy}}{\includegraphics[scale =0.5] {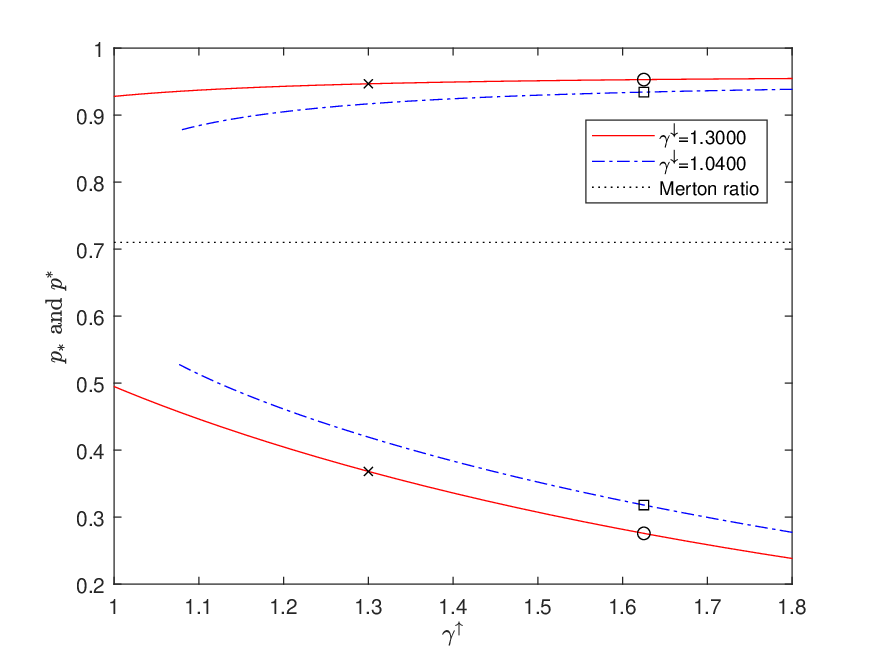}}
	\subcaptionbox{$p^*$ and $p_*$ as $\gamma^\downarrow$ varies.\label{fig:eq1_p_sell}}{\includegraphics[scale =0.5] {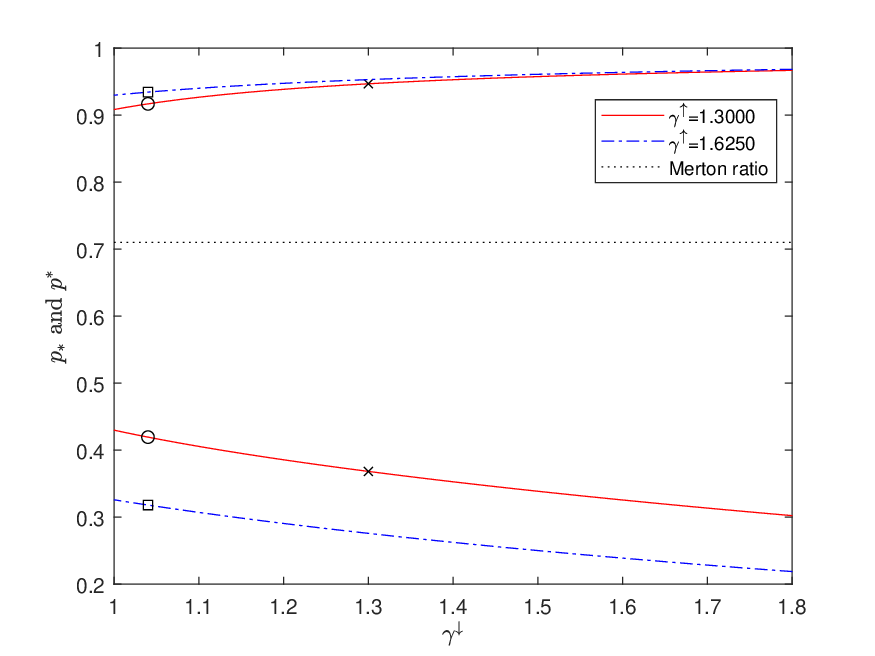}}
	
	\caption{Behaviours of the no-transaction region as a function of transaction costs in Example 1. Parameters used: $R=2/3$, $S=1/3$, $\delta=0.045$, $r=0$, $\mu=0.2$, $\sigma=0.65$ (such that $\lambda=0.3077$ and $\alpha=0.045$). The cross, circle and square markers indicate the boundary points of the no-transaction region (in $p_*^*$ or $q_*^*$) under the set of transaction costs $(\gamma^\uparrow, \gamma^\downarrow)$ given by $(1.3, 1.3)$, $(1.625, 1.3)$ and $(1.625, 1.04)$ respectively.}
\label{fig:eg1_tcost}
\end{figure}

Several qualitative features are also apparent from Figures~\ref{fig:eg1} and \ref{fig:eg1_tcost} which may be traced back to the fact that $n$ is decreasing and solutions $n$ with different starting points get closer as $q$ increases. In particular we observe that $q_*$ varies more than $q^*$ as $\xi$ changes, and whilst $\gamma^\uparrow$ affects both $p_*$ and $p^*$, $\gamma^\downarrow$ mainly affects $p_*$.

\subsection{Second Numerical Example: $R,S>1$ and $q_M>1$}
\label{ssec:eg2}

In this example, we consider a combination of model parameters\footnote{The parameters are such that the problem is well-posed for any value of transaction costs, and both when only investments in the risk-free asset are allowed and when only investments in the risky asset are allowed. Further, in the frictionless case, the optimal investment strategy involves borrowing to finance a leveraged position in the risky asset.} such that $R,S>1$, $m(q)>0$ for $q\in[0,1]$ and $q_M>1$. Specifically, the Merton ratio is larger than one and the problem is well-posed for any level of transaction costs. An interesting feature in the case of the Merton ratio being above one is that all solutions to the family of ODEs in part (c) of Proposition \ref{prop:n first:R>1} with left-starting points below one will pass through the singular point $(1,m(1))$. Consequently, for any $z<1$, we must have $n_{[z]}(q)=n_{[1]}(q)$ on $q\in(1,\zeta^+(1))$ and they all cross $m$ at the same coordinate $(\zeta^+(1),m(\zeta^+(1)))$. In the shadow fraction of wealth coordinates, the sale boundary $q^*(\xi)$ is constant as a function of $\xi$ for sufficiently high transaction costs. See Figure \ref{fig:eg2_n} for an example where the two plots of $n$ correspond to $\xi=1.69$ and $\xi=2.1125$. On $q<1$, the solution $n$ corresponding to smaller $\xi$ dominates the solution corresponding to higher $\xi$, although the magnitude of the difference is very small numerically. The two functions pass through the same singular point $(1,m(1))$ and coincide on $q>1$ until they both cross $m(q)$ again on $q>1$ at the point indicated by the markers.

We consider the same three sets of transaction costs as in Example 1. Set 1 and Set 3 again result in identical optimal consumption rates (as functions of shadow wealth) as represented by $n$ because they share the same level of $\xi=\gamma^\uparrow\gamma^\downarrow$. Furthermore, the shadow consumption rates are the same across all the three sets on $q>1$. In Figure \ref{fig:eg2_kappa}, $\kappa(q)$ is identical on $q>1$ under Sets 1 and 2 of transaction costs  (whilst the difference is numerically very small on $q<1$). This is not surprising since $\kappa(q)$ over $q>1$ only depends on $(\gamma^\uparrow,\gamma^\downarrow)$ via $q^*(\xi)$, $(n_{\xi}(\cdot))_{q\leq q^*(\xi)}$ and $\gamma^\downarrow$,\footnote{We have $\kappa(q)=\gamma^\uparrow\exp\left(\frac{S}{1-S}\int_{q_*(\xi)}^{q}\frac{n_{\xi}'(v)}{vn_{\xi}(v)}dv\right)=\frac{1}{\gamma^\downarrow}\exp\left(-\frac{S}{1-S}\int_{q}^{q^*(\xi)}\frac{n_{\xi}'(v)}{vn_{\xi}(v)}dv\right)$.} and Set 1 and 2 share the same value of $\gamma^\downarrow$.
Further, when we move to the transaction cost parameters in Set 3 we find that on $[1,q^*]$ the corresponding $\kappa$ is a multiplicative scaling of the $\kappa$ from Sets 1 and 2 (and this multiplicative scaling extends to the interval $[q_*(\xi),q^*(\xi)]$ when we compare the functions $\kappa$ that arise in Sets 1 and 3).
These properties are inherited by the map between the real and shadow portfolio weight $p=p(q)$, where comparing the results from Sets 1 and 2, the difference in cost of purchase has no impact on the function $p(q)$ on $q> 1$, and only a numerically small difference below $1$), see Figure \ref{fig:eg2_pq}.

The fact that the Merton ratio $q_M$ is above 1 brings new phenomena, especially concerning the location of the no-transaction region and how it varies with transaction costs. Figure \ref{fig:eq2_q_buy} and \ref{fig:eq2_q_sell} confirm that the sale boundary $q^*$ in shadow fractions of wealth units is totally insensitive to transaction costs (under our baseline setup with sufficiently large $\xi$), while $q_*$ is decreasing in transaction costs. Moreover, since $p^*$ only depends on transaction costs via $q^*(\xi)$ and $\gamma^\downarrow$, $p^*$ is totally insensitive to $\gamma^{\uparrow}$ as shown in Figure \ref{fig:eq2_p_buy} (recall Remark~\ref{rem:locationofNTwedge}.
Note also that $p^*$ is decreasing in $\gamma^\downarrow$ and this monotonicity is opposite to that in Example 1. Further, Figure \ref{fig:eq2_p_buy} and \ref{fig:eq2_p_sell} show that for some parameter values $q_M\notin[p_*,p^*]$, i.e. the Merton line is not necessarily contained in the no-transaction region, again recall Remark~\ref{rem:locationofNTwedge}.

\begin{figure}[!htbp]
	\captionsetup[subfigure]{width=0.45\textwidth}
	\centering
	\subcaptionbox{Optimal consumption (in shadow wealth) $n(q)$.\label{fig:eg2_n}}{\includegraphics[scale =0.5] {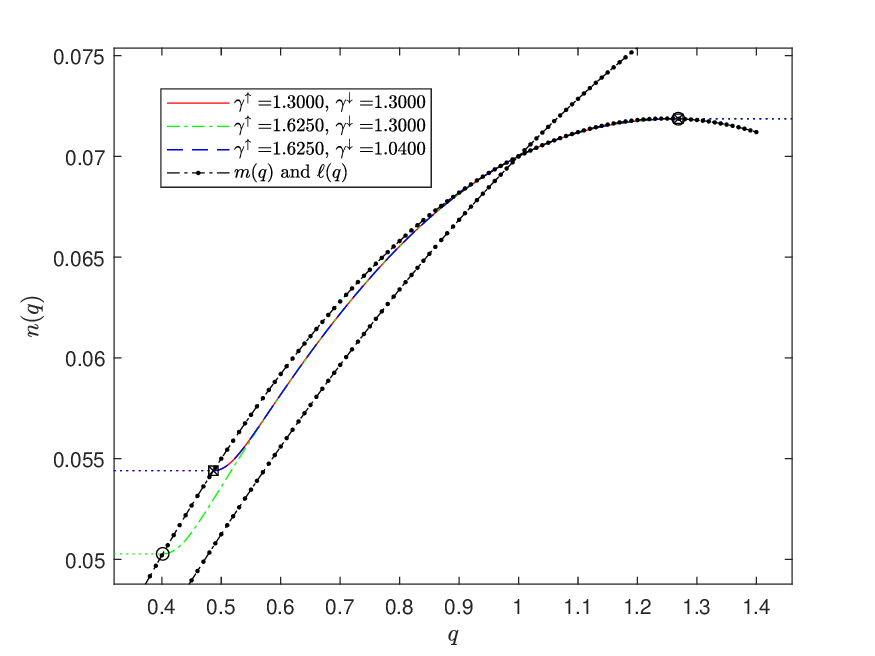}}
	\subcaptionbox{Ratio of shadow-to-real prices $\kappa(q)$.\label{fig:eg2_kappa}}{\includegraphics[scale =0.5] {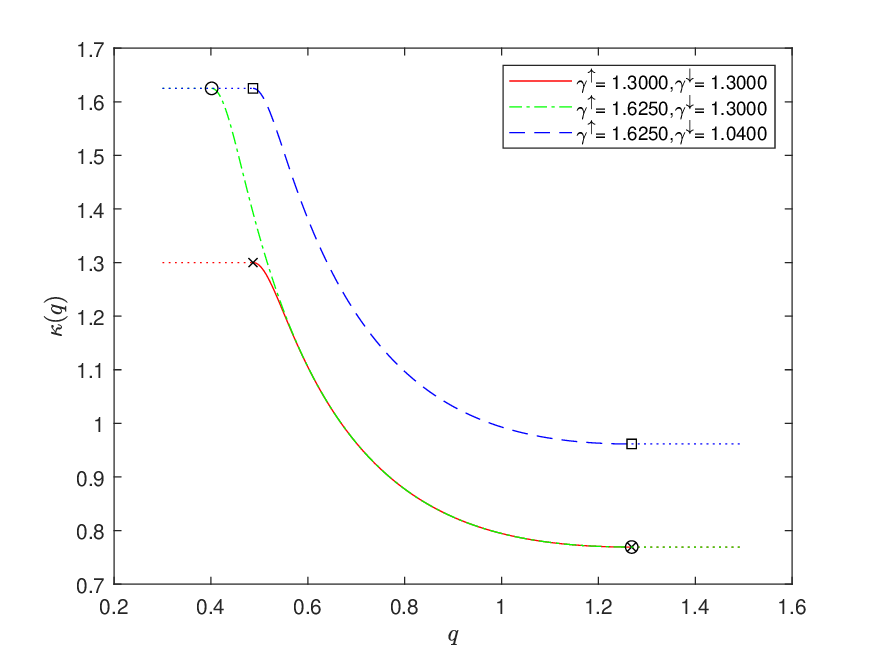}}
	\subcaptionbox{Real and shadow portfolio weight $p(q)$.\label{fig:eg2_pq}}{\includegraphics[scale =0.5] {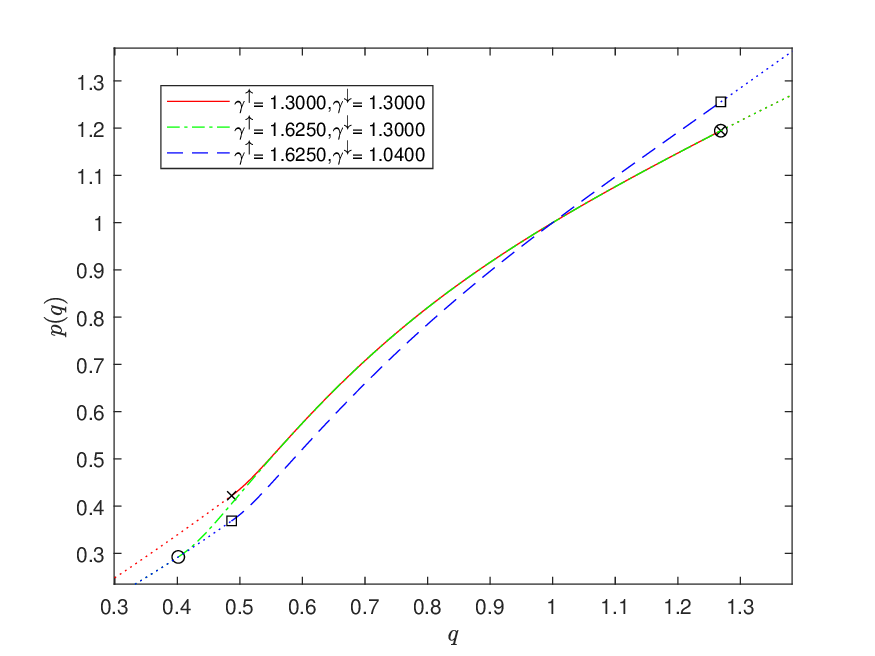}}
	
	\caption{The fundamental quantities as a function of shadow portfolio weights in Example 2. Parameters used: $R=2$, $S=4$, $\delta=0.1$, $r=0$, $\mu=0.1$, $\sigma=0.2$ (such that $\lambda=0.5$ and $\alpha=0.1$). The cross, circle and square markers indicate the boundary points of the no-transaction region (in $p_*^*$ or $q_*^*$) under the set of transaction costs $(\gamma^\uparrow, \gamma^\downarrow)$ given by $(1.3, 1.3)$, $(1.625, 1.3)$ and $(1.625, 1.04)$ respectively.}
\label{fig:eg2}
\end{figure}

\begin{figure}[!htbp]
	\captionsetup[subfigure]{width=0.45\textwidth}
	\centering
	\subcaptionbox{$q^*$ and $q_*$ as $\gamma^\uparrow$ varies.\label{fig:eq2_q_buy}}{\includegraphics[scale =0.5] {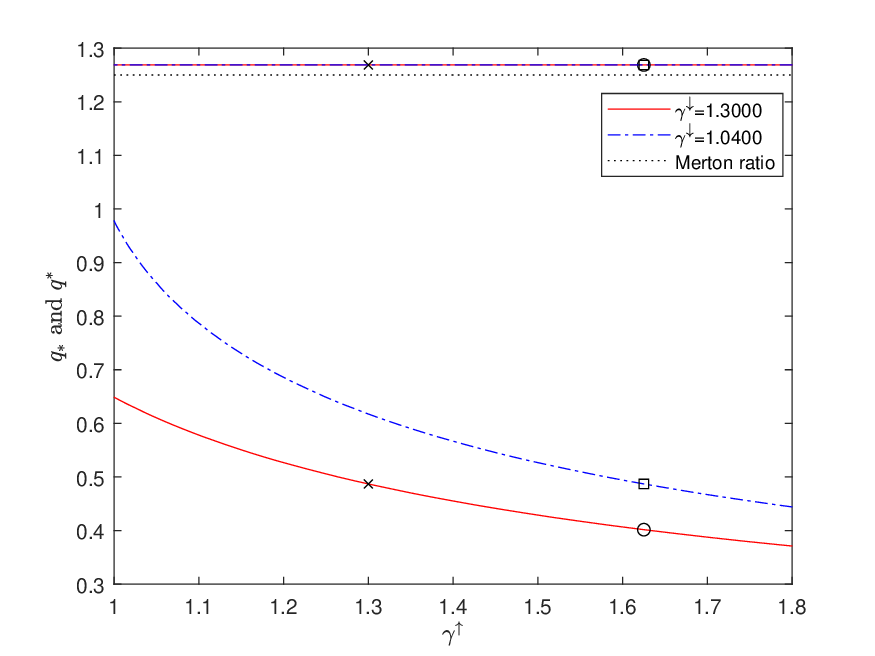}}
	\subcaptionbox{$q^*$ and $q_*$ as $\gamma^\downarrow$ varies.\label{fig:eq2_q_sell}}{\includegraphics[scale =0.5] {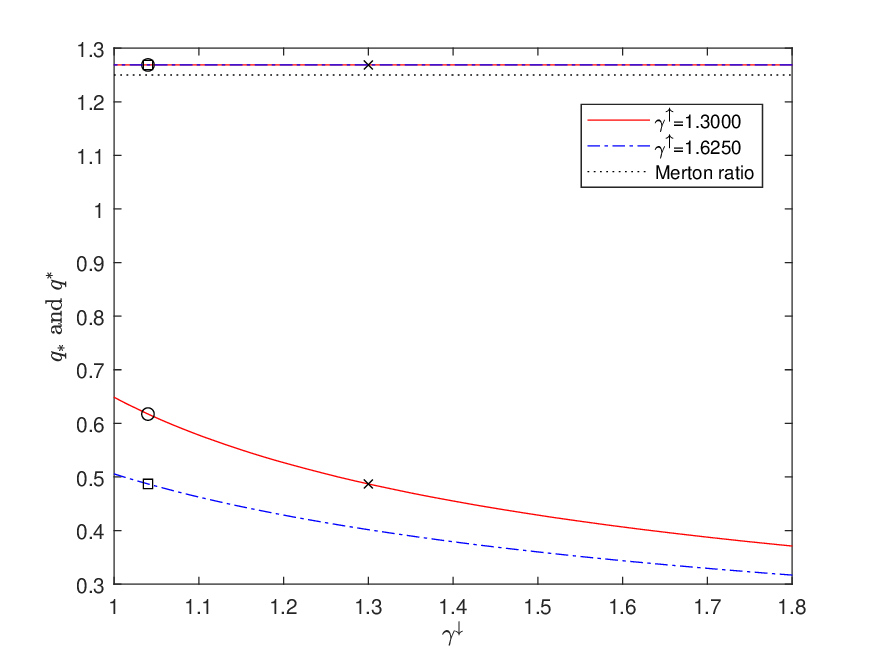}}
	\subcaptionbox{$p^*$ and $p_*$ as $\gamma^\uparrow$ varies.\label{fig:eq2_p_buy}}{\includegraphics[scale =0.5] {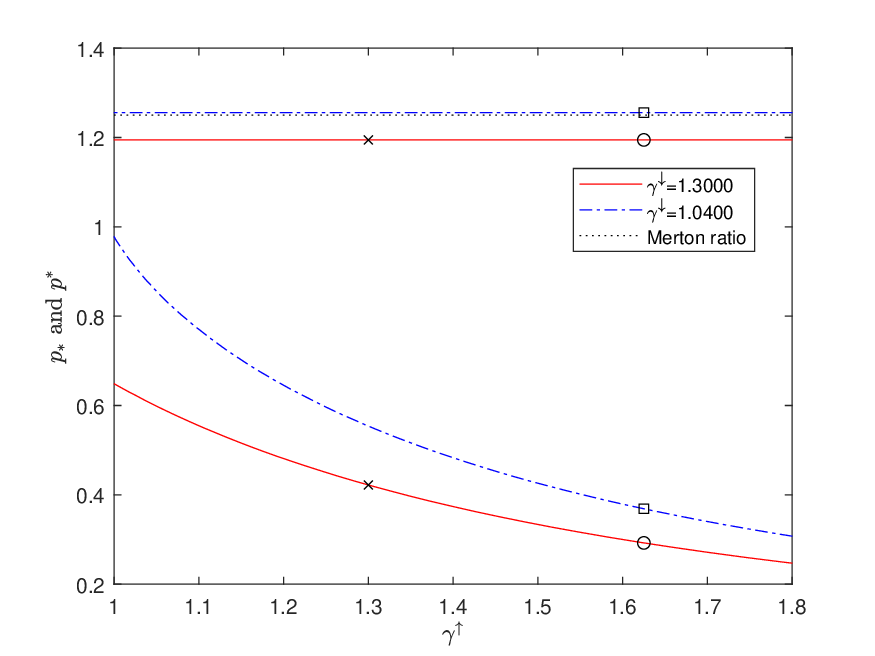}}
	\subcaptionbox{$p^*$ and $p_*$ as $\gamma^\downarrow$ varies.\label{fig:eq2_p_sell}}{\includegraphics[scale =0.5] {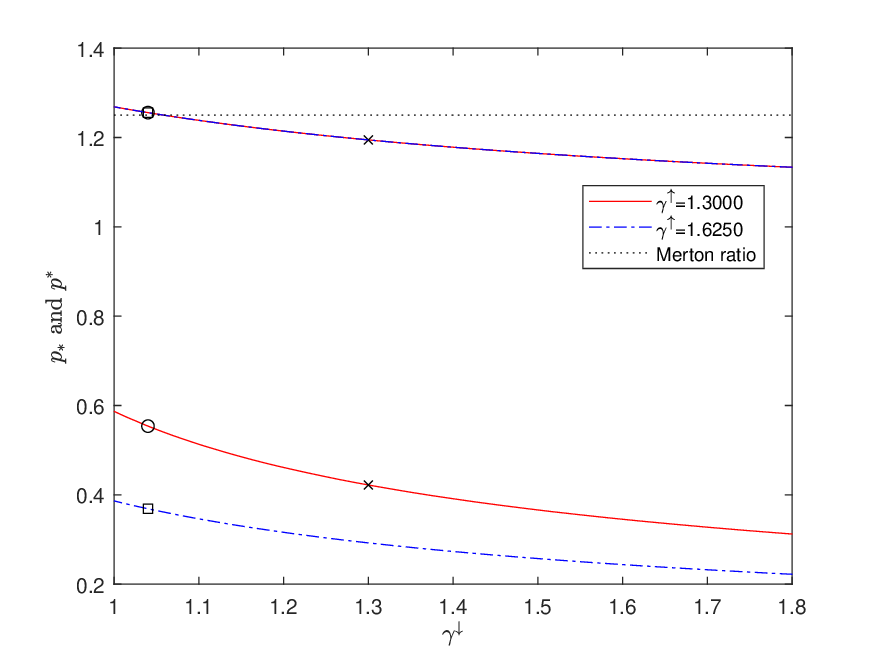}}
	
	\caption{Behaviours of the no-transaction region as a function of transaction costs in Example 2.  Parameters used:$R=2$, $S=4$, $\delta=0.1$, $r=0$, $\mu=0.1$, $\sigma=0.2$ (such that $\lambda=0.5$ and $\alpha=0.1$). The cross, circle and square markers indicate the boundary points of the no-transaction region (in $p_*^*$ or $q_*^*$) under the set of transaction costs $(\gamma^\uparrow, \gamma^\downarrow)$ given by $(1.3, 1.3)$, $(1.625, 1.3)$ and $(1.625, 1.04)$ respectively.}
\label{fig:eg2_tcost}
\end{figure}

\section{Small transaction costs}
\label{sec:smalltc}
In this section we discus the small transaction cost regime, and how the boundaries of the no-transaction region  depend on the transaction costs $\gamma^\uparrow$ and $\gamma^\downarrow$ when the transaction costs are small. In particular we look for an expansion in terms of $\epsilon^\uparrow=\gamma^\uparrow-1$ and $\epsilon^\downarrow = 1 - \frac{1}{\gamma^\downarrow}$. Let $\epsilon = \gamma^\uparrow \gamma^\downarrow - 1 = \frac{\epsilon^\uparrow + \epsilon^\downarrow}{1- \epsilon^\downarrow} = \epsilon^\uparrow + \epsilon^\downarrow + \sO(\epsilon^\downarrow(\epsilon^\uparrow + \epsilon^\downarrow))$.

Historically, the Merton investment-consumption problem with transaction costs has proved to be very challenging (even for the case of additive utility), and a natural response is to look for approximate solutions in the case of small transaction costs\footnote{It is important to note that our method gives a solution which is valid for all values of transaction costs, and then we look in the small transaction cost regime. Many other studies in the small transaction cost regime only construct solutions as expansions about the frictionless case.}. For additive utility,Rogers~\cite{Rogers:04} argued that it is natural to expect the width of the no-transaction region to be of the order of the size of the transcation cost to the power one-third. This result was formalised by Jane\v{c}ek and Shreve~\cite{JanecekShreve:04}. They showed (under an assumption that the transaction costs on buying and selling are identical, or in our setting that $\gamma^\downarrow  - 1 = 1 - \frac{1}{\gamma^\uparrow}$), and under a (almost necessary) assumption that the problem without transaction costs problem is well-posed) that $p_* = \hq - \Delta^{1/3} \epsilon^{1/3} + \sO(\epsilon^{2/3})$ and that $p^* = \hq + \Delta^{1/3} \epsilon^{1/3}+ \sO(\epsilon^{2/3})$, where $\Delta$ is defined below in \eqref{eq:defDelta}. In general, further progress was difficult because characterisations of the value function depended on finding a solution to a non-linear second order free-boundary problem, with two free boundaries. One case that is slightly simpler is the case of logarithmic utility (which formally may be considered as the case $R=1=S$ in our setting). In the case of logarithmic utility Gerhold et al~\cite{GerholdMuhleKarbeSchachermayer:18} focus on the dual problem to give an expansion for $p_*$, $p^*$ to order $\sO(\epsilon^{2/3})$ and an expansion for the optimal consumption which is valid up to order $\sO(\epsilon)$.

Still in the additive case, Choi et al~\cite{ChoiSirbuZitkovic:13} and Hobson et al~\cite{hobson:tse:zhu:19A}, showed how the problem for general transaction costs can be reduced to a first order equation. This facilitated a simpler derivation of the expansion for small transaction costs, see Choi~\cite{Choi:14} and Hobson et al~\cite{hobson:tse:zhu:19A}, and these papers calculated the second order term. Choi~\cite{Choi:14} assumed zero transaction costs on selling (i.e. that $\gamma^\downarrow = 1$), but his results can be translated to the more general case. Hobson et al~\cite{hobson:tse:zhu:19A} additionally considered the expansion in the case $\hq=1$, in which case the width of the no transaction wedge is of order $\epsilon^{1/2}$. (We could extend our results to this case similarly).

Melynyk et al~\cite{MMKS:20} considered the small transaction cost case for Epstein-Zin SDU. Under some restrictions on the paramaters (which go beyond well-posedness of the problem) and under the assumption that the transaction costs on buying and selling are identical, (i.e. $\gamma^\downarrow  - 1 = 1 - \frac{1}{\gamma^\uparrow}$) they derived an expansion to order $\epsilon^{2/3}$ for the value function and thence find expansions (again accurate to order $\epsilon^{2/3}$) for the boundaries of the optimal no-transaction region and the optimal consumption.

Define
\begin{align}
\Delta & =  \frac{3 \hq^2 (1 - \hq)^2}{4R}; \label{eq:defDelta} \\
\Sigma & =  \frac{4 \hatm}{3 \sigma^2 \hq(1 - \hq)^2} = \frac{\hatm \hq}{R \sigma^2 \Delta};
\label{eq:defSigma}  \\
\Psi 
&= \frac{1}{5} \Sigma^2 + \frac{1}{5} \frac{(5\hq-3)}{\hq(1-\hq)} \Sigma - \frac{(3 -10\hq + 10 \hq^2)+ 4R\hq(1-\hq)}{15\hq^2(1-\hq)^2}.
\label{eq:defPsi}
\end{align}

\begin{prop}
\label{prop:smalltcQ}
Suppose that $R,S<1$ and $\hatm \geq 0$ or $R,S>1$ and $\hatm>0$, and suppose that $q_M \notin \{0, 1 \}$. Then for all $\epsilon > 0$ sufficiently small
\begin{eqnarray*}
q_*(1+\epsilon) & = & \hq - \Delta^{\frac{1}{3}} \epsilon^{\frac{1}{3}} - \Sigma \Delta^{\frac{2}{3}} \epsilon^{\frac{2}{3}} - \Psi \Delta \epsilon + \sO\left( \epsilon^{\frac{4}{3}} \right) \\
q^*(1+\epsilon) & = & \hq  + \Delta^{\frac{1}{3}} \epsilon^{\frac{1}{3}} - \Sigma \Delta^{\frac{2}{3}} \epsilon^{\frac{2}{3}} + \Psi \Delta \epsilon + \sO\left( \epsilon^{\frac{4}{3}} \right)
\end{eqnarray*}
\end{prop}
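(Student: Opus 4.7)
The plan is to perform a matched asymptotic expansion of $n_\xi$ about the Merton point $\hq = q_M$ in powers of $\epsilon^{1/3}$, and then close the expansion using the integral constraint \eqref{eq:prop:n:int const}. First, I rewrite the problem in terms of $u := n_\xi - m$, which vanishes at both $q_*(\xi)$ and $q^*(\xi)$; from \eqref{eq:prop:n:ODE} it satisfies
\begin{equation*}
u'(q) \;=\; -\,m'(q) \;-\; \frac{1-S}{S}\,\frac{(m(q)+u(q))\,u(q)}{(1-q)\bigl(\ell(q)-m(q)-u(q)\bigr)}.
\end{equation*}
Since $m$ is quadratic with $m'(\hq)=0$, one has $m'(q)=m''(\hq)(q-\hq)$ identically, so rescaling $q = \hq + \epsilon^{1/3}y$ and setting $\tilde u(y) := u(q)$ converts the problem into a regular perturbation problem in powers of $\epsilon^{1/3}$.

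Second, I would postulate
\begin{equation*}
\tilde u(y) = \epsilon^{2/3}U_0(y) + \epsilon\,U_1(y) + \epsilon^{4/3}U_2(y) + \sO(\epsilon^{5/3}),
\end{equation*}
together with $q_*(\xi) = \hq - \Delta^{1/3}\epsilon^{1/3} + A_2\epsilon^{2/3} + A_3\epsilon + \sO(\epsilon^{4/3})$ and $q^*(\xi) = \hq + \Delta^{1/3}\epsilon^{1/3} + B_2\epsilon^{2/3} + B_3\epsilon + \sO(\epsilon^{4/3})$, and substitute into the ODE. Using $\ell(q)-m(q) = \tfrac{(1-S)\sigma^2}{2S}q(1-q)$, $m(q) = \hatm + \tfrac{1}{2}m''(\hq)\epsilon^{2/3}y^2$ and $1-q = (1-\hq) - \epsilon^{1/3}y$, the hierarchy begins
\begin{equation*}
U_0'(y) = -m''(\hq)\,y, \qquad U_1'(y) = -\tfrac{3\Sigma}{2}\,U_0(y), \qquad U_2'(y) = -\tfrac{3\Sigma}{2}\,U_1(y) - \tfrac{3\Sigma}{2}\tfrac{(3\hq-1)y}{\hq(1-\hq)}U_0(y) + \mathcal{R}(y),
\end{equation*}
where $\mathcal{R}$ gathers the remaining contributions at order $\epsilon^{4/3}$ on the right hand side (including the $U_0^2$ term from expanding $(\ell-m-\tilde u)^{-1}$ to second order). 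Integration gives $U_0, U_1, U_2$ up to integration constants $c_0, c_1, c_2$.

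Third, imposing $\tilde u(y_*)=\tilde u(y^*)=0$ and Taylor-expanding about $y = \mp \Delta^{1/3}$ yields, order by order, (i) $c_0 = \tfrac{m''(\hq)}{2}\Delta^{2/3}$, confirming the symmetric leading interval $[-\Delta^{1/3},\Delta^{1/3}]$; (ii) a linear relation among $c_1, A_2, B_2$; and (iii) a linear relation among $c_2, A_3, B_3, A_2, B_2$. The remaining freedom is fixed by the integral constraint \eqref{eq:prop:n:int const}. Expanding its integrand as a power series in $\epsilon^{1/3}$, carefully accounting for the Taylor corrections from $y_* \neq -\Delta^{1/3}$ and $y^* \neq \Delta^{1/3}$, and equating order-by-order with $\ln(1+\epsilon) = \epsilon - \tfrac{\epsilon^2}{2} + \sO(\epsilon^3)$: the coefficient of $\epsilon$ recovers $\Delta = \tfrac{3\hq^2(1-\hq)^2}{4R}$ as in \eqref{eq:defDelta}; the vanishing of the $\epsilon^{4/3}$ coefficient, combined with the order-$\epsilon$ boundary relation, yields $A_2 = B_2 = -\Sigma\Delta^{2/3}$; the vanishing of the $\epsilon^{5/3}$ coefficient, combined with the order-$\epsilon^{4/3}$ boundary relation, gives $A_3 = -\Psi\Delta$ and $B_3 = \Psi\Delta$, which is precisely the statement.

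The principal obstacle is the $\epsilon^{5/3}$ computation that identifies $\Psi$. Several contributions of comparable size must be combined simultaneously: (a) the cross-term $\tfrac{(3\hq-1)y}{\hq(1-\hq)}U_0$ appearing in $U_2'$; (b) the nonlinear $U_0^2$ correction produced when $(\ell-m-\tilde u)^{-1}$ is expanded to second order; (c) the $\tfrac{m''(\hq)}{2}y^2\cdot U_0$ contribution from expanding $(m+\tilde u)\,\tilde u$; (d) the order-$\epsilon^{2/3}$ corrections to $1/[q(1-q)(\ell-m-\tilde u)]$, which when multiplied by $U_1$ and $U_0$ produce terms $y U_1$, $y^2 U_0$ and $U_0^2$; and (e) the quadratic boundary corrections $\tfrac12 U_0'(-\Delta^{1/3})(y_*+\Delta^{1/3})^2$ and $\tfrac12 U_0'(\Delta^{1/3})(y^*-\Delta^{1/3})^2$ coming from the limits of integration. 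Each of the three summands of $\Psi$ in \eqref{eq:defPsi} (the $\tfrac15 \Sigma^2$, the $\tfrac15 \tfrac{5\hq-3}{\hq(1-\hq)}\Sigma$, and the purely $\hq,R$-dependent piece) originates from a distinct combination of these contributions; verifying that the book-keeping collapses exactly to \eqref{eq:defPsi}, with all $S$-dependence cancelling apart from that absorbed into $\Sigma$ and $\Delta$, is the main computational hurdle.
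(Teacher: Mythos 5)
Your proposal is correct and is essentially the paper's own method: a formal power-series expansion of the free-boundary solution near $\hq$, closed order-by-order against the integral constraint \eqref{eq:prop:n:int const}, with $\Delta$ emerging at order $\epsilon$, $\Sigma$ at order $\epsilon^{4/3}$ and $\Psi$ at order $\epsilon^{5/3}$. The only real difference is the choice of scaling variable: you rescale by $\epsilon^{1/3}$, so both endpoints $y_*,y^*$ are unknown and you must carry Taylor corrections of the expansion functions at \emph{both} boundaries, whereas the paper rescales by the unknown half-width $u=\hq-q_*(1+\epsilon)$, which pins the left endpoint exactly at $-1$ (so $g_k(-1)=0$ with no endpoint corrections there) and defers the inversion $u\mapsto\epsilon$ to a single step at the end; this is mere bookkeeping, though the paper's normalisation is noticeably lighter at order $\epsilon^{5/3}$. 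One caveat: you correctly enumerate the five contributions that must be combined to produce $\Psi$, but you do not execute that arithmetic, and since the entire content of the proposition is the values of the coefficients, the proof is not complete until that computation is carried out and shown to collapse to \eqref{eq:defPsi} (the paper does this explicitly via the integrals $\int_{-1}^{1}(1+v)^3(3-v)\,\dd v$, $\int_{-1}^{1}(1-v^2)^2\,\dd v$, etc.). Your remark that all $S$-dependence enters only through $\hatm$ and $D$ at these orders is correct and matches the paper's structure.
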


\begin{rem}
\begin{enumerate}
\item The assumption of Proposition \ref{prop:smalltcQ} that either $R,S<1$ and $\hatm \geq 0$ or $R,S>1$ and $\hatm>0$ is equivalent to an assumption that the problem \eqref{eq:frictional problem} is well-posed for all sufficiently small transaction cots.
\item In the shadow-fraction of wealth co-ordinates, the width of no-transaction region is $\sO(\epsilon^{1/3})$ and is symmetric about the Merton proportion $\hq$. To order $\epsilon^{1/3}$ the width of the no-transaction region depends on risk aversion $R$ (both directly, and indirectly through $\hq = \frac{\lambda}{\sigma R}$), but not on the elasticity of inter-temporal complementarity $S$.
\item The $\sO(\epsilon^{2/3})$ term is same for both $q_*$ and $q^*$ and moves the no-transaction region in the direction of smaller positions in the risky asset. This term does not depend directly on $S$, although it does depend indirectly on $S$ via $\hat{m}$.
\item The $\sO(\epsilon^{2/3})$ term is same for both $q_*$ and $q^*$, except that the signs are opposite, so that rather than moving the no-transaction region it makes it wider or narrower. This term does not depend directly on $S$, although it does depend indirectly on $S$ via $\hat{m}$.
\item It is the straightforward to extend the arguments to higher order as required, or to the case $\hq=1$.  If $\lambda=0=\hq$ then the problem is degenerate and optimal strategy for the agent is to instantly eliminate any position (long or short) in the risky asset, and thereafter to keep all wealth in the bank account.
\end{enumerate}
\end{rem}

\begin{cor}
\label{cor:smalltcP}
Suppose that $R,S<1$ and $\hatm \geq 0$ or $R,S>1$ and $\hatm>0$, and suppose that $q_m \notin \{0,1 \}$. Then for all $\epsilon^\uparrow, \epsilon^\downarrow > 0$ sufficiently small
\begin{eqnarray*}  p_*(1 + \epsilon^\uparrow,1 + \epsilon^\downarrow) & = & \hq - \Delta^{\frac{1}{3}} (\epsilon^\uparrow+\epsilon^\downarrow)^{\frac{1}{3}} - \Sigma \Delta^{\frac{2}{3}} (\epsilon^\uparrow+\epsilon^\downarrow)^{\frac{2}{3}} - \Psi \Delta (\epsilon^\uparrow+\epsilon^\downarrow) - \epsilon^\uparrow \hq(1-\hq) + \sO(\epsilon^{\frac{4}{3}}) \\
p^*(1 + \epsilon^\uparrow,1 + \epsilon^\downarrow)  & = & \hq + \Delta^{\frac{1}{3}} (\epsilon^\uparrow+\epsilon^\downarrow)^{\frac{1}{3}} - \Sigma \Delta^{\frac{2}{3}} (\epsilon^\uparrow+\epsilon^\downarrow)^{\frac{2}{3}} + \Psi \Delta (\epsilon^\uparrow+\epsilon^\downarrow) + \epsilon^\downarrow \hq(1-\hq) + \sO(\epsilon^{\frac{4}{3}})
\end{eqnarray*}
\end{cor}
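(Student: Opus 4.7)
The plan is to derive the expansions for $p_\ast$ and $p^\ast$ directly from the expansions for $q_\ast$ and $q^\ast$ in Proposition~\ref{prop:smalltcQ}, using the explicit Möbius transformation representations
\[
p_\ast(\gamma^\uparrow,\gamma^\downarrow) = \frac{q_\ast(\xi)}{1 + \epsilon^\uparrow(1-q_\ast(\xi))}, \qquad p^\ast(\gamma^\uparrow,\gamma^\downarrow) = \frac{(1+\epsilon^\downarrow)\,q^\ast(\xi)}{1 + \epsilon^\downarrow\, q^\ast(\xi)},
\]
with $\xi = \gamma^\uparrow\gamma^\downarrow = 1+\epsilon$. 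Since $\epsilon = \epsilon^\uparrow + \epsilon^\downarrow + \epsilon^\uparrow\epsilon^\downarrow$, we have $\epsilon = (\epsilon^\uparrow+\epsilon^\downarrow) + \mathcal O((\epsilon^\uparrow+\epsilon^\downarrow)^2)$, so $\epsilon^{1/3} = (\epsilon^\uparrow+\epsilon^\downarrow)^{1/3} + \mathcal O(\epsilon^{4/3})$ and similarly for $\epsilon^{2/3}$ and $\epsilon$. Thus Proposition~\ref{prop:smalltcQ} rewrites verbatim with $\epsilon$ replaced by $\epsilon^\uparrow+\epsilon^\downarrow$, at the cost of an absorbed $\mathcal O(\epsilon^{4/3})$ error in each coefficient.

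Next, I would Taylor-expand each Möbius map in its small parameter. For $p_\ast$, writing $f(\epsilon^\uparrow) = q_\ast/(1+\epsilon^\uparrow(1-q_\ast))$ gives $f(\epsilon^\uparrow) = q_\ast - q_\ast(1-q_\ast)\epsilon^\uparrow + \mathcal O((\epsilon^\uparrow)^2)$. Since $q_\ast - \hq = \mathcal O(\epsilon^{1/3})$, we have $q_\ast(1-q_\ast) = \hq(1-\hq) + \mathcal O(\epsilon^{1/3})$, so
\[
q_\ast(1-q_\ast)\,\epsilon^\uparrow = \hq(1-\hq)\,\epsilon^\uparrow + \mathcal O(\epsilon^{4/3}),
\]
and the $\mathcal O((\epsilon^\uparrow)^2)$ remainder is $\mathcal O(\epsilon^{4/3})$ as well. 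An identical computation for $p^\ast$ yields $p^\ast = q^\ast + \hq(1-\hq)\,\epsilon^\downarrow + \mathcal O(\epsilon^{4/3})$ (with a plus sign, since the derivative of the second Möbius map at zero is $+q^\ast(1-q^\ast)$).

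Substituting the expansions for $q_\ast$ and $q^\ast$ from Proposition~\ref{prop:smalltcQ}, rewritten in terms of $\epsilon^\uparrow + \epsilon^\downarrow$, into these two relations gives the stated formulas. The two transaction-cost parameters only split asymmetrically through the linear-in-$\epsilon$ Möbius correction $\pm\hq(1-\hq)\epsilon^{\uparrow/\downarrow}$; the cube-root, two-thirds-power, and linear-in-$\epsilon$ symmetric terms depend only on the round-trip cost $\epsilon^\uparrow+\epsilon^\downarrow$. A minor bookkeeping remark: if one uses the original definition $\epsilon^\downarrow = 1 - 1/\gamma^\downarrow$ (as opposed to the $\gamma^\downarrow = 1+\epsilon^\downarrow$ convention implicit in the corollary's notation), the two differ only at order $(\epsilon^\downarrow)^2$, which is absorbed into the $\mathcal O(\epsilon^{4/3})$ error, so the statement is insensitive to this choice.

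The work is essentially mechanical once Proposition~\ref{prop:smalltcQ} is in hand, so no genuine mathematical obstacle arises. The only point that requires some care is the error bookkeeping: one must verify that (i) replacing $\epsilon$ by $\epsilon^\uparrow + \epsilon^\downarrow$ in the fractional powers generates only $\mathcal O(\epsilon^{4/3})$ errors, and (ii) the implicit $\mathcal O(\epsilon^{1/3})$ correction to $q_\ast(1-q_\ast)$ (and likewise for $q^\ast$) is multiplied by $\epsilon^\uparrow$ (resp.\ $\epsilon^\downarrow$) and therefore contributes only at order $\epsilon^{4/3}$, so that the coefficient of the $\epsilon^{\uparrow/\downarrow}$ correction can indeed be replaced by $\hq(1-\hq)$ to the order claimed.
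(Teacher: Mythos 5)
Your proposal is correct and follows essentially the same route as the paper's own proof: express $p_*$ and $p^*$ via the M\"obius maps $\tau_{1/\gamma^\uparrow}$ and $\tau_{\gamma^\downarrow}$ applied to $q_*(\xi)$ and $q^*(\xi)$, Taylor-expand in $\epsilon^\uparrow$ (resp.\ $\epsilon^\downarrow$), and substitute the expansions of Proposition~\ref{prop:smalltcQ}. Your additional bookkeeping (replacing $\epsilon$ by $\epsilon^\uparrow+\epsilon^\downarrow$ in the fractional powers and $q_*(1-q_*)$ by $\hq(1-\hq)$, each to within $\sO(\epsilon^{4/3})$) is exactly the step the paper leaves implicit, and it is handled correctly.
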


\begin{rem}
\begin{enumerate}
\item In the original coordinates, the width of the no-trade region is $\sO(\epsilon^{1/3})$ and to leading order is symmetric about the Merton proportion $\hq$. Also to leading order, the width of the no-transaction region depends on risk aversion $R$ (both directly, and indirectly through $\hq = \frac{\lambda}{\sigma R}$), but not on the elasticity of inter-temporal complementarity $S$. See also Jane\v{c}ek and Shreve~\cite[Remark 1]{JanecekShreve:04} in the additive case, and Melnyk et al~\cite{MMKS:20} for the corresponding result under EZ-SDU.
\item The $\sO(\epsilon^{2/3})$ term is same for both $p_*$ and $p^*$ and moves the no-transaction region in the direction of smaller positions in the risky asset. This term does not depend directly on $S$, although it does depend indirectly on $S$ via $\hat{m}$. See also \cite{JanecekShreve:04,MMKS:20}.
\item The individual transaction costs $\epsilon^\uparrow$ and $\epsilon^\downarrow$ only enter at linear order.
The linear order term does not depend on the elasticity of inter-temporal complementarity $S$, except
indirectly through $\hatm$. This is the final term with this property.
\end{enumerate}
\end{rem}


\begin{prop}
\label{prop:smalltcC}
Suppose that $R,S<1$ and $\hatm \geq 0$ or $R,S>1$ and $\hatm>0$, and suppose that $q_M \notin \{0,1 \}$.
Let $h_3(v) = (1+v)^2(2-v)$.  Then for all $\epsilon^\uparrow, \epsilon^\downarrow > 0$ sufficiently small
\begin{align}
C^*_{1+\epsilon^\uparrow,1+\epsilon^\downarrow}(t,x,y,\phi)
& = C^*_{1,1}(t,x,y,\phi) \bigg( 1 +  \frac{R(1-S)}{S} \frac{\sigma^2}{2 \hatm} \Delta^{2/3} (\epsilon^\uparrow+ \epsilon^\downarrow)^{2/3}  \\
& \hspace{18mm} +
\frac{\hq}{2S} \left( 1  - \frac{1}{2} h_3\left(\frac{p-\hq}{\Delta^{1/3} \epsilon^{1/3}}\right) \right) (\epsilon^\uparrow+ \epsilon^\downarrow)+ \frac{\hq}{2} (\epsilon^\uparrow - \epsilon^\downarrow) \bigg) + \sO(\epsilon^{4/3})
\end{align}
where $p = \frac{y \phi}{x + y \phi}$ and $C^*_{1,1}(t,x,y,\phi) :=(x+y \phi) \hatm$ is the optimal consumption under zero transaction costs.
\end{prop}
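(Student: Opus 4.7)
The plan is to exploit the closed form of the optimal consumption at time $0$ inside the no-trade region. Combining the definition \eqref{eq:defC*}, Theorem~\ref{thm:cand solution 1} and the fact that the shadow wealth is preserved by the initial block trade, we have, for $p = \tfrac{\phi y}{x+\phi y} \in [p_*(\gamma^\uparrow,\gamma^\downarrow),p^*(\gamma^\uparrow,\gamma^\downarrow)]$,
\begin{equation*}
C^*_{1+\epsilon^\uparrow,1+\epsilon^\downarrow}(0,x,y,\phi) = \bigl(x + \phi y \,\kappa_{\gamma^\uparrow,\gamma^\downarrow}(q_{\gamma^\uparrow,\gamma^\downarrow}(p))\bigr)\, n_\xi(q_{\gamma^\uparrow,\gamma^\downarrow}(p)),
\end{equation*}
and dividing by $C^*_{1,1}(0,x,y,\phi) = (x+\phi y)\hatm$ gives
\begin{equation*}
\frac{C^*_{1+\epsilon^\uparrow,1+\epsilon^\downarrow}}{C^*_{1,1}} = \frac{n_\xi(q_{\gamma^\uparrow,\gamma^\downarrow}(p))}{\hatm}\Bigl(1 + p\bigl(\kappa_{\gamma^\uparrow,\gamma^\downarrow}(q_{\gamma^\uparrow,\gamma^\downarrow}(p)) - 1\bigr)\Bigr).
\end{equation*}
The task therefore reduces to deriving three expansions in powers of $\epsilon^{1/3}$: one for $n_\xi$ around $(\hatm,\hq)$, one for $\kappa_{\gamma^\uparrow,\gamma^\downarrow}$ around $(1,\hq)$, and one for the map $q_{\gamma^\uparrow,\gamma^\downarrow}(p)$ around $(\hq,\hq)$. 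Time $t>0$ is handled identically by replacing $(x,y,\phi)$ with the state at time $t$ since the formulas are stationary.

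Next I would rescale $q = \hq + \Delta^{1/3}\epsilon^{1/3} w$ so that, by Proposition~\ref{prop:smalltcQ}, the no-transaction region corresponds to $w \in [-1 - \Sigma\Delta^{1/3}\epsilon^{1/3} + O(\epsilon^{2/3}),\, 1 - \Sigma\Delta^{1/3}\epsilon^{1/3} + O(\epsilon^{2/3})]$, and write $n_\xi(q) = \hatm + \epsilon^{2/3}\hat N_0(w) + \epsilon\, \hat N_1(w) + \cdots$. Substituting into the ODE \eqref{eq:prop:n:ODE} and using $m(q) = \hatm + \tfrac{m''(\hq)}{2}\Delta^{2/3}\epsilon^{2/3}w^2 + O(\epsilon)$ together with $\ell(q) - \hatm = \tfrac{1-S}{S}\tfrac{\sigma^2}{2}\hq(1-\hq) + O(\epsilon^{1/3})$, the leading balance forces $\hat N_0' \equiv 0$, and the boundary condition $n_\xi(q_*) = m(q_*)$ pins down $\hat N_0 \equiv \tfrac{m''(\hq)\Delta^{2/3}}{2}$. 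This immediately produces the $p$-independent $O(\epsilon^{2/3})$ correction $\tfrac{R(1-S)\sigma^2\Delta^{2/3}}{2S\hatm}(\epsilon^\uparrow+\epsilon^\downarrow)^{2/3}$. At the next order $\hat N_1$ satisfies an explicit first-order linear ODE whose solution is a cubic polynomial, and the consistency of the two boundary conditions $\hat N_1(w_*) = 0 = \hat N_1(w^*)$ together with the integral constraint \eqref{eq:prop:n:int const} fixes an integration constant and simultaneously recovers the $\Sigma\Delta^{2/3}$-shift in $q_*,q^*$ from Proposition~\ref{prop:smalltcQ}. The polynomial $\hat N_1$ when evaluated at the scaled variable $v = (p-\hq)/(\Delta^{1/3}\epsilon^{1/3})$ (which differs from $w = (q(p)-\hq)/(\Delta^{1/3}\epsilon^{1/3})$ only at higher order) delivers precisely the factor $\tfrac{1}{2}(1 - \tfrac{1}{2}h_3(v))$ after algebraic manipulation, so that the $p$-dependent $O(\epsilon)$ term $\tfrac{\hq}{2S}(1 - \tfrac{1}{2}h_3(v))(\epsilon^\uparrow+\epsilon^\downarrow)$ falls out.

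For the third ingredient, from \eqref{eq:cor:kappa:ODE} one shows $\kappa_{\gamma^\uparrow,\gamma^\downarrow}' / \kappa_{\gamma^\uparrow,\gamma^\downarrow} = O(\epsilon^{2/3})$ uniformly on $[q_*,q^*]$, so integrating over an interval of length $O(\epsilon^{1/3})$ gives $\kappa - 1 = O(\epsilon)$; the boundary values $\kappa(q_*) = 1+\epsilon^\uparrow$ and $\kappa(q^*) = 1/(1+\epsilon^\downarrow) = 1 - \epsilon^\downarrow + O((\epsilon^\downarrow)^2)$ together with the integrated ODE determine the leading expansion of $\kappa$ as a linear interpolant in a suitable sense, and when evaluated at $q(p)$ produce the asymmetric contribution $\tfrac{\hq}{2}(\epsilon^\uparrow - \epsilon^\downarrow)$ (via the factor $p \approx \hq$ multiplying $\kappa-1$), plus a symmetric contribution that combines with the $\hat N_1$ term above. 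The map $q_{\gamma^\uparrow,\gamma^\downarrow}(p)$ is obtained from the consistency identity \eqref{eq:consitency:cond}: since $\kappa - 1 = O(\epsilon)$ one has $q(p) - p = O(\epsilon)$, and this correction enters the final answer only at the stated error order $O(\epsilon^{4/3})$. Assembling $(1 + p(\kappa-1)) n_\xi(q(p))/\hatm$ from these three expansions, collecting coefficients of $\epsilon^{2/3}$, $\epsilon$ and $\epsilon^\uparrow - \epsilon^\downarrow$, and using the explicit form of $\hat N_1$ to recognise the polynomial $h_3$ yields the claim. The main obstacle is the delicate bookkeeping at order $\epsilon$: the $\Psi\Delta\,\epsilon$ shift from Proposition~\ref{prop:smalltcQ}, the contribution of $\hat N_1$, and the $\kappa-1$ term all produce $O(\epsilon)$ pieces and the cancellation that leaves only the compact expression $\tfrac{\hq}{2S}(1 - \tfrac{1}{2}h_3(v))$ plus the antisymmetric $\tfrac{\hq}{2}(\epsilon^\uparrow - \epsilon^\downarrow)$ must be carried out rigorously, with particular care to the transition when $p$ approaches $p_*$ or $p^*$ (where $v \to \mp 1$ and one must verify the formula matches the boundary values $n_\xi(q_{*,*}) = m(q_{*,*})$ combined with the transaction-cost-adjusted wealth).
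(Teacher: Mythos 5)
Your proposal is correct and follows essentially the same route as the paper: write $C^*/C^*_{1,1}=(1+p(\kappa(q(p))-1))\,n_\xi(q(p))/\hatm$, expand $n_\xi$ and $\kappa$ in the rescaled variable around $\hq$ (the $\kappa$-expansion coming from integrating $\kappa'/\kappa\propto g_2$, which is exactly what produces the $h_3$ factor), and use $q(p)-p=\sO(\epsilon)$ to discard the change of variable at the stated error order. The only difference is parametrisation: the paper expands in powers of $u=\hq-q_*(\xi)$ and reuses the polynomials $g_2,g_3$ already computed in the proof of Proposition~\ref{prop:smalltcQ}, which automatically absorbs the $\Sigma\Delta^{2/3}\epsilon^{2/3}$ boundary shift into $u^2=\Delta^{2/3}\epsilon^{2/3}+2\Delta\Sigma\epsilon+\cdots$ and thereby handles the order-$\epsilon$ bookkeeping you flag as the delicate step.
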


\begin{rem}
\label{rem:consumption}
\begin{enumerate}
\item To zeroth order consumption is equal to optimal consumption without transaction costs, and there is no term of order $\epsilon^{1/3}$. (See also Jane\v{c}ek and Shreve~\cite[Remark 2]{JanecekShreve:04} in the additive case and Melnyk et al~\cite[Corollary 6.1]{MMKS:20} in the case of EZ-SDU.)
\item The first non-zero correction term is independent of $p = \frac{y \phi}{x+y \phi}$ and is of order $\epsilon^{2/3}$, (\cite{JanecekShreve:04} and \cite{MMKS:20} make a similar observation.) It does not depend on the individual transaction costs. Mathematically, the fact that the consumption is constant arises from the fact that solutions $n_{(\hq - u)}^+$ are (approximately) horizontal for small $u$, (since $n_{(\hq - u)}^+$ has zero derivative at both endpoints where it intersects with $m$). Here $n_{(\hq - u)}^+$ is the solution to $n=O(n,q)$ started at $(\hq-u, n(\hq-u) = m(\hq-u))$, and is defined fully in Appendix~\ref{app:ode}.
\item The sign of the first non-zero correction term depends on the sign of $(1-S)$. If $S<1$ then the agent consumes more than in the zero-transaction case; if $S>1$ then the agent consume less\footnote{Jane\v{c}ek and Shreve~\cite[Remark 2]{JanecekShreve:04} and Melnyk et al~\cite{MMKS:20} also make this observation.  Jane\v{c}ek and Shreve write {\em `\ldots the existence of transaction costs increases the size of consumption for $R\in (0,1)$, while the consumption is decreased for $R>1$. This is explained by the fact that the index of inter-temporal substitution, $1/R$, is high for small $R$'} -- note that in their additive model $R \equiv S$. However, whilst the two facts (existence of transaction costs increases (respectively decreases) consumption for $R \in (0,1)$ (respectively $R \in (1,\infty)$), and the index of inter-temporal substitution $1/R$ is high for small $R$) are correct, no evidence is offered as to why the second fact is an explanation of the first, or why the sign of $R-1$ is critical. In the case of zero-transaction costs (and therefore also, approximately in the case of small transaction costs) the optimal consumption is $m_M = \left( r + \frac{\lambda^2}{2R} \right) + \frac{1}{S} \left( \delta - r - \frac{\lambda^2}{2R} \right)$. Assuming that $\delta - r - \frac{\lambda^2}{2R}>0$, the primary effect of increasing $S$ is to reduce the optimal consumption, but this does {\em not} give an explanation of why the sign of consumption changes relative to the zero-transaction cost case depend on the sign of $(1-S)$. For this a more subtle argument is needed. Melnyk et al~\cite[Page 1147]{MMKS:20} point out that in the case of stochastic differential utility the elasticity of intertemporal complementarity is captured by $S$ (and is separate to the risk aversion $R$) but defer to Jane\v{c}ek and Shreve for an explanation. We return to this issue in Section~\ref{ssec:WhyCdependsonS}.}.
    At its heart, the explanation for this relationship is that given in Remark~\ref{rem:shapeofm} about the impact of investing a sub-optimal fraction of wealth in the risky asset, and the effect this has on the certainty equivalent wealth of the agent.
    The impact of transaction costs is to cause the agent to invest a sub-optimal fraction of wealth in the risky asset. If $S<1$ the fact that the certainty equivalent value of the agent's holdings goes down (relative to the zero transaction cost case) leads the agent to increase their instantaneous consumption. When $S>1$, again the presence of transaction costs causes the agent to invest a sub-optimal fraction of wealth in the risky asset. This time this leads the agent to reduce instantaneous consumption. See Section~\ref{ssec:WhyCdependsonS} for further development of this argument.
\item We expect that when $R=S=1$ we recover the expansion for logarithmic utility, as studied by Gerhold et al~\cite{GerholdMuhleKarbeSchachermayer:18}. Indeed, in this case Gerhold et al argue that the leading order correction to the optimal consumption is of order $\epsilon$ (and not of order $\epsilon^{2/3}$). Our results are consistent with this claim.
\item Individual transaction costs enter at $\sO(\epsilon)$. This is also the first term at which the fraction $p$ of wealth invested in the risky asset affects the consumption.
\end{enumerate}
\end{rem}

\section{Comparative statics}
\label{sec:comparative}

\subsection{Comparative statics of the boundaries in $S$ }
\label{sect:comparative_S}

\begin{prop}
	Fix all the model parameters except $S$. Suppose $\hat{S}> \tilde{S}$ and the problems under $S=\hat{S}$ and $S=\tilde{S}$ are well-posed where the optimal purchase and sale boundaries are  $(\hat{p}_*,\hat{p}^*)$ and $(\tilde{p}_*,\tilde{p}^*)$ respectively.
	
	\begin{enumerate}
		\item If
		\begin{align}
			\delta-r-\frac{\lambda^2}{2R}\geq 0,
			\label{eq:suffcon}
		\end{align}
		then $\hat{p}_*\geq \tilde{p}_*$ and  $\hat{p}^*\geq \tilde{p}^*$.  
		
		\item Suppose instead
		\begin{align}
			\delta-r-\frac{\lambda^2}{2R}< 0.
			\label{eq:othercon}
		\end{align}
		If the problems under $S=\hat{S}$ and $S=\tilde{S}$ are well-posed for all sufficiently small transaction costs, then there exists $\epsilon>0$ (which may depend on $\hat{S},\tilde{S}$) such that $\hat{p}_*\leq \tilde{p}_*$ and  $\hat{p}^*\leq \tilde{p}^*$ for all $\xi=\gamma^{\uparrow}\gamma^{\downarrow}\in[1,1+\epsilon]$. 
	\end{enumerate}
	\label{prop:compstat_S}
\end{prop}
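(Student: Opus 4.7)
For part (b), the plan is to apply the small-transaction-cost expansion in Corollary \ref{cor:smalltcP}. In those expansions the only $S$-dependence (among the terms below the $\mathcal{O}(\epsilon^{4/3})$ error) enters through $\hatm$, via $\Sigma = \frac{4\hatm}{3\sigma^2 \hq (1-\hq)^2}$ and $\Psi$; the coefficients $\hq$ and $\Delta$ are $S$-independent. A direct differentiation of $\hatm$ gives
\[
\frac{\partial \hatm}{\partial S} \;=\; -\frac{\delta - r - \lambda^2/(2R)}{S^2},
\]
which is strictly positive under condition \eqref{eq:othercon}. Hence $\Sigma$ is strictly monotonic in $S$, and subtracting the expansions from Corollary \ref{cor:smalltcP} at $S=\hat S$ and $S=\tilde S$ gives
\[
p_*(\hat S) - p_*(\tilde S) \;=\; -\bigl(\Sigma(\hat S) - \Sigma(\tilde S)\bigr)\Delta^{2/3}(\epsilon^\uparrow + \epsilon^\downarrow)^{2/3} + \mathcal{O}(\epsilon^\uparrow + \epsilon^\downarrow),
\]
with an analogous expression for $p^*(\hat S) - p^*(\tilde S)$. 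For sufficiently small total transaction costs the $\epsilon^{2/3}$ term dominates the remainder and has a definite sign, so the claimed inequalities $\hat p_* \leq \tilde p_*$ and $\hat p^* \leq \tilde p^*$ follow once $\xi - 1$ is small enough.

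For part (a), the asymptotic expansion is unavailable since the conclusion is claimed at arbitrary (well-posed) transaction costs. The approach is to compare the free-boundary problems for $n$ directly across the two values of $S$. The key input is a pointwise comparison of $m$: writing $m(q) = \delta/S - \frac{1-S}{S}f(q)$ with $f(q) = r + \lambda\sigma q - R\sigma^2 q^2/2$, a direct computation gives
\[
\hat m(q) - \tilde m(q) \;=\; \frac{\tilde S - \hat S}{\hat S\tilde S}\bigl(\delta - f(q)\bigr),
\]
and under condition \eqref{eq:suffcon} the factor $\delta - f(q)$ is non-negative for every $q$, since $f$ attains its global maximum $r+\lambda^2/(2R)$ at $q = \hq$. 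With $\tilde S - \hat S < 0$, this yields the pointwise ordering $\hat m \leq \tilde m$. Feeding this into a comparison argument for the free-boundary ODE in Proposition \ref{prop:n}, and using the boundary conditions $n(q_*) = m(q_*)$, $n(q^*) = m(q^*)$, $n'(q_*) = 0 = n'(q^*)$ together with the integral constraint \eqref{eq:prop:n:int const}, one aims to show that the pair of shadow boundaries $(q_*(\xi), q^*(\xi))$ is non-decreasing in $S$ for each fixed $\xi$. The monotonicity of the real-quantity boundaries $p_*, p^*$ then follows, as $\tau_{1/\gamma^\uparrow}$ and $\tau_{\gamma^\downarrow}$ are increasing bijections on the relevant intervals.

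The main obstacle is the ODE comparison argument in part (a). The ODE $n'(q) = \frac{1-S}{S}\frac{n\,(m-n)}{(1-q)(\ell-n)}$ depends on $S$ both through the prefactor $(1-S)/S$ and through the obstacle and denominator functions $m, \ell$; moreover, varying $S$ simultaneously reshapes the integral matching \eqref{eq:prop:n:int const} that selects $n_\xi$ out of the one-parameter family of solutions constructed in the proof of Proposition \ref{prop:n}. A direct Sturm-type argument is not immediately available, so the cleanest route is probably to track how the family of solutions parameterised by their left crossing point with $m$ deforms with $S$, use the established pointwise inequality $\hat m \leq \tilde m$ to compare individual members of the two families, and then exploit monotonicity of the integral selector to conclude that the left endpoint $q_*(\xi)$ corresponding to a given $\xi$ shifts to the right as $S$ increases, with $q^*(\xi)$ following by the same mechanism.
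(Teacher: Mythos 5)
Your part (b) is essentially correct but follows a genuinely different route from the paper. The paper never invokes the small-transaction-cost expansion here; it works with a transformed ODE (see below) and localizes the comparison near $q_M$, where $\bar n(q)+\delta$ is close to $\delta-r-\lambda^2/(2R)<0$, so that the relevant monotonicity in $S$ holds along the whole (short) solution trajectory. Your route --- reading off the sign of the $\epsilon^{2/3}$ coefficient in Corollary~\ref{cor:smalltcP} from $\partial \hatm/\partial S=-(\delta-r-\lambda^2/(2R))/S^2>0$ --- is shorter once the expansion is in hand, and is legitimate because $\epsilon$ is allowed to depend on $\hat S,\tilde S$ so the $\sO(\epsilon)$ remainders cause no trouble. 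One caveat: the sign of $\partial\Sigma/\partial \hatm$ is that of $\hq(1-\hq)^2$, so for $q_M<0$ you would need to confirm that the expansion (derived explicitly in the paper only for $q_M>0$) carries the signs you assume.

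Part (a) contains a genuine gap. Your computation $\hat m(q)-\tilde m(q)=\frac{\tilde S-\hat S}{\hat S\tilde S}(\delta-f(q))\le 0$ under \eqref{eq:suffcon} is correct, but it is not the engine of the proof, and the "comparison argument for the free-boundary ODE" that you then invoke is exactly the part you have not supplied. A pointwise ordering of the obstacles $\hat m\le\tilde m$ does not by itself order the crossing points $q_*(\xi),q^*(\xi)$: the vector field $O(q,n)=\frac{1-S}{S}\frac{n}{1-q}\frac{m(q)-n}{\ell(q)-n}$ changes with $S$ through the prefactor, through $m$ and $\ell$, and through the integral selector, and these effects do not all point the same way. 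The paper's key device, which is absent from your sketch, is the normalization $\bar n(q)=\frac{S}{1-S}(n(q)-\delta/S)$ together with $\bar m=\frac{S}{1-S}(m-\delta/S)$ and $\bar\ell=\frac{S}{1-S}(\ell-\delta/S)$: these make the obstacle, the denominator, the boundary conditions and the integral constraint all $S$-independent, and isolate the entire $S$-dependence of the transformed vector field in a single additive term $-\frac{1}{S}\frac{1}{1-q}\frac{\bar n-\bar m}{\bar\ell-\bar n}(\bar n+\delta)$, whose monotonicity in $S$ is controlled precisely by the sign of $\bar n+\delta>\delta-r-\lambda^2/(2R)\ge 0$ under \eqref{eq:suffcon}.

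Even granting that reduction, two further steps are needed that your outline does not address: the two solutions $\hat a^+_{(z)}$ and $\tilde a^+_{(z)}$ start at the same point $(z,\bar m(z))$ with the same (zero) first derivative, so separating them requires a second-order local analysis (the paper computes $\lim_{q\downarrow z}(a^+_{(z)}(q)-\bar m(z))/(q-z)^2$ via L'H\^opital and shows it is increasing in $S$); and when $q_M>1$ the solutions pass through the singular point $(1,\bar m(1))$ with matching value and derivative, so the ordering must be re-established there by the analogous limit $\lim_{q\to1}(a^+_{(z)}(q)-\bar m(q))/(1-q)^2$. Only after these local separations can one run the forward comparison to get $\hat\zeta^+(z)\ge\tilde\zeta^+(z)$, deduce $\hat\Sigma^+(z)\ge\tilde\Sigma^+(z)$, and invert the (decreasing) selector to order $q_*$; your final step transferring the ordering to $p_*,p^*$ via the fixed, increasing M\"obius maps is fine.
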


The proof of Proposition~\ref{prop:compstat_S} is given in Appendix~\ref{sapp:compstatS}.

Note that although we have made no assumptions about the sign of $r$ it is natural to assume that $r>0$ and then $r + \frac{\lambda^2}{2R}>0$. Then it is easily seen that the condition for the problem with (sufficiently) small transaction costs to be well-posed, namely $S\left(r + \frac{\lambda^2}{2R}\right) > r + \frac{\lambda^2}{2R} - \delta$, is equivalent to $S > 1 - \frac{\delta}{r + \frac{\lambda^2}{2R}}$.

\begin{cor}
Fix all the model parameters except $S$.

Suppose that $\delta \geq r + \frac{\lambda^2}{2R} >0$. Suppose $S<1$. Then the problem is always well-posed and $p_*$ and $p^*$ are increasing in $S$ for all levels of transaction costs. Conversely, suppose $S>1$. Then $p_*$ and $p^*$ are increasing in $S$ over the range of transaction costs for which the problem is well-posed. In particular $p_*$ and $p^*$ are increasing in $S$ for small transaction costs.

Suppose that $r + \frac{\lambda^2}{2R} > \max \{ \delta , 0 \}$. Then the problem is well-posed for small transaction costs for $S>  \max \{ 1 - \frac{\delta}{r + \frac{\lambda^2}{2R}}, 0 \}$ and then, still for small transaction costs, $p_*$ and $p^*$ are decreasing in $S$ over this range.
\end{cor}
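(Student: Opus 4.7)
The plan is to apply Proposition~\ref{prop:compstat_S} after characterising when the problem is well-posed. Set $A := r + \lambda^2/(2R)$; the dichotomy \eqref{eq:suffcon} vs \eqref{eq:othercon} in Proposition~\ref{prop:compstat_S} is precisely $\delta \geq A$ vs $\delta < A$, which matches the two blocks of hypotheses in the corollary. Substituting $q_M = \lambda/(\sigma R)$ and $\alpha = \delta - r(1-S)$ into \eqref{eq:m} yields the clean identity $m(q_M) = (\delta - (1-S)A)/S$, from which one reads off that $m(q_M) > 0$ iff either $\delta \geq A$ (any $S > 0$) or $\delta < A$ with $S > 1 - \delta/A$. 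This governs well-posedness of the frictionless problem, and by Definition~\ref{def:well posed} it propagates to a right-neighbourhood of $\xi = 1$.

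For the first block ($\delta \geq A > 0$) the identity gives $m(q_M) > 0$ for every $S$, so the frictionless problem is always well-posed. If $S < 1$, Standing Assumption~\ref{sass:rational} forces $R < 1$; the coefficient $R(1-S)\sigma^2/(2S)$ of $q^2$ in $m$ is positive, so $m$ is a convex parabola with minimum $m(q_M) > 0$, giving $m > 0$ on $\RR$ and hence $m(0), m(1) > 0$ together with $\ol\xi = 1$ in Definition~\ref{def:well posed}. Both problems (indexed by $\hat S, \tilde S \in (0,1)$) are then well-posed at every $\xi \geq 1$, and Proposition~\ref{prop:compstat_S}(1) yields the monotonic increase. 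If instead $S > 1$, SA~\ref{sass:rational} forces $R > 1$; the frictionless problem is well-posed, so by continuity the problem with sufficiently small transaction costs is also well-posed, and over the common well-posedness range of $\hat S > \tilde S > 1$ Proposition~\ref{prop:compstat_S}(1) again delivers the monotonic increase.

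For the second block ($A > \max\{\delta,0\}$) we have $\delta < A$, so Proposition~\ref{prop:compstat_S}(2) applies. The frictionless-well-posedness threshold $S > 1 - \delta/A$ equals $\max\{1 - \delta/A, 0\}$ (since $\delta < A$ and $A > 0$ force $1 - \delta/A > 0$, which in turn exceeds $1$ precisely when $\delta < 0$), and for such $S$ well-posedness persists on some right-neighbourhood of $\xi = 1$. Proposition~\ref{prop:compstat_S}(2) then furnishes the claimed decrease in $p_*$ and $p^*$ for $\xi \in [1, 1+\epsilon]$. The only non-routine ingredient throughout is the well-posedness claim for \emph{all} transaction-cost levels in the $\delta \geq A$, $S < 1$ case, which rests on the elementary shape fact that a convex quadratic with positive minimum has no real zeros and thus forces $\ol\xi = 1$; everything else is a bookkeeping application of Proposition~\ref{prop:compstat_S}.
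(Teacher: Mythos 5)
Your proposal is correct and follows exactly the route the paper intends: the corollary is a direct consequence of Proposition~\ref{prop:compstat_S} combined with the well-posedness characterisation in Definition~\ref{def:well posed}, using $m(q_M)=\bigl(\delta-(1-S)(r+\tfrac{\lambda^2}{2R})\bigr)/S$ and Standing Assumption~1 to sort the cases. The one genuinely non-routine point — that for $\delta\geq r+\tfrac{\lambda^2}{2R}$ and $S<1$ the convex quadratic $m$ has positive minimum, hence no zeros, hence $\ol\xi=1$ and well-posedness for \emph{all} $\xi>1$ — is exactly the right observation and you handle it correctly.
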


Each of Figure \ref{fig:compstat_Sleq1} and \ref{fig:compstat_Sgeq1} shows the two cases described in Proposition \ref{prop:compstat_S} where the monotonicity behaviours of $p_*,p^*$ can be different. To understand why different cases can arise, recall that in the frictionless case the optimal consumption rate is given by
\begin{align*}
	m_M=m(q_M)=r+\frac{\lambda^2}{2R}+\frac{\delta-r- \frac{\lambda^2}{2R}}{S}.
\end{align*}
The sign of $\delta-r- \frac{\lambda^2}{2R}$ therefore governs whether, in the absence of transaction costs, the (constant) optimal consumption rate increases or decreases in $S$.	Suppose $\delta-r-\frac{\lambda^2}{2R}$ is positive. Then the agent with lower $S$ wants to consume at a higher rate whilst maintaining a constant fraction of wealth invested in the risky asset given by the Merton ratio. As trading does not incur transaction costs, the higher level of consumption can be supported interchangeably by sale of the risky asset and/or withdrawal of cash from the risk-free account.

With transaction costs, we expect that the sign of $\delta-r-\frac{\lambda^2}{2R}$ affects the monotonicity of the optimal consumption rate with $S$ in a similar fashion (at least when transaction costs are small). If $\delta-r-\frac{\lambda^2}{2R}$ is positive, the agent with lower $S$ again wants to consume more, but now such an agent will prefer to finance consumption from cash wealth rather than from sales of the risky asset because of the market frictions. Hence the agent with lower $S$ in general has a stronger incentive to hold cash in anticipation of the need to consume more, resulting in a downward shift of the no-transaction wedge $[p_*,p^*]$. The opposite will happen when $\delta-r-\frac{\lambda^2}{2R}$ is negative.

\begin{figure}[!htbp]
	\captionsetup[subfigure]{width=0.45\textwidth}
	\centering
	\subcaptionbox{$\delta=0.1$ such that \eqref{eq:suffcon} holds.\label{fig:Sleq1_increase}}{\includegraphics[scale =0.5] {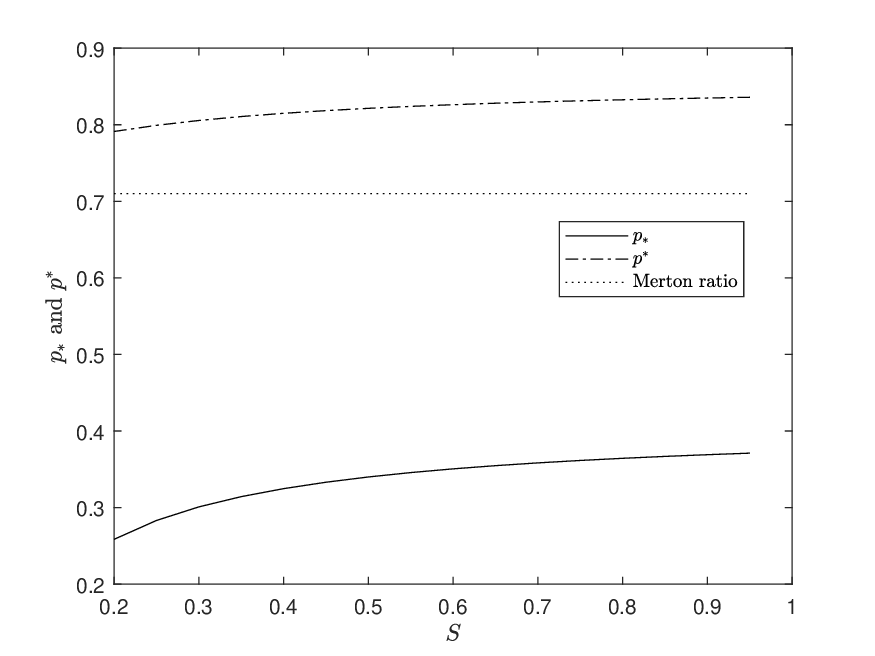}}
	\subcaptionbox{$\delta=0.06$ such that \eqref{eq:othercon} holds.\label{fig:Sleq1_decrease}}{\includegraphics[scale =0.5] {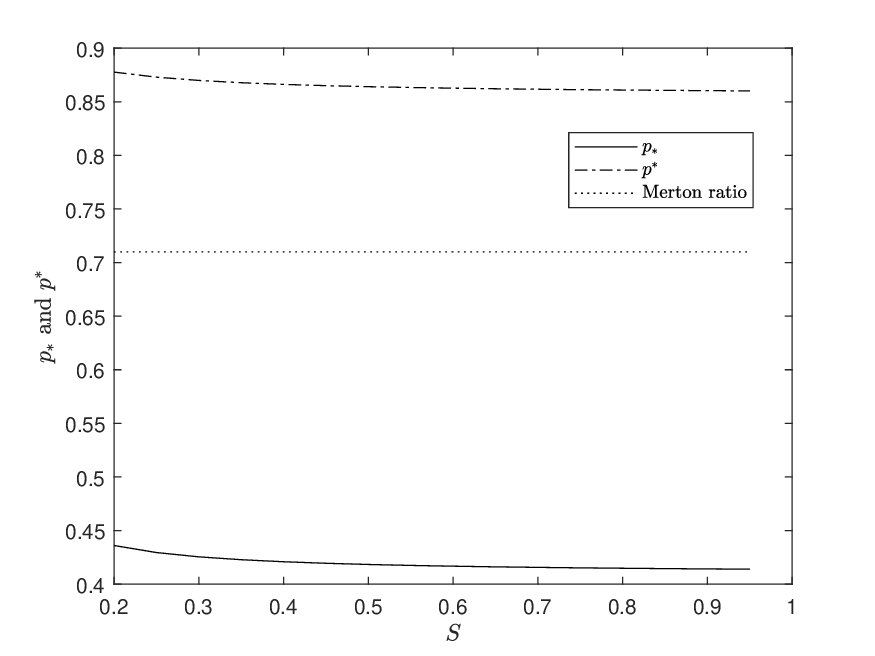}}
		
	\caption{$p_*$ and $p^*$ as $S$ varies when $R,S<1$. Base parameters used: $\gamma^\uparrow=\gamma^\downarrow=1.1$, $R=2/3$, $r=0$, $\mu=0.2$, $\sigma=0.6$.}
\label{fig:compstat_Sleq1}
\end{figure}

\begin{figure}[!htbp]
\captionsetup[subfigure]{width=0.45\textwidth}
\centering
\subcaptionbox{$\delta=0.1$ such that \eqref{eq:suffcon} holds.\label{fig:Sgeq1_increase}}{\includegraphics[scale =0.5] {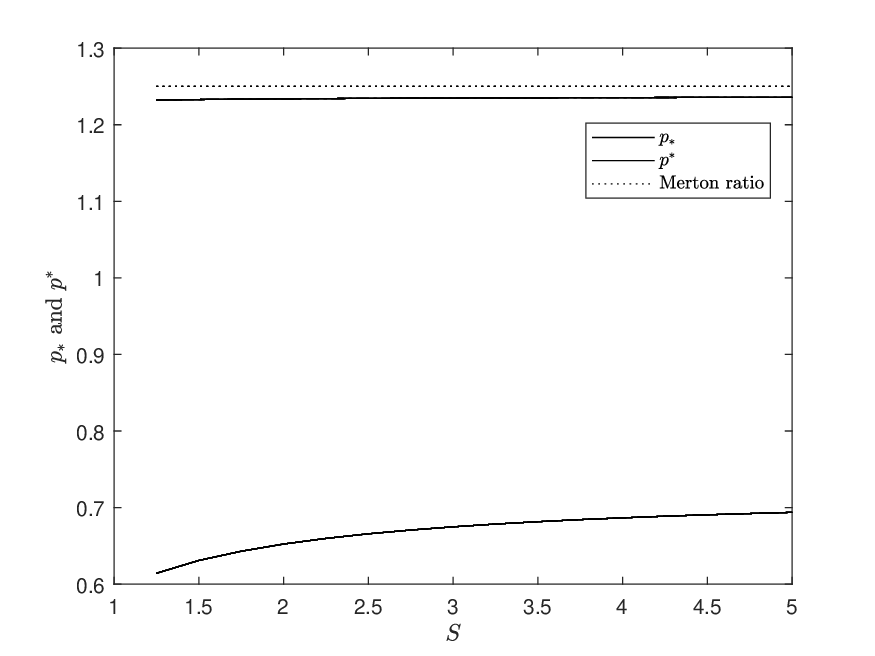}}
\subcaptionbox{$\delta=0.05$ such that \eqref{eq:othercon} holds.\label{fig:Sgeq1_decrease}}{\includegraphics[scale =0.5] {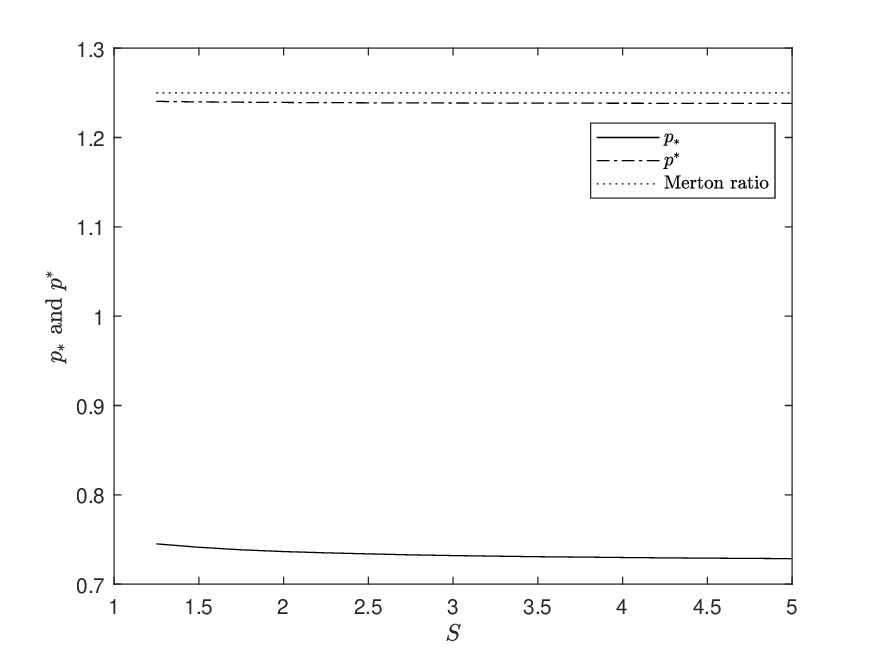}}

\caption{$p_*$ and $p^*$ as $S$ varies when $R,S>1$. Base parameters used: $\gamma^\uparrow=\gamma^\downarrow=1.1$, $R=2$, $r=0$, $\mu=0.1$, $\sigma=0.2$.}
\label{fig:compstat_Sgeq1}
\end{figure}

\subsection{Comparative statics of the boundaries in $R$ }
The proof of Proposition~\ref{prop:CompstatR} is given in Appendix~\ref{sapp:compstatR}.

\begin{prop}
Suppose $\lambda>0$. Fix all the model parameters except $R$. Let $\hat{R}>\tilde{R}$ be such that the problems under $R=\hat{R}$ and $R=\tilde{R}$ are well-posed where the optimal purchase and sale boundaries are  $(\hat{p}_*,\hat{p}^*)$ and $(\tilde{p}_*,\tilde{p}^*)$ respectively. If any of the conditions in Lemma \ref{lem:cq} holds for $R\in\{\hat{R},\tilde{R}\}$, then $\hat{p}_*\leq \tilde{p}_*$ and  $\hat{p}^*\leq \tilde{p}^*$. Consequently, subject to the well-posedness of the problem:
\begin{enumerate}
	\item If $S<1$ then $p_*$ and $p^*$ are decreasing in $R$ for any $R\in(0,1)$.
	\item If $S>1$ and $m(0)>0$ then $p_*$ and $p^*$ are decreasing in $R$ for $R\in[2,\infty)$. If also $\lambda \geq \frac{\sigma}{2}$ then $p_*$ and $p^*$ are decreasing in $R$ for $R\in(1,\infty)$.
\end{enumerate}
\label{prop:CompstatR}
\end{prop}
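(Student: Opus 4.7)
The plan is to reduce the assertion about the real boundaries to one about the shadow boundaries and then carry out an ODE-comparison on the free-boundary problem of Proposition~\ref{prop:n}. Since the M\"obius maps $\tau_{1/\gamma^\uparrow}$ and $\tau_{\gamma^\downarrow}$ are strictly increasing on the intervals in which $q_*(\xi)$ and $q^*(\xi)$ live (by Proposition~\ref{prop:n} and Lemma~\ref{lem:p low p up}), the inequalities $\hat p_* \leq \tilde p_*$ and $\hat p^* \leq \tilde p^*$ are equivalent to $q_*^{\hat R}(\xi) \leq q_*^{\tilde R}(\xi)$ and $q^{*,\hat R}(\xi) \leq q^{*,\tilde R}(\xi)$, so it suffices to prove these two inequalities at fixed $\xi = \gamma^\uparrow\gamma^\downarrow$.

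For the comparison in shadow coordinates, I would parameterise the ODE \eqref{eq:prop:n:ODE} by its left end-point $z$ for each $R$: let $n^{(R,z)}$ be the unique solution with $n^{(R,z)}(z) = m_R(z)$ and $(n^{(R,z)})'(z) = 0$, let $\zeta_R^+(z)$ denote the next point where it again meets $m_R$, and set
\[ \Sigma_R^+(z) \,:=\, \exp\!\left(\int_z^{\zeta_R^+(z)} -\frac{1}{q(1-q)}\,\frac{m_R(q) - n^{(R,z)}(q)}{\ell_R(q) - n^{(R,z)}(q)}\,\mathrm d q\right). \]
The integral constraint \eqref{eq:prop:n:int const} identifies $q_*^R(\xi)$ as the unique $z$ with $\Sigma_R^+(z) = \xi$, whence $q^{*,R}(\xi) = \zeta_R^+(q_*^R(\xi))$. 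The plan is to invoke Lemma~\ref{lem:cq} to obtain a pointwise comparison between $n^{(\hat R, z)}$ and $n^{(\tilde R, z)}$ at fixed $z$ (equivalently, to sign $\partial_R n^{(R,z)}(q)$); propagating this sign through the integrand of $\Sigma_R^+$ and through the boundary identity $\zeta_R^+(q_*^R(\xi)) = q^{*,R}(\xi)$ should force both shadow boundaries to shift (weakly) to the left as $R$ increases. Here the identity $m_R - \ell_R = -\tfrac{1-S}{S}\tfrac{\sigma^2}{2}q(1-q)$ being independent of $R$ is exploited heavily, since it concentrates all the $R$-dependence in the single quadratic $m_R$ whose minimiser is the Merton ratio $q_M = \lambda/(\sigma R)$.

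Given the main inequality, the two consequences follow by verifying one of the sufficient conditions in Lemma~\ref{lem:cq} in each regime. For Part~(1) ($S<1$, $R\in(0,1)$), well-posedness is automatic and $m_R$ is a convex quadratic whose coefficient of $q^2$ depends linearly (and monotonically) on $R$, so the required sign condition can be read off directly. For Part~(2) ($S>1$, $m(0)>0$), the restriction $R\geq 2$ keeps $q_M=\lambda/(\sigma R)$ sufficiently far from the singular point $q=1$ to control the behaviour of $n_\xi$ through the singularity; the alternative hypothesis $\lambda\geq \sigma/2$ allows us to relax the lower bound on $R$ to $R>1$ by guaranteeing $q_M\geq 1/(2R)$, which keeps the no-trade wedge in a range where the sign condition is preserved.

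The main obstacle is precisely the one flagged in the introduction: for $R\in(1,2)$ in the case $S>1$ the monotonicity can genuinely fail, so no uniformly valid comparison exists and the sufficient conditions in Lemma~\ref{lem:cq} must be delicate. Two technical difficulties have to be confronted in making the scheme above rigorous: first, when $1\in[q_*(\xi),q^*(\xi)]$ the ODE has a singular point inside the interval, and the comparison of $n^{(\hat R, z)}$ and $n^{(\tilde R, z)}$ must be carried through this singularity using the extra regularity of Proposition~\ref{prop:n}(e); second, both the left end-point $z=q_*^R(\xi)$ and the crossing point $\zeta_R^+(z)$ move with $R$, and the sign of $\partial_R \zeta_R^+(z)$ must be controlled in concert with the sign of $\partial_R n^{(R,z)}$ so that the two contributions to the shift of $\Sigma_R^+$ do not cancel.
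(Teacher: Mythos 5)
Your reduction to the shadow boundaries via the monotone M\"obius maps, and the strategy of parameterising the solution family by its left end-point $z$ and comparing $\Sigma^+_R(z)$ across $R$, match the architecture of the paper's proof. But the central step is missing: you say you will ``invoke Lemma~\ref{lem:cq} to obtain a pointwise comparison between $n^{(\hat R,z)}$ and $n^{(\tilde R,z)}$'', yet Lemma~\ref{lem:cq} only asserts $2n(q)-qn'(q)>0$ along a single trajectory and gives no mechanism for signing $\partial_R n^{(R,z)}(q)$. Moreover, a comparison of $n$ alone would not suffice: the integrand of $\Sigma^+_R$ is $\frac{1}{q(1-q)}\frac{n(q)-m_R(q)}{\ell_R(q)-n(q)}$, and $m_R$, $\ell_R$ both move with $R$, so ordering $n^{(\hat R,z)}$ against $n^{(\tilde R,z)}$ does not order the integrands. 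The paper's device is to change variables to $f(q)=\frac{n(q)-m(q)}{\ell(q)-n(q)}$ (exactly the quantity appearing in the integrand, up to the factor $\frac{1}{q(1-q)}$). One computes that $f$ solves $f'=\Psi(q,f;R)$ with $\Psi$ \emph{affine} in $R$ and
\begin{equation}
\frac{\partial}{\partial R}\Psi(q,f;R)=-\frac{(1+f)^2}{S(1-q)^2}\bigl((1-S)qf+2S(1-q)\bigr),
\qquad (1-S)qf(q)+2S(1-q)=\frac{S(1-q)}{n(q)}\bigl(2n(q)-qn'(q)\bigr),
\end{equation}
so that the sign of $\partial_R\Psi$ along the trajectory equals $\sgn(q-1)$ precisely under the conclusion of Lemma~\ref{lem:cq}. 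This, together with the explicit formula for $(f^+_{(z)})'(z)$ (which is decreasing in $R$), is what drives the comparison of $f^+_{(z)}$, of $\zeta^+(z)$ and of $\Sigma^+(z)$, including the passage through the singular point at $q=1$ when $q_M>1$. Without this (or an equivalent) transformation your plan has no engine.

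Two smaller inaccuracies in how you propose to verify the hypotheses of Lemma~\ref{lem:cq}: the role of $R\geq 2$ in part (2) is not to keep $q_M$ away from the singularity --- it is the algebraic condition making a certain cubic inequality (arising from the comparison of $n$ with the auxiliary function $A(q)=qm(q)+(1-q)\ell(q)$) hold on $(0,1)$; and $\lambda\geq\sigma/2$ enters as the condition $\ell'(0)\geq 0$ (equivalently $(1-S)(\sigma-2\lambda)\geq 0$ for $S>1$), which is the stated sufficient condition for item (b)(i) of Lemma~\ref{lem:cq}, not via a bound of the form $q_M\geq 1/(2R)$. Part (1) is indeed immediate since $S<1$ is itself condition (a) of the lemma.
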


\begin{figure}[!htbp]
	\captionsetup[subfigure]{width=0.45\textwidth}
	\centering	
	\subcaptionbox{$R,S<1$.\label{fig:Rleq1_decrease}}{\includegraphics[scale =0.5] {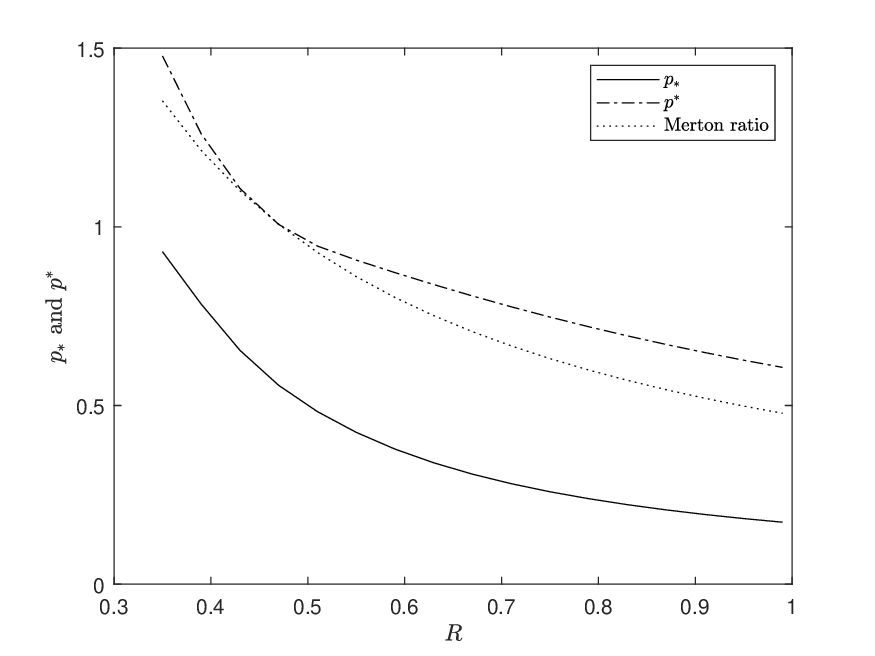}}
	\subcaptionbox{$R,S>1$\label{fig:Rgeq1_decrease}}{\includegraphics[scale =0.5] {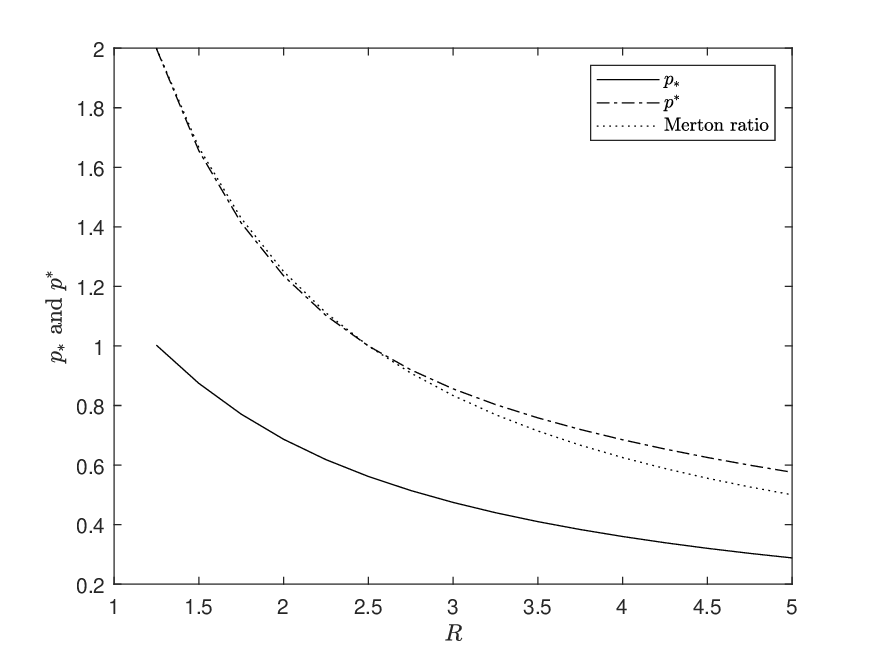}}
	\caption{$p_*$ and $p^*$ as $R$ varies. Base parameters used in Panel (a): $\gamma^\uparrow=\gamma^\downarrow=1.1$, $S=1/3$, $\delta=0.1$, $r=0$, $\mu=0.2$, $\sigma=0.6$. Base parameters used in Panel (b): $\gamma^\uparrow=\gamma^\downarrow=1.1$, $S=4$, $r=0$, $\delta=0.1$, $\mu=0.1$, $\sigma=0.2$. All parameters (and the values of $R$ considered) are such that at least one of the sufficient conditions in Lemma \ref{lem:cq} holds.}
\label{fig:compstat_Rleq1}
\end{figure}

\begin{figure}[!htbp]
\captionsetup[subfigure]{width=0.45\textwidth}
\centering
\includegraphics[scale =0.5] {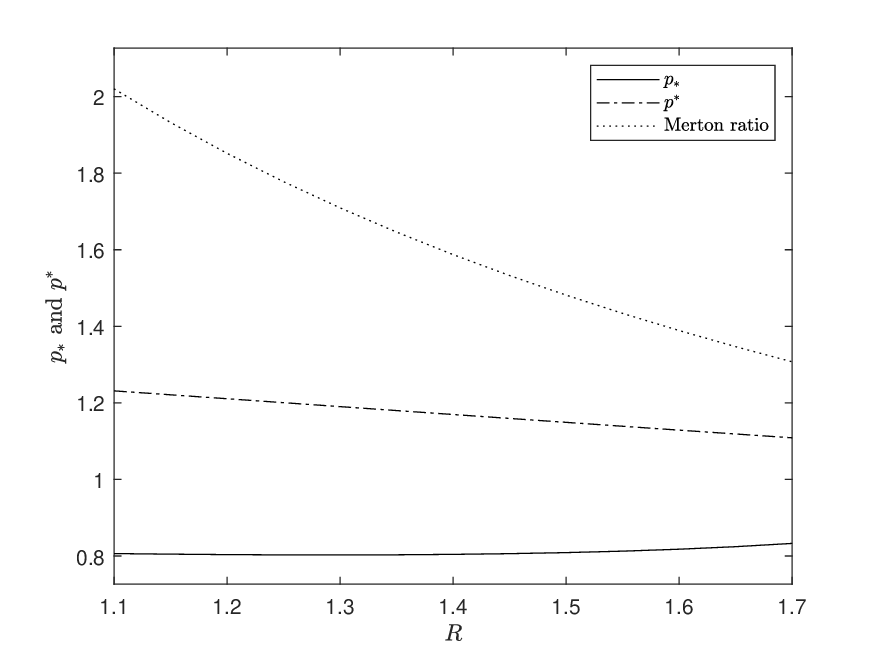}

\caption{$p_*$ and $p^*$ as $R$ varies when all the stated conditions in Lemma \ref{lem:cq} do not hold. Base parameters used: $\gamma^\uparrow=1.7$, $\gamma^\downarrow=3.33$, $S=2$, $r=0$, $\delta=-0.11$, $\mu=0.2$, $\sigma=0.3$. In this example, $p_*$ is increasing in $R$.}
\label{fig:compstat_R_wo_suffcond}
\end{figure}

The main conclusion from Proposition~\ref{prop:CompstatR} is that in the typical case $p_*$ and $p^*$ are decreasing in $R$. This result is proven to hold over the regime where the problem is well-posed whenever $R<1$ and also when $R>1$ unless either $m(0)\leq 0$ or $R \in (1,2)$ and $\lambda < \frac{\sigma}{2}$. The condition $m(0) > 0$ is equivalent to $\delta + r(S-1)<0$, which is automatically satisfied if $R,S>1$ and  $\delta,r>0$. Such monotonicity behaviour of $p_*$ and $p^*$ in $R$ is perhaps not too surprising in view of the fact that the Merton ratio $q_M=\lambda/(R\sigma)$ is decreasing in $R$. Nonetheless, it is useful to stress that the consumption motive will also affect $p_*$ and $p^*$ as $R$ varies. Note that the optimal consumption rate in the frictionless case can be rewritten as
\begin{align}
	m_M=m(q_M)=r+\frac{\delta-r}{S}+\frac{\lambda^2}{2R}\frac{S-1}{S}.
\label{eq:opt_frictionless_c}
\end{align}
Then for a fixed $S<1$, the (frictionless) consumption rate is increasing in $R$. Following similar intuition to that used in Section \ref{sect:comparative_S}, in the case with frictions we expect that the elevated consumption due to an increase in $R$ will make the agent more inclined to hold cash (i.e. a smaller fraction of wealth is invested in the risky asset). This effect is in the same direction as the impact of $R$ on the Merton ratio (when $S<1$).

However, if $S>1$ then \eqref{eq:opt_frictionless_c} suggests the optimal frictionless consumption rate is decreasing in $R$. The (lack of) consumption motive will then encourage the agent to hold less cash and hence the fraction of wealth invested in the risky asset becomes larger. The effect of $R$ on the investment level through the consumption motive is therefore opposite in sign to the direct effect of $R$ on the Merton ratio, and the net effect on the investment level becomes ambiguous. This is the reason why in the case $S>1$ we require further conditions in Proposition \ref{prop:CompstatR} to be able to conclude that $p_*$ and $p^*$ are decreasing in $R$. Indeed, there exist counterexamples such that the boundaries are not decreasing in $R$, although usually they require somewhat extreme model parameters such as a negative discount rate in conjunction with very high transaction costs. See Figure \ref{fig:compstat_R_wo_suffcond}.

\subsection{What determines whether consumption increases or decreases with the introduction of transaction costs?}
\label{ssec:WhyCdependsonS}

In this section we give three related arguments which together justify the comments made in Remarks~\ref{rem:shapeofm} and \ref{rem:consumption}(c).

Consider the continuous-time investment-consumption problem for EZ-SDU in the frictionless case.
Suppose the agent uses constant proportional strategies so that the value function is as given in \eqref{eq:valfungenstrat} where $H$ is as given in \eqref{eq:H_nu}, i.e.
\[ H(q,m) = \delta + (S-1)\left( r - m + \lambda \sigma q - \frac{1}{2} q^2 \sigma^2 R \right) . \]
Then, provided $H(q,m)>0$, and with $z = x+ y \theta$ denoting initial wealth, $V_0 = \frac{z^{1-R}}{1-R} \left( \frac{ m^{1-S}}{H(q,m)} \right)^\theta$. Maximising $V_0$ is equivalent to maximising $((1-R)V_0)^{\frac{1}{1-R}} = z  \frac{m}{H(q,m)^{\frac{1}{1-S}}}$. Note that this quantity is expressed in economically meaningful units as it represents the certainty equivalent value.

As in Section~\ref{ssec:frictionless}, suppose that the fraction of wealth invested in the risky asset is constrained to be equal to $q$, where $q$ is fixed and given, and possibly suboptimal, and consider choosing the optimal consumption rate.
The maximum certainty equivalent can be found by differentiation and we get that the optimal consumption rate $m$ is given by $m=m(q)$ and solves $\frac{1}{m} = \frac{H_m(q,m)}{(1-S)H(q,m)}$. Since $H_m(q,m)= (1-S)$ this simplifies to $m(q) = H(q,m(q))$, so that $m(q)$ is the root of the decreasing (in $m$) function $h_q$ where $h_q(m) = H(q,m)-m$.


Suppose $S<1$. Then $H(q,m)$ has a minimum in $q$ at $q_M$ (independent of the value of $m$). In particular, for $q \neq q_M$, $0 = h_q(m(q)) =  H(q,m(q)) - m(q) > H(q_M,m(q))-m(q) = h_{q_M}(m(q))$ (where the strict inequality follows from the strict convexity of $H$ in $q$). Then, since $h_{q_M}$ is decreasing we must have $m(q_M) < m(q)$. In particular, $m(\cdot)$ has a minimum at $q_M$.
Conversely, suppose $S>1$ so that $H(q,m)$ has a maximum in $q$ at $q_M$. Then, $0 = h_q(m(q)) =  H(q,m(q)) - m(q) \geq H(q_M,m(q))-m(q) = h_{q_M}(m(q))$ and since $h_{q_M}$ is decreasing we must have $m(q_M) \geq m(q)$. In particular, $m(\cdot)$ has a maximum at $q_M$ (and the uniqueness of the minimiser follows from the uniqueness of the maximiser of $H$ in $q$ as before).


It follows that (in the zero-transaction cost case) when $S<1$ the impact of using a non-optimal investment strategy, and then optimising over consumption is to increase the immediate consumption rate relative to the optimal consumption rate under an optimal investment strategy.
When $S>1$, the impact of using a non-optimal investment strategy, and then optimising over consumption is to decrease the immediate consumption rate.

One impact of transaction costs is to force the agent to use a suboptimal investment strategy when compared with the frictionless world. The above analysis then explains whether the impact of transaction costs on the consumption rate is to increase or decrease instantaneous consumption (relative to the frictionless case) and why this depends exactly on the shape of $H(q,m(q))$ (does it have a minimum or a maximum at $q_M$) which in turn is directly related to the sign of $(S-1)$.

The above argument might be considered a mathematical justification of the role of $(S-1)$ in determining the impact of the introduction of transaction costs on consumption, but it does not directly relate to the economic explanation of the fundamental role of $S-1$ given in Remark~\ref{rem:shapeofm} in terms of the interplay between two impacts of reducing the certainty equivalent value of future consumption. To understand what is happening at a more fundamental level, we consider recursive utility over one time period.

Consider an agent in a one-period economy (with times $0$ and $1$) containing a risk-less asset and a risky asset. The agent has initial wealth $x$ and recursive preferences. The time-0 controls available to the agent are to decide the amount $c \in (0,1)$ to consume at $t=0$, and the number of units $\phi$ of the risky asset to hold between times $0$ and $1$. We assume that the price of the risk-less asset is normalised so that it is constant over time, that the initial price of the risky asset is 1 and that the return on the risky asset is given by the time-1 measurable random variable $Z$ (so that 1 invested in the risky asset at time 0 yields $(1+Z)$ at time-1). It follows that the wealth of the agent at time 1 (we assume no frictions) is $x-c+\phi Z$. We assume that all wealth at time 1 is consumed at time 1.

Under quasi-arithmetic recursive preferences, the aim of the agent is to maximise the certainty equivalent value
\begin{equation}
\label{eq:RUfg} f^{-1} \left( f(c) + f \circ g^{-1} \left( \E[ g(x-c+\phi Z) ] \right) \right)
\end{equation}
where $f$ and $g$ are increasing, concave functions. The example to consider is $f(z)=\frac{z^{1-S}}{1-S}$ and $g(z)=\frac{z^{1-R}}{1-R}$ in which case the recursive utility is precisely the discrete-time (one-period) analogue of the EZ-SDU we are considering in the main body of the paper.

We specialise to the case of $g(z)=\frac{z^{1-R}}{1-R}$, but leave $f$ general. Then, with $w = \frac{\phi}{x-c}$ we find $g^{-1} \left( \E[ g(x-c+\phi Z) ] \right) = (x-c) D(w)$ where $D(w) = g^{-1}(\E[g(1+ wZ)]) = \left( \E\left[ (1+wZ)^{1-R} \right] \right)^{\frac{1}{1-R}}$. Then maximising \eqref{eq:RUfg} simplifies to first maximising $D$ with respect to $w$, and then with $\hat{w}=\argmax D(w)$, maximising
\begin{equation}
\label{eq:RUfR} f^{-1} \left( f(c) + f \left( (x-c)D(\hat{w}) \right) \right)
\end{equation}
We can also consider the problem of maximising the certainty equivalent value for a given investment strategy.
Suppose $w$ (and therefore $D=D(w)$) is fixed. Then, since $f$ is increasing, it is sufficient to maximise $f(c) + f(D(x-c))$. Assuming differentiability of $f$ (and an interior maximum, for which a sufficient condition is $f'(0)=\infty$) the maximiser $\hat{c} = \hat{c}(x,D)$ solves
\begin{equation}
\label{eq:RUfD}f'(c) = Df'(D(x-c)) = \frac{1}{(x-c)} \left. [y f'(y) ] \right|_{y = D(x-c)}.
\end{equation}

We want to consider the impact of changing $D$, and it is clear that this depends on whether $yf'(y)$ is increasing or decreasing in $y$.
By the concavity of $f$, the left expression of \eqref{eq:RUfD} is decreasing in $c$ and the middle expression is increasing in $c$. Therefore, if $y f'(y)$ is increasing in $y$ then $\hat{c}(x,D)$ is decreasing in $D$; conversely, if $y f'(y)$ is decreasing in $y$ then $\hat{c}(x,D)$ is increasing in $D$. For $f(z)=\frac{z^{1-S}}{1-S}$, $zf'(z)$ is decreasing precisely when $S>1$, so that $\hat{c}(x,D)$ is increasing in $D$ if and only if $S>1$\footnote{Alternatively for $f(z)= \frac{z^{1-S}}{1-S}$ of power law form, $\hat{c}$ solves $c^{-S} = D^{1-S}(x-c)^{1-S}$. It follows that $\hat{c} = D^{1-\frac{1}{S}} x$ which is increasing in $D$ if and only if $S>1$.}.

Finally we combine this argument with the impact of using a sub-optimal investment strategy as in the continuous-time case: if $w^*$ is optimal then $D(w)<D(w^*)$ and optimal consumption in markets with frictions is lower than in the frictionless case if and only if $S>1$.

The third argument is related to Remark~\ref{rem:shapeofm} and connects the above analysis to the impact of $1/S$ as a measure of the inter-temporal substitutability of consumption -- high $S$ indicates a strong preference for smooth consumption over time. Consider the first equality in \eqref{eq:RUfD}. $D$ enters into the expression $Df'(D(x-c))$ twice. First, the factor $D$ means that decreasing $D$ lowers the expression, resulting in higher optimal consumption. This role for $D$ is related to the impact of changes of future wealth on consumption and is independent of the preferences expressed via $f$. Second $D$ is an argument of the factor $f'(D(x-c))$ -- here lowering $D$ increases the marginal time-1 utility and therefore, in isolation, decreases time-0 consumption. This factor depends on the preferences via $f'$ and is strongest when $S$ is large. Thus, when $S$ is small, the first effect dominates and using suboptimal investment strategies leads to increased consumption. When $S$ is large, the desire for smooth consumption increases, the second effect dominates, and time-0 consumption falls.

\section{Concluding remarks}
In this paper we studied the optimal investment-consumption problem over the infinite horizon in a Black--Scholes--Merton financial market. Relative to the standard frictionless case first investigated by Merton~\cite{Merton:69,Merton:71}, the innovations are that we consider proportional transaction costs and non-additive preferences, as captured by Epstein--Zin stochastic differential utility. Our results determine when the problem is well-posed, and in this case, we give a complete description of the optimal investment and consumption strategies. A major innovation is our focus on the {\em shadow fraction of wealth in the risky asset}: it turns out that with this as the primary variable, the problem reduces to the study of a free-boundary problem for one-dimensional, singular ODE.

Even for additive (CRRA) preferences (studied first by Magill and Constantinides~\cite{MagillConstantinides:76} and Davis and Norman~\cite{DavisNorman:90}) our approach brings substantial simplifications and insights, and we are able give new (financial) interpretations to many of the quantities which arise. 

The non-additive case brings extra technical challenges (especially as regards existence and uniqueness of the utility process associated with a given consumption stream, and the comparison theorem used to facilitate the verification lemma); we overcome these challenges by making use of previous results in the frictionless case, see \cite{herdegen:hobson:jerome:23B}.

Study of the non-additive case is increasingly important as there is mounting evidence that the additive case fails to accurately describe investor behaviour. This paper is the first to give a full solution of the problem in the non-additive case (previous results only covered the small transaction cost case and only for an incomplete set of parameter combinations) and therefore is the first to make it possible to properly study the comparative statics of the problem and to distinguish the separate roles played by the risk aversion (which primarily affects the width and location of the no-transaction wedge), and the elasticity of intertemporal complementarity (which primarily affects the consumption rate).

The methods of this paper rely crucially on scaling properties of the problem (including the dynamics of the price process, the proportional transaction costs and the form of the recursive utility) to reduce the dimensionality, and it looks difficult to extend them beyond this case. Notwithstanding, looking at the extant literature in the additive case, it seems that there are at least two natural extensions which would become tractable for SDU using the methods of this paper.

The first extension is the problem with infinite transaction costs (and then it is important to distinguish between infinite transaction costs on selling and infinite transaction costs on buying), which is a precursor to an expansion solution for problems with a large transaction parameter.
The case $R,S>1$ and $m(0)<0<m(1)$ seems especially interesting as the point where $\ell$ crosses zero becomes a singular point for the solution $n_\infty$.

The second extension is the problem with multiple assets. At the general level, with transaction costs to be paid on all exchanges of assets (perhaps also requiring all transactions to be made to or from the cash account) the intuition is that the no-transaction region becomes a no-transaction cone, see for example Chen and Dai~\cite{Chen:Dai:13}. However, whereas in the one-risky asset case, the no-trade region is an interval which can be characterised by its endpoints, in the multi-asset case, the cross-section of the no-transaction region will be convex, but otherwise may be complicated in shape and difficult to characterise. However, results in the additive case (especially Hobson et al~\cite{hobson:tse:zhu:19B} and Choi~\cite{Choi:20}) give hope that it is possible to find a complete solution in a problem with multiple risky assets in which transaction costs are levied on purchases and sales of one asset only. Following the approach of this paper, the aim is to reduce the problem via the shadow fraction of wealth to the solution of a free-boundary problem for a first order differential equation. We even expect that apart from special cases, the resulting ODE is simpler than that in the single asset case studied in this paper in that there is no singularity at 1.

These extensions and other related problems are left as future work.

\appendix

\section{Additional results and proofs for the problem with frictions} 
\label{sec:app:tcrigour}

\subsection{The free boundary problem}
\label{app:ode}
The following two results lie at the heart of Proposition \ref{prop:n}.

\begin{prop}
\label{prop:n first:R<1}
Let $R < 1$. Assume that $m(0), m(1) > 0$.
\begin{enumerate}
\item Let $q_M \in (0, 1)$. Then for each $z \in \{m > 0\} \cap (0, q_M)$, there exist a unique $\zeta^+(z) \in  \{m > 0\} \cap (q_M, 1)$ and a unique continuously differentiable decreasing function $n_{(z)}^+: [z, \zeta^+(z)] \to [m(z), m(\zeta^+(z))]$ with $n_{(z)}^+(q) < \ell(q)$ for $q \in  [z, \zeta^+(z)]$ that satisfies on $ [z, \zeta^+(z)]$ the ODE
\begin{equation}
\frac{\diff}{\diff q} n_{(z)}^+(q) = O(n_{(z)}^+(q), q).
\end{equation}
as well as the boundary conditions
\begin{equation}
n_{(z)}^+(z) = m(z) \quad \text{and} \quad n_{(z)}^+(\zeta^+(z)) = m(\zeta^+(z)).
\end{equation}
Moreover, the map $\Sigma^+:\{m > 0\}  \cap (0, q_M) \to (1, \infty)$ given by
\begin{equation*}
\Sigma^+(z) := \exp\left(\int_{z}^{\zeta^+(z)} -\frac{1}{q(1-q)} \frac{m(q) -n_{(z)}^+(q)}{\ell(q)-n_{(z)}^+(q)} \dd q\right)
\end{equation*}
is continuous and decreasing, and satisfies
\begin{align}
\lim_{z \downarrow 0=\inf(\{m > 0\} \cap (0, q_M))} \Sigma^+(z) &= \infty\\
\lim_{z \uparrow\sup(\{m > 0\} \cap (0, q_M))} \Sigma^+(z) &=
\begin{cases}
\exp\left(\int_{q^m_-}^{q^m_+} -\frac{1}{q(1-q)} \frac{m(q)}{\ell(q)} \dd q\right) &\text{if } m(q_M) < 0 \\
1 & \text{if } m(q_M) \geq 0.
\end{cases}
\end{align}

\item Let $q_M =1$. Then for each $z \in \{m > 0\} \cap (0, 1)=(0,1)$, there exist a unique continuously differentiable decreasing function $n_{(z)}^+: [z, 1] \to [m(z), m(1)]$ with $n_{(z)}^+(q) < \ell(q)$ for $q \in [z, 1)$ that satisfies on $ [z, 1)$ the ODE
\begin{equation}
\frac{\diff}{\diff q} n_{(z)}^+(q) = O(n_{(z)}^+(q), q).
\end{equation}
as well as the boundary conditions
\begin{equation}
n_{(z)}^+(z) = m(z) \quad \text{and} \quad n_{(z)}^+(1) = m(1) \quad \text{and} \quad \frac{\diff}{\diff q} n_{(z)}^+(1) = m'(1)=0.
\end{equation}
Moreover, the map $\Sigma^+:\{m > 0\}  \cap (0, 1) \to (1, \infty)$ given by
\begin{equation*}
\Sigma^+(z) := \exp\left(\int_{z}^{1} -\frac{1}{q(1-q)} \frac{m(q) -n_{(z)}^+(q)}{\ell(q)-n_{(z)}^+(q)} \dd q\right)
\end{equation*}
is continuous and decreasing, and satisfies
\begin{align}
\lim_{z \downarrow 0= \inf(\{m > 0\} \cap (0, 1))} \Sigma^+(z) &= \infty\\
\lim_{z \uparrow 1 = \sup(\{m > 0\} \cap (0, 1))} \Sigma^+(z) &=1.
\end{align}

\item Let $q_M > 1$. Then for each $z \in \{m > 0\} \cap (0, q_M)$, there exist a unique $\zeta^+(z) \in  \{m > 0\} \cap (q_M, 2q_M-1)$ and a unique continuously differentiable decreasing function $n_{(z)}^+: [z, \zeta^+(z)] \to [m(z), m(\zeta^+(z))]$ with $n_{(z)}^+(q) \neq \ell(q)$ for $q \in [z, \zeta^+(z)]  \setminus \{1\}$ that satisfies on $ [z, \zeta^+(z)] \setminus \{1\}$ the ODE
\begin{equation}
\frac{\diff}{\diff q} n_{(z)}^+(q) = O(n_{(z)}^+(q), q).
\end{equation}
as well as the boundary conditions
\begin{equation}
n_{(z)}^+(z) = m(z) \quad \text{and} \quad n_{(z)}^+(\zeta^+(z)) = m(z)
\end{equation}
and, if $z<1$, the crossing condition
\begin{equation}
n_{(z)}^+(1) = m(1) \quad \text{and} \quad \frac{\diff}{\diff q} n_{(z)}^+(1) = m'(1). 
\end{equation}
Moreover, the map $\Sigma^+:\{m > 0\}  \cap (0, q_M) \to (1, \infty)$ given by
\begin{equation*}
\Sigma^+(z) := \exp\left(\int_{z}^{\zeta^+(z)} -\frac{1}{q(1-q)} \frac{m(q) -n_{(z)}^+(q)}{\ell(q)-n_{(z)}^+(q)} \dd q\right)
\end{equation*}
is continuous and decreasing, and satisfies
\begin{align}
\lim_{z \downarrow 0 = \inf(\{m > 0\} \cap (0, q_M))} \Sigma^+(z) &= \infty\\
\lim_{z \uparrow\sup(\{m > 0\} \cap (0, q_M))} \Sigma^+(z) &=
\begin{cases}
\exp\left(\int_{q^m_-}^{q^m_+} -\frac{1}{q(1-q)} \frac{m(q)}{\ell(q)} \dd q\right) &\text{if } m(q_M) < 0 \\
1 & \text{if } m(q_M) \geq 0.
\end{cases}
\end{align}

\item Let $q_M < 0$. Then for each $z \in \{m > 0\} \cap (q_M,0)$, there exist a unique $\zeta^-(z) \in  \{m > 0\} \cap (-\infty, q_M)$ and a unique continuously differentiable decreasing function $n_{(z)}^-: [\zeta^-(z),z] \to [m(\zeta^-(z)),m(z)]$ with $n_{(z)}^-(q) \neq \ell(q)$ for $q \in [\zeta^-(z),z]  \setminus \{1\}$ that satisfies on $ [\zeta^-(z),z]$ the ODE
\begin{equation}
\frac{\diff}{\diff q} n_{(z)}^-(q) = O(n_{(z)}^-(q), q).
\end{equation}
as well as the boundary conditions
\begin{equation}
n_{(z)}^-(z) = m(z) \quad \text{and} \quad n_{(z)}^-(\zeta^-(z)) = m(z).
\end{equation}
Moreover, the map $\Sigma^-:\{m > 0\}  \cap (q_M,0) \to (1, \infty)$ given by
\begin{equation*}
\Sigma^-(z) := \exp\left(\int_{\zeta^-(z)}^z -\frac{1}{q(1-q)} \frac{m(q) -n_{(z)}^+(q)}{\ell(q)-n_{(z)}^+(q)} \dd q\right)
\end{equation*}
is continuous and decreasing, and satisfies
\begin{align}
\lim_{z \uparrow 0 = \sup(\{m > 0\} \cap (q_M,0))} \Sigma^-(z) &= \infty\\
\lim_{z \downarrow\inf(\{m > 0\} \cap (q_M,0))} \Sigma^-(z)
&=
\begin{cases}
\exp\left(\int_{q^m_-}^{q^m_+} -\frac{1}{q(1-q)} \frac{m(q)}{\ell(q)} \dd q\right) &\text{if } m(q_M) <0 \\
1 & \text{if } m(q_M) \geq 0.
\end{cases}
\end{align}

\end{enumerate}

\end{prop}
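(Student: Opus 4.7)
The plan is to analyse the autonomous ODE $n'(q)=O(n,q)$ case by case, exploiting that $R<1$ forces $S<1$ by Standing Assumption~\ref{sass:rational}, so $m$ is a strictly convex parabola with minimum at $q_M$, while $\ell(q)-m(q)=\frac{1-S}{S}\frac{\sigma^2}{2}q(1-q)$ is strictly positive on $(0,1)$ and vanishes exactly at $\{0,1\}$. Accordingly the phase portrait of $O$ features the zero curve $\{n=m\}$, the singular curve $\{n=\ell\}$ on which $|O|$ blows up, and the vertical line $q=1$.

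\emph{Local construction and trapping.} At any admissible starting point $z$ one has $\ell(z)>m(z)>0$ and $z\ne 1$, so $O$ is $C^1$ in $n$ locally, and Picard--Lindel\"of gives a unique local solution $n_{(z)}^+$ with $n_{(z)}^+(z)=m(z)$. The ODE evaluated at $z$ yields $(n_{(z)}^+)'(z)=0$, while $m'(z)<0$ on $(0,q_M)$, so $n_{(z)}^+>m$ just past $z$; continuity also gives $n_{(z)}^+<\ell$, so the solution enters the strip $\{m<n<\ell\}$, on which the ODE forces $n'<0$. The solution cannot escape through $n=\ell$: if $n_{(z)}^+\uparrow\ell$ then $n'\to-\infty$, which is incompatible with approaching $\ell$ from below (a comparison argument on $\ell-n$ makes this rigorous). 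Hence $n_{(z)}^+$ is trapped in the strip and must exit through $n=m$ at some $\zeta^+(z)$.

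\emph{Location of the crossing.} In case~(a), $q_M\in(0,1)$: strict convexity of $m$ with minimum at $q_M$ together with $n'<0<m'$ on $(q_M,1)$ forces a unique $\zeta^+(z)\in(q_M,1)$; the crossing cannot be at $q=1$ because $\ell(1)=m(1)$ and $n_{(z)}^+<\ell$ would then force $n_{(z)}^+(1)\le m(1)$, producing an earlier crossing by the intermediate value theorem. In case~(c), $q_M>1$: if $z\in[1,q_M)$ the analysis is as in~(a); if $z<1$ the solution must \emph{pass through} the doubly singular point $(1,m(1))$. A matched expansion $n(q)=m(1)+a(1-q)+O((1-q)^2)$ substituted into the ODE cancels the $1/(1-q)$ singularity only if $a=-m'(1)$, giving $(n_{(z)}^+)'(1)=m'(1)$; local existence, uniqueness, and smooth extension through $q=1$ then follow from a fixed-point argument for the regularised unknown $v(q):=(n(q)-m(1))/(1-q)$, which satisfies an ODE regular at $q=1$. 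Case~(b) is the degenerate limit where $m'(1)=0$ (so the crossing occurs exactly at $q=1$ and a quadratic matched expansion matching $n''(1)=m''(1)$ is needed instead), and case~(d) follows from (a) and (c) via the reflection $q\mapsto -q$ with the appropriate parameter relabelling.

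\emph{Properties and limits of $\Sigma^+$.} Continuity of $z\mapsto(\zeta^+(z),n_{(z)}^+)$ follows from continuous dependence on initial data on compact sub-intervals avoiding $\{n=\ell\}$, together with the regular-ODE argument at $q=1$ in case~(c); hence $\Sigma^+$ is continuous. For monotonicity, for $z_1<z_2$ in the admissible range the solutions $n_{(z_1)}^+$ and $n_{(z_2)}^+$ cannot intersect in the strip by uniqueness, so $n_{(z_1)}^+>n_{(z_2)}^+$ on their common domain, giving $\zeta^+(z_1)>\zeta^+(z_2)$; since $n\mapsto(m-n)/(\ell-n)$ is strictly decreasing in $n$ on the strip (its derivative equals $(m-\ell)/(\ell-n)^2<0$), the positive integrand $-\frac{1}{q(1-q)}(m-n)/(\ell-n)$ is strictly larger at $z_1$ than at $z_2$ on their common range and the range of integration is also larger, so $\Sigma^+(z_1)>\Sigma^+(z_2)$. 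For the boundary limits: as $z\downarrow 0$ the integrand develops a $1/q$ singularity at $0$ while $(m-n)/(\ell-n)$ stays bounded away from $0$, forcing $\log\Sigma^+(z)\to\infty$; as $z$ approaches the upper endpoint, either the integration domain collapses (if $m(q_M)\ge 0$, giving $\Sigma^+\to 1$) or $n_{(z)}^+\to 0$ locally uniformly on $(q^m_-,q^m_+)$ (if $m(q_M)<0$, by squeezing between $m\le 0$ and $\ell\ge 0$ on $[q^m_-,q^m_+]$), and dominated convergence then produces the stated formula. The main obstacle throughout is the singular-point analysis in case~(c): the passage through $(1,m(1))$, where both $1/(1-q)$ diverges and $\ell-n$ vanishes, lies outside standard ODE theory and requires the matched expansion together with the regularised change of variable.
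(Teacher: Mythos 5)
The paper does not actually prove this proposition: the appendix explicitly omits the argument and refers to Section~3 of Hobson et al.\ \cite{hobson:tse:zhu:19A} (and to \cite{tse16,hobson:tse:zhu:19B}), where a structurally identical family of first-order free-boundary ODEs is analysed. Your phase-plane strategy (trapping in the region between $m$ and $\ell$, first-crossing, monotone dependence on the starting point, limiting behaviour of $\Sigma^+$) is the same strategy as in that reference, so the architecture is sound. However, as a proof it has concrete gaps at exactly the delicate points. First, in case~(a) your argument that the crossing occurs strictly before $q=1$ is wrong as stated: if $n_{(z)}^+>m$ throughout and $n_{(z)}^+(q)\to m(1)=\ell(1)$ as $q\uparrow 1$, the intermediate value theorem gives no earlier crossing --- the solution would simply limit to the pinch point $(1,m(1))$, which is precisely what happens in case~(b). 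What rules this out in case~(a) is $m'(1)>0$: setting $w=n-m$, near the pinch one has $w'\approx -\tfrac{2m(1)}{\sigma^2}\tfrac{w}{(1-q)^2}-m'(1)$, and a balance $w\sim c(1-q)^2$ would force $c=-\tfrac{\sigma^2 m'(1)}{2m(1)}<0$, contradicting $w>0$; so $w$ must hit zero before $q=1$. This local analysis is the actual dividing line between cases~(a) and~(b) and cannot be replaced by the IVT.

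Second, case~(c) is under-developed in two respects. For $q>1$ the solution does not lie in a strip $\{m<n<\ell\}$ at all: there $\ell<m$ and the solution satisfies $n\geq m\geq\ell$ (it emerges \emph{above} both parabolas), so your trapping argument and the claim that for $z\in[1,q_M)$ ``the analysis is as in~(a)'' do not transfer verbatim; in particular you never verify $n'<0$ in that region, nor the bound $\zeta^+(z)<2q_M-1$ (which follows from $n$ decreasing plus the symmetry of $m$ about $q_M$, as used in the proof of Lemma~\ref{lem:p low p up}). Moreover, the singular point $(1,m(1))$ has uniqueness only to the \emph{right}: every solution started at $z<1$ funnels into it, so all these solutions coincide on $(1,\zeta^+(1)]$ and $\zeta^+(z)$ is constant in $z$ there --- your strict-ordering argument for the monotonicity of $\Sigma^+$ must therefore be weakened to ``equal beyond $q=1$, strictly ordered on $[z_2,1)$'', which still suffices but needs saying. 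Finally, in the limit $z\downarrow 0$ the ratio $\tfrac{n_{(z)}^+-m}{\ell-n_{(z)}^+}$ is \emph{not} bounded away from zero: it vanishes at $q=z$. The divergence of $\log\Sigma^+$ comes from the asymptotic $n_{(z)}^+(q)-m(q)\sim \tfrac{-m'(0)}{\beta+1}\,q$ for $z\ll q\ll 1$ (with $\beta=2m(0)/\sigma^2$), which makes the integrand behave like $c_0/q$ on $[Kz,\epsilon]$ uniformly in $z$; without this the $1/q$ singularity alone proves nothing. The reflection claim for case~(d) should also be replaced by a direct (and in fact easier, since $1\notin(-\infty,0)$) repetition of the argument, as the map $q\mapsto -q$ does not preserve the factors $1-q$ and $q(1-q)$ in the ODE.
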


\begin{prop}
\label{prop:n first:R>1}
Let $R > 1$. Assume that $m(q_M) > 0$.
\begin{enumerate}
\item Let $q_M \in (0, 1)$. Then for each $z \in \{m > 0\} \cap (0, q_M)$, there exist a unique $\zeta^+(z) \in  \{m > 0\} \cap (q_M, 1)$ and a unique continuously differentiable increasing function $n_{(z)}^+: [z, \zeta^+(z)] \to [m(z), m(\zeta^+(z))]$ with $n_{(z)}^+(q) \neq \ell(q)$ for $q \in [z, \zeta^+(z)]  $ that satisfies on $ [z, \zeta^+(z)]$ the ODE
\begin{equation}
\frac{\diff}{\diff q} n_{(z)}^+(q) = O(n_{(z)}^+(q), q)
\end{equation}
with the boundary conditions
\begin{equation}
n_{(z)}^+(z) = m(z) \quad \text{and} \quad n_{(z)}^+(\zeta^+(z)) = m(z) .
\end{equation}
Moreover, the map $\Sigma^+:\{m > 0\}  \cap (0, q_M) \to (1, \infty)$ given by
\begin{equation*}
\Sigma^+(z) := \exp\left(\int_{z}^{\zeta^+(z)} -\frac{1}{q(1-q)} \frac{m(q) -n_{(z)}^+(q)}{\ell(q)-n_{(z)}^+(q)} \dd q\right)
\end{equation*}
is continuous and decreasing, and satisfies
\begin{align}
\lim_{z \uparrow q_M=\sup(\{m > 0\} \cap (0, q_M))} \Sigma^+(z) &= 1,\\
\lim_{z \downarrow\inf(\{m > 0\} \cap (0, q_M))} \Sigma^+(z)
&=
\begin{cases}
\exp\left(\int_{q^m_-}^{q^m_+} -\frac{1}{q(1-q)} \frac{m(q)}{\ell(q)} \dd q\right) &\text{if } m(0) < 0,\\
\infty & \text{if } m(0) \geq 0.
\end{cases}
\end{align}

\item Let $q_M =1$. Then for each $z \in \{m > 0\} \cap (0, q_M)$, there exist a unique continuously differentiable increasing function $n_{(z)}^+: [z, 1] \to [m(z), m(1)]$ with $n_{(z)}^+(q) \neq \ell(q)$ for $q \in [z, 1)$ that satisfies on $ [z, 1)$ the ODE
\begin{equation}
\frac{\diff}{\diff q} n_{(z)}^+(q) = O(n_{(z)}^+(q), q)
\end{equation}
with the boundary conditions
\begin{equation}
n_{(z)}^+(z) = m(z) \quad \text{and} \quad n_{(z)}^+(1) = m(1) \quad \text{and} \quad \frac{\diff}{\diff q} n_{(z)}^+(1) = m'(1) = 0.
\end{equation}
Moreover, the map $\Sigma^+:\{m > 0\}  \cap (0, 1) \to (1, \infty)$ given by
\begin{equation*}
\Sigma^+(z) :=\exp \left(\int_{z}^{1} -\frac{1}{q(1-q)} \frac{m(q) -n_{(z)}^+(q)}{\ell(q)-n_{(z)}^+(q)} \dd q\right)
\end{equation*}
is continuous and decreasing, and satisfies
\begin{align}
\lim_{z \uparrow 1 = q_M = \sup(\{m > 0\} \cap (0, q_M))} \Sigma^+(z) &= 1,\\
\lim_{z \downarrow\inf(\{m > 0\} \cap (0, q_M))} \Sigma^+(z)
&=
\begin{cases}
\exp\left(\int_{q^m_-}^{q^m_+} -\frac{1}{q(1-q)} \frac{m(q)}{\ell(q)} \dd q\right) &\text{if } m(0) < 0,\\
\infty & \text{if } m(0) \geq 0.
\end{cases}
\end{align}
\item Let $q_M > 1$. Then for each $z \in \{m > 0\} \cap (0, q_M)$, there exist a unique $\zeta^+(z) \in  \{m > 0\} \cap (q_M, 2q_M-1)$ and a unique continuously differentiable increasing function $n_{(z)}^+: [z, \zeta^+(z)] \to [m(z), m(\zeta^+(z))]$ with $n_{(z)}^+(q) \neq \ell(q)$ for $q \in [z, \zeta^+(z)]  \setminus \{1\}$ that satisfies on $ [z, \zeta^+(z)] \setminus \{1\}$ the ODE
\begin{equation}
\frac{\diff}{\diff q} n_{(z)}^+(q) = O(n_{(z)}^+(q), q)
\end{equation}
with the boundary conditions
\begin{equation}
n_{(z)}^+(z) = m(z) \quad \text{and} \quad n_{(z)}^+(\zeta^+(z)) = m(z)
\end{equation}
and, if $z<1$, the crossing condition at 1,
\begin{equation}
n_{(z)}^+(1) = m(1) \quad \text{and} \quad \frac{\diff}{\diff q} n_{(z)}^+(1) = m'(1). 
\end{equation}
Moreover, the map $\Sigma^+:\{m > 0\}  \cap (0, q_M) \to (1, \infty)$ given by
\begin{equation*}
\Sigma^+(z) :=\exp \left(\int_{z}^{\zeta^+(z)} -\frac{1}{q(1-q)} \frac{m(q) -n_{(z)}^+(q)}{\ell(q)-n_{(z)}^+(q)} \dd q\right)
\end{equation*}
is continuous and decreasing, and satisfies
\begin{align}
\lim_{z \uparrow q_M = \sup(\{m > 0\} \cap (0, q_M))} \Sigma^+(z) &= 1,\\
\lim_{z \downarrow\inf(\{m > 0\} \cap (0, q_M))} \Sigma^+(z)
&=
\begin{cases}
\exp\left(\int_{q^m_-}^{q^m_+} -\frac{1}{q(1-q)} \frac{m(q)}{\ell(q)} \dd q\right) &\text{if } m(0) < 0,\\
\infty & \text{if } m(0) \geq 0.
\end{cases}
\end{align}

\item Let $q_M < 0$. Then for each $z \in \{m > 0\} \cap (q_M,0)$, there exist a unique $\zeta^-(z) \in  \{m > 0\} \cap (-\infty, q_M)$ and a unique, continuously differentiable, increasing function $n_{(z)}^-: [\zeta^-(z),z] \to [m(\zeta^-(z)),m(z)]$ with $n_{(z)}^-(q) \neq \ell(q)$ for $q \in [\zeta^-(z),z]  \setminus \{1\}$ that satisfies on $ [\zeta^-(z),z]$ the ODE
\begin{equation}
\frac{\diff}{\diff q} n_{(z)}^-(q) = O(n_{(z)}^-(q), q)
\end{equation}
as well as the boundary conditions
\begin{equation}
n_{(z)}^-(z) = m(z) \quad \text{and} \quad n_{(z)}^-(\zeta^-(z)) = m(z).
\end{equation}
Moreover, the map $\Sigma^-:\{m > 0\}  \cap (q_M,0) \to (1, \infty)$ given by
\begin{equation*}
\Sigma^-(z) :=\exp \left(\int_{\zeta^-(z)}^z -\frac{1}{q(1-q)} \frac{m(q) -n_{(z)}^+(q)}{\ell(q)-n_{(z)}^+(q)} \dd q\right)
\end{equation*}
is continuous and decreasing, and satisfies
\begin{align}
\lim_{z \downarrow q_M = \inf(\{m > 0\} \cap (q_M,0))} \Sigma^-(z) &= 1,\\
\lim_{z \uparrow\sup(\{m > 0\} \cap (q_M,0))} \Sigma^-(z)
&=
\begin{cases}
\exp\left(\int_{q^m_-}^{q^m_+} -\frac{1}{q(1-q)} \frac{m(q)}{\ell(q)} \dd q\right) &\text{if } m(0) < 0,\\
\infty & \text{if } m(0) \geq 0.
\end{cases}
\end{align}

\end{enumerate}

\end{prop}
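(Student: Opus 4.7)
The plan is to construct $n_{(z)}^+$ as the unique integral curve of the ODE $n'=O(n,q)$ through the starting point $(z,m(z))$, extend it globally to a first re-intersection with $m$, and then read off the properties of $\Sigma^+$ from a non-crossing argument on the one-parameter family $\{n_{(z)}^+\}_z$. Standing Assumption~\ref{sass:rational} and $R>1$ force $S>1$, hence $\frac{1-S}{S}<0$ and $\ell(q)-m(q)=\frac{1-S}{S}\frac{\sigma^2}{2}q(1-q)<0$ on $(0,1)$. Standard Picard--Lindel\"of theory applies wherever $q\neq 1$ and $n\neq\ell(q)$, and in particular gives local existence and uniqueness through any $(z,m(z))$ with $z\in\{m>0\}\cap(0,q_M)$, $z\neq 1$. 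Setting $q=z+\eta$, $n=m(z)+\nu$ and expanding to first order yields the linear problem $\nu'=A(m'(z)\eta-\nu)$ with $A=\frac{(1-S)m(z)}{S(1-z)(\ell(z)-m(z))}>0$, whose solution $\nu(\eta)=\frac{m'(z)}{A}(e^{-A\eta}-1+A\eta)\sim\frac{Am'(z)}{2}\eta^2$ shows that $n$ leaves $m(z)$ strictly upward of order $\eta^2$ while remaining strictly below the tangent of $m$ at $z$. Consequently $\ell(q)<n(q)<m(q)$ in a right-neighbourhood of $z$.

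Under these three inequalities the factors $\frac{1-S}{S(1-q)}<0$, $m-n>0$ and $\ell-n<0$ combine to give $n'>0$, so $n_{(z)}^+$ is strictly increasing wherever it stays in the corridor $\ell<n<m$ (giving the monotonicity statement). A barrier argument forbids hitting $\ell$: if $n\downarrow\ell$ then $O(n,q)\to+\infty$, contradicting the tangent direction of the trajectory. I then extend the solution until a first intersection with $m$ occurs at some $\zeta^+(z)$; existence of this intersection is forced in cases~(a) and~(c) because $m$ turns over at $q_M$ while $n$ is bounded below by $m(z)$, and by the mirror of the local analysis around $\zeta^+(z)$ the meeting is tangential with $n'(\zeta^+(z))=0$. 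The containment $\zeta^+(z)\in(q_M,1)$ in case~(a) follows because $\ell=m$ at $q=1$ acts as an unreachable barrier before $n$ can cross it. In case~(b), $q_M=1$ is itself the meeting point and the matching $n'(1)=m'(1)=0$ holds because $m'(q_M)=0$. Case~(d) ($q_M<0$) is handled symmetrically by integrating leftward from $(z,m(z))$ to a first meeting $\zeta^-(z)<q_M$.

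The delicate case is~(c), where $z<1<\zeta^+(z)$ and the ODE is doubly singular at $q=1$: the factor $1/(1-q)$ diverges and $\ell(q)-n(q)\to 0$ because $\ell(1)=m(1)=n(1)$. I would treat $(1,m(1))$ as a singular node of the phase plane. Substituting the ansatz $n(q)=m(1)+n_1(q-1)+o(q-1)$ into $O(n,q)$ produces a term proportional to $\frac{1}{q-1}\cdot\frac{m'(1)-n_1}{\ell'(1)-n_1}$ whose finiteness forces $n_1=m'(1)$; since $\ell'(1)-m'(1)=\frac{S-1}{S}\frac{\sigma^2}{2}\neq 0$ the next-order balance uniquely determines $n_2$ and all subsequent Taylor coefficients. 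A rigorous formulation uses the change of variables $u=(n-m)/(q-1)$, which converts the ODE near $q=1$ into a regular problem in $(u,q-1)$ for which Picard--Lindel\"of applies. A consequence is that all solutions with different $z<1$ merge at $(1,m(1))$ and coincide on $[1,\zeta^+(z)]$, so $\zeta^+(z)$ is independent of $z$ for $z<1$. The bound $\zeta^+(z)<2q_M-1$ then follows from the symmetry of $m$ about $q_M$ applied to the coalesced right-branch, using that $n(q)>m(1)$ strictly on $(1,\zeta^+(z))$.

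Continuity of $\Sigma^+$ in $z$ follows from continuous dependence of $n_{(z)}^+$ and $\zeta^+(z)$ on $z$ combined with dominated convergence, exploiting that the integrand is bounded near the endpoints because the tangential contact makes $(m-n)/(\ell-n)$ vanish linearly there and compensates the $1/(q(1-q))$ factor at $q=1$. Strict monotonicity follows from non-crossing: for $z_1<z_2$ the trajectories $n_{(z_1)}^+$ and $n_{(z_2)}^+$ cannot cross by local uniqueness of the ODE and are ordered by $n_{(z_1)}^+<n_{(z_2)}^+$ on their common domain, which both enlarges the interval of integration in $\Sigma^+(z_1)$ and increases its integrand. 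The limits are then read off: as $z\uparrow q_M$ (respectively $z\downarrow q_M$ in case~(d)), the interval $[z,\zeta^+(z)]$ collapses to $\{q_M\}$ and $\Sigma^+(z)\to 1$; as $z$ approaches $\inf(\{m>0\}\cap(0,q_M))$, either $z\to q^m_-$ and $n_{(z)}^+\to 0$ pointwise via the boundary conditions $n(q^m_\pm)=m(q^m_\pm)=0$, yielding the stated finite limit; or $m(0)\geq 0$ and $z\downarrow 0$, in which case $1/q$ becomes non-integrable at $0$ and $\Sigma^+(z)\to\infty$. The principal obstacle throughout is the analysis of the third paragraph: passage through the singular point $(1,m(1))$ where standard Lipschitz theory is inapplicable, the tangential matching $n'(1)=m'(1)$ has to be extracted by asymptotic analysis, and the coalescence of all left-starting trajectories at this point is what ultimately underpins the structural assertion that $q^*(\xi)$ is eventually independent of $\xi$ (cf.\ Remark~\ref{rem:locationofNTwedge}(b)).
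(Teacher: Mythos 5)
The paper does not actually prove this proposition: its ``proof'' is a pointer to the structurally identical ODE analysis in Hobson et al.\ \cite{hobson:tse:zhu:19A}, Section 3. Your overall architecture --- local departure from $(z,m(z))$ at order $(q-z)^2$, confinement to the corridor between $\ell$ and $m$ via a barrier argument, first crossing of $m$, non-crossing and ordering of the family $\{n^+_{(z)}\}_z$ to get monotonicity and continuity of $\Sigma^+$, and the endpoint limits --- is exactly the right strategy and matches that source. The sign bookkeeping (with $R>1$ forcing $S>1$), the computation of $A>0$, and the deduction that $\Sigma^+$ is decreasing because a smaller $z$ both enlarges the domain of integration and pointwise increases the (positive) integrand, are all correct.

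The genuine gap is your treatment of the singular point $(1,m(1))$ in case (c), which you yourself identify as the crux. The substitution $u=(n-m)/(q-1)$ does \emph{not} produce a regular problem. Writing $\epsilon=q-1$ and $\ell-m=Dq(1-q)$, one gets $m-n=-u\epsilon$ and $\ell-n=-\epsilon(Dq+u)$, so the ODE becomes
\begin{equation}
\epsilon\, u'(q) \;=\; -m'(q)-u-\frac{1-S}{S}\,\frac{n\,u}{\epsilon\,(Dq+u)},
\end{equation}
which still carries a $1/\epsilon$ (hence $u'\sim u/\epsilon^2$); Picard--Lindel\"of does not apply after this change of variables. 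Indeed the correct second-order scaling $v=(n-m)/(q-1)^2$ yields $\epsilon^2 v'=-m'-2v\epsilon-\frac{1-S}{S}\frac{nv}{Dq+v\epsilon}$, an irregular singular problem whose algebraic balance at $\epsilon=0$ pins down $v(1)=-\tfrac{\sigma^2}{2}m'(1)/m(1)$ but does not by itself give existence, nor the crucial asymmetry that the continuation to the right of $1$ is unique while a one-parameter family of left-solutions funnels into $(1,m(1))$. That asymmetry is precisely what underpins the ``$\zeta^+(z)$ independent of $z<1$'' claim and Remark~\ref{rem:locationofNTwedge}(b), and it requires a dedicated argument (monotone trapping between sub/supersolutions in a weighted neighbourhood of $q=1$, as in \cite{hobson:tse:zhu:19A}), not a Lipschitz citation. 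Two smaller asserted-but-unproved steps: (i) in case (a) the strict inequality $\zeta^+(z)<1$ does not follow from ``$\ell=m$ at $q=1$ is a barrier''; one should argue that if $n<m$ on all of $[z,1)$ then $a:=m-n$ satisfies $a'\leq m'$, hence $a(q)\geq |m'(1)|(1-q)(1+o(1))$, which makes $\int^1 \frac{n}{1-q}\frac{a}{n-\ell}\,\dd q$ diverge and contradicts the boundedness of $n$; (ii) the limit $\Sigma^+(z)\to\infty$ as $z\downarrow 0$ when $m(0)\geq 0$ needs a uniform-in-$z$ lower bound of the form $c/q$ for the integrand away from the tangency at $q=z$, since the integrand vanishes at the left endpoint.
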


\begin{proof}[Proof of Proposition \ref{prop:n first:R<1} and \ref{prop:n first:R>1}]
Complete proofs are omitted but a structurally identical family of first order ODEs is studied in \cite{hobson:tse:zhu:19A}, Section 3. See also Section 3.5 of \cite{tse16} and Section 5 of \cite{hobson:tse:zhu:19B} for the analysis of a similar problem in the context of optimal investment and consumption under proportional transaction costs with multiple assets.
\end{proof}

\begin{proof}[Proof of Proposition \ref{prop:n}] We only establish (a) -- (f) for the case $q_M \in (0, 1)$, the argument for the other three cases is similar.

First, we show existence of $q_*(\xi)$, $q^*(\xi)$ and $n_\xi$.  For $z \in \{m > 0\} \cap (0, q_M)$, define $\zeta^+(z) \in \{m > 0\} \cap (q_M, 1)$ and the function $n^+_{(z)}: [z, \zeta^+(z)] \to [m(z), m(\zeta^+(z))]$ as in Propositions \ref{prop:n first:R<1}(a) or \ref{prop:n first:R>1}(a), i.e. $n^+_{(z)}$ solves $n' = O(n,q)$ subject to $n^+_{(z)}(z)=m(z)$, and then crosses $m$ at $\zeta^+(z)$. Further, define the map $\Sigma^+:\{m > 0\}  \cap (0, q_M) \to (1, \infty)$ by
\begin{equation*}
\Sigma^+(z) := \exp\left(\int_{z}^{\zeta^+(z)} -\frac{1}{q(1-q)} \frac{m(q) -n_{(z)}^+(q)}{\ell(q)-n_{(z)}^+(q)} \dd q\right)
\end{equation*}
Since $\xi$ is in the well-posedness range of transaction costs and the function $\Sigma^+$ is increasing and continuous, it follows from Propositions \ref{prop:n first:R<1}(a) or \ref{prop:n first:R>1}(a) that there exists a unique $z \in \{m > 0\} \cap (0, q_M)$ such that $\Sigma^+(z) = \xi$. Set $q_*(\xi) := z$, $q^*(\xi) := \zeta^+(z)$ and $n_\xi := n_{(z)}^+$. Then it follows from Propositions \ref{prop:n first:R<1}(a) or \ref{prop:n first:R>1}(a) that $n_\xi$ satisfies properties (a) -- (f). Uniqueness follows since $\Sigma^+$ is strictly increasing.

Finally, we establish \eqref{eq:prop:n:q up boundE} and\eqref{eq:prop:n:q down boundE}. We only provide the argument for \eqref{eq:prop:n:q up boundE} and for the case $q_M>0$: the argument for \eqref{eq:prop:n:q down boundE} or for $q_M<0$ is similar.

Suppose $0<q^* \leq 1$. Then $(\xi - 1)(q^*)^2 \leq (\xi -1) q^*$ which can be rewritten as $q^*(1+ (\xi-1)q^*) \leq \xi q^*$. Then $q_* < q_M \leq q^* \leq \frac{\xi q^*}{(1+ (\xi-1)q^*)}$. Similarly if $q_* \leq 1 < q^*$ then $1 + (\xi-1)q^* < q^* + (\xi -1)q^* = \xi q^*$ so that $q_* \leq 1 < \frac{\xi q^*}{(1+ (\xi-1)q^*)}$. Finally, the non-trivial case is when $1 < q_* < q_M < q^*$. Then on $(q_*(\xi) , q^*(\xi) )$, $n_\xi \geq m \geq \ell$ if $R < 1$ and $n_\xi \leq m \leq \ell$ if $R >1$. In either case, this implies that $0 < \tfrac{m(q) - n_\xi(q)}{l(q) - n_\xi(q)} < 1$ on $(q_*(\xi) , q_*(\xi) )$. This together with \eqref{eq:prop:n:int const} gives
\[ \ln(\xi) < \int_{q_*(\xi)}^{q^*(\xi)} \frac{1}{q(q-1)} \dd q = \log \frac{q^*(\xi)-1}{q^*(\xi)} - \log \frac{q_*(\xi)-1}{q_*(\xi)}
\]
Rearranging terms yields the first inequality in \eqref{eq:prop:n:q up boundE}. The second inequality is trivial.
\end{proof}

The following corollary is useful in the study of the ill-posed case.

\begin{cor}
\label{cor:n}
Suppose $\ol \xi \in (1,\infty)$.

Suppose $R<1$. Then $\lim_{\xi \downarrow \ol \xi} \inf_{q \in [q_*(\xi),q^*(\xi)]} n_\xi (q) = 0$.

Suppose $R<1$. Then $\lim_{\xi \downarrow \ol \xi} \inf_{q \in [q_*(\xi),q^*(\xi)]} n_\xi (q) = 0$.
\end{cor}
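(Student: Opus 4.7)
The statement as written has two clauses that both say \emph{Suppose $R<1$}; based on the well-posedness range in Definition~\ref{def:well posed} (which is $\xi>\ol\xi$ for $R<1$ and $\xi<\ol\xi$ for $R>1$), the second clause should read \emph{Suppose $R>1$. Then $\lim_{\xi\uparrow\ol\xi}\inf_{q\in[q_*(\xi),q^*(\xi)]} n_\xi(q)=0$.} I will write the plan under this reading.

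The core idea is to combine two structural facts from Proposition~\ref{prop:n}: (i) $n_\xi$ is monotone on $[q_*(\xi),q^*(\xi)]$, decreasing if $R<1$ and increasing if $R>1$, which pins down at which endpoint the infimum is attained; and (ii) the boundary conditions $n_\xi(q_*(\xi))=m(q_*(\xi))$ and $n_\xi(q^*(\xi))=m(q^*(\xi))$. Thus, in both cases, the infimum equals $m$ evaluated at an endpoint, and the problem reduces to showing that the relevant endpoint converges to a zero of $m$ as $\xi$ tends to the threshold.

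For $R<1$, the infimum is $m(q^*(\xi))$. The hypothesis $\ol\xi\in(1,\infty)$ together with the $R<1$ case of Definition~\ref{def:well posed} forces $m$ to have two distinct zeros $q^m_-<q^m_+$, and since $m(0),m(1)>0$ the two zeros necessarily lie in the same connected component of $\R\setminus\{0,1\}$. This in turn determines which subcase of Proposition~\ref{prop:n first:R<1} applies (case (a), (c), or (d) according to the sign of $q_M$ and $q_M-1$); in particular it forces $m(q_M)<0$, so the limit of $\Sigma^+$ (respectively $\Sigma^-$) at the relevant endpoint is precisely $\ol\xi$. Since $\Sigma^+$ is continuous and strictly decreasing, inverting gives $q_*(\xi)\to q^m_-$ (or $q_*(\xi)\to q^m_+$ in case (d)) as $\xi\downarrow\ol\xi$. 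Continuity of $\zeta^+$ (respectively $\zeta^-$) then yields $q^*(\xi)\to q^m_+$ (respectively $q^m_-$), and continuity of $m$ closes the argument: $m(q^*(\xi))\to 0$.

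For $R>1$, the infimum is $m(q_*(\xi))$. The hypothesis $\ol\xi<\infty$ rules out the degenerate possibilities listed in the remark following Definition~\ref{def:well posed}, so $m(0)<0$, $m(1)<0$ and $\max_{q\in[0,1]}\ell(q)<0$; again $m$ has two distinct zeros $q^m_-<q^m_+$. One then applies Proposition~\ref{prop:n first:R>1} in the relevant subcase (determined by the sign of $q_M(q_M-1)$) and inverts the now-continuous strictly decreasing map $\Sigma^+$ (or $\Sigma^-$): as $\xi\uparrow\ol\xi$, the starting point of the solution curve migrates to the far endpoint of $\{m>0\}\cap(0,q_M)$ (or $(q_M,0)$), which by the finiteness of $\ol\xi$ is a zero of $m$, so $m(q_*(\xi))\to 0$.

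The only non-routine aspect of this plan is the bookkeeping needed to match the hypothesis $\ol\xi\in(1,\infty)$ to the correct subcase of Propositions~\ref{prop:n first:R<1} and \ref{prop:n first:R>1} and, within each subcase, to identify which endpoint of the relevant interval is being approached in the limit $\xi\to\ol\xi$. Once this correspondence is set up, the conclusion follows directly from the monotonicity of $n_\xi$, the boundary conditions in \eqref{eq:prop:n:bound}, the continuity of $\Sigma^\pm$ and its inverse, and the continuity of $m$ at its zeros; no further estimates are required.
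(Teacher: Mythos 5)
Your proposal is correct and follows essentially the same route as the paper, whose entire proof is the one-line observation that $\Sigma^{+}$ is continuous and (strictly) monotone — you have simply spelled out the bookkeeping (monotonicity of $n_\xi$ locating the infimum at an endpoint, the boundary condition $n_\xi=m$ there, and the limiting values of $\Sigma^{\pm}$ from Propositions~\ref{prop:n first:R<1} and \ref{prop:n first:R>1} identifying the limiting endpoint with a zero of $m$), and you correctly repaired the typo in the second clause ($R>1$, $\xi\uparrow\ol\xi$). As a minor simplification, in the $R<1$ case you do not need continuity of $\zeta^{+}$: since $0<n_\xi(q^*(\xi))\leq n_\xi(q_*(\xi))=m(q_*(\xi))\to 0$, the convergence of $q_*(\xi)$ alone suffices.
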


\begin{proof}
This follows from the fact that $\Sigma^+$ is increasing and continuous.
\end{proof}

\begin{proof}[Proof of Corollary \ref{cor:kappa xi}]
Uniqueness of $\kappa_{\gamma^\uparrow,\gamma^\downarrow}$ follows from standard ODE theory, using the (double) boundary condition in \eqref{eq:cor:kappa:bound}.

To establish existence and argue the remaining claims, define  $\kappa_{\gamma^\uparrow,\gamma^\downarrow}: [q_*(\xi), q^*(\xi)] \to  (0, \infty)$ by
\begin{align}
\kappa_{\gamma^\uparrow,\gamma^\downarrow}(q) := \gamma^\uparrow\exp\left(\frac{S}{1-S}\int_{q_*(\xi)}^q\frac{n_{\xi}'(v)}{vn_{\xi}(v)} \dd v\right) =\frac{1}{\gamma^\downarrow}\exp\left(-\frac{S}{1-S}\int_{q}^{q^*(\xi)}\frac{n_{\xi}'(v)}{vn_{\xi}(v)}dv\right), \label{eq:pf:cor:kappa xi}
\end{align}
where the equality in \eqref{eq:pf:cor:kappa xi} follows from the integral constraint \eqref{eq:prop:n:int const} for $n_\xi$ and the fact that $\gamma^\uparrow \gamma^\downarrow = \xi$. Then $\kappa_{\gamma^\uparrow, \gamma^\downarrow}$ is well-defined (since $0 \notin [q_*(\xi), q^*(\xi)]$) and satisfies \eqref{eq:cor:kappa:bound}.
Moreover, $\kappa_{\gamma^\uparrow,\gamma^\downarrow}$ is continuously differentiable with derivative
\begin{align}
\label{eq:pf:cor:kappa:kappa prime:01}
	\kappa'_{\gamma^\uparrow,\gamma^\downarrow}(q) = \frac{S}{1-S} \frac{n'_\xi(q)}{ q n_\xi(q)} \kappa_{\gamma^\uparrow,\gamma^\downarrow}(q).
\end{align}
Plugging in $q_*(\xi)$ and $q^*(\xi)$ and using \eqref{eq:prop:n:bound:deriv} gives \eqref{eq:cor:kappa:bound:deriv}.
Moreover, using the  the ODE \eqref{eq:prop:n:ODE} for $n$, we obtain that $\kappa_{\gamma^\uparrow,\gamma^\downarrow}$ satisfies the ODE \eqref{eq:cor:kappa:ODE} on $[q_*(\xi) , q^*(\xi) ] \setminus \{1\}$. Finally, using that $(1-R)$ and $(1-S)$ have the same sign and that $(1-R) n_\xi$ is decreasing by Proposition \ref{prop:n}(b), \eqref{eq:pf:cor:kappa:kappa prime:01} directly yields (a).

The additional claim is straightforward, given \eqref{eq:pf:cor:kappa xi}.
\end{proof}

\subsection{No-transaction region}
We proceed to list some properties of the Möbius transformations $\tau_c(q)$ for $c \in (0, \infty)$ and $q \in \RR \cup \{\overline \infty\}$. First, for fixed $c$, $\tau_c$ is continuously differentiable and strictly increasing on $(\ol \infty, \frac{1}{1-c}=\tau_{\frac{1}{c}}(\ol \infty))$ and $(\frac{1}{1-c} =\tau_{\frac{1}{c}}(\ol \infty), \ol \infty)$. Moreover, for fixed $q$,  $\tau_c$ is also continuously differentiable with respect to the parameter $c$. For future reference, we note that for
$q \in \R \setminus \{\frac{1}{1-c},0,1\}$,
\begin{align}
\tau'_c(q) &= \frac{\tau_c(q)(1- \tau_c(q))}{q(1-q)}, \label{eq:tau c prime:q} \\
\frac{\partial}{\partial c} \tau_c(q) &= \frac{1}{c} \tau_c(q)(1- \tau_c(q)).  \label{eq:tau c prime:c}
\end{align}

\begin{proof}[Proof of Lemma \ref{lem:p low p up}]
If, for example, $q_M \in (0,1)$ then the inequalities follow easily since, by the properties of $\tau_c$ on $(0,1)$, $p_*(\gamma^\uparrow, \gamma^\downarrow) =  \tau_{\frac{1}{\gamma^\uparrow}}(q_*(\xi )) \leq q_*(\xi ) < q_M < q^*(\xi) \leq \tau_{\gamma^\downarrow}(q^*(\xi ))=p_*(\gamma^\uparrow, \gamma^\downarrow)$. The result for $q_M=1$ is very similar.

For other values of $q_M$ the result is a little more subtle. We consider the case when $q_M>1$, the case $q_M<0$ is similar and omitted. If $q_* \leq 1$ we have $p_*(\gamma^\uparrow, \gamma^\downarrow) \leq 1 < \tau_{\gamma^\downarrow}(q^*(\xi )) = p^*(\gamma^\uparrow, \gamma^\downarrow)$. So, suppose $1 < q_* < q_M$.
By the final result of Proposition~\ref{prop:n} we have
\[ q_*(\xi) < \frac{\xi q^*(\xi)}{1 + (\xi - 1) q^*(\xi)} = \tau_\xi(q^*(\xi)). \]
Then $p_*(\gamma^\uparrow, \gamma^\downarrow) = \tau_{\frac{1}{\gamma^\uparrow}} (q^*(\xi))< \tau_{\frac{1}{\gamma^\uparrow}} \circ \tau_\xi(q^*(\xi)) = \tau_{\gamma^\downarrow}(q^*(\xi )) = p^*(\gamma^\uparrow, \gamma^\downarrow)$.

For the final inequality in the case $q_M>1$ note that in the case $R<1$, since $n_\xi$ is decreasing on $(q_*(\xi) \vee 1,q^*(\xi))$ and $m$ is a quadratic with minimum at $q_M$, we must have $q^*(\xi)-q_M < q_M - (1 \vee q_*(\xi))$. In particular, $q^*(\xi) < 2q_M-1$ and  $p^*(\gamma^\uparrow, \gamma^\downarrow)= \tau_{\gamma^\downarrow}(q^*(\xi )) < \tau_{\gamma^\downarrow}(2 q_M-1)$. The argument for $R>1$ is very similar.
\end{proof}

\begin{proof}[Proof of Proposition\ref{prop:q of p}]
Uniqueness of $q_{\gamma^\uparrow, \gamma^\downarrow}$ follows from standard ODE theory, using the (double) boundary conditions in (b).

To establish existence of $q_{\gamma^\uparrow, \gamma^\downarrow}$, we first define its inverse function. Motivated by \eqref{eq:J q cons}, we let  $p_{\gamma^\uparrow, \gamma^\downarrow}: [q_*(\xi), q^*(\xi)] \to  \RR$ be given by
\begin{align}
	p_{\gamma^\uparrow,\gamma^\downarrow}(q) :=\frac{q}{\kappa_{\gamma^\uparrow, \gamma^{\downarrow}}(q)  + (1 - \kappa_{\gamma^\uparrow, \gamma^{\downarrow}}(q) )q} = \tau_{\frac{1}{\kappa_{\gamma^\uparrow, \gamma^{\downarrow}}(q)}}(q).
\end{align}
The boundary conditions \eqref{eq:cor:kappa:bound} of $\kappa_{\gamma^\uparrow, \gamma^{\downarrow}}$ give
\begin{equation}
p_{\gamma^\uparrow,\gamma^\downarrow}(q_*(\xi)) = \tau_{\frac{1}{\gamma^\uparrow}}(q_*(\xi)) = p_*(\gamma^\uparrow, \gamma^\downarrow)  \quad \text{and} \quad
p_{\gamma^\uparrow,\gamma^\downarrow}(q^*(\xi)) = \tau_{\gamma^\downarrow}(q_*(\xi)) =p^*(\gamma^\uparrow, \gamma^\downarrow),
\end{equation}
and the fact that $\tau_c(1) = 1$ for all $c \in (0, \infty)$, gives $p_{\gamma^\uparrow,\gamma^\downarrow}(1)  = 1$ if $1 \in [q_*(\xi), q^*(\xi)]$.
Moreover, $p_{\gamma^\uparrow,\gamma^\downarrow}$ is continuously differentiable because $\kappa_{\gamma^\uparrow,\gamma^\downarrow}$ is. Using \eqref{eq:tau c prime:q} and \eqref{eq:tau c prime:c}, the ODE \eqref{eq:cor:kappa:ODE} for $\kappa$, and the simple identity
\begin{equation*}
\frac{p(1-p)}{q(1-q)} = \frac{m(p) - \ell(p)}{m(q) - \ell(q)}, \quad p \in \RR, q \in \RR \setminus \{0, 1\},
\end{equation*}
we obtain for $q \in [q_*(\xi), q^*(\xi)] \setminus \{1\}$,
\begin{align}
	p'_{\gamma^\uparrow,\gamma^\downarrow}(q) &= 	\frac{p_{\gamma^\uparrow,\gamma^\downarrow}(q)(1 -p_{\gamma^\uparrow,\gamma^\downarrow}(q))}{q(1-q)} - \frac{p_{\gamma^\uparrow,\gamma^\downarrow}(q)(1 -p_{\gamma^\uparrow,\gamma^\downarrow}(q))}{q(1-q)} \frac{m(q) - n_\xi(q)}{\ell(q) - n_\xi(q)} \\
&= \frac{p_{\gamma^\uparrow,\gamma^\downarrow}(q)(1 -p_{\gamma^\uparrow,\gamma^\downarrow}(q))}{q(1-q)} \frac{\ell(q) - m(q)}{\ell(q) - n_\xi(q)} \label{eq:pf:prop:q of p:h prime:middle}\\
&= \frac{\ell(p_{\gamma^\uparrow,\gamma^\downarrow}(q)) - m(p_{\gamma^\uparrow,\gamma^\downarrow}(q))}{\ell(q) - n_\xi(q)}.
\label{eq:pf:prop:q of p:h prime}
\end{align}
Looking at \eqref{eq:pf:prop:q of p:h prime:middle} shows that $p'_{\gamma^\uparrow, \gamma^\downarrow}$ is positive. Indeed, the first factor in \eqref{eq:pf:prop:q of p:h prime:middle} coincides by \eqref{eq:tau c prime:q} with $\tau'_c(q) $ for $c = \frac{1}{\kappa_{\gamma^\uparrow, \gamma^{\downarrow}}(q)}$ and is therefore positive; the second factor is also positive
because $n_\xi$ and $\ell$ both lie on the same side of $m$ on $(q_*(\xi), q^*(\xi)) \setminus \{1\}$. Thus,
$p_{\gamma^\uparrow, \gamma^\downarrow}$ is strictly increasing  and maps $[q_*(\xi), q^*(\xi)]$ onto $[p_*(\gamma^\uparrow, \gamma^\downarrow), p^*(\gamma^\uparrow, \gamma^\downarrow)]$. If we now define the function $q_{\gamma^\uparrow, \gamma^\downarrow}: [q_*(\xi), q^*(\xi)] \to [q_*(\xi), q^*(\xi)]$ by
$q_{\gamma^\uparrow, \gamma^\downarrow} := p_{\gamma^\uparrow, \gamma^\downarrow}^{-1}$, then $q_{\gamma^\uparrow, \gamma^\downarrow}$ satisfies (a) to (c). (Note that since $\tau_c(1)=1$ for all $c$ we immediately get $p_{\gamma^\uparrow, \gamma^\downarrow}=1$ whence $q_{\gamma^\uparrow, \gamma^\downarrow}(1)=1$ also.)

Finally, if $\tilde \gamma^\downarrow,\tilde \gamma^\uparrow \in [1, \infty)$ are such that $\tilde \gamma^\downarrow\tilde \gamma^\uparrow = \xi$, then $\frac{\tilde \gamma^\downarrow}{\gamma^\downarrow} = \frac{\gamma^\uparrow}{\tilde \gamma^\uparrow}$ and it suffices to show that the function
$ q_{\gamma^\uparrow, \gamma^\downarrow} \circ \tau_{\frac{\tilde \gamma^\downarrow}{\gamma^\downarrow}}$ satisfies properties (a) - (c). For (a), this follows from the ODE \eqref{eq:prop:q of p} for $q_{\gamma^\uparrow, \gamma^\downarrow}$ and \eqref{eq:tau c prime:q}
\end{proof}

\subsection{The candidate value function and candidate optimal strategy}
\label{ssecapp:valuefn}
We first collect some properties of the extended functions $\ol n_\xi$ and  $\ol \kappa_{\gamma^\uparrow, \gamma^\downarrow}$ and $\ol q_{\gamma^\uparrow, \gamma^\downarrow}$. The following result follows directly from Proposition \ref{prop:n}, Corollary \ref{cor:kappa xi} and Proposition \ref{prop:q of p} together with Definition \ref{def:extension}.

\begin{prop}
\label{prop:ol kappa n q}
Let  $\gamma^\uparrow, \gamma^\downarrow \in [1, \infty)$ with $\xi:= \gamma^\uparrow \gamma^\downarrow  > 1$ be such that $\xi$ lies in the well-posedness range of transaction costs.
\begin{enumerate}
\item The functions  $\ol n_\xi$ and  $\ol \kappa_{\gamma^\uparrow, \gamma^\downarrow}$ are continuously differentiable and satisfy on $(-\infty, \infty)$ the relationship
\begin{equation*}
q \frac{\ol \kappa'_{\gamma^\uparrow, \gamma^\downarrow}(q)}{\ol \kappa_{\gamma^\uparrow, \gamma^\downarrow}(q)} = \frac{S}{1-S} \frac{\ol n'_\xi(q)}{\ol n_\xi(q)}.
\end{equation*}
\item The function $\ol q_{\gamma^\uparrow, \gamma^\downarrow}$ is continuously differentiable and on $[p_*(\gamma^\uparrow, \gamma^\downarrow), p^*(\gamma^\uparrow, \gamma^\downarrow)]$ satisfies,
\begin{equation}
p(1-p) \ol q'_{\gamma^\uparrow, \gamma^\downarrow}(p) = \frac{S}{1- S} \frac{2}{\sigma^2} \left( \ell(
\ol q_{\gamma^\uparrow, \gamma^\downarrow})(p) - n_\xi(\ol q_{\gamma^\uparrow, \gamma^\downarrow}(p))\right)
\end{equation}
and off $[p_*(\gamma^\uparrow, \gamma^\downarrow), p^*(\gamma^\uparrow, \gamma^\downarrow)]$ we have
\( 
p(1-p) \ol q'_{\gamma^\uparrow, \gamma^\downarrow}(p) = \frac{S}{1- S} \frac{2}{\sigma^2} \left( \ell(
\ol q_{\gamma^\uparrow, \gamma^\downarrow})(p) - m(
\ol q_{\gamma^\uparrow, \gamma^\downarrow}(p))\right)
\) 
\item Let $(x, y, \phi) \in \sS^\circ_{\gamma^\uparrow, \gamma^\downarrow}$. Then
\begin{equation}
x +  \phi y \ol \kappa_{\gamma^\uparrow, \gamma^\downarrow}\left(\ol q_{\gamma^\uparrow, \gamma^\downarrow} \Big(\frac{\phi y}{x + y \phi} \Big)\right) > 0,
\end{equation}
and
\begin{equation}
\frac{\phi y \ol \kappa_{\gamma^\uparrow, \gamma^\downarrow}\left(\ol q_{\gamma^\uparrow, \gamma^\downarrow} \Big(\frac{\phi y}{x + y \phi} \Big)\right)}{x +  \phi y \ol \kappa_{\gamma^\uparrow, \gamma^\downarrow}\left(\ol q_{\gamma^\uparrow, \gamma^\downarrow} \Big(\frac{\phi y}{x + y \phi} \Big)\right)} = \ol q_{\gamma^\uparrow, \gamma^\downarrow} \Big(\frac{\phi y}{x + y \phi}\Big).
\end{equation}

\end{enumerate}
\end{prop}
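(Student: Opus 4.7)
The plan is to dissect the three statements into contributions from inside the shadow no-trade region $[q_*(\xi),q^*(\xi)]$ (respectively $[p_*(\gamma^\uparrow,\gamma^\downarrow),p^*(\gamma^\uparrow,\gamma^\downarrow)]$), where the ODE characterisations of Proposition~\ref{prop:n}, Corollary~\ref{cor:kappa xi} and Proposition~\ref{prop:q of p} apply, and outside, where the relevant extensions are either constant (for $\ol n_\xi$ and $\ol \kappa_{\gamma^\uparrow,\gamma^\downarrow}$) or explicit Möbius transforms (for $\ol q_{\gamma^\uparrow,\gamma^\downarrow}$), so that everything reduces to direct computation. The main technical nuisance throughout is to verify $C^1$-regularity at the interfaces between these two regions and, when it lies inside, at the singular point $q=1$; this rests entirely on the smooth-fit boundary conditions \eqref{eq:prop:n:bound:deriv}, \eqref{eq:cor:kappa:bound:deriv} and the crossing property of Proposition~\ref{prop:n}(e).

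For (a), on $(q_*(\xi),q^*(\xi))\setminus\{1\}$ I would take the ratio of \eqref{eq:prop:n:ODE} and \eqref{eq:cor:kappa:ODE}: the logarithmic derivatives $n'_\xi/n_\xi$ and $\kappa'_{\gamma^\uparrow,\gamma^\downarrow}/\kappa_{\gamma^\uparrow,\gamma^\downarrow}$ share the common factor $(m-n_\xi)/(\ell-n_\xi)$, and eliminating it yields $q\,\kappa'_{\gamma^\uparrow,\gamma^\downarrow}/\kappa_{\gamma^\uparrow,\gamma^\downarrow} = \frac{S}{1-S}\,n'_\xi/n_\xi$ directly. Outside $[q_*(\xi),q^*(\xi)]$ both sides vanish since the extensions are constant, and matching at $q_*(\xi),q^*(\xi)$ is automatic from \eqref{eq:prop:n:bound:deriv} and \eqref{eq:cor:kappa:bound:deriv} (both one-sided derivatives equal zero). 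Passage through $q=1$ is already in Proposition~\ref{prop:n}(e) for $n_\xi$, and the corresponding regularity for $\kappa_{\gamma^\uparrow,\gamma^\downarrow}$ follows from \eqref{eq:cor:kappa:ODE} by an l'Hôpital-type argument using $n_\xi(1)=m(1)$ and $n'_\xi(1)=m'(1)$, which renders the apparent singularity in $(m-n_\xi)/(\ell-n_\xi)$ at $q=1$ removable.

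For (b), the interior formula follows from \eqref{eq:prop:q of p} by rewriting the denominator via the quadratic identity $\ell(p)-m(p)=\frac{1-S}{S}\frac{\sigma^2}{2}p(1-p)$ recorded in Section~\ref{sec:tcheuristics}. Off $[p_*(\gamma^\uparrow,\gamma^\downarrow),p^*(\gamma^\uparrow,\gamma^\downarrow)]$ the function $\ol q_{\gamma^\uparrow,\gamma^\downarrow}$ equals $\tau_{\gamma^\uparrow}$ (below $p_*$) or $\tau_{1/\gamma^\downarrow}$ (above $p^*$), and the Möbius identity \eqref{eq:tau c prime:q} gives $p(1-p)\,\ol q'_{\gamma^\uparrow,\gamma^\downarrow}(p)=\ol q_{\gamma^\uparrow,\gamma^\downarrow}(p)(1-\ol q_{\gamma^\uparrow,\gamma^\downarrow}(p))$, which coincides with $\frac{S}{1-S}\frac{2}{\sigma^2}(\ell(\ol q_{\gamma^\uparrow,\gamma^\downarrow}(p))-m(\ol q_{\gamma^\uparrow,\gamma^\downarrow}(p)))$ by the same quadratic identity applied in shadow coordinates. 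Continuous differentiability at the boundaries is then automatic because at $p_*(\gamma^\uparrow,\gamma^\downarrow)$ and $p^*(\gamma^\uparrow,\gamma^\downarrow)$ the interior formula reduces to the exterior one, using $n_\xi(q_*(\xi))=m(q_*(\xi))$ and $n_\xi(q^*(\xi))=m(q^*(\xi))$, so the one-sided derivatives coincide.

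For (c), in the shadow no-trade region I would use the construction in the proof of Proposition~\ref{prop:q of p} that $q_{\gamma^\uparrow,\gamma^\downarrow}$ is the inverse of $p_{\gamma^\uparrow,\gamma^\downarrow}(q)=\tau_{1/\kappa_{\gamma^\uparrow,\gamma^\downarrow}(q)}(q)$. Setting $p=\phi y/(x+\phi y)$, which is well-defined and finite because case analysis on the sign of $\phi$ together with the defining inequality of $\sS^\circ_{\gamma^\uparrow,\gamma^\downarrow}$ forces $x+\phi y>0$, and $q=\ol q_{\gamma^\uparrow,\gamma^\downarrow}(p)$, the identity for $p_{\gamma^\uparrow,\gamma^\downarrow}$ unwinds algebraically to $\phi y\,\kappa_{\gamma^\uparrow,\gamma^\downarrow}(q) = q\,(x+\phi y\,\kappa_{\gamma^\uparrow,\gamma^\downarrow}(q))$, which is exactly the self-consistency relation; positivity follows by writing $x+\phi y\,\kappa_{\gamma^\uparrow,\gamma^\downarrow}(q)=\phi y\,\kappa_{\gamma^\uparrow,\gamma^\downarrow}(q)/q$ and noting that $p$ and $q$ share a sign by Lemma~\ref{lem:p low p up}, so $\phi$ and $q$ do too. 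Outside the no-trade region the claim reduces to the direct algebraic identity $\phi y c/(x+\phi y c)=\tau_c(\phi y/(x+\phi y))$ for $c\in\{\gamma^\uparrow,1/\gamma^\downarrow\}$, and positivity of the relevant denominator ($x+\gamma^\uparrow\phi y$ below $p_*$, $\gamma^\downarrow x+\phi y$ above $p^*$) is precisely the solvency inequality defining $\sS^\circ_{\gamma^\uparrow,\gamma^\downarrow}$, which I would verify by splitting into the cases $\phi>0$, $\phi<0$ and $\phi=0$.
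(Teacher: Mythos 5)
Your argument is correct and is exactly the route the paper intends: the paper gives no written proof, stating only that the result follows directly from Proposition~\ref{prop:n}, Corollary~\ref{cor:kappa xi}, Proposition~\ref{prop:q of p} and Definition~\ref{def:extension}, and your proposal is a faithful fleshing-out of that — combining the interior ODEs with the constant/M\"obius extensions, using $\ell(q)-m(q)=\frac{1-S}{S}\frac{\sigma^2}{2}q(1-q)$ and \eqref{eq:tau c prime:q}, and checking $C^1$ matching via the smooth-fit conditions \eqref{eq:prop:n:bound:deriv} and \eqref{eq:cor:kappa:bound:deriv}.
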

We proceed to study properties of the candidate value function.
\begin{prop}
\label{prop:hatV:properties}
Let  $\gamma^\uparrow, \gamma^\downarrow \in [1, \infty)$ with $\xi:= \gamma^\uparrow \gamma^\downarrow  > 1$ be such that $\xi$ lies in the well-posedness range of transaction costs. Then $\hat V^{\gamma^\uparrow, \gamma^\downarrow}$ is continuously differentiable with respect to $t, x, \phi$ and twice continuously differentiable with respect to $y$. Moreover, fix $(t, x, y, \phi) \in [0, \infty) \times \sS^\circ_{\gamma^\uparrow, \gamma^\downarrow}$. Then
\begin{enumerate}
\item $\hat V^{\gamma^\uparrow, \gamma^\downarrow}_x(t, x, y, \phi) > 0$ and $\hat V^{\gamma^\uparrow, \gamma^\downarrow}_\phi(t, x, y, \phi) > 0$, we have the identity
\begin{equation}
\frac{\hat V^{\gamma^\uparrow, \gamma^\downarrow}_\phi(t, x, y, \phi) }{y  \hat V^{\gamma^\uparrow, \gamma^\downarrow}_x(t, x, y, \phi) } = \ol \kappa_{\gamma^\uparrow, \gamma^\downarrow}\Big(\ol q_{\gamma^\uparrow, \gamma^\downarrow}\Big(\frac{\phi y}{x+ \phi y}\Big)\Big), \label{eq:prop:hatV:properties:kappa}
\end{equation}
and \eqref{eq:prop:hatV:properties:kappa} equals $\gamma^\uparrow$ if $\frac{\phi y}{x + \phi y} \in (\tau_{\frac{1}{\gamma^\uparrow}(\ol \infty)}= \frac{-1}{\gamma^\uparrow -1}, p_*(\gamma^\uparrow, \gamma^\downarrow)]$ and  $\frac{1}{\gamma^\downarrow}$ if $\frac{\phi y}{x + \phi y} \in [p^*(\gamma^\uparrow, \gamma^\downarrow), \tau_{\gamma^\downarrow}(\ol \infty)=\frac{\gamma^{\downarrow}}{\gamma^{\downarrow} -1})$, respectively.
\item 
\begin{eqnarray}
\lefteqn{g_{EZ} (t, c, \hat V^{\gamma^\uparrow, \gamma^\downarrow}(t, x, y, \phi) ) -c V^{\gamma^\uparrow, \gamma^\downarrow}_x (t, x, y, \phi)} \\
& \leq &
\theta S \hat V^{\gamma^\uparrow, \gamma^\downarrow}(t, x, y, \phi)  \ol n_{\xi}\Big(\ol q_{\gamma^\uparrow, \gamma^\downarrow}\Big(\frac{\phi y}{x+ \phi y}\Big)\Big) \label{eq:prop:hatV:properties:HJB ineq:01}
 \\
& \leq &  -\hat V^{\gamma^\uparrow, \gamma^\downarrow}_t(t, x, y, \phi)  - r x \hat V^{\gamma^\uparrow, \gamma^\downarrow}_x (t, x, y, \phi) - \mu y \hat V^{\gamma^\uparrow, \gamma^\downarrow}_y(t, x, y, \phi)  - \frac{1}{2} \sigma^2 y^2 \hat V^{\gamma^\uparrow, \gamma^\downarrow}_{yy}(t, x, y, \phi), \qquad\;\; \label{eq:prop:hatV:properties:HJB ineq:02}
\end{eqnarray}
where the inequality in \eqref{eq:prop:hatV:properties:HJB ineq:01} is an equality if $c = (x + \phi y \ol \kappa_{\gamma^\uparrow, \gamma^\downarrow}(\ol q_{\gamma^\uparrow, \gamma^\downarrow}(\frac{\phi y}{x+ \phi y}))) \ol n_{\xi}(\ol q_{\gamma^\uparrow, \gamma^\downarrow}(\frac{\phi y}{x+ \phi y}))$, and the inequality in \eqref{eq:prop:hatV:properties:HJB ineq:02} is an equality if $\frac{\phi y}{x + \phi y} \in [p_*(\gamma^\uparrow, \gamma^\downarrow), p^*(\gamma^\uparrow, \gamma^\downarrow)].$
\end{enumerate}

\end{prop}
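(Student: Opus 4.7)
The plan is to introduce convenient shorthand and then reduce every computation to the one-variable algebra encoded in the ``key equation'' that comes from Proposition~\ref{prop:ol kappa n q}. Write $p := \phi y/(x+\phi y)$, $Q := \ol q_{\gamma^\uparrow, \gamma^\downarrow}(p)$, $K := \ol \kappa_{\gamma^\uparrow, \gamma^\downarrow}(Q)$, $N := \ol n_\xi(Q)$ and $W := x + \phi y K$, so that by Proposition~\ref{prop:ol kappa n q}(c), $W>0$ with $WQ=\phi y K$ and $W(1-Q)=x$, and $\hat V = e^{-\delta\theta t} W^{1-R}/(1-R)\cdot N^{-\theta S}$. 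Regularity follows piecewise: inside the no-trade region the $C^1$-smoothness of $\ol q, \ol\kappa, \ol n_\xi$ (Propositions~\ref{prop:n},~\ref{prop:q of p} and Corollary~\ref{cor:kappa xi}, including across the singular point $Q=1$ by Proposition~\ref{prop:n}(e)) yields $C^2$-regularity in $y$ and $C^1$ in $(t,x,\phi)$; outside the no-trade region $\ol\kappa$ and $\ol n_\xi$ are locally constant and $\hat V$ reduces to an explicit power-of-linear expression; and at $p\in\{p_*,p^*\}$ the smooth-fit conditions $n'_\xi(q_*^*)=0=\kappa'_{\gamma^\uparrow,\gamma^\downarrow}(q_*^*)$ ensure that all terms involving $K'$ and $N'$ vanish at the transition, giving second-order matching. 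The key observation for statement~(a) is that treating $Q$ momentarily as a free variable, Proposition~\ref{prop:ol kappa n q}(a) combined with $\phi y/W=Q/K$ gives $\partial_Q\hat V = \hat V_x\, W\,[Q\ol\kappa'(Q)/K - (S/(1-S))\ol n'_\xi(Q)/N]=0$. By the chain rule, $\hat V_x=e^{-\delta\theta t}W^{-R}N^{-\theta S}>0$, $\hat V_\phi = yK\hat V_x>0$, and the ratio $\hat V_\phi/(y\hat V_x)=K$, which by Definition~\ref{def:extension} equals $\gamma^\uparrow$ on the left extension and $1/\gamma^\downarrow$ on the right.

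For the first inequality in (b), the map $c\mapsto g_{\mathrm{EZ}}(t,c,\hat V)-c\hat V_x$ is strictly concave on $(0,\infty)$; its first-order condition $e^{-\delta t}c^{-S}((1-R)\hat V)^\rho=\hat V_x$ has the unique solution $c^*=WN$, precisely the candidate optimal consumption~\eqref{eq:defC*}. Substituting $c^*$ and using the identities $(1-R)\rho+R=S$, $\theta S(1-\rho)=S$, $1-\theta\rho=1/\theta$ and $\theta S/(1-R)=S/(1-S)$, both $g_{\mathrm{EZ}}(t,c^*,\hat V)$ and $c^*\hat V_x$ simplify to scalar multiples of $e^{-\delta\theta t}W^{1-R}N^{1-\theta S}$ whose difference equals $\theta S\hat V N$, establishing the first inequality with equality at $c=c^*$.

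For the second inequality, differentiating the identity $y\hat V_y=(1-R)Q\hat V$ and using $y\partial_y p=p(1-p)$ yields $y^2\hat V_{yy}=(1-R)\hat V[(1-R)Q^2-Q+B(Q)]$, where $B(Q):=p(1-p)\ol q'(p)$. By~\eqref{eq:consitency:cond:diff} combined with Proposition~\ref{prop:ol kappa n q}(a), $B(Q)=A(Q):=(1/(Q(1-Q))-(S/((1-S)Q))N'/N)^{-1}$ inside the no-trade region and $B(Q)=Q(1-Q)$ outside. Expanding the HJB operator and simplifying (using $\theta(1-S)=1-R$, $\mu-r=\sigma\lambda$ and $\alpha=\delta-r(1-S)$) reduces it to
\[
-\hat V_t-rx\hat V_x-\mu y\hat V_y-\tfrac{\sigma^2}{2}y^2\hat V_{yy}=\theta S\hat V\ell(Q)-\tfrac{(1-R)\sigma^2}{2}\hat V B(Q).
\]
Inside the no-trade region, the ODE~\eqref{eq:prop:n:ODE} is equivalent to $\ell(Q)-N=\frac{\sigma^2(1-S)}{2S}A(Q)$ (this is exactly the heuristic HJB relation~\eqref{eq:HJB}), so the right-hand side equals $\theta S\hat V N$, giving equality. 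Outside the no-trade region, using $\ell(Q)-m(Q)=\frac{1-S}{S}\frac{\sigma^2}{2}Q(1-Q)$, the right-hand side simplifies to $\theta S\hat V m(Q)$; since $N=m(q_*(\xi))$ on $\{p<p_*\}$ and $N=m(q^*(\xi))$ on $\{p>p^*\}$, the inequality $\theta S\hat V m(Q)\ge\theta S\hat V N$ reduces, once the sign of $\hat V$ is taken into account, to a monotonicity property of the quadratic $m$: it is convex with minimum at $q_M$ when $R,S<1$ (where $\hat V>0$), and concave with maximum at $q_M$ when $R,S>1$ (where $\hat V<0$). In both cases the bounds in Proposition~\ref{prop:n} place the extension region on the same side of $q_M$ as the corresponding boundary, which delivers the inequality. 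The hardest part will be the bookkeeping for the algebraic reduction of the HJB operator to the clean form above, in particular correctly handling the singular point $Q=1$ when it lies in $(q_*(\xi),q^*(\xi))$ and verifying that the smooth-fit conditions at $p_*$ and $p^*$ deliver the required $C^2$-matching of $\hat V$ in $y$.
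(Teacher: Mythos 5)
Your proposal is correct and follows essentially the same route as the paper's proof: the same first-order partials $x\hat V_x=(1-R)\hat V(1-Q)$, $\hat V_\phi=yK\hat V_x$, the same two-case expression for $y^2\hat V_{yy}$, the same pointwise maximisation over $c$ yielding $c^*=WN$ and the value $\theta S\hat V N$, and the same convexity argument for $m$ (equivalently, convexity of $m/(1-R)$) combined with $q_*(\xi)\le q_M\le q^*(\xi)$ to handle the region outside the no-trade wedge. The only cosmetic difference is your intermediate reduction of the generator to $\theta S\hat V\ell(Q)-\tfrac{(1-R)\sigma^2}{2}\hat V B(Q)$, which the paper folds directly into the definition of $m$; the algebra checks out in both inside and outside regimes.
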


\begin{proof}
First, it follows from the continuous differentiability of $\ol n_{\xi}$, $\ol \kappa_{\gamma^\uparrow, \gamma^\downarrow}$ and $\ol q_{\gamma^\uparrow, \gamma^\downarrow}$  that $\hat V^{\gamma^\uparrow, \gamma^\downarrow}$ is continuously differentiable with respect to all its arguments. Moreover, dropping the references to $\gamma^\uparrow$, $\gamma^\downarrow$ and $\xi$ and the arguments for simplicity and using Proposition \ref{prop:ol kappa n q}(a), we obtain
\begin{align}
\hat V_t &= -\delta \theta \hat V, \label{eq:Vt:ver} \\
\hat V_{x} & =(1-R) \hat V \frac{1}{x+\phi y \ol \kappa(\ol q)} \label{eq:Vx:ver}\\
 \hat V_{\phi} & =(1-R)\hat V \frac{y \ol \kappa(\ol q)}{x+\phi y \ol \kappa(\ol q)}, \label{eq:Vphi:ver} \\
\hat V_{y} & =(1-R) \hat V \frac{\phi \ol \kappa(\ol q)}{x+\phi y \ol \kappa(\ol q)} = (1-R) \hat V \frac{\ol q}{y} \label{eq:Vy:ver}
\end{align}
Next, it follows from the continuous differentiability of $\ol n_{\xi}$ and $\ol \kappa_{\gamma^\uparrow, \gamma^\downarrow}$ and $\ol q_{\gamma^\uparrow, \gamma^\downarrow}$ and  \eqref{eq:Vy:ver} that  $\hat{V}^{\gamma^\uparrow, \gamma^\downarrow}$ is twice continuously differentiable with respect to $y$. Again dropping the references to $\gamma^\uparrow$, $\gamma^\downarrow$ and $\xi$ and the arguments,
and using that $\hat{V}_q=0$ we obtain
\begin{eqnarray*}
y^2 \hat V_{yy} & = & \hat{V} (1-R) {\ol q} ((1-R){\ol q} - 1) + (1-R) {\hat V} y \frac{\partial p}{\partial y} \frac{\partial {\ol q}}{\partial p} \\
& = & -R(1-R) \hat V \ol q^2 + \hat{V}(1-R) \left( p(1-p) \frac{\partial {\ol q}}{\partial p} - {\ol q}(1- {\ol q}) \right)
\end{eqnarray*}
In summary,
we have
\begin{equation}
y^2 \hat V_{yy} =
\begin{cases}
-R(1-R) \hat V \ol q^2 &\text{if } \frac{\phi y}{x + \phi y} \in  (\tau_{\frac{1}{\gamma^\uparrow}}(\ol \infty), p_*(\gamma^\uparrow, \gamma^\downarrow)) \cup (p^*(\gamma^\uparrow, \gamma^\downarrow), \tau_{\gamma^\downarrow}(\ol \infty)), \\
-R(1-R) \hat V \ol q^2 + \frac{2 \theta S}{\sigma^2} \hat V (m(\ol q) - \ol n(\ol q))&\text{if } \frac{\phi y}{x + \phi y} \in
[p_*(\gamma^\uparrow, \gamma^\downarrow), p^*(\gamma^\uparrow, \gamma^\downarrow)]. \label{eq:Vyy:ver}
\end{cases}
\end{equation}

(a) This follows directly from \eqref{eq:Vx:ver} and \eqref{eq:Vphi:ver}.

(b) Using \eqref{eq:Vx:ver} and maximising $g_{EZ} (t, c, \hat V^{\gamma^\uparrow, \gamma^\downarrow}(t, x, y, \phi) ) -c V^{\gamma^\uparrow, \gamma^\downarrow}_x (t, x, y, \phi)$ over $c$, we obtain that the maximum is attained at
\begin{equation}
\hat c = \Big(x + \phi y \ol \kappa_{\gamma^\uparrow, \gamma^\downarrow}\left(\ol q_{\gamma^\uparrow, \gamma^\downarrow}\left(\frac{\phi y}{x+ \phi y}\right)\right)\Big) \ol n_{\xi}\Big(\ol q_{\gamma^\uparrow, \gamma^\downarrow}\Big(\frac{\phi y}{x+ \phi y}\Big)\Big).
\end{equation}
and that the corresponding maximal value is given by
\begin{equation}
g_{EZ} (t, \hat c, \hat V^{\gamma^\uparrow, \gamma^\downarrow}(t, x, y, \phi) ) -\hat c V^{\gamma^\uparrow, \gamma^\downarrow}_x (t, x, y, \phi) = \theta S \hat V^{\gamma^\uparrow, \gamma^\downarrow}(t, x, y, \phi)  \ol n_{\xi}\Big(\ol q_{\gamma^\uparrow, \gamma^\downarrow}\Big(\frac{\phi y}{x+ \phi y}\Big)\Big).
\end{equation}
Thus, we have \eqref{eq:prop:hatV:properties:HJB ineq:01}

Next, using \eqref{eq:Vt:ver} -- \eqref{eq:Vphi:ver} with Proposition \ref{prop:ol kappa n q}(c) and \eqref{eq:Vyy:ver} together with the definition of $m$ in \eqref{eq:m}, we obtain
\begin{align}
 -\hat V_t - r x \hat V_x - \mu y \hat V_y -\frac{1}{2} \sigma^2y^2 \hat V_{yy} =
\begin{cases}
\theta S \hat V m(\ol q)&\text{if } \frac{\phi y}{x + \phi y} \in  (\frac{1}{\gamma^\uparrow}(\ol \infty), p_*(\gamma^\uparrow, \gamma^\downarrow)) \cup (p^*(\gamma^\uparrow, \gamma^\downarrow), \tau_{\gamma^\downarrow}(\ol \infty)) ,\\
 \theta S \hat V \ol n(\ol q) &\text{if } \frac{\phi y}{x + \phi y} \in
[p_*(\gamma^\uparrow, \gamma^\downarrow), p^*(\gamma^\uparrow, \gamma^\downarrow)].
\end{cases}
\end{align}
Now, in case that $\frac{\phi y}{x + \phi y} \in
[p_*(\gamma^\uparrow, \gamma^\downarrow), p^*(\gamma^\uparrow, \gamma^\downarrow)]$, we obtain an equality in
\eqref{eq:prop:hatV:properties:HJB ineq:02}. By contrast, in case that $\frac{\phi y}{x + \phi y} \in  (\frac{1}{\gamma^\uparrow}(\ol \infty), p_*(\gamma^\uparrow, \gamma^\downarrow)) \cup (p^*(\gamma^\uparrow, \gamma^\downarrow), \tau_{\gamma^\downarrow}(\ol \infty))$, we obtain an inequality in
\eqref{eq:prop:hatV:properties:HJB ineq:02} noting that $\theta S (1-R) \hat V$ is positive and $\frac{m}{1-R}$ is a convex quadratic function so that $\frac{m(\ol q)}{1-R} \geq \frac{m(q_*(\xi))}{1-R} = \frac{\ol n(q_*(\xi))}{1-R}$ if $\ol q \leq q_*(\xi) \leq q_M$ and $\frac{m(\ol q)}{1-R} \geq \frac{m(q^*(\xi))}{1-R} = \frac{\ol n(q^*(\xi)}{1-R}= \frac{n(\ol q)}{1-R}$ if $q \geq q^*(\xi) \geq q_M$, where we have used  Proposition \ref{prop:n} for the inequalities $q_*(\xi) \leq q_M \leq q^*(\xi)$.
\end{proof}

\begin{cor}
\label{cor:hatV:ineq}
Let  $\gamma^\uparrow, \gamma^\downarrow \in [1, \infty)$ with $\xi:= \gamma^\uparrow \gamma^\downarrow  > 1$ be such that $\xi$ is in the well-posedness range of transaction costs.
\begin{enumerate}
\item
Fix $(x, y, \phi) \in  \sS^\circ_{(\gamma^\uparrow, \gamma^\downarrow)}$ and $(x', y, \phi') \in  \sS_{ \gamma^\uparrow, \gamma^\downarrow}$. Then $(x+ x', y, \phi + \phi') \in  \sS^\circ_{(\gamma^\uparrow, \gamma^\downarrow)}$ and
\begin{equation}
\hat V^{\gamma ^\uparrow, \gamma^\downarrow}(x+x', y, \phi+\phi') \geq \hat V^{\gamma ^\uparrow, \gamma^\downarrow}(x, y, \phi).
\end{equation}
\item If $(x', y, \phi') \in \sS^\circ_{\gamma^\uparrow, \gamma^\downarrow}$ is such that $x' :=  x - \gamma^\uparrow y (\phi' - \phi)^{+} + \frac{y}{\gamma^\downarrow} (\phi'- \phi)^-$ (corresponding to an initial bulk trade), then $\hat V^{\gamma^\uparrow, \gamma^\downarrow}(t, x, y, \phi) \geq \hat V^{\gamma^\uparrow, \gamma^\downarrow}(t, x', y, \phi')$. Moreover, the inequality is an equality if either $\frac{\phi y}{x + \phi y} < p_*(\gamma^\uparrow,\gamma^\downarrow)$ and $\frac{\phi' y}{x + \phi' y} = p_*(\gamma^\uparrow,\gamma^\downarrow)$ or $\frac{\phi y}{x + \phi y} > p_*(\gamma^\uparrow,\gamma^\downarrow)$ and $\frac{\phi' y}{x + \phi' y} = p_*(\gamma^\uparrow,\gamma^\downarrow)$.
\end{enumerate}
\end{cor}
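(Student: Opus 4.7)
The plan is to prove (b) first by differentiating along a bulk-trade path, and then to deduce (a) from (b) by combining with an auxiliary bulk trade that eliminates the extra stock holding. The central input throughout is Proposition \ref{prop:hatV:properties}(a), which identifies the marginal rate of substitution $\hat V^{\gamma^\uparrow,\gamma^\downarrow}_\phi/(y\hat V^{\gamma^\uparrow,\gamma^\downarrow}_x)$ with $\ol\kappa_{\gamma^\uparrow,\gamma^\downarrow}(\ol q_{\gamma^\uparrow,\gamma^\downarrow}(\cdot)) \in [1/\gamma^\downarrow,\gamma^\uparrow]$, taking the boundary value $\gamma^\uparrow$ outside the no-trade region on the buy side and $1/\gamma^\downarrow$ outside on the sell side.

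For (b), fix $t,x,y,\phi$ and define the one-parameter family
\[ F(s) := \hat V^{\gamma^\uparrow,\gamma^\downarrow}\bigl(t,\, x - \gamma^\uparrow y s^+ + \tfrac{y}{\gamma^\downarrow}s^-,\, y,\, \phi+s\bigr), \qquad s \in \RR, \]
so that $F(0) = \hat V^{\gamma^\uparrow,\gamma^\downarrow}(t,x,y,\phi)$ and $F(\phi'-\phi) = \hat V^{\gamma^\uparrow,\gamma^\downarrow}(t,x',y,\phi')$. A first step is to check that the evaluation point stays in $\sS^\circ_{\gamma^\uparrow,\gamma^\downarrow}$ for all $s$ between $0$ and $\phi'-\phi$: this follows because the liquidation value $L(X,\Phi) := X + \tfrac{y}{\gamma^\downarrow}\Phi^+ - \gamma^\uparrow y \Phi^-$ is non-increasing along the path (each marginal trade pays a portion of the round-trip cost), so $L$ stays bounded below by $\min\{L(x,\phi),L(x',\phi')\} > 0$. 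The chain rule and Proposition \ref{prop:hatV:properties}(a) then give, for $s \neq 0$,
\[ F'(s) = \begin{cases} y\hat V^{\gamma^\uparrow,\gamma^\downarrow}_x\bigl(\ol\kappa_{\gamma^\uparrow,\gamma^\downarrow}(\ol q) - \gamma^\uparrow\bigr) \leq 0, & s > 0, \\ y\hat V^{\gamma^\uparrow,\gamma^\downarrow}_x\bigl(\ol\kappa_{\gamma^\uparrow,\gamma^\downarrow}(\ol q) - \tfrac{1}{\gamma^\downarrow}\bigr) \geq 0, & s < 0, \end{cases} \]
where $\ol q$ abbreviates the shadow fraction evaluated at the current argument. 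Since $F$ is continuous at $0$, it attains its maximum there, and setting $s=\phi'-\phi$ yields the inequality. For the equality case with $\phi' > \phi$, a short calculation shows that the real fraction $P(s) := \tfrac{(\phi+s)y}{x-\gamma^\uparrow ys + (\phi+s)y}$ is strictly monotone in $s$ along the path (the sign of its derivative reduces to $x + \gamma^\uparrow \phi y$, which is positive on $\sS^\circ_{\gamma^\uparrow,\gamma^\downarrow}$), so the two stated endpoint conditions force $P(s) \leq p_*(\gamma^\uparrow,\gamma^\downarrow)$ throughout, hence $\ol\kappa_{\gamma^\uparrow,\gamma^\downarrow}(\ol q) \equiv \gamma^\uparrow$ and $F' \equiv 0$. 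The selling case is mirror-symmetric.

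For (a), I would construct a bulk trade from $(x+x',y,\phi+\phi')$ back to stock holding $\phi$; the post-trade cash is
\[ \tilde x := (x+x') - \gamma^\uparrow y(\phi')^- + \tfrac{y}{\gamma^\downarrow}(\phi')^+ = x + L(x',\phi'), \]
and $L(x',\phi') \geq 0$ since $(x',y,\phi') \in \sS_{\gamma^\uparrow,\gamma^\downarrow}$. Superadditivity of the concave, positively $1$-homogeneous function $\Phi \mapsto \tfrac{\Phi^+}{\gamma^\downarrow} - \gamma^\uparrow\Phi^-$ gives $L(x+x',\phi+\phi') \geq L(x,\phi) + L(x',\phi') > 0$, confirming $(x+x',y,\phi+\phi') \in \sS^\circ_{\gamma^\uparrow,\gamma^\downarrow}$. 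Applying (b) to this bulk trade, and then using $\hat V^{\gamma^\uparrow,\gamma^\downarrow}_x > 0$ from Proposition \ref{prop:hatV:properties}(a), yields the chain
\[ \hat V^{\gamma^\uparrow,\gamma^\downarrow}(t,x+x',y,\phi+\phi') \geq \hat V^{\gamma^\uparrow,\gamma^\downarrow}(t,\tilde x,y,\phi) \geq \hat V^{\gamma^\uparrow,\gamma^\downarrow}(t,x,y,\phi). \]

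The main bookkeeping task is verifying that the intermediate states along the bulk-trade path stay in $\sS^\circ_{\gamma^\uparrow,\gamma^\downarrow}$ (handled by piecewise linear monotonicity of $L$); the non-smoothness of $F$ at $s=0$ is harmless because continuity combined with one-sided sign information on $F'$ already pins the maximum at zero. Once these are dispatched, the corollary is an immediate consequence of Proposition \ref{prop:hatV:properties}(a).
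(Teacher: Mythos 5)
Your proposal is correct, and part (b) is essentially the paper's own argument: you parametrise the bulk trade by the number of shares rather than by a convex-combination parameter $\epsilon$, but the resulting derivative computation is the same, reducing to the sign of $\ol\kappa_{\gamma^\uparrow,\gamma^\downarrow}(\ol q)-\gamma^\uparrow$ (resp.\ $\ol\kappa_{\gamma^\uparrow,\gamma^\downarrow}(\ol q)-\tfrac{1}{\gamma^\downarrow}$) via Proposition \ref{prop:hatV:properties}(a), and your monotonicity argument for the real fraction $P(s)$ handles the equality case just as the paper does. Where you genuinely diverge is part (a): the paper proves it directly, by integrating the directional derivative along the straight segment from $(x,\phi)$ to $(x+x',\phi+\phi')$ and observing that $x'\hat V_x+\phi'\hat V_\phi=\hat V_x\left(x'+\phi' y\,\ol\kappa(\ol q)\right)\geq 0$ precisely because $\ol\kappa\in[\tfrac{1}{\gamma^\downarrow},\gamma^\uparrow]$ and $(x',y,\phi')$ is solvent; you instead deduce (a) from (b) by liquidating the incremental endowment $(x',\phi')$ into cash worth $L(x',\phi')\geq 0$ and then invoking $\hat V_x>0$. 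Both rest on the same key input (the marginal-rate-of-substitution identity and its bounds), so neither buys extra generality; the paper's route is slightly shorter and avoids the bookkeeping of checking solvency along the liquidation path and the superadditivity of $\phi\mapsto\tfrac{\phi^+}{\gamma^\downarrow}-\gamma^\uparrow\phi^-$, while yours has the merit of a cleaner economic reading (an extra solvent endowment can always be converted into nonnegative extra cash, and more cash cannot hurt) and of making part (a) a formal consequence of part (b). All the delicate points — that the intermediate states remain in $\sS^\circ_{\gamma^\uparrow,\gamma^\downarrow}$, that the kink of $F$ at $s=0$ is harmless, and that $(\tilde x,y,\phi)\in\sS^\circ_{\gamma^\uparrow,\gamma^\downarrow}$ so that (b) applies — are correctly addressed.
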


\begin{proof}
(a) First, it follows from the definition of $\sS_{ \gamma^\uparrow, \gamma^\downarrow}$ and $\sS^\circ_{ \gamma^\uparrow, \gamma^\downarrow}$ that $x+\epsilon x', y, \phi+\epsilon \phi' \in \sS^\circ_{\gamma^\uparrow, \gamma^\downarrow}$ for all $\epsilon \in [0, 1]$. Hence, it suffices to show that $x' \hat V^{\gamma ^\uparrow, \gamma^\downarrow}_x(x+\epsilon x', y, \phi+\epsilon \phi') + \phi' \hat V^{\gamma ^\uparrow, \gamma^\downarrow}_\phi(x+\epsilon x', y, \phi+\epsilon \phi') \geq 0$ for all $\epsilon \in [0, 1]$. So fix $\epsilon \in [0, 1]$. Then by Proposition \ref{prop:hatV:properties}(a) and the definition of $\sS_{ \gamma^\uparrow, \gamma^\downarrow}$,
\begin{align}
&x' \hat V^{\gamma ^\uparrow, \gamma^\downarrow}_x(x+\epsilon x', y, \phi+\epsilon \phi') + \phi' \hat V^{\gamma ^\uparrow, \gamma^\downarrow}_\phi(x+\epsilon x', y, \phi+\epsilon \phi')  \\
&= V^{\gamma ^\uparrow, \gamma^\downarrow}_x(x+\epsilon x', s, \phi+\epsilon \phi') \left(x' + \phi' y \ol \kappa_{\gamma^\uparrow, \gamma^\downarrow} \Big(\ol q_{\gamma^\uparrow, \gamma^\downarrow}\Big(\frac{(\phi+\epsilon \phi')y}{x+\epsilon x' + (\phi+\epsilon \phi') y}\Big)\Big)\right) \geq 0.
\end{align}
where we use that if $\phi'>0$ then $x' + \phi'y {\ol \kappa}(\ol q) \geq x' + \frac{\phi' y}{\gamma^\uparrow} \geq 0$ since $(x',y,\phi') \in \sS_{ \gamma^\uparrow, \gamma^\downarrow}$ and if $\phi'<0$ then $x' + \phi'y {\ol \kappa}(\ol q) \geq x' + \phi' y \gamma^\downarrow \geq 0$ again since $(x',y,\phi') \in \sS_{ \gamma^\uparrow, \gamma^\downarrow}$.

(b) We only consider the case that $\phi' > \phi$, the case $\phi' < \phi$ is similar. Then $x' = x -\gamma^\uparrow (\phi' -\phi) y$
Define the function $\epsilon: [0, 1] \to (0, \infty)$ by $f(\epsilon) = V^{\gamma^\uparrow, \gamma^\downarrow}(t, (1-\epsilon) x + \epsilon x', y, (1-\epsilon) \phi + \epsilon \phi')$. Then $f(0) = \hat V^{\gamma^\uparrow, \gamma^\downarrow}(t,  x , y, \phi )$ and $f(1) = \hat V^{\gamma^\uparrow, \gamma^\downarrow}(t,  x' , y, \phi' )$. It suffices to show that $f'(\epsilon) \leq 0$ and $f'(\epsilon)= 0$ if $\frac{\phi y}{x+ \phi y} < \frac{\phi' y}{x+ \phi' y} = p_*(\gamma^\uparrow, \gamma^\downarrow)$. Dropping the arguments for simplicity, the claim follows from part (a) via
\begin{align*}
f'(\epsilon) &= \hat V^{\gamma^\uparrow, \gamma^\downarrow}_x (x' -x) + \hat V^{\gamma^\uparrow, \gamma^\downarrow}_\phi (\phi' - \phi) = -\hat V^{\gamma^\uparrow, \gamma^\downarrow}_x y \gamma^\uparrow (\phi' -\phi) + \hat V^{\gamma^\uparrow, \gamma^\downarrow}_\phi (\phi' - \phi) \\
&= \hat V^{\gamma^\uparrow, \gamma^\downarrow}_x y (\phi'- \phi) \left(\ol \kappa_{\gamma^\uparrow, \gamma^\downarrow}\Big(\ol q_{\gamma^\uparrow, \gamma^\downarrow}\Big(\frac{((1- \epsilon) \phi + \epsilon \phi')y}{(1-\epsilon) x + \epsilon x' + ((1- \epsilon) \phi + \epsilon \phi') y}\Big)\Big) - \gamma^\uparrow\right).
\end{align*}

\end{proof}

\begin{proof}[Proof of Theorem \ref{thm:cand solution 1}]
For better readability, we are going to drop the subscripts and arguments $\xi$ and $\gamma^\uparrow, \gamma^\downarrow$. For future reference define
\begin{eqnarray}
a_Q(q) & = &  \frac{2S}{\sigma(1-S)} (\ell(q) - n(q)) = \sigma \frac{\ell(q)-n_\xi(q)}{D} \label{eq:defaQ}; \\
b_Q(q) & = & q \left( n(q) - 2 \theta S \{\ell(q )-n(q) \} \right), \label{eq:defbQ}
\end{eqnarray}
so that $d \hat{Q}_t = a_Q(\hat{Q}_t) \dd B_t + b_Q(\hat{Q}_t) \dd t + \dd G^\uparrow_t - \dd G^\downarrow_t$.

(a) Existence and uniqueness of a solution to \eqref{eq:dQdef} follows from standard results on reflecting one-dimensional diffusions, see, for example, \cite[Theorem 4.1]{tanaka:79}. The additional claim is trivial on $\llbracket 0, \tau \llbracket$. For the argument on $\llbracket \tau, \infty \llbracket$, first note that $a_Q(1) = 0$ and $b_Q(1) = n_\xi(1) > 0$ by Proposition \ref{prop:n}(e) and the fact that $m(1) = \ell(1)$.

If $q=1$, then $\hat Q_\tau = 1$ on $\{\tau < \infty\}$ since $\hat Q \in [q_*, q^*]$. Moreover, it is not difficult to check that  $\dd G_t  = b_Q(1) \dd t$ if $t \geq \tau$, and the process $\hat Q$ remains constant on $\llbracket \tau, \infty \llbracket$.

By contrast, if  $q^* > 1$, then $\hat Q_t > 1$ on $\{\tau < t\}$ by Feller's boundary test for diffusions (see e.g.~\cite[Proposition 5.5.22 (a)]{karatzas1991brownian}) since $b_Q(q) > 0$ and $a_Q(q) = O(1 - q)$ in a neighbourhood of $1$ because $\ell'(1) - n_\xi'(1) = \ell'(1) - m'(1) \neq 0$ by Proposition \eqref{eq:dQdef}(e) and the fact that $q_M \neq 1$.

(b) Let $(\hPhi, \hC, \hX)$ be defined as in the statement of the proposition. First, we argue that we cannot have both $ \phi y \gamma^\uparrow (1 - q_*) < q_* x$ and $\frac{\phi y}{\gamma^\downarrow} (1 - q^*) > q^* x$. We argue this in the case $q_*,q^*>0$; the case $q_*,q^* < 0$ is similar, but easier.

If $q_*,q^*>0$ then it is sufficient to show that $\gamma^\uparrow \frac{(1 - q_*)}{q_*} > \frac{1}{\gamma^\downarrow} \frac{(1 - q^*)}{q^*}$. If $0<q_* \leq 1 < q^*$ then this is immediate, and if $0<q_*<q^* \leq 1$ it follows from the fact that $\frac{1-q}{q}$ is decreasing on $(0,1]$ and then $\frac{1}{\gamma^\downarrow} \frac{(1 - q^*)}{q^*} \leq \frac{(1 - q^*)}{q^*} < \frac{(1 - q_*)}{q_*} \leq \gamma^\uparrow \frac{(1 - q_*)}{q_*}$.
Finally, if $1 \leq q_* < q^*$ the result follows from $\xi < \frac{(q^*-1)}{q^*} \frac{q^*)}{(q_*-1)}$ as shown in the proof of Proposition~\ref{prop:n}.

Now consider $(\hPhi_0, \hX_0)$. If both $\frac{\phi y}{\gamma^\downarrow}(1-q^*) \leq q^* x$ and $q_* x \leq \phi y \gamma^\uparrow (1 - q_*)$ then $\hPhi_0=\phi$ and using $\kappa_{\gamma^\uparrow,\gamma^\downarrow}(q) = \frac{q(1-p)}{(1-q)p}$ and $p = \frac{\phi x}{x+\phi y}$ we find
$\hX_0 = \frac{1-\hQ_0}{\hQ_0} Y_0 \kappa_{\gamma^\uparrow,\gamma^\downarrow}(\hQ_0) \hPhi_0= \frac{(1-p)}{p} y \phi = x$. In particular, $(\hPhi_0, \hX_0)=(\phi,x)$ and $\hX_0$ trivially satisfies the initial condition (2nd unlabelled equation in Section 4).

If $\frac{\phi y}{\gamma^\downarrow}(1-q^*) > q^* x$ or if $q_* x > \phi y \gamma^\uparrow (1 - q_*)$ then we have to account for the initial re-balancing of the portfolio. We describe what happens in the latter case, the details for the former case are similar.

Suppose $\phi y \gamma^\uparrow(1-q_*) < q_* x$. In this case $\hPhi_0 = q_* \left( \phi + \frac{x}{y \gamma^\uparrow} \right) > \phi$ and $\hQ_0 = q_*$. Then from $\hX_0= \frac{1-q_*}{q_*} y \gamma^\uparrow q_* \left( \phi + \frac{x}{y \gamma^\uparrow} \right)$ we find
$\hX_0 = x - q_* \left( \hPhi_0 - \phi \right)$. In particular $\hX_0$ satisfies the initial condition \eqref{eq:frictionIC}.

Rewriting \eqref{eq:Xdef1} we find that $\hQ$ solves
\[ \hQ_t = \frac{Y_t \kappa_{\gamma^\uparrow,\gamma^\downarrow}(\hQ_t) \hPhi_t}{\hX_t + Y_t \kappa_{\gamma^\uparrow,\gamma^\downarrow}(\hQ_t) \hPhi_t} ,\]
and hence that, provided $\hX$ represents the cash wealth of the agent (and $\kappa_{\gamma^\uparrow,\gamma^\downarrow}(\hQ)$ is the relative shadow price), $\hQ$ solves \eqref{eq:Qselfconsistent} and has the interpretation as the shadow fraction of wealth.
Since $\kappa_{\gamma^\uparrow,\gamma^\downarrow}(q_*) = \gamma^\uparrow$ and $\kappa_{\gamma^\uparrow,\gamma^\downarrow}(q^*)= \frac{1}{\gamma^\downarrow}$, the claim that $\hX = X^{x, y, \phi, \hat \Phi, \hat C}$ will follow if we can show that
\begin{equation}
\label{eq:Xdef2} \dd \hX_t = \left( r \hat{X}_t - \hat{C}_t \right) \dd t - Y_t \kappa_{\gamma^\uparrow,\gamma^\downarrow}(\hQ_t) \dd \hPhi_t,
\end{equation}
so that $\hX$ is the wealth process associated with consumption $\hC$ and trading strategy $\hPhi$.

We look for an expression for $\dd \hX_t$ of the form $\dd \hX_t = a_t \dd B_t + b_t \dd t + c_t \dd G_t$ where $a_t = a_X(\hX_t,Y_t,\hQ_t, \hPhi_t)$ and similarly for $b_t$ and $c_t$.

In the sequel, we only consider the case $0 < q^*(\xi) < q^*_(\xi) < 1$ so that $\hat Q \in (0, 1)$. The other cases in Proposition \ref{prop:n} can be treated in a similar way, replacing $\log(1-  \hat Q)$ by $\log(\hat Q - 1)$ if $\hat Q > 1$ and replacing $\log \hat Q$ and $\log \hat \Phi$ by $\log(-\hat Q)$ and $\log(-\hat \Phi)$, respectively, if $\hat Q < 0$. Special care is needed if $1 \in [q^*(\xi) < q^*(\xi)]$. In that case, by the final part of (a), one has to argue separately on $\llbracket 0, \tau \llbracket$ and $\rrbracket \tau, \infty \llbracket$. We leave the details to the reader.

From \eqref{eq:Xdef1} we have $\ln \hX_t = \ln(1- \hQ_t) - \ln \hQ_t + \ln Y_t + \ln \kappa_{\gamma^\uparrow,\gamma^\downarrow}(\hQ_t) + \ln \hPhi_t$ and then
\begin{eqnarray*}
\frac{\dd \hX_t}{\hX_t} - \frac{1}{2} \frac{ \dd [\hX]_t }{\hX_t^2}
& = & - \frac{\dd \hQ_t}{(1 -\hQ_t)} - \frac{1}{2} \frac{ \dd [\hQ]_t }{(1-\hQ_t)^2}
 - \frac{\dd \hQ_t}{\hQ_t} + \frac{1}{2} \frac{ \dd [\hQ]_t }{\hQ_t^2} +\frac{\dd Y_t}{Y_t} - \frac{1}{2} \frac{ \dd [Y]_t }{Y_t^2} \\
 && \hspace{5mm} + \frac{\kappa_{\gamma^\uparrow,\gamma^\downarrow}'(\hQ_t)}{\kappa_{\gamma^\uparrow,\gamma^\downarrow}(\hQ_t)}\dd \hQ_t + \frac{1}{2} \left( \frac{\kappa_{\gamma^\uparrow,\gamma^\downarrow}''(\hQ)_t}{\kappa_{\gamma^\uparrow,\gamma^\downarrow}(\hQ_t)} - \left( \frac{\kappa_{\gamma^\uparrow,\gamma^\downarrow}'(\hQ_t)}{\kappa_{\gamma^\uparrow,\gamma^\downarrow}(\hQ_t)} \right)^2 \right) \dd [\hQ]_t + \frac{\dd \hPhi_t}{\hPhi_t}.
\end{eqnarray*}

Equating co-efficients, and using $\frac{\kappa_{\gamma^\uparrow,\gamma^\downarrow}'(q)}{\kappa_{\gamma^\uparrow,\gamma^\downarrow}(q)} - \frac{1}{q(1-q)} = -  \frac{D}{\ell(q)-n_\xi(q)}$ and the definition of $a_Q(q)$ we find
\[ \frac{a_X(x,y,q,\phi)}{x} = \left( a_Q(q) \left( \frac{\kappa_{\gamma^\uparrow,\gamma^\downarrow}'(q)}{\kappa_{\gamma^\uparrow,\gamma^\downarrow}(q)} - \frac{1}{q(1-q)} \right) + \sigma \right)
= 0 \]
In particular, $\hX$ has no quadratic variation term. Further, using $\frac{\dd \hPhi_t}{\hPhi_t} = \frac{\dd G^\uparrow_t}{\hQ_t} - \frac{\dd G^\downarrow_t}{\hQ_t} = \frac{\dd G}{\hQ_t}$ and the fact that $\kappa_{\gamma^\uparrow,\gamma^\downarrow}'=0$ at $q_*$ and $q^*$,
\[ \frac{c_X(x,y,q,\phi)}{x} =  \frac{\kappa_{\gamma^\uparrow,\gamma^\downarrow}'(q)}{\kappa_{\gamma^\uparrow,\gamma^\downarrow}(q)} - \frac{1}{q(1-q)}  + \frac{1}{q} = \frac{-1}{1-q}. \]
Then $c_t \dd G_t = - \frac{\hX_t}{1- \hQ_t}\dd G_t = - \frac{\hX_t \hQ_t}{1- \hQ_t} \frac{\dd \hPhi_t}{\hPhi_t} = - Y_t \kappa_{\gamma^\uparrow,\gamma^\downarrow}(\hQ_t) \dd \hPhi_t$.
Finally,
\begin{eqnarray*} \frac{b_X(x,y,q,\phi)}{x} & = & \left( \frac{\kappa_{\gamma^\uparrow,\gamma^\downarrow}'(q)}{\kappa_{\gamma^\uparrow,\gamma^\downarrow}(q)} - \frac{1}{(1-q)}- \frac{1}{q} \right) b_Q(q) + \mu - \frac{\sigma^2}{2} \\
&& \hspace{20mm} + \frac{a_Q(q)^2}{2} \left( - \frac{1}{(1-q)^2} + \frac{1}{q^2} + \frac{\kappa_{\gamma^\uparrow,\gamma^\downarrow}''(q)}{\kappa_{\gamma^\uparrow,\gamma^\downarrow}(q)} - \left(\frac{\kappa_{\gamma^\uparrow,\gamma^\downarrow}'(q)}{\kappa_{\gamma^\uparrow,\gamma^\downarrow}(q)}\right)^2 \right).
\end{eqnarray*}
Then, differentiating $(1-q)q \frac{\kappa_{\gamma^\uparrow,\gamma^\downarrow}'(q)}{\kappa_{\gamma^\uparrow,\gamma^\downarrow}(q)} = \frac{m(q)-n_\xi(q)}{\ell(q)-n_\xi(q)}$ and using $\ell(q) = m(q)+Dq(1-q)$, $\ell'(q) - m'(q) = D(1-2q)$ and $\ell(q)m'(q) - m(q) \ell'(q) = D(q(1-q)m'(q) - (1-2q)m)$ yields
\begin{eqnarray*}
\lefteqn{
 \frac{\kappa_{\gamma^\uparrow,\gamma^\downarrow}''(q)}{\kappa_{\gamma^\uparrow,\gamma^\downarrow}(q)} - \left( \frac{\kappa_{\gamma^\uparrow,\gamma^\downarrow}'(q)}{\kappa_{\gamma^\uparrow,\gamma^\downarrow}(q)} \right)^2} \\
  & = &
 - \frac{(1-2q)}{q^2(1-q)^2} \frac{m(q)-n_\xi(q)}{\ell(q)-n_\xi(q)} \\
&& \hspace{10mm}+ \frac{(\ell(q)m'(q)-m(q)\ell'(q)) + n_\xi(q)(\ell'(q)-m'(q)) - n'(q)(\ell(q)-m(q))}{q(1-q)(\ell(q)-n_\xi(q))^2} \\
& = &  - \frac{(1-2q)}{q^2(1-q)^2}\frac{(m(q)-n_\xi(q))}{(\ell(q)-n_\xi(q))} \\
&& \hspace{1mm} + \frac{D}{q(1-q)(\ell(q)-n_\xi(q))^2}\left( q(1-q)m'(q) - (1-2q)(m(q)-n_\xi(q)) - q\frac{1-S}{S} n_\xi(q) \frac{m(q)-n_\xi(q)}{\ell(q)-n_\xi(q)} \right)
\end{eqnarray*}
It follows that, using $m'(q)= \frac{R(1-S)}{S} \sigma^2 q - \frac{(1-S)}{S}(\mu - r)$,
\begin{eqnarray*}
\lefteqn{\frac{b_X(x,y,q,\phi)}{x} } \\
 & = & - \frac{Dq}{\ell(q)-n_\xi(q)}\left( n_\xi(q) - 2 \theta S(\ell(q)-n_\xi(q)) \right)  + \mu - \frac{\sigma^2}{2}
+ \frac{\sigma^2 (\ell(q)-n_\xi(q))^2}{2D^2}  \frac{(1-q)^2 - q^2}{(1-q)^2q^2} \\
&& \hspace{10mm}
 - \frac{1}{2} \sigma^2 (\ell(q)-n_\xi(q))^2 \frac{(1-2q) (m(q)-n_\xi(q))}{(\ell(q)-m(q))^2(\ell(q)-n_\xi(q))} \\
&& \hspace{10mm} + \frac{\sigma^2}{2(\ell(q)-m(q))} \left( q(1-q) \left( \frac{R(1-S)}{S} \sigma^2 q - \frac{(1-S)}{S}(\mu - r) \right) - (1-2q)(m(q)-n_\xi(q)) \right. \\
&& \hspace{50mm} \left. - q \frac{1-S}{S} \frac{n_\xi(q)(m(q)-n_\xi(q))}{\ell(q)-n_\xi(q)} \right) \\
&=& \mu - \frac{\sigma^2}{2} + q \sigma^2(1-R) - \frac{Dqn_\xi(q)}{(\ell(q)-n_\xi(q))} + 
\left( R \sigma^2 q - (\mu - r) \right) + (1-2q) \frac{ \sigma^2 }{2} \frac{ (\ell(q)-n_\xi(q))^2}{(\ell(q)-m(q))^2} \\ \\
&& \hspace{10mm} - (1-2q) \frac{ \sigma^2 }{2} \frac{ (\ell(q)-n_\xi(q))(m(q)-n_\xi(q))}{(\ell(q)-m(q))^2} -(1-2q) \frac{ \sigma^2 }{2} \frac{(m(q)-n_\xi(q))}{(\ell(q)-m(q))} \\
&& \hspace{10mm} -  q \frac{ \sigma^2 }{2} \frac{(1-S)}{S} \frac{n_\xi(q)(m(q)-n_\xi(q)}{(\ell(q)-m(q))(\ell(q)-n_\xi(q))} \\
& = & r  - \frac{Dqn_\xi(q)}{\ell(q)-n_\xi(q)} \left( 1 + \frac{m(q)-n_\xi(q)}{\ell(q)-m(q)} \right) \\
&& \hspace{10mm} +(1-2q) \frac{\sigma^2}{2} \left[ -1 + \frac{(\ell(q)-n_\xi(q))^2}{(\ell(q)-m(q))^2} - \frac{ (\ell(q)-n_\xi(q))(m(q)-n_\xi(q))}{(\ell(q)-m(q))^2} - \frac{(m(q)-n_\xi(q))}{(\ell(q)-m(q))}  \right] \\
& = & r  - \frac{n_\xi(q)}{1-q}
\end{eqnarray*}
Then $b_t = r \hX_t - \frac{\hX_t}{1-\hQ_t} n(\hQ_t) = r \hX_t - \frac{Y_t \kappa_{\gamma^\uparrow,\gamma^\downarrow}(\hQ_t) \hPhi_t}{\hQ_t} n(\hQ_t) = r \hX_t - \hC_t$.
Combining the expressions for $a_t$, $b_t$ and $c_t$ we conclude that $\hX$ does indeed solve \eqref{eq:Xdef2}.

It only remains to show that $(\hPhi, \hC) \in \sA$. We consider the case where $q_*,q^*>0$. Then $\Phi_t > 0$ and it is sufficient to show that $\hX_t + \frac{\hPhi_t Y_t}{\gamma^\downarrow}>0$. Substituting for $\hX_t$ using \eqref{eq:Xdef1} we have $\hX_t + \frac{\hPhi_t Y_t}{\gamma^\downarrow} = \hPhi_t Y_t \left( \frac{1-\hQ_t}{\hQ_t} \kappa_{\gamma^\uparrow,\gamma^\downarrow}(\hQ_t) + \frac{1}{\gamma^\downarrow} \right)$. Set $\chi(q) = \frac{(1-q)}{q}\kappa_{\gamma^\uparrow,\gamma^\downarrow}(q) + \frac{1}{\gamma^\downarrow}$: we want to show that $\chi(q)>0$ on $(q_*,q^*)$. The result is immediate on $(q_* \wedge 1,1]$ so it is sufficient to consider $\chi$ on $[q_* \vee 1,q^*)$. But on this interval $\chi'(q) = - \frac{\kappa_{\gamma^\uparrow,\gamma^\downarrow}(q)}{q^2} \left( \frac{\ell(q)-m(q)}{\ell(q)-n_\xi(q)} \right) <0$ and the positivity of $\chi$  on $[q_* \vee 1,q^*)$ follows from the monotonicity of $\chi$ and the fact that $\chi(q^*) = \frac{1}{q^* \gamma^\downarrow}$.
\end{proof}

\begin{prop}
\label{app:prop:cand solution}
Let  $\gamma^\uparrow, \gamma^\downarrow \in [1, \infty)$ with $\xi:= \gamma^\uparrow \gamma^\downarrow  > 1$ be such that $\xi$ is in the well-posedness range of transaction costs. Let $(x, y, \phi) \in  \sS^\circ_{(\gamma^\uparrow, \gamma^\downarrow)}$.
Let  $(\hat \Phi, \hat C) \in \sA_{(\gamma^\uparrow, \gamma^\downarrow)}(x, y, \phi)$ be the candidate optimal investment-consumption pair, $\hat X = (\hat X_t)_{t \geq 0} := (X^{x, y, \phi, \hat \Phi, \hat C}_t)_{t \geq 0}$ the corresponding cash process, and $\hat Q =   (\hat Q_t)_{t \geq 0}$ the corresponding shadow fraction of wealth.
Then
\begin{equation}
\label{eq:app:prop:cand solution:hat V}
\hat V^{\gamma^\uparrow, \gamma^\downarrow} (t,\hat{X}_t,Y_t, \hat{\Phi}_t) = \hat{V}^{\gamma^\uparrow, \gamma^\downarrow}(0,\hat{X}_0,Y_0, \hat{\Phi}_0) {\mathcal E}(  \sigma (1-R) \hat Q \bullet B)_t e^{- \theta \int_0^t n_\xi(\hat Q_s) \dd s}, \quad t \geq 0.
\end{equation}
Moreover, the process $\hat M = (\hat M_t)_{t \geq 0}$ defined by
\begin{equation*}
  \hat M_t := \int_0^t g_{EZ}(s, \hat C_s, \hat V^{\gamma ^\uparrow, \gamma^\downarrow}(s, \hat X_s, Y_s, \hat \Phi_s)) \dd s + \hat V^{\gamma ^\uparrow, \gamma^\downarrow}(t, \hat X_t, Y_t, \hat \Phi_t).
  \end{equation*}
is a uniformly integrable martingale with $\hat M_\infty :=  \int_0^\infty g_{EZ}(s, \hat C_s, \hat V^{\gamma ^\uparrow, \gamma^\downarrow}(s, \hat X_s, Y_s, \hat \Phi_s)) \dd s$

\end{prop}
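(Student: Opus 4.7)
The plan is to apply It\^o's formula to the process $(\hat V^{\gamma^\uparrow,\gamma^\downarrow}(t,\hat X_t,Y_t,\hat\Phi_t))_{t\geq 0}$ and use the fact that both inequalities of Proposition \ref{prop:hatV:properties}(b) hold with equality along the optimal trajectory: the first because $\hat C_t$ is the pointwise maximiser, and the second because, by Theorem \ref{thm:cand solution 1}(a), $\hat Q_t\in[q_*(\xi),q^*(\xi)]$ for all $t$. Since $\hat V^{\gamma^\uparrow,\gamma^\downarrow}$ is $C^1$ in $(t,x,\phi)$ and $C^2$ in $y$ by Proposition \ref{prop:hatV:properties}, and since $\hat X$ and $\hat\Phi$ are of finite variation while $Y$ is the only source of quadratic variation, It\^o's formula gives
\begin{align*}
d\hat V^{\gamma^\uparrow,\gamma^\downarrow}(t,\hat X_t,Y_t,\hat\Phi_t) &= \Bigl(\hat V_t + r\hat X_t\hat V_x + \mu Y_t\hat V_y + \tfrac{1}{2}\sigma^2 Y_t^2\hat V_{yy} - \hat C_t\hat V_x\Bigr)\,dt + \sigma Y_t\hat V_y\,dB_t \\
&\quad + (\hat V_\phi - \gamma^\uparrow Y_t\hat V_x)\,d\hat\Phi^\uparrow_t + \Bigl(\tfrac{Y_t}{\gamma^\downarrow}\hat V_x - \hat V_\phi\Bigr)\,d\hat\Phi^\downarrow_t,
\end{align*}
with partial derivatives evaluated at $(t,\hat X_t,Y_t,\hat\Phi_t)$. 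The equality case of \eqref{eq:prop:hatV:properties:HJB ineq:01}--\eqref{eq:prop:hatV:properties:HJB ineq:02} collapses the $dt$-drift to $-g_{EZ}(t,\hat C_t,\hat V^{\gamma^\uparrow,\gamma^\downarrow}(\cdots))\,dt$, while Proposition \ref{prop:hatV:properties}(a) together with the boundary conditions \eqref{eq:cor:kappa:bound} makes the coefficients of $d\hat\Phi^\uparrow$ and $d\hat\Phi^\downarrow$ vanish precisely on the supports $\{\hat Q_t=q_*(\xi)\}$ and $\{\hat Q_t=q^*(\xi)\}$ of $G^\uparrow$ and $G^\downarrow$.

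Substituting $Y_t\hat V_y=(1-R)\hat V^{\gamma^\uparrow,\gamma^\downarrow}\hat Q_t$ from \eqref{eq:Vy:ver} into the Brownian term, and using the identity $g_{EZ}(t,\hat C_t,\hat V^{\gamma^\uparrow,\gamma^\downarrow})=\theta \ol n_\xi(\hat Q_t)\hat V^{\gamma^\uparrow,\gamma^\downarrow}$ (which follows from the definition \eqref{def:hat V} together with $\hat C_t=(\hat X_t+\hat\Phi_t Y_t\ol\kappa_{\gamma^\uparrow,\gamma^\downarrow}(\hat Q_t))\ol n_\xi(\hat Q_t)$ after substituting $(1-R)\hat V^{\gamma^\uparrow,\gamma^\downarrow}=e^{-\delta\theta t}(\hat X_t+\hat\Phi_t Y_t\ol\kappa_{\gamma^\uparrow,\gamma^\downarrow}(\hat Q_t))^{1-R}\ol n_\xi(\hat Q_t)^{-\theta S}$), we arrive at the linear SDE
\[
d\hat V^{\gamma^\uparrow,\gamma^\downarrow}(t,\cdots) = -\theta\ol n_\xi(\hat Q_t)\hat V^{\gamma^\uparrow,\gamma^\downarrow}(t,\cdots)\,dt + \sigma(1-R)\hat Q_t\hat V^{\gamma^\uparrow,\gamma^\downarrow}(t,\cdots)\,dB_t.
\]
Solving this linear SDE explicitly yields \eqref{eq:app:prop:cand solution:hat V}, and equivalently $d\hat M_t=\sigma(1-R)\hat Q_t\hat V^{\gamma^\uparrow,\gamma^\downarrow}(t,\cdots)\,dB_t$, so $\hat M$ is a local martingale. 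Note that $(1-R)\hat M_t\geq 0$ throughout, hence $\hat M$ is an $(1-R)$-signed supermartingale.

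To upgrade $\hat M$ to a uniformly integrable martingale, the plan is to exploit two structural features: $\hat Q$ is uniformly bounded by Theorem \ref{thm:cand solution 1}(a), and $\ol n_\xi$ is uniformly bounded below by some $n_{\min}>0$ by the continuity and strict positivity of $\ol n_\xi$ in Proposition \ref{prop:n}. The boundedness of $\sigma(1-R)\hat Q$ makes $\mathcal E(\sigma(1-R)\hat Q\bullet B)$ a true UI martingale via Novikov's criterion. Performing the Girsanov change of measure $d\QQ/d\PP|_{\cF_t}=\mathcal E(\sigma(1-R)\hat Q\bullet B)_t$ and using that $\int_0^\infty \ol n_\xi(\hat Q_s)\,ds=\infty$ $\QQ$-a.s.\ (by the lower bound $n_{\min}$), one computes
\[
\EX{\int_0^\infty g_{EZ}(s,\hat C_s,\hat V^{\gamma^\uparrow,\gamma^\downarrow}(s,\cdots))\,ds} = \theta\hat V^{\gamma^\uparrow,\gamma^\downarrow}(0,\cdots)\EX[\QQ]{\int_0^\infty \ol n_\xi(\hat Q_s)e^{-\theta\int_0^s \ol n_\xi(\hat Q_r)\,dr}\,ds} = \hat V^{\gamma^\uparrow,\gamma^\downarrow}(0,\cdots),
\]
where the second equality telescopes the antiderivative. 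Combined with the supermartingale property of $(1-R)\hat M$ and Fatou's lemma, the equality $\EX{\hat M_\infty}=\hat V^{\gamma^\uparrow,\gamma^\downarrow}(0,\cdots)=\hat M_0$ forces $\hat M$ to be a true martingale; since $\hat M$ is nonnegative (up to sign) and converges a.s.\ to $\int_0^\infty g_{EZ}\,ds+\lim_t\hat V^{\gamma^\uparrow,\gamma^\downarrow}(t,\cdots)$, Scheff\'e's lemma upgrades this to $L^1$-convergence, whence $\hat M$ is UI and $\hat V^{\gamma^\uparrow,\gamma^\downarrow}(t,\cdots)\to 0$ in $L^1$, identifying $\hat M_\infty$ with $\int_0^\infty g_{EZ}\,ds$. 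The main technical obstacle is the case $1\in[q_*(\xi),q^*(\xi)]$: then the diffusion coefficient of $\hat Q$ degenerates at $1$ and $\hat Q$ fails to be regular for the interior, but the $C^1$-crossing of $\ol n_\xi$ and $\ol\kappa_{\gamma^\uparrow,\gamma^\downarrow}$ at $1$ (Proposition \ref{prop:n}(e) and Corollary \ref{cor:kappa xi}) ensures that $\hat V^{\gamma^\uparrow,\gamma^\downarrow}$ has the regularity needed for It\^o's formula to apply uniformly across the singularity.
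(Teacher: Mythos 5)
Your derivation of the linear SDE for $\hat V^{\gamma^\uparrow,\gamma^\downarrow}(t,\hat X_t,Y_t,\hat\Phi_t)$ — It\^o's formula, the equality cases of Proposition \ref{prop:hatV:properties}(b), the vanishing of the $\dd\hat\Phi^{\uparrow}$ and $\dd\hat\Phi^{\downarrow}$ coefficients on the supports of $G^\uparrow,G^\downarrow$ via \eqref{eq:prop:hatV:properties:kappa}, and the identity $g_{EZ}=\theta\,\ol n_\xi(\hat Q)\hat V$ — is exactly the paper's route to \eqref{eq:app:prop:cand solution:hat V} and to $\dd\hat M_t=\sigma(1-R)\hat V\hat Q_t\,\dd B_t$.

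For the uniform integrability you diverge: the paper takes $\beta$-th moments of the explicit formula \eqref{eq:app:prop:cand solution:hat V} directly (using that $\mathcal E(\beta\sigma(1-R)\hat Q\bullet B)$ is a martingale on each finite horizon), getting an $L^2$-bound that makes $\hat M$ a square-integrable martingale on compacts and an $L^1$-decay $\E[|\hat V_t|]\le |\hat V_0|\,e^{-\theta\nu t}\to 0$ that closes it; this is shorter and avoids any measure change. Your alternative — Girsanov plus telescoping to get $\E[\hat M_\infty]=\hat M_0$ and then the ``nonnegative supermartingale with constant expectation is a closed martingale'' argument — also works, but one step is misstated: boundedness of $\hat Q$ and Novikov's criterion make $\mathcal E(\sigma(1-R)\hat Q\bullet B)$ a \emph{true} martingale on every $[0,T]$, not a \emph{uniformly integrable} one on $[0,\infty)$ (e.g.\ $\mathcal E(B)_t=e^{B_t-t/2}\to 0$ a.s.\ and is not UI). So you cannot define $\QQ$ on $\cF_\infty$ as written; you must instead work with the consistent family $\QQ_T$ on $\cF_T$, compute $\EX{\int_0^T g_{EZ}\,\dd s}=\theta\hat V_0\,\E^{\QQ_T}[1-e^{-\theta\int_0^T\ol n_\xi(\hat Q_r)\dd r}]$, bound the bracket between $1-e^{-\theta\nu T}$ and $1$, and let $T\to\infty$ by monotone convergence (minding the sign of $(1-R)$). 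With that repair, and replacing the loose appeal to Fatou/Scheff\'e by the standard closure argument $\E[\E[(1-R)\hat M_\infty\mid\cF_t]]\le\E[(1-R)\hat M_t]\le(1-R)\hat M_0=\E[(1-R)\hat M_\infty]$, your proof is complete. Your closing remark on the singular point $q=1$ is a genuine subtlety the paper glosses over here; the $C^1$-fit of $\ol n_\xi$ and $\ol\kappa_{\gamma^\uparrow,\gamma^\downarrow}$ across $1$ is indeed what keeps $\hat V^{\gamma^\uparrow,\gamma^\downarrow}$ regular enough for It\^o.
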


\begin{proof}
First, it follows from the definition of $\hat C$ in Theorem \ref{thm:cand solution 1}(c), \eqref{eq:prop:hatV:properties:HJB ineq:01} and \eqref{eq:Vx:ver} that
\begin{equation}
\label{eq:pf:app:prop:cand solution:gEZ}
g_{EZ}(t, \hat C_t, \hat V^{\gamma ^\uparrow, \gamma^\downarrow}(t, \hat X_t, Y_t, \hat \Phi_t)) = \theta n_\xi(\hat Q_t) \hat V^{\gamma^\uparrow, \gamma^\downarrow} (t,\hat{X}_t,Y_t, \hat{\Phi}_t), \quad t \geq 0.
\end{equation}
Second, It\^o's formula, Proposition \ref{prop:hatV:properties}(a) and (b) and the properties of $\hat \Phi$, $\hat C$ from Theorem \ref{thm:cand solution 1} give
\begin{equation}
\label{eq:pf:app:prop:cand solution:dM}
\dd \hat M_t = \sigma (1-R) \hat{V}^{\gamma ^\uparrow, \gamma^\downarrow}(t, \hat X_t, Y_t, \hat \Phi_t) \hat Q_t \dd B_t,
\end{equation}
Third, combining \eqref{eq:pf:app:prop:cand solution:dM} with \eqref{eq:pf:app:prop:cand solution:gEZ} yields
\begin{align}
\dd \hat V^{\gamma^\uparrow, \gamma^\downarrow} (t,\hat{X}_t,Y_t, \hat{\Phi}_t) &= \dd \hat M_t - g_{EZ}(t, \hat C_t, \hat V^{\gamma ^\uparrow, \gamma^\downarrow}(t, \hat X_t, Y_t, \hat \Phi_t)) \dd t \\
&= \hat{V}^{\gamma ^\uparrow, \gamma^\downarrow}(t, \hat X_t, Y_t, \hat \Phi_t) \left( \sigma (1-R) \hat Q_t \dd B_t -  \theta n_\xi(\hat Q_t) \dd t \right).
\end{align}
The formula for the stochastic exponential now gives \eqref{eq:app:prop:cand solution:hat V}.

Next, fix $\beta > 0$ and let $C > 0$ be a uniform bound for $\hat Q$ and $\nu > 0$ a lower bound for $n_\xi$. Since $\hat Q$ is uniformly bounded, Novikov's condition implies that ${\mathcal E}(\beta  \sigma (1-R) \hat Q \bullet B)$ is a true martingale. Hence, \eqref{eq:app:prop:cand solution:hat V} yields
\begin{align}
\EX{|\hat V^{\gamma^\uparrow, \gamma^\downarrow} (t,\hat{X}_t,Y_t, \hat{\Phi}_t)|^\beta} &= |\hat V^{\gamma^\uparrow, \gamma^\downarrow} (0,\hat{X}_0,Y_0, \hat{\Phi}_0)|^\beta \EX{e^{\beta(\beta - 1) \frac{\sigma^2(1-R)^2}{2} \int_0^t \hat Q_s^2 \dd s - \theta \int_0^T n_\xi(\hat Q_s) \dd s }} \\
&\leq  |\hat V^{\gamma^\uparrow, \gamma^\downarrow} (0,\hat{X}_0,Y_0, \hat{\Phi}_0)|^\beta  e^{\beta(\beta - 1) \frac{\sigma^2(1-R)^2}{2} C^2 - \theta \nu) t }, \label{eq:pf:app:prop:cand solution:V beta}
\end{align}
where $C$ is a (upper or lower as required) bound on $\hat{Q}$, and $\nu$ is a lower bound on $n_\xi$.
For $\beta = 2$, \eqref{eq:pf:app:prop:cand solution:V beta} shows that $\hat M$ is a square-integrable martingale. For $\beta = 1$, \eqref{eq:pf:app:prop:cand solution:V beta} shows that $\lim_{t \to \infty} \EX{|\hat V^{\gamma^\uparrow, \gamma^\downarrow} (t,\hat{X}_t,Y_t, \hat{\Phi}_t)|} = 0$ and together with \eqref{eq:pf:app:prop:cand solution:gEZ}  and the fact that $n_\xi$ is uniformly bounded, this implies that $\hat M_t$ converges in $L^1$ to $\hat M_\infty := \int_0^t g_{EZ}(s, \hat C_s, \hat V^{\gamma ^\uparrow, \gamma^\downarrow}(s, \hat X_s, Y_s, \hat \Phi_s)) \dd s$. Hence, $\hat M$ is a uniformly integrable martingale.
\end{proof}

\begin{defn}\label{def:O equivalence relation}
	Suppose that $X=(X_t)_{t\geq0}$ and $Y=(Y_t)_{t\geq0}$ are nonnegative progressive processes. We say that $X$ has the same order as $Y$ (and write $X \stackrel{\OO}{=} Y$, noting that $\stackrel{\OO}{=}$ is an equivalence relation) if there exist constants $k,K\in(0,\infty)$ such that
	\begin{equation}\label{eq:O(Y) condition}
	0\leq k Y \leq X \leq K Y.
	\end{equation}
	Denote the set of progressive processes with the same order as $X$ by $\OO(X)\subseteq\sP_+$.
\end{defn}

\begin{defn}\label{def:selforder}
	For $X \in \sP_+$, define $J^{X} = (J^{X}_t)_{t\geq0}$ by $J^{X}_t = \cEX[t]{\int_t^{\infty} X_s \dd s}$. Then the set $\bS \OO$ of \textit{self-order} consumption streams is given by	
	\begin{equation}
	\bS\OO \coloneqq \left\{X\in\sP_{++}:~ \E \left[ \int_0^\infty X_t \dd t\right] < \infty \mbox{ and } X \stackrel{\OO}{=} J^X\right\}.
	\end{equation}
Moreover, set
	$$\bS\OO_\nu  \coloneqq \left\{X\in\bS\OO:~  \big(e^{\nu t} X_t\big)_{t \geq 0} \in \bS\OO \right\}\quad \text{for }\nu\geq0, \qquad \text{and} \qquad \bS\OO_+ = \cup_{\nu > 0} \bS\OO_\nu.$$
\end{defn}

\begin{lemma}
\label{lem:SO}
Let $X, Y \in \sP_+$ with $X \stackrel{\OO}{=} Y$. Then for $\nu \geq 0$, $X \in \bS\OO_\nu$ if and only if $Y \in \bS\OO_\nu$. In particular, $X \in \bS\OO_+$ if and only if $Y \in \bS\OO_+$.
\end{lemma}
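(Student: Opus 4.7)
The plan is to reduce everything to one structural observation: the operator $J$ defined by $J^X_t = \cEX[t]{\int_t^\infty X_s \dd s}$ preserves the equivalence relation $\stackrel{\OO}{=}$, with the same constants. Granted this, the lemma follows by a short chain of equivalences, and the extension from $\bS\OO$ to $\bS\OO_\nu$ and $\bS\OO_+$ is automatic because multiplying both processes by the deterministic factor $e^{\nu t}$ preserves the bounds in \eqref{eq:O(Y) condition}.

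First I would verify that $X \stackrel{\OO}{=} Y$ implies $J^X \stackrel{\OO}{=} J^Y$. Suppose $k Y_t \leq X_t \leq K Y_t$ for all $t \geq 0$ almost surely. Integrating over $[t,\infty)$ and applying $\cEX[t]{\cdot}$, which is monotone and linear, yields $k J^Y_t \leq J^X_t \leq K J^Y_t$, so $J^X \stackrel{\OO}{=} J^Y$ with the same constants. I would also record the two trivial companion facts: $\EX{\int_0^\infty X_s \dd s} < \infty$ is equivalent to $\EX{\int_0^\infty Y_s \dd s} < \infty$, and if $Y \in \sP_{++}$ then $X \geq kY > 0$ gives $X \in \sP_{++}$ (and symmetrically), so the base membership requirement built into $\bS\OO$ is preserved by $\stackrel{\OO}{=}$.

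Now I would assemble the pieces. Suppose $X \in \bS\OO_\nu$, so that $e^{\nu \cdot} X \in \bS\OO$, i.e. $\E[\int_0^\infty e^{\nu s} X_s \dd s] < \infty$, $e^{\nu \cdot} X \in \sP_{++}$, and $e^{\nu \cdot} X \stackrel{\OO}{=} J^{e^{\nu \cdot} X}$. Because $X \stackrel{\OO}{=} Y$ with constants $k, K$, multiplication by the deterministic positive factor $e^{\nu t}$ gives $e^{\nu \cdot} X \stackrel{\OO}{=} e^{\nu \cdot} Y$ with the same constants. By the first paragraph this transfers to $J^{e^{\nu \cdot} X} \stackrel{\OO}{=} J^{e^{\nu \cdot} Y}$ and to the integrability and positivity properties. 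Chaining via transitivity of $\stackrel{\OO}{=}$ gives
\begin{equation*}
e^{\nu \cdot} Y \;\stackrel{\OO}{=}\; e^{\nu \cdot} X \;\stackrel{\OO}{=}\; J^{e^{\nu \cdot} X} \;\stackrel{\OO}{=}\; J^{e^{\nu \cdot} Y},
\end{equation*}
so $Y \in \bS\OO_\nu$. The reverse implication is identical with the roles of $X$ and $Y$ swapped. Taking the union over $\nu > 0$ yields $X \in \bS\OO_+ \Longleftrightarrow Y \in \bS\OO_+$.

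The proof is entirely routine; there is no real obstacle. The only point that warrants explicit attention is that the constants $k,K$ from \eqref{eq:O(Y) condition} pass through the conditional expectation and through the weighting by $e^{\nu t}$ without change, and that the defining non-negativity/positivity and integrability properties of $\bS\OO$ are automatically inherited.
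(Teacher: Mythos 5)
Your proof is correct; the key observation that $J$ preserves $\stackrel{\OO}{=}$ with the same constants (by monotonicity and linearity of $\cEX[t]{\cdot}$), together with the fact that the deterministic weight $e^{\nu t}$ and the integrability/positivity requirements also pass through, is exactly the routine verification the paper omits (Lemma~\ref{lem:SO} is stated without proof). No gaps.
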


\begin{cor}
\label{cor:app:proper}
Let  $\gamma^\uparrow, \gamma^\downarrow \in [1, \infty)$ with $\xi:= \gamma^\uparrow \gamma^\downarrow  > 1$ be such that $\xi$ lies in the well-posedness range of transaction costs. Let $(x, y, \phi) \in  \sS^\circ_{(\gamma^\uparrow, \gamma^\downarrow)}$.
Let  $(\hat \Phi, \hat C) \in \sA_{(\gamma^\uparrow, \gamma^\downarrow)}(x, y, \phi)$ be the candidate optimal investment-consumption pair, $\hat X = (\hat X_t)_{t \geq 0} := (X^{x, y, \phi, \hat \Phi, \hat C}_t)_{t \geq 0}$ the corresponding cash process, and $\hat Q =   (\hat Q_t)_{t \geq 0}$ the corresponding shadow fraction of wealth. Then the process $(e^{-\delta \theta t} \hat C^{1-R}_t)_{t \geq 0}$ lies in $\bS\OO_+$.

Further, $\hat{C}$ is uniquely proper.
\end{cor}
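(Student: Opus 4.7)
The plan is to reduce the $\bS\OO_+$ assertion to the explicit multiplicative formula for $\hat V$ along the candidate optimal trajectory and then invoke the existence-uniqueness theory from \cite{herdegen:hobson:jerome:23C}. First I would combine the definition \eqref{def:hat V} of $\hat V^{\gamma^\uparrow,\gamma^\downarrow}$ with the formulas for $\hat C$ and $\hat X$ from Theorem~\ref{thm:cand solution 1} and the identity $1-R+\theta S = \theta$ (which is just $1-R=\theta(1-S)$) to obtain the pointwise relation
\[
e^{-\delta\theta t}\hat C_t^{1-R} \;=\; (1-R)\,n_\xi(\hat Q_t)^{\theta}\,V_t,
\qquad V_t := \hat V^{\gamma^\uparrow,\gamma^\downarrow}(t,\hat X_t,Y_t,\hat\Phi_t).
\]
Because $\hat Q$ is confined to $[q_*(\xi),q^*(\xi)]$ by Theorem~\ref{thm:cand solution 1}(a), and $n_\xi$ is continuous and strictly positive on this compact interval by Proposition~\ref{prop:n}, there exist constants $0<\nu_0\leq\nu_1<\infty$ with $\nu_0\leq n_\xi(\hat Q_t)\leq\nu_1$. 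Hence $X_t:=e^{-\delta\theta t}\hat C_t^{1-R} \stackrel{\OO}{=} (1-R)V_t$, and by Lemma~\ref{lem:SO} the task reduces to proving $(1-R)V\in\bS\OO_+$.

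For this, I would use the multiplicative decomposition $(1-R)V_t = (1-R)V_0\,M_t\,e^{-\theta I_t}$ furnished by Proposition~\ref{app:prop:cand solution}, where $M_t := \mathcal E(\sigma(1-R)\hat Q\bullet B)_t$ and $I_t := \int_0^t n_\xi(\hat Q_s)\,ds$. Novikov's criterion is trivial since $\hat Q$ is bounded, so $M$ is a genuine martingale with $\cEX[t]{M_s/M_t}=1$ for $s\geq t$. Fix $\nu\in[0,\theta\nu_0)$. Writing $I_s = I_t + (I_s-I_t)$ and using the pathwise envelope $\nu_0(s-t)\leq I_s-I_t\leq \nu_1(s-t)$, Fubini gives the deterministic bounds
\[
\frac{1}{\theta\nu_1-\nu}\;\leq\;\cEX[t]{\int_t^\infty e^{\nu(s-t)}\frac{M_s}{M_t}\,e^{-\theta(I_s-I_t)}\,ds}\;\leq\;\frac{1}{\theta\nu_0-\nu}.
\]
Multiplying through by $e^{\nu t}(1-R)V_0 M_t e^{-\theta I_t}=e^{\nu t}(1-R)V_t$ shows that $\cEX[t]{\int_t^\infty e^{\nu s}(1-R)V_s\,ds}\stackrel{\OO}{=} e^{\nu t}(1-R)V_t$. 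Setting $t=0$ produces finite expectation, and the same equivalence is exactly the self-order property, so $(1-R)V\in\bS\OO_\nu\subseteq\bS\OO_+$. One further application of Lemma~\ref{lem:SO} transfers this to $X\in\bS\OO_+$.

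For the uniqueness assertion I would handle the two regimes separately. When $\theta\leq 1$, Theorem~\ref{thm:SDU:existence:theta<1} already produces a unique generalised utility process, so $\hat C$ is automatically uniquely evaluable. When $\theta>1$, the process $\hat C$ is right-continuous because it is a continuous function of the continuous processes $\hat X$, $Y$, $\hat\Phi$, $\hat Q$ after any initial block trade at time $0-$; combining right-continuity with the $\bS\OO_+$ property established above puts $\hat C$ into the class for which the results of Herdegen et al \cite{herdegen:hobson:jerome:23C} (in particular Proposition~4.8 and the associated existence-uniqueness theorem) guarantee a unique proper utility process. In either case $\hat C\in\UU\PP^*$, completing the proof.

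The main obstacle is the conditional-integral comparison in the middle paragraph: both $M_s/M_t$ and $e^{-\theta(I_s-I_t)}$ are path-dependent functionals of $\hat Q$ and are strongly correlated, so the expectation cannot simply be factored. The key is the pathwise exponential envelope for $I_s-I_t$, which separates the two factors into a bounded deterministic term and a pure martingale term that is integrated out trivially; after this separation everything else is bookkeeping with Lemma~\ref{lem:SO}.
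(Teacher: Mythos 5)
Your proposal is correct and follows essentially the same route as the paper: the identity $e^{-\delta\theta t}\hat C_t^{1-R}=(1-R)\hat V_t\, n_\xi(\hat Q_t)^{\theta}$ plus Lemma~\ref{lem:SO} to reduce to $(1-R)\hat V\in\bS\OO_+$, then the stochastic-exponential decomposition from Proposition~\ref{app:prop:cand solution} together with Novikov and the pathwise exponential envelope on $\int_t^s n_\xi(\hat Q_u)\,\dd u$ to sandwich the conditional integral, and finally the existence--uniqueness results of \cite{herdegen:hobson:jerome:23C} for the proper-solution claim. The only cosmetic point is that you should take $\nu$ strictly positive (not $\nu=0$) to land in $\bS\OO_+=\cup_{\nu>0}\bS\OO_\nu$.
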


\begin{proof}
First, note that
\begin{equation*}
e^{-\delta \theta t} \hat C^{1-R}_t = e^{-\delta \theta t} (\hat X_t + \hat \Phi_t \tilde Y_t)^{1-R} n_\xi(\hat Q_t)^{1-R}  = (1-R) \hat V^{\gamma^\uparrow, \gamma^\downarrow}(t,\hat{X}_t,Y_t, \hat{\Phi}_t) n_\xi(\hat Q_t)^{\theta}
\end{equation*}
Since $n_\xi$ is uniformly bounded from above and from below, by Lemma~ \ref{lem:SO} it remains to show that
 that $((1-R)\hat V (t,\hat{X}_t,Y_t, \hat{\Phi}_t))_{t \geq 0} \in \bS\OO_+$.

To this end, it suffices to show that $((1-R)\hat V (t,\hat{X}_t,Y_t, \hat{\Phi}_t))_{t \geq 0} \in \bS\OO_\nu$ for some $\nu > 0$. Since $n$ is uniformly bounded from above and below by a positive constants, there is  $D > \nu > 0$ such that $D \geq \theta n + \nu \geq \nu > 0$. Now using that ${\mathcal E}(  \sigma (1-R) \hat Q \bullet B)_t$ is a martingale by Novikov's condition and the fact that $\hat Q$ is uniformly bounded, we obtain
\begin{align}
J_t^{\exp(\nu \cdot) \hat V}&= \cEX[t]{\int_t^\infty \exp(\nu s) (1-R)\hat V (s,\hat{X}_s,Y_s, \hat{\Phi}_s) \dd s} \notag \\
&=(1-R)\hat V (t,\hat{X}_t,Y_t, \hat{\Phi}_t) \cEX[t]{\int_t^\infty \exp\left(- \int_t^s  \theta n(\hat Q_u) - \nu \dd u \right)  \dd s}  \label{eq:pf:app:prop:cand solution V in S0+}
\end{align}
The claim that $(e^{-\delta \theta t} \hat C^{1-R}_t)_{t \geq 0}$ lies in $\bS\OO_+$ now follows from the fact that the last factor on the right hand side of \eqref{eq:pf:app:prop:cand solution V in S0+} lies in $[\frac{1}{D}, \frac{1}{\nu}]$.

The fact that there is a unique utility process $V^{\hat{C}}$ associated to $\hat{C}$ now follows from Herdegen et al~\cite[Theorem 4.7]{herdegen:hobson:jerome:23C}.
\end{proof}

\begin{proof}[Proof of Theorem \ref{thm:cand solution 2}]
First, it follows from Proposition \ref{cor:app:proper} and Theorem \ref{thm:cand solution 1} that $\hat C$ is uniquely proper and has continuous paths.

Next, define the process $\hat M = (\hat M_t)_{t \geq 0}$ by
\begin{equation*}
  \hat M_t := \int_0^t g_{EZ}(s, \hat C_s, \hat V^{\gamma ^\uparrow, \gamma^\downarrow}(s, \hat X_s, Y_s, \hat \Phi_s)) \dd s + \hat V^{\gamma ^\uparrow, \gamma^\downarrow}(t, \hat X_t, Y_t, \hat \Phi_t).
  \end{equation*}
By Proposition~\ref{app:prop:cand solution}, $\hat M$ is a uniformly integrable martingale with
\[ \hat M_\infty = \int_0^\infty e^{-\delta s} \frac{(\hat C_s)^{1-S}}{1-S} ((1-R)\hat V^{\gamma ^\uparrow, \gamma^\downarrow}(s, \hat X_s, Y_s, \hat \Phi_s))^\rho \dd s. \]
This implies that
\begin{align}
\hat V^{\gamma ^\uparrow, \gamma^\downarrow}(t, \hat X_t, Y_t, \hat \Phi_t) &= \hat M_t -  \int_0^t g_{EZ}(s, \hat C_s, \hat V^{\gamma ^\uparrow, \gamma^\downarrow}(s, \hat X_s, Y_s, \hat \Phi_s)) \dd s \\
&= \cEX[t]{\hat M_\infty} - \int_0^t g_{EZ}(s, \hat C_s, \hat V^{\gamma ^\uparrow, \gamma^\downarrow}(s, \hat X_s, Y_s, \hat \Phi_s)) \dd s \\
&= \cEX[t]{ \int_t^\infty g_{EZ}(s, \hat C_s, \hat V^{\gamma ^\uparrow, \gamma^\downarrow}(s, \hat X_s, Y_s, \hat \Phi_s)) }.
\end{align}
Thus $(\hat V^{\gamma ^\uparrow, \gamma^\downarrow}(t, \hat X_t, Y_t, \hat \Phi_t)_t)_{t \geq 0}$ is a utility process for $\hat{C}$. Note that it is never vanishes and hence is a proper solution. Thus, it coincides with $V^{\hat C}$ by uniqueness of utility processes (for proper solutions).
\end{proof}

\subsection{Verification argument}

\begin{proof}[Proof of Proposition \ref{prop:ver:supersol}]
First, note that the dynamics of $X + \epsilon \hat X$ are given by
\begin{equation}
 \diff (X_t + \epsilon \hat X_t) = r(X_t + \epsilon \hat X_t) \dd t - (C_t + \epsilon \hat C_t) \dd t - Y_t\gamma^\uparrow\dd( \Phi_t^\uparrow + \epsilon \hat \Phi^\uparrow_t)  + \frac{Y_t}{\gamma^\downarrow}\dd( \Phi_t^\downarrow + \epsilon \hat \Phi_t^\downarrow).
\end{equation}
Next, fix arbitrary bounded stopping times $\tau_1\leq\tau_2$. Using It\^o's formula, Proposition \ref{prop:hatV:properties}, Corollary \ref{cor:hatV:ineq} and the fact that $\Phi_t^\uparrow  + \epsilon \hat \Phi_t^\uparrow - (\Phi_t  + \epsilon \hat \Phi_t)^\uparrow$ and $\Phi_t^\downarrow  + \epsilon \hat \Phi_t^\downarrow - (\Phi_t  + \epsilon \hat \Phi_t)^\downarrow$ are nondecreasing and identical for $t \geq 0$, we obtain
\begin{align}
&\hat V(X_{\tau_2} + \epsilon \hat X_{\tau_2}, Y_{\tau_2}, \Phi_{\tau_2} + \epsilon \hat \Phi_{\tau_2}) - \hat V(X_{\tau_1} + \epsilon \hat X_{\tau_1}, Y_{\tau_1}, \Phi_{\tau_1} + \epsilon \hat \Phi_{\tau_1}) \\
&\quad + \int_{\tau_1}^{\tau_2} g_{EZ}(C_u + \epsilon \hat C_u,  \hat V(X_u + \epsilon \hat X_u, Y_u, \Phi_u + \epsilon \hat \Phi_u)) \dd u \\
&= \int_{\tau_1}^{\tau_2} \left(\hat V_u \dd u + \hat V_x \dd (X_u + \epsilon \hat X_u)  + \hat V_y \dd Y_u + \hat V_\phi \dd (\Phi_u + \epsilon \hat \Phi_u) + \frac{1}{2} \hat V_{yy} \dd \langle Y \rangle_u + g_{EZ}(C_u + \epsilon \hat C_u, \hat V) \dd u \right)\\
&=  \int_{\tau_1}^{\tau_2}  \left(\hat V_t + r(X_u+ \epsilon \hat X_u) \hat V_x -(C_u + \epsilon \hat C_u) \hat V_x + \mu Y_u \hat V_y + \frac{1}{2} \sigma^2 Y_u^2 \hat V_{yy} + g_{EZ}(C_u + \epsilon \hat C_u, \hat V) \right) \dd u \\
&\quad +  \int_{\tau_1}^{\tau_2} \left(\hat V_\phi - \gamma^\uparrow Y_u \hat V_x \right)  \dd( \Phi_u  + \epsilon \hat \Phi_u)^\uparrow -  \int_{\tau_1}^{\tau_2}\gamma^\uparrow Y_u \hat V_x  \dd\left( \Phi_u^\uparrow  + \epsilon \hat \Phi_u^\uparrow - (\Phi_u  + \epsilon \hat \Phi_u)^\uparrow \right) \\
&\quad +  \int_{\tau_1}^{\tau_2} \left( \frac{Y_u}{\gamma^\downarrow} \hat V_x  - \hat V_\phi\right)  \dd( \Phi_u + \epsilon \hat \Phi_u)^\downarrow  + \int_{\tau_1}^{\tau_2} \frac{Y_u}{\gamma^\downarrow}\hat V_x \dd\left( \Phi_u^\downarrow  + \epsilon \hat \Phi_u^\downarrow - (\Phi_u  + \epsilon \hat \Phi_u)^\downarrow \right) \\
&\quad+ \int_{\tau_1}^{\tau_2} \sigma Y_u \hat V_{s} \dd W_u \\
&\leq - \int_{\tau_1}^{\tau_2} \hat V_x Y_u \left ( \gamma^\uparrow - \frac{1}{\gamma^\downarrow}\right)\dd\left( \Phi_u^\uparrow  + \epsilon \hat \Phi_u^\uparrow - (\Phi_u  + \epsilon \hat \Phi_u)^\uparrow \right) +\int_{\tau_1}^{\tau_2} \sigma Y_u \hat V_{y} \dd W_u \leq \int_{\tau_1}^{\tau_2} \sigma Y_u \hat V_{y} \dd W_u.
\end{align}
Define the local martingale $N=(N_t)_{t \geq 0}$ by
\begin{equation}
N_t = \int_0^t \sigma Y_u \hat V_{y}(u,X_u + \epsilon \hat X_u, Y_u, \Phi_u + \epsilon \hat \Phi_u) \dd W_u.
\end{equation}
Moreover, for $n\in\N$, define the stopping time $\zeta_n\coloneqq\inf\{s\geq \tau_1:  \langle {N} \rangle_s - \langle {N} \rangle_{\tau_1} \geq n\}$. Then
		\begin{align*}
		 &\hat V(X_{\tau_1} + \epsilon \hat X_{\tau_1}, Y_{\tau_1}, \Phi_{\tau_1} + \epsilon \hat \Phi_{\tau_1}) \geq \hat V(X_{\tau_2\wedge\zeta_n} + \epsilon \hat X_{\tau_2\wedge\zeta_n}, Y_{\tau_2\wedge\zeta_n}, \Phi_{\tau_2\wedge\zeta_n} + \epsilon \hat \Phi_{\tau_2\wedge\zeta_n}) \\
&\quad + \int_{\tau_1}^{\tau_2\wedge\zeta_n} e^{-\delta u} \frac{(C_u + \epsilon \hat C_u)^{1-S}}{1-S} ((1-R) \hat V(u,X_u + \epsilon \hat X_u, S_u, \Phi_u + \epsilon \hat \Phi_u))^\rho \dd u + N_{\tau_1} - N_{{\tau_2\wedge\zeta_n}}.
		\end{align*}
		Taking conditional expectations and using that $(N_{t \wedge \zeta_n} - N_{t \wedge \tau_1})_{t \geq 0}$ is an $L^2$-bounded martingale, the Optional Sampling Theorem gives
		\begin{equation}\label{eq:pf:prop:ver:supersol:local supersol}
			\hat{V}_{\tau_1} \geq \cEX[{\tau_1}]{\hat{V}_{\tau_2 \wedge \zeta_n} + \int_{\tau_1}^{{\tau_2}\wedge\zeta_n} e^{-\delta u}\frac{(C_u+\epsilon\hat C_u)^{1-S}}{1-S} ((1-R)\hat{V}_u)^\rho \dd s}.
		\end{equation}
where in a slight abuse of notation we have written $\hat V_u$ as shorthand for $\hat V(u,X_u + \epsilon \hat X_u, Y_u, \Phi_u + \epsilon \hat \Phi_u)$.
		Since $\hat{V}_u  \geq \hat{V}(u,\epsilon \hat X_u, Y_u, \epsilon \hat \Phi_u)$ by Corollary \ref{cor:hatV:ineq} and using $(\hat{V}(t,\epsilon \hat X_t, Y_t, \epsilon \hat \Phi_t))_{t \geq 0}$ is bounded below by zero if $R < 1$ and by a uniformly integrable martingale if $R > 1$ by Proposition \ref{app:prop:cand solution}, taking the liminf as $n \to \infty$ in \eqref{eq:pf:prop:ver:supersol:local supersol}, the conditional version of Fatou's Lemma (with a UI martingale lower bound if $R> 1$) and the conditional Monotone Convergence Theorem yield
		\begin{equation}\label{eq:hat V supersol}
\hat{V}_{\tau_1} \geq \cEX[{\tau_1}]{\hat{V}_{\tau_2 } + \int_{\tau_1}^{\tau_2} e^{-\delta u}\frac{(C_u+\epsilon \hat C_u)^{1-S}}{1-S} ((1-R)\hat{V}_u)^\rho \dd u}.
		\end{equation}
		Furthermore, $\liminf_{t\to\infty}\E[\hat{V}_{t}]\geq \lim_{t\to\infty}\E[\hat{V}(t,\epsilon \hat X_{t}, S_{t}, \epsilon \hat \Phi_{t}))] = 0$. Consequently, $\hat{V}(I,X+\epsilon \hat X, Y, \Phi + \epsilon \hat \Phi)$ is a supersolution for $C+\epsilon \hat C$.
\end{proof}

\begin{proof}[Proof of Theorem \ref{thm:verification}]
The argument is different for $\theta \leq 1$ and $\theta > 1$.

\textbf{Case $\theta \leq 1$.}
By Theorem \ref{thm:cand solution 2}, Proposition \ref{prop:hatV:properties}(c) and continuity of $\hat V^{\gamma^\uparrow, \gamma^\downarrow}$, it suffices to show that for each $\epsilon > 0$ and $C \in \sC_{\gamma^\uparrow, \gamma^\downarrow}(x, y, \phi)$,
\begin{equation}
\label{eq:pf:thm:verification:ineq}
V^C_0 \leq\hat V^{\gamma^\uparrow, \gamma^\downarrow}(0, x (1+\epsilon), y, \phi (1+ \epsilon) ).
\end{equation}
So fix $\epsilon > 0$ and $C \in \sC^{(\gamma^\uparrow, \gamma^\downarrow)}(x, y, \phi)$ with corresponding investment process $\Phi$ and cash process $X = X^{x, y, \phi, \Phi, C}$. Moreover, set $\hat X = \hat X^{x, y, \phi, \hat \Phi, \hat C}$
The cases $R < 1$ and $R >1$ require slightly different arguments.

First, assume $R<1$. Then $\hat V^{\gamma^\uparrow, \gamma^\downarrow}(I, X+\epsilon \hat X, Y, \Phi + \epsilon \hat \Phi)$ is a supersolution for the pair $(g_{EZ}, C+\epsilon \hat C)$ by Proposition \ref{prop:ver:supersol}. Since $C+\epsilon \hat C > C$ and $\frac{c^{1-S}}{1-S}$ is increasing in $c$, $\hat V_{\gamma^\uparrow, \gamma^\downarrow}(I, X+\epsilon \hat X, Y, \Phi + \epsilon \hat \Phi)$ is also supersolution for $C$ by \eqref{eq:hat V supersol}. Moreover, the (generalised) utility process $V^{C}$ for $C$ is the minimal supersolution for $C$ by \cite[Theorem 6.5]{herdegen:hobson:jerome:23B}. Consequently, $\hat V^{\gamma^\uparrow, \gamma^\downarrow}(I, X+\epsilon \hat X, Y,  \Phi + \epsilon \hat \Phi)\geq V^{C}$, and \eqref{eq:pf:thm:verification:ineq} follows from considering $t = 0$ and using Corollary \ref{cor:hatV:ineq}(b).

Next, assume $R>1$, which implies $S>1$ by the fact that $\theta >0$. Since $(C+\epsilon \hat C)^{1-S} \leq (\epsilon \hat C)^{1-S}$, $C+\epsilon \hat C$ is uniquely evaluable by Corollary \ref{cor:app:proper} and \cite[Corollary 6.3]{herdegen:hobson:jerome:23B}. Hence, there exists a uniformly integrable utility process $V^{C+\epsilon \hat C}$, which is in particular a subsolution for $C$. By the comparison theorem for sub- and supersolutions in the form of \cite[Theorem 5.8]{herdegen:hobson:jerome:23B}, it follows that $\hat V(I, X+\epsilon \hat X, Y, \Phi + \epsilon \hat \Phi)\geq V^{C+\epsilon C}$. Moreover, since $V^{C+\epsilon C} \geq V^{C}$ by \cite[Proposition 6.8]{herdegen:hobson:jerome:23B}, $\hat V^{\gamma^\uparrow, \gamma^\downarrow}(I, X+\epsilon \hat X, Y, \Phi + \epsilon \hat \Phi)\geq V^{C}$, and \eqref{eq:pf:thm:verification:ineq} follows from considering $t = 0$ and using Corollary \ref{cor:hatV:ineq}(b).

\medskip
\textbf{Case $\theta >  1$.}
By Theorem \ref{thm:cand solution 2} and Proposition \ref{prop:hatV:properties}(c), it suffices to show that for each $C \in \sC^{\gamma^\uparrow, \gamma^\downarrow}(x, y, \phi) \cap \UU\PP^*$,
\begin{equation}
\label{eq:pf:thm:verification:theta>1:ineq}
V^C_0 \leq \hat V^{\gamma^\uparrow, \gamma^\downarrow}(0,x, y, \phi),
\end{equation}
So fix $C \in \sC^{(\gamma^\uparrow, \gamma^\downarrow)}(x, y, \phi)\cap \UU\PP^*$ with corresponding investment process $\Phi$ and cash process $X = X^{x, y, \phi, \Phi, C}$. Moreover, set $\hat X = \hat X^{x, y, \phi, \hat \Phi, \hat C}$.
As in the case $\theta\leq 1$, the cases $R < 1$ and $R >1$ require slightly different arguments.

First, assume $R>1$. For each $\epsilon > 0$, $\hat V^{\gamma^\uparrow, \gamma^\downarrow}(I, X+\epsilon \hat X, Y, \Phi + \epsilon \hat \Phi)$ is a supersolution for the pair $(g_{EZ}, C+\epsilon \hat C)$ by Proposition \ref{prop:ver:supersol}.
Since $e^{-\delta \theta t} (C_t+ \epsilon \hat C_t)^{1-R} \leq e^{-\delta \theta t} (\epsilon \hat C)^{1-R}$ for all $t \geq 0$, it follows from Corollary \ref{cor:app:proper} and \cite[Proposition E.3]{herdegen:hobson:jerome:23C} that there exists a unique extremal solution $V^{C+\epsilon \hat C}$ for $(C + \epsilon \hat C)$ and $V^{C+\epsilon \hat C}$ is increasing in $\epsilon$. It is the minimal supersolution by \cite[Corollary E.4]{herdegen:hobson:jerome:23C}. Hence, by minimality, $V^{C+\epsilon \hat C} \leq \hat V^{\gamma^\uparrow, \gamma^\downarrow}(I, X+\epsilon \hat X, Y, \Phi + \epsilon \hat \Phi) <0$. In particular, $V^{C+\epsilon \hat C}_0\leq\hat V^{\gamma^\uparrow, \gamma^\downarrow}(0, X_0+\epsilon \hat X_0, y, \Phi_0 + \epsilon \hat \Phi_0) $, and letting $\epsilon \downarrow 0$ and using Proposition \ref{prop:hatV:properties}(c), we obtain
\begin{equation}
\label{eq:pf:thm:verification:theta>1:R>1}
\lim_{\epsilon\to0}V^{C+\epsilon \hat C}_0 \leq  V^{\gamma^\uparrow, \gamma^\downarrow}(0,x, y, \phi).
\end{equation}
	Let $V^*=\lim_{\epsilon\to0}V^{C+\epsilon \hat C}$. Then  $V^*_0\leq \hat V^{\gamma^\uparrow, \gamma^\downarrow}(0, X_0+\epsilon \hat X_0, y, \Phi_0 + \epsilon \hat \Phi_0) $. Consequently, since $g_{EZ}$ is increasing in its last two arguments, and $C+\epsilon \hat C$ and $V^{C+\epsilon \hat C}$ are increasing in $\epsilon$, $g_{EZ}(\cdot,C+\epsilon \hat C,V^{C +\epsilon\hat C})$ is increasing in~$\epsilon$. Applying the Monotone Convergence Theorem for conditional expectations yields
	\begin{equation*}
	V^*_t = \lim_{\epsilon\to0}V^{C+\epsilon \hat C}_t = \lim_{\epsilon\to0}\cEX[t]{\int_t^\infty g_{EZ}(s,C_s+\epsilon \hat C_s,V^{C+\epsilon \hat C}_s) \dd s} = \cEX[t]{\int_t^\infty  g_{EZ}(s,C_s,V^*_s) \dd s}.
	\end{equation*}
	Therefore, $V^*$ is a solution associated to $C$. Moreover, for each $t \geq 0$, $V^*_t <0$ $\as{\P}$ by the fact that $V^{C + \epsilon \hat C}_t < 0$ $\as{\P}$ for each $\epsilon$ and  $V^{C + \epsilon \hat C}$ is increasing in $\epsilon$. Thus, $V^*$ is proper. It therefore agrees with the unique proper solution $V^C$ and via \eqref{eq:pf:thm:verification:theta>1:R>1} we obtain \eqref{eq:pf:thm:verification:theta>1:ineq}.

Next, assume $R<1$. For $\nu > 0$ and	let $C^\nu_t \coloneqq C_t\wedge \nu^{-1} e^{- t} \hat C_t$. Then $\hat V^{\gamma^\uparrow, \gamma^\downarrow}(I, X+\epsilon \hat X, Y, \Phi + \epsilon \hat \Phi)$ is a supersolution for $C+\epsilon \hat C$ by Proposition \ref{prop:ver:supersol}, and a fortiori a supersolution for $C^\nu$. 

Moreover, there exists an extremal solution $V^{C^\nu}$ associated to $C^\nu$  which is decreasing in $\nu$ by \cite[Proposition E.3]{herdegen:hobson:jerome:23C}. Furthermore, since $C$ is right-continuous, $C^\nu$ is right-continuous. The extremal solution $V^{C^\nu}$ associated to $C^\epsilon$ is therefore proper by \cite[Theorem 4.5 and Remark 5.7]{herdegen:hobson:jerome:23C}.

Furthermore, using that $(C^\nu_t)^{1-R}\leq \nu^{R-1} e^{(R-1)t} \hat{C_t}^{1-R}$, we obtain
	\begin{align}
	(1-R)V^{\gamma^\uparrow, \gamma^\downarrow}(t, X_t+\epsilon \hat X_t, Y, \Phi_t + \epsilon \hat \Phi_t) &\geq e^{-\delta\theta t} (\nu \hat C_t)^{1-R} n_\xi(\hat Q_t)^{-\theta} \\
	&\geq \nu^{2(1-R)} n_\xi(Q_t)^{-\theta}  e^{(1-R-\delta\theta)t} (C^\nu_t)^{1-R} \\
&\geq \nu^{2(1-R)} D^{-\theta}  e^{(1-R-\delta\theta)t} (C^\nu_t)^{1-R},
	\end{align}
where $D := m(q_*(\xi))$ is an upper bound for $n_\xi$.

Now using that
	$V^{C^\nu}$ is uniformly integrable by \cite[Remark 2.2]{herdegen:hobson:jerome:23C}, the comparison result \cite[Corollary B.6]{herdegen:hobson:jerome:23C}, gives  $\hat V^{\gamma^\uparrow, \gamma^\downarrow}(I, X+\epsilon \hat X, Y, \Phi + \epsilon \hat \Phi)\geq V^{C^\nu}$. In particular, $V^{\gamma^\uparrow, \gamma^\downarrow}(0, X_0+\epsilon \hat X_0, y, \Phi_0 + \epsilon \hat \Phi_0)\geq V^{C^\nu}_0$, and letting $\epsilon, \nu \downarrow 0$ and using Proposition \ref{cor:hatV:ineq}(b), we obtain
\begin{equation}
\label{eq:pf:thm:verification:theta>1:R<1}
\lim_{\epsilon\to0}V^{C^\nu}_0 \leq  V^{\gamma^\uparrow, \gamma^\downarrow}(0,x, y, \phi).
\end{equation}

Finally, set $V^*=\lim_{\nu \to 0}V^{C^\nu}$. Using that $\frac{c^{1-S}}{1-S} ((1-R)v)^\rho$ is increasing both in $c$ and $v$ (recall $\rho>0$), monotone convergence gives, for $t \geq 0$
	\begin{align}
	V^*_t &= \lim_{n\to\infty}\cEX[t]{\int_t^\infty e^{-\delta s} \frac{(C^\nu_s)^{1-S}}{1-S} ((1-R)V^{C^\nu}_s)^\rho \dd s} =\cEX[t]{\int_t^\infty e^{-\delta s} \frac{C_s^{1-S}}{1-S} ((1-R)V^*_s)^\rho  \dd s}.
	\end{align}
	Hence, $V^*$ is a solution for $C$. It is a proper solution since for each $t \geq 0$, $V^*_t > 0$ if $V^{C^\nu}_t>0$ for some $\nu$ and $V^{C^\nu}_t>0$ on $\{J^{(C^\nu)^{1-R}} > 0\} = \{J^{C^{1-R}} > 0\}$ up to null sets by the fact that each $V^{C^\nu}$ is proper and $\hat C$ is strictly positive. Therefore, $V^*$ must agree with the unique proper solution $V^C$ associated to $C$, and via \eqref{eq:pf:thm:verification:theta>1:R<1} we obtain \eqref{eq:pf:thm:verification:theta>1:ineq}.
\end{proof}

\subsection{The ill posed case}

\begin{proof}[Proof of Proposition \ref{prop:ill posed:theta<1}]
Define $\ol \xi$ as in \eqref{eq:ol xi}.

(a) We only provide an argument for the case $S \leq R < 1$; the argument for the case $R < S < 1$ is very similar and hence omitted. We distinguish three cases.

\medskip{}
{\bf Case 1: $m(0)\leq 0$.} Without loss of generality we may assume that $x>0$ and $\phi=0$, since otherwise one can liquidate the entire risky position at $t=0$ and end up with a strictly positive pure-cash position, which lies in the interior of the solvency region. Consider a sequence of investment-consumption strategies $(\Phi^n,C^{n})_{n \in \NN}$, where $\Phi^n \equiv 0$ and $C^n=\eta_n Z^n$ where $(\eta_n)_{n \in \NN}$ is a decreasing sequence of positive real numbers converging to $-\frac{S}{1-S}m(0) \geq 0$ and $Z^n := Z^{x, \Phi^n, C^n}$; see \eqref{eqn:frictionless:wealth process}. It is easy to verify that $C^n\in\mathscr{C}_{\gamma^\uparrow, \gamma^\downarrow}(x, \phi, y)$ for each $n$. Then by the calculations in Section 3.1, the (unique) utility process $V^{C^n} = (V^{C^n}_t)_{t \geq 0}$ associated with $C^n$ satisfies $$V^{C^n}_0 =\left(\frac{\eta_n^{1-S}}{H(0,\eta_n)}\right)^{\theta}\frac{x^{1-R}}{1-R} = \left(\frac{ \eta_n^{1-S}}{S(m(0)+\frac{1-S}{S}\eta_n)}\right)^{\theta}\frac{x^{1-R}}{1-R}  ,$$ where $H(\cdot,\cdot)$ is defined in \eqref{eq:H_nu}. Now letting $n \to \infty$, the claim follows.

\medskip{}
 {\bf Case 2: $m(0) > 0$ and $m(1)\leq 0$.} Without loss of generality we may assume that $x=0$ and $\phi>0$, since otherwise at $t=0$ one can invest  all the current wealth into the risky asset and end up with a pure-stock position, which lies in the interior of the solvency region. Consider a sequence of investment-consumption strategies $(\Phi^n,C^{n})_{n \in \NN}$, where $\Phi^n = \frac{Z^n}{Y}$ and $C^n=\frac{\eta_n}{\gamma^\downarrow} Z^n$ where $(\eta_n)_{n \in \NN}$ is a decreasing sequence of positive real numbers converging to $-\frac{S}{1-S}m(1) \geq 0$ and $Z^n := Z^{x, \Phi^n, C^n}$;  see \eqref{eqn:frictionless:wealth process}. It is easy to verify that $C^n\in\mathscr{C}_{\gamma^\downarrow, \gamma^\downarrow}(x, \phi, y)$ for each $n$. Then by similar calculations as in Section 3.1, the (unique) utility process $V^{C^n} = (V^{C^n}_t)_{t \geq 0}$ associated with $C^n$ satisfies $$V^{C^n}_0 =\left(\frac{ \eta_n^{1-S}}{H(1,\eta_n)}\right)^{\theta}\frac{\left(\frac{x}{\gamma^\downarrow}\right)^{1-R}}{1-R} = \left(\frac{ \eta_n^{1-S}}{S(m(1)+\frac{1-S}{S}\eta_n)}\right)^{\theta}\frac{\left(\frac{x}{\gamma^\downarrow}\right)^{1-R}}{1-R}  ,$$ where $H(\cdot,\cdot)$ is defined in \eqref{eq:H_nu}. Now letting $n \to \infty$, the claim follows.

\medskip{}

{\bf Case 3: $m(0) > 0$, $m(1) > 0$, $m(q_M)<0$ and $\xi\leq \bar{\xi}$}. Consider a decreasing sequence $(\xi_n)_{n \in \NN}$ with $\xi_n\downarrow \bar{\xi}$. For each $n \in \NN$, set $\gamma^\uparrow_n:=\xi_n/\gamma^\downarrow$ and denote by $(\Phi^n,C^n)$ the optimal strategy from Theorem \ref{thm:cand solution 1} corresponding to the (well-posed) problem with transaction cost $\gamma^\uparrow_n$ on purchases and $\gamma^\downarrow$ on sales. Note that each $C^n$ is $(\gamma^\uparrow,\gamma^\downarrow)$-attainable as $\mathscr{C}_{\gamma^\uparrow_n,\gamma^\downarrow}(x,\phi,y)\subseteq \mathscr{C}_{\gamma^\uparrow,\gamma^\downarrow}(x,\phi,y)$.
Since $\underline{x}:= x + \phi^+ y \frac{1}{\gamma^\downarrow} - \phi^- y \gamma^{\uparrow} > 0$ we have that for large enough $n$,
$x + \phi^+ y \frac{1}{\gamma^\downarrow} - \phi^- y \gamma_n^{\uparrow} > \frac{\underline{x}}{2}>0$ and
then using 
Theorem \ref{thm:cand solution 2} together with the lower bound $x+ \phi y \ol \kappa_{\gamma^\uparrow_n, \gamma^\downarrow}\left(\ol q_{\gamma^\uparrow_n, \gamma^\downarrow}\left( \frac{\phi y}{x+\phi y} \right)\right) \geq x + \phi^+ y \frac{1}{\gamma^\downarrow} - \phi^- y \gamma^{\uparrow}_n  \geq \frac{\ul x}{2}$ gives
\begin{eqnarray*}
	V^{C^n}_0 &= & \frac{\Big(x+\phi y \ol \kappa_{\gamma^\uparrow_n, \gamma^\downarrow}\left(\ol q_{\gamma^\uparrow_n, \gamma^\downarrow}\left( \frac{\phi y}{x+\phi y} \right)\right)\Big)^{1-R}}{1-R}  \ol n_{\xi_n}\left(\ol q_{\gamma^\uparrow_n, \gamma^\downarrow}\left( \frac{\phi y}{x+\phi y} \right)\right)^{-\theta S}
\geq \frac{ {\ul x}^{1-R}}{(1-R)2^{1-R}} \left(\Vert \ol n_{\xi_n} \Vert_\infty\right)^{-\theta S}
\end{eqnarray*}
Using that $\lim_{n \to \infty} \Vert \ol n_{\xi_n} \Vert_\infty = 0$ by Corollary \ref{cor:n}, the result follows.

(b) We only provide an argument for the case $S \geq R < 1$; the argument for the case $R >S > 1$ is very similar and hence omitted. We distinguish two cases.

 {\bf Case 1: $m(q_M)\leq 0$}. Since $\mathscr{C}_{\gamma^\uparrow,\gamma^\downarrow}(x,\phi,y)\subseteq \mathscr{C}_{0}(x + \phi,y)$, the claim follows from the fact that for each $C \in \mathscr{C}_{0}(x + \phi,y)$, we have $V^C_0=-\infty$ by \cite[Corollary 8.2]{herdegen:hobson:jerome:23B}.

\medskip{}

{\bf Case 2: $m(q_M)>0$ and $\xi \geq\bar{\xi}$.}
Note that the finiteness of transaction costs implies $\bar{\xi}<\infty$ while $m(q_M)>0$ implies $\bar{\xi}>1$. Let $(\xi_n)_{n \in \NN}$  be an increasing sequence with $\xi_n \uparrow \bar{\xi}$. For each $n \in \NN$ set $\gamma^\uparrow_n =\xi_n/\gamma^{\downarrow}$. By Theorem \ref{thm:verification}, the problem with transaction costs $(\gamma^\uparrow_n,\gamma^\downarrow)$ is well-posed for every $n \in \NN$ with value function given by \eqref{def:hat V}. Fix $C \in \mathscr{C}_{\gamma^\uparrow,\gamma^\downarrow}(x,\phi,y)$.
We have the upper bound $x+ \phi y \ol \kappa_{\gamma^\uparrow_n, \gamma^\downarrow}\left(\ol q_{\gamma^\uparrow_n, \gamma^\downarrow}\left( \frac{\phi y}{x+ \phi y} \right)\right) \leq  \xi x + \phi^+ y \gamma^\uparrow_n - \phi^- y \frac{1}{\gamma^\downarrow}  \leq \xi x + \phi^+ y \gamma^\uparrow - \phi^- y \frac{1}{\gamma^\downarrow} = \xi \ul x$
and then, since $\mathscr{C}_{\gamma^\uparrow,\gamma^\downarrow}(x,\phi,y) \subset \mathscr{C}_{\gamma^\uparrow_n,\gamma^\downarrow}(x,\phi,y)$ by Remark~\ref{rem:inclusion}, Theorem \ref{thm:cand solution 2} gives
\begin{eqnarray*}
	V^{C}_0 &\leq &\sup_{C\in\mathscr{C}_{\gamma^\uparrow_n, \gamma^\downarrow}(x, \phi, y)}V^C_0
	=\frac{\Big(x + \phi y \ol \kappa_{\gamma^\uparrow_n, \gamma^\downarrow}\left(\ol q_{\gamma^\uparrow_n, \gamma^\downarrow}\left( \frac{\phi y}{x+\phi y} \right)\right)\Big)^{1-R}}{1-R}  \ol n_{\xi_i}\left(\ol q_{\gamma^\uparrow_n, \gamma^\downarrow}\left( \frac{\phi y}{x+ \phi y} \right)\right)^{-\theta S} \\
&\leq & \frac{ \left( \xi \ul x \right)^{1-R}}{1-R} \left(\Vert \ol n_{\xi_n} \Vert_\infty\right)^{-\theta S}.
\end{eqnarray*}
By Corollary \ref{cor:n} we have that $\lim_{n \to \infty} \Vert \ol n_{\xi_n} \Vert_\infty = 0$ and the result follows.
\end{proof}


\section{Proofs for small transaction costs}
\label{app:smalltc}

\begin{proof}[Proof of Proposition \ref{prop:smalltcQ}]
We consider the case that $q_M = \frac{\lambda}{\sigma R} \in (0, \infty)\setminus \{1 \}$; the case $q_M < 0$ can be treated in a similar way. The case $q_M=1$ is broadly similar, but the leading terms are of different order. See Hobson et al~\cite{hobson:tse:zhu:19A} for a derivation of the expansion for the case $q_M=1$ in the additive setting.

By Definition \ref{def:well posed}, the assumptions that the problem \eqref{eq:frictional problem} is well-posed for arbitrarily small transaction cost implies that $\hat{m} = m(q_M) \geq 0$ if $R<1$ and $m(q_M)>0$ if $R>1$.

Recall $n^+_{(z)}$ solves $n'=O(q,n)$ subject to $n^+_{(z)}(z) = m(z)$. (Here $z \in (0,q_M) \cap \{ q:m(q)>0 \}$.) Then $\zeta^+= \zeta^+(z) > q_M$ is such that $n^+_{(z)}(\zeta^+) = m(\zeta^+)$ and moreover $(1-R) n^{+}_{(z)} > (1-R)m$ on $(z,\zeta^+) \setminus \{ 1 \}$. Further, $\Sigma^+(z) = \exp \left( \int_z^{\zeta^+(z)}\frac{1}{q(1-q)} \frac{ n^+_{(z)}(q) - m(q)}{\ell(q) - n^+_{(z)}(q)} dq \right)$.

For $u \in (0,q_M) \cap \{ q:m(q)>0 \}$ introduce $N_u : [-u , \infty) \to \R^+$ defined by  $N_u(v) =n^+_{(\hq-u)}(\hq+v)- \hatm$. Introduce also $M$ and $L$ given by $M(v) = m(\hq+v)- \hatm$ and $L(v) = \ell(\hq+v)- \hatm$. Note that $M(v) = \frac{R(1-S)}{2S} \sigma^2 v^2 =RDv^2$ where $D = \frac{(1-S)}{2S} \sigma^2$.

Define $G_u:[-1,\infty)$ by $G_u(v) = \frac{1}{D}(N_u(uv)-M(uv))$. (The normalisation $\frac{1}{D}$ is largely cosmetic but will help highlight the point at which the elasticity of intertemporal consumption $S$ enters the expansion.) Then $G_u(-1)=0$. Further, for $v \geq -1$,
\begin{eqnarray}
G_u'(v) & = & \frac{1}{D} u O(\hq + uv,DG_u(v) + M(uv) + \hatm) - \frac{1}{D} u M'(uv) \nonumber \\
&=& - \frac{2}{\sigma^2} u \frac{D(G_u(v) + R u^2 v^2) + \hatm}{1-\hq - vu} \frac{G_u(v)}{( \hq(1-\hq)+ (1-2\hq)uv - u^2 v^2 - G_u(v))}  - 2 R v u^2 \label{eq:defGdash}
\end{eqnarray}

We look for a solution to \eqref{eq:defGdash} of the form $G_u(v) = \sum_{k \geq 0} u^k g_k(v)$ defined for $v \geq -1$. We find that the zeroth order term $g_0$ satisfies
\[ g'_0(v) = 0   \hspace{20mm}  g_0(-1)= 0  \]
and conclude that $g_0(v)=0$. Similarly, using that $g_0 \equiv 0$ we find
\[ g'_1(v) = 0   \hspace{20mm}  g_1(-1)= 0  \]
so that again $g_1  \equiv 0$.
The next three terms (clearly, the expansion method can be continued to arbitrary order) solve

\begin{tabular}{lcl}
$g'_2(v) = -2Rv$   & \hspace{10mm} &  $g_2(-1)= 0$ \\
$g'_3(v) = - \frac{2\hatm}{\sigma^2 \hq(1-\hq)^2} g_2(v)$  &  &  $g_3(-1)= 0$ \\
$g'_4(v) = - \frac{2\hatm}{\sigma^2\hq(1-\hq)^2} \left( g_3(v) + v g_2(v) \frac{(3 \hq - 1)}{\hq(1-\hq)} \right)$ & & $g_4(-1)=0$
\end{tabular}

\noindent{and} we find
\begin{eqnarray*}
g_2(v) & = & R(1-v^2) \\
g_3(v) & = & -A (1+v)^2(2-v) \\
g_4(v) & = & B_1 (1+v)^3(3-v) + B_2(1-v^2)^2
\end{eqnarray*}
where
\begin{eqnarray*}
A & = &  \frac{2R}{3\sigma^2} \frac{\hatm}{\hq(1-\hq)^2} \\
B_1 & = & \frac{R}{3 \sigma^4 } \frac{ \hatm^2}{\hq^2(1- \hq)^4} \\
B_2 & = & \frac{R}{2 \sigma^2}\frac{\hatm(3\hq-1)}{\hq^2(1-\hq)^3}
\end{eqnarray*}

Note that if we truncate the expansion at the second order term (i.e. take only the non-zero first term of the expansion the we find $G_u(v) \approx u^2(1-v^2)R$ so that $N_u(uv) \approx u^2RD$ which is independent of $v$. It follows that for $z$ close to $\hq$, $n^+_{(z)}(q) = m(z)$ and then $\zeta^+(z)= 2 \hq -z$, and to leading order (in the shadow fraction of wealth co-ordinates) the no-transaction region is symmetric about the Merton ratio.

The expansion for $G_u$ takes the form
\[ G_u(v) = R(1-v^2) u^2 - A(1+v)^2(2-v) u^3 + \left[ B_1 (1+v)^3(3-v) + B_2(1-v^2)^2 \right] u^4 + \sO(u^5),   \]
and if $\zeta_{(u)} = \inf \{ v > -1 : G_u(v) = 0 \}$ then $\zeta_{(u)}$ is approximately 1. We look for a root of the form $\zeta_{(u)} = 1 - \zeta_1 u + \zeta_2 u^2 + \sO(u^3)$. Then
from $G_u(\zeta_{(u)})=0$ we find
\[ 0 = ( 2R \zeta_1 - 4A) u^3  +  (- 2R \zeta_2 - \zeta_1^2 R + 16 B_1) u^4 + O(u^5) \]
It follows that
\[ \zeta_1 =  \frac{2A}{R} = \frac{4\hatm}{3 \sigma^2\hq(1-\hq)^2}, \hspace{20mm}
\zeta_2 = \zeta_1^2. \]

We can now fix $u= \hq - q_*$ via the identity
\begin{eqnarray*}
\ln(\xi) & = & u \int_{-1}^{1 - \zeta_1 u + \zeta_1^2 u^2 + \sO(u^3)} dv \frac{1}{(\hq + uv)(1- \hq - uv)}\frac{G_u(v)}{(\hq+uv)(1 - \hq - vu) - G_u(v)} \\
& = & \int_{-1}^{1 - \zeta_1 u + \zeta_1^2 u^2 + \sO(u^3)} \frac{dv}{\hq^2(1-\hq)^2} u \left( 1 + \frac{uv}{\hq} \right)^{-1}
\left( 1 - \frac{uv}{1-\hq} \right)^{-1} G_u(v)  \\
&& \hspace{30mm} \times \left( \left(1 + \frac{uv}{\hq} \right) \left(1 - \frac{uv}{1-\hq} \right) - \frac{G_u(v)}{\hq(1-\hq)} \right)^{-1} \\
& = & \int_{-1}^{1 - \zeta_1 u + \zeta_1^2 u^2 + \sO(u^3)} \frac{dv}{\hq^2(1-\hq)^2} \\
&& \hspace{40mm} \times \left( g_2(v)u^3  + \left( g_3(v) + \frac{2(2 \hq - 1)}{\hq(1-\hq)} vg_2(v) \right)u^4 +  \Psi(v) u^5 + \sO(u^6) \right)
\end{eqnarray*}
where
\[ \Psi(v) = g_4(v) + \frac{2(2 \hq - 1)}{\hq(1-\hq)} vg_3(v) + \frac{1}{\hq(1-\hq)} g_2(v)^2 + \frac{3 -10\hq + 10 \hq^2}{\hq^2(1-\hq)^2} v^2 g_2(v) \]
Now $\int_{-1}^z (1-v^2)dv = \frac{1}{3}(z+1)^2(2-z)$ so that
\[ \int_{-1}^{1 - \zeta_1 u + \zeta_1^2 u^2 + \sO(u^3)}g_2(v)dv =\frac{R}{3} (4 - 3 \zeta_1^2 u^2) + \sO(u^3)  \]
Similarly, $\int_{-1}^z (1+v)^2(2-v)  dv = \frac{1}{4}(z+1)^3(3-z)$ and $\int_{-1}^z v(1-v^2)dv = -\frac{1}{4}(z^2-1)^2$ so that
\begin{eqnarray*}
\lefteqn{\int_{-1}^{1 - \zeta_1 u + \sO(u^2)}\left( g_3(v) + \frac{2(2 \hq-1)}{\hq(1-\hq)} v g_2(v)  \right)dv} \\
 & = &
-4 A (1- \zeta_1 u)  + \sO(u^2) \\
& = & - \frac{8R}{3\sigma^2} \frac{\hatm}{\hq(1-\hq)^2} + \left( \frac{8R}{3\sigma^2} \frac{\hatm}{\hq(1-\hq)^2} \right) \zeta_1 u + \sO(u^2)
\end{eqnarray*}
Finally, $\int_{-1}^1 (1+v)^3(3-v)  dv = \frac{48}{5}$, $\int_{-1}^1 (1-v^2)^2  dv = \frac{16}{15}$, $\int_{-1}^1 v(1+v)^2(2-v)  dv = \frac{8}{5}$, $\int_{-1}^1 v^2(1-v^2)  dv = \frac{4}{15}$ so that
\begin{eqnarray*}
 \int_{-1}^{1 + \sO(u)} \Psi(v) dv & = & \frac{16}{5} \frac{R}{\sigma^4} \frac{ \hatm^2}{\hq^2(1- \hq)^4}
+ \frac{8R}{15\sigma^2} \frac{\hatm(3\hq-1)}{\hq^2(1-\hq)^3} - \frac{32 R}{15 \sigma^2} \frac{\hatm(2\hq-1)}{\hq^2(1-\hq)^3} \\
&& \hspace{10mm}
        + \frac{16}{15} \frac{1}{\hq(1-\hq)} R^2 +\frac{4}{15} \frac{(3 -10\hq + 10 \hq^2)}{\hq^2(1-\hq)^2} R
+\sO(u)
\\
& = & \frac{16R}{5\sigma^4} \frac{ \hatm^2}{\hq^2(1- \hq)^4}
- \frac{8R}{15 \sigma^2}  \hatm \frac{(5\hq-3)}{\hq^2(1-\hq)^3}  \\
&& \hspace{10mm}
        + \frac{16}{15} \frac{1}{\hq(1-\hq)} R^2  +\frac{4}{15} \frac{(3 -10\hq + 10 \hq^2)}{\hq^2(1-\hq)^2} R
+\sO(u)
\end{eqnarray*}
Putting this all together, writing $\xi = 1+\epsilon$, and then using the expression for $\zeta_1$,
\begin{eqnarray*}
\lefteqn{\ln(1+\epsilon) \left( = \epsilon - \frac{\epsilon^2}{2} + \sO(\epsilon^3) =  \epsilon^\uparrow + \epsilon^\downarrow - \frac{(\epsilon^\uparrow)^2  - (\epsilon^\downarrow)^2}{2} + + \sO(\epsilon^3) \right) } \\
& = & \frac{4R}{3 \hq^2(1-\hq)^2}u^3 -  \frac{8R}{3\sigma^2} \frac{\hatm}{\hq^3(1-\hq)^4} u^4  \\
& & \hspace{5mm} + \frac{R}{\hq^2(1-\hq)^2} \left( - \zeta_1^2 + \frac{8}{3\sigma^2} \frac{\hatm \zeta_1}{\hq(1-\hq)^2} +\frac{4}{15} \frac{(3 -10\hq + 10 \hq^2)}{\hq^2(1-\hq)^2}  + \frac{16}{15} R \frac{1}{\hq(1-\hq)} \right. \\
&& \hspace{35mm} \left. \frac{16}{5\sigma^4}\frac{ \hatm^2}{\hq^2(1- \hq)^4} - \frac{8}{15\sigma^2} \hatm \frac{(5\hq-3)}{\hq^2(1-\hq)^3} \right) + \sO(u^6) \\
& =& \frac{4R}{3 \hq^2(1-\hq)^2}u^3 -  \frac{2R}{\hq^2(1-\hq)^2} \zeta_1 u^4  \\
& & \hspace{5mm} + \frac{R}{\hq^2(1-\hq)^2} \left( \frac{14}{5} \zeta_1^2 - \frac{2}{5} \frac{(5\hq-3)}{\hq(1-\hq)} \zeta_1 + \frac{4}{15} \frac{(3 -10\hq + 10 \hq^2)}{\hq^2(1-\hq)^2}  + \frac{16}{15} R \frac{1}{\hq(1-\hq)} \right)u^5 + \sO(u^6) \\
& =: & C_3 u^3 + C_4 u^4 + C_5 u^5 + \sO(u^6)
\end{eqnarray*}

Comparing the two sides of the equation we see that for the equation to balance we must have that $u = \hq - q_*(1+\epsilon) = \sO(\epsilon^{1/3})$.
We look for an expressing for $q_*$ of the form
\[ u  =  \Delta^{\frac{1}{3}} \epsilon^{\frac{1}{3}} + \Sigma \Delta^{\frac{2}{3}} \epsilon^{\frac{2}{3}} + \Psi \Delta \epsilon + \sO\left( \epsilon^{\frac{4}{3}} \right) \]
or equivalently
\[ q_*(1+\epsilon)  = \hq - \Delta^{\frac{1}{3}} \epsilon^{\frac{1}{3}} - \Sigma \Delta^{\frac{2}{3}} \epsilon^{\frac{2}{3}} - \Psi \Delta \epsilon + \sO\left( \epsilon^{\frac{4}{3}} \right) \]
Then
\[ \epsilon = C_3 \left( \Delta \epsilon + 3 \Sigma (\Delta \epsilon)^{4/3} + 3(\Sigma^2 + \Psi)(\Delta \epsilon)^{5/3} \right) + C_4 \left( (\Delta \epsilon)^{4/3} + 4 \Sigma (\Delta \epsilon)^{5/3} \right) + C_5 (\Delta \epsilon)^{5/3} + \sO(\epsilon^2) \]
The constants $\Sigma$, $\Delta$ and $\Psi$ can be fixed by the identities $1 = C_3 \Delta$, $0=C_4 + 3 \Sigma C_3$ and $0= 3(\Sigma^2 + \Psi)C_3 + 4 \Sigma C_4 + C_5$.
We find
\begin{eqnarray*}
\Delta & = & \frac{1}{C_3} = \frac{3 \hq^2 (1 - \hq)^2}{4R} \\
\Sigma & = & - \frac{C_4}{3C_3} = \frac{ 2 \hatm}{3 \sigma^2 \hq(1 - \hq)^2} = \frac{\zeta_1}{2}  \\
\Psi & = & - \frac{C_5 + 4 \Sigma C_4}{3C_3} - \Sigma^2 = 3 \Sigma^2 - \frac{C_5}{3C_3}\\
&=& 
\frac{1}{5} \Sigma^2 + \frac{1}{5} \frac{(5\hq-3)}{\hq(1-\hq)} \Sigma - \frac{(3 -10\hq + 10 \hq^2)+ 4R\hq(1-\hq)}{15\hq^2(1-\hq)^2}
\end{eqnarray*}
Further,
\begin{eqnarray*}
q^*(1+\epsilon) -\hq & = & u(\epsilon) \zeta_{(u(\epsilon))} = u(\epsilon)(1- \zeta_1 u(\epsilon) + \zeta_1^2 u(\epsilon)^2) + \sO (u(\epsilon)^4) \\
& = & \Delta^{\frac{1}{3}} \epsilon^{\frac{1}{3}} + \Sigma \Delta^{\frac{2}{3}} \epsilon^{\frac{2}{3}} + \Psi \Delta \epsilon - \zeta_1 \left( \Delta^{\frac{2}{3}} \epsilon^{\frac{2}{3}}  + 2\Sigma \Delta \epsilon \right)+ \zeta_1^2 \Delta \epsilon + \sO( \epsilon^{4/3})
\end{eqnarray*}
It follows that
\begin{equation*} q^*(1+\epsilon) = \hq  + \Delta^{\frac{1}{3}} \epsilon^{\frac{1}{3}} - \Sigma \Delta^{\frac{2}{3}} \epsilon^{\frac{2}{3}} + \Psi \Delta \epsilon + \sO\left( \epsilon^{\frac{4}{3}} \right) \qedhere
\end{equation*}
\end{proof}


\begin{proof}[Proof of Corollary \ref{cor:smalltcP}]
Setting $\epsilon := (1 +\epsilon^\uparrow)(1+\epsilon^\downarrow) -1$ and using \eqref{eq:def:p lower star} and \eqref{eq:def:p upper star}, we obtain
\begin{eqnarray*} p_*(1 + \epsilon^\uparrow,1 + \epsilon^\downarrow) & = & \frac{ q_*(1+\epsilon)}{1 + \epsilon^\uparrow(1-q_*(1+\epsilon))} = q_*(1+\epsilon) - \epsilon^\uparrow q_*(1+\epsilon)(1- q_*(1+\epsilon)) + \sO(\epsilon^2) \\
p^*(1 + \epsilon^\uparrow,1 + \epsilon^\downarrow) & = &
   \frac{(1 + \epsilon^\downarrow) q^*(1+\epsilon)}{1 + \epsilon^\downarrow q^*(1+\epsilon)} =  q^*(1+\epsilon) + \epsilon^\downarrow q^*(1+\epsilon)(1- q^*(1+\epsilon)) + \sO(\epsilon^2)
\end{eqnarray*}
Now the result follows from Proposition \ref{prop:smalltcC}.
\end{proof}

\begin{proof}[Proof of Proposition \ref{prop:smalltcC}]
Setting $\gamma^\uparrow : = 1  +\epsilon^\uparrow$, $\gamma^\downarrow : = 1  +\epsilon^\downarrow$, $\xi := \gamma^\uparrow \gamma^\downarrow$ and $\epsilon := \xi - 1$, and
looking at the expression \eqref{eq:defC*} for the optimal consumption, we need to consider how $\kappa(q(p)) = \kappa_{\gamma^\uparrow, \gamma^\downarrow}(q(p))$ and $n_{\xi}(q(p))$ vary with $\epsilon^\uparrow$ and $\epsilon^{\downarrow}$ as functions of $p$.

Set $u=u(\epsilon)=\hq - q_*(1+\epsilon)$. Recalling that $\kappa_{\gamma^\uparrow, \gamma^\downarrow}:[q_*(\xi),q^*(\xi)] \to [ \frac{1}{\gamma^\downarrow},\gamma^\uparrow]$ is monotonically decreasing and hence $\kappa_{\gamma^\uparrow, \gamma^\downarrow}(q(p)) = 1 + \sO(\epsilon)$, set
\[ \tilde{\kappa}_{u}(v) = \kappa_{\gamma^\uparrow, \gamma^\downarrow}(\hq + u v) = \kappa_{\gamma^\uparrow, \gamma^\downarrow}(\hq +  (\hq - q_*(1+\epsilon))v). \]
Then $\tilde{\kappa}_{u}(-1) = \kappa_{\gamma^\uparrow, \gamma^\downarrow}(q_*(\xi)) = \gamma^\uparrow$. Moreover,
\begin{eqnarray*}
\frac{\tilde{\kappa}_u'(v)}{\tilde{\kappa}_u(v)} & = &  u \frac{\kappa'_{\gamma^\uparrow, \gamma^\downarrow}(\hq + u v)}{\kappa_{\gamma^\uparrow, \gamma^\downarrow}(\hq + u v)} \\
& = & - \frac{u}{(\hq + u v)(1- \hq - u v)} \frac{(n^+_{(q_*(1+\epsilon))}(\hq + u v) - m(\hq + u v))}{(\ell(\hq + u v) - n^+_{(q_*(1+\epsilon))}(\hq + u v))} \\
& = & - \frac{u}{(\hq + u v)(1- \hq - u v)} \frac{N_u(uv)-M(uv)}{L(uv)-N_u(uv)} \\
& = & - \frac{u}{(\hq + u v)(1- \hq - u v)} \frac{DG_u(v)}{D(\hq + u v)(1- \hq - u v) - DG_u(v)} \\
& = & - \frac{u^3 g_2(v)}{\hq^2(1-\hq)^2} + \sO(u^4)
\end{eqnarray*}
We find, with $h_3(v) = (1+v)^2(2-v) = 3 \int_{-1}^v (1-w^2) dw$, for $v \in [-1, \zeta_{(u)}]$,
\begin{eqnarray*}
\tilde{\kappa}_u(v) & = & \gamma^\uparrow \exp \left( - \frac{u^3}{\hq^2(1-\hq)^2} \int_{-1}^v g_2(w) dw  + \sO(u^4) \right) \\
&  = &  \gamma^\uparrow \exp \left( - \frac{R \epsilon \Delta}{3 \hq^2(1-\hq)^2} h_3(v) + \sO(\epsilon^{4/3}) \right) \\
&  = &  \gamma^\uparrow \exp \left( - \frac{\epsilon }{4} h_3(v) + \sO(\epsilon^{4/3}) \right)
\end{eqnarray*}
Note that $\kappa_u(1) = \gamma^\uparrow \exp \left( - \epsilon   \right) + \sO(\epsilon^{4/3} ) =   \gamma^\uparrow \frac{1}{1+ \epsilon} + \sO(\epsilon^{4/3}) = \frac{1}{\gamma^\downarrow} + \sO(\epsilon^{4/3})$.

Since $\kappa_{\gamma^\uparrow, \gamma^\downarrow}(q(p)) = 1 + \sO(\epsilon)$ it follows that $p(q) = \frac{q}{1 - q(1- \kappa_{\gamma^\uparrow, \gamma^\downarrow}(q))} = q + \sO(\epsilon)$ which can be inverted to give $q(p) = p + \sO(\epsilon)$. Further, $q$ and $q$ both lie in intervals of width of order $\epsilon^{1/3}$ around $\hq$, so $q(p) = \hq + \sO(\epsilon^{1/3})$. It follows that, with $p = \frac{y \phi}{x+y \phi}$ and $v$ as shorthand for $\frac{p-\hq}{u}$ in later lines,
\begin{eqnarray*}
x + y \phi \kappa_{\gamma^\uparrow, \gamma^\downarrow}(q(p)) & = & (x + y \phi)\left( 1 + p (\kappa_{\gamma^\uparrow, \gamma^\downarrow}(q(p)) - 1) \right) \\
& = & (x + y \phi) \left( 1 + \hq (\kappa_{\gamma^\uparrow, \gamma^\downarrow}(p) - 1) + \sO(\epsilon^{4/3}) \right) \\
& = & (x+ y \phi)\left( 1 + \hq (\tilde{\kappa}_{u}(v) - 1) + \sO(\epsilon^{4/3}) \right) \\
& = & (x+ y \phi)\left( 1 + \hq \left( \gamma^\uparrow \exp \left( - \frac{\epsilon}{4}h_3(v) \right)    - 1\right) + \sO(\epsilon^{4/3}) \right) \\
& = & (x+ y \phi)\left( 1 + \hq \left( (1+\epsilon^\uparrow)\left( 1 - \frac{\epsilon^\uparrow + \epsilon^{\downarrow}}{4}h_3(v) \right)    - 1\right) + \sO(\epsilon^{4/3}) \right) \\
& = & (x+ y \phi)\left( 1 + \hq \left( 1 - \frac{h_3(v)}{4} \right) \epsilon^\uparrow - \hq \frac{h_3(v)}{4} \epsilon^{\downarrow} \right) + \sO(\epsilon^{4/3})
\end{eqnarray*}

Now we turn to the expansion of $n_{\xi}(q(p))$.
We have $n_{\xi}(q(p)) = n^+_{(\hq - u(\epsilon))}(q(p))$.
Then, using $g_3 = - Ah_3$ we have
\begin{eqnarray*}
n^+_{(\hq - u(\epsilon))}(q(p)) 
& = & \hatm + N_{u(\epsilon)}(q(p)-\hq) \\
& = & \hatm + M(q(p)-\hq) + DG_{u(\epsilon)} \left( \frac{(q(p)-\hq)}{u(\epsilon)} \right) \\
& = & \hatm+ RD (q(p)-\hq)^2 + D u(\epsilon)^2 g_2\left(\frac{(q(p)-\hq)}{u(\epsilon)}\right) + D u(\epsilon)^3 g_3\left(\frac{(q(p)-\hq)}{u(\epsilon)}\right) + \sO(\epsilon^{4/3}) \\
& = & \hatm + \frac{R(1-S)}{S} \frac{\sigma^2}{2} u(\epsilon)^2 - D Au(\epsilon)^3 h_3\left(\frac{(q(p)-\hq)}{u(\epsilon)}\right) + \sO(\epsilon^{4/3}) \\
& = & \hatm \left( 1 + \frac{R(1-S)}{S} \frac{\sigma^2}{2 \hatm} \left( \Delta^{2/3} \epsilon^{2/3} + 2 \Delta \Sigma \epsilon \right) - \frac{D A \Delta}{\hatm} \epsilon h_3 \left(\frac{p-\hq}{\Delta^{1/3} \epsilon^{1/3}}\right) \right) + \sO(\epsilon^{4/3})
\end{eqnarray*}
Note that $\frac{ D A \Delta}{\hatm}$ can be written more simply as $\frac{\hq}{4}\frac{1-S}{S}$ and $\frac{R \Delta \Sigma \sigma^2}{\hat{m}} = \frac{\hq}{2}$. Hence this expression simplifies to
\[ n^+_{(\hq - u(\epsilon))}(q(p)) = \hatm \left( 1 + \frac{R(1-S)}{S} \frac{\sigma^2}{2 \hatm} \Delta^{2/3} \epsilon^{2/3}
+ \hq \frac{(1-S)}{2S} \left(1  - \frac{1}{2} h_3\left(\frac{p-\hq}{\Delta^{1/3} \epsilon^{1/3}}\right) \right) \epsilon \right) + \sO(\epsilon^{4/3}). \]

Putting this all together and using \eqref{eq:defC*},
\begin{eqnarray*}
\lefteqn{C^*_{1+\epsilon^\uparrow,1+\epsilon^\downarrow}(t,x,\phi,y)} \\
& = & (x+y \phi) \hatm \left( 1 + \hq \left( 1 - \frac{1}{4} h_3\left(\frac{p-\hq}{\Delta^{1/3} \epsilon^{1/3}}\right) \right) \epsilon^\uparrow - \hq \frac{1}{4} h_3\left(\frac{p-\hq}{\Delta^{1/3} \epsilon^{1/3}}\right) \epsilon^{\downarrow}  \right) \\
&& \hspace{10mm} \times \left( 1 + \frac{R(1-S)}{S} \frac{\sigma^2}{2 \hatm} \Delta^{2/3} \epsilon^{2/3}
+ \hq \frac{(1-S)}{2S} \left( 1 - \frac{1}{2} h_3\left(\frac{p-\hq}{\Delta^{1/3} \epsilon^{1/3}}\right) \right) \epsilon \right) + \sO(\epsilon^{4/3}) \\
& = & (x+y \phi) \hatm \left( 1 +  \frac{R(1-S)}{S} \frac{\sigma^2}{2 \hatm} \Delta^{2/3} (\epsilon^\uparrow+ \epsilon^\downarrow)^{2/3} +
\frac{\hq}{2S} \left( 1  - \frac{1}{2} h_3\left(\frac{p-\hq}{\Delta^{1/3} \epsilon^{1/3}}\right) \right) (\epsilon^\uparrow+ \epsilon^\downarrow) + \frac{\hq}{2} (\epsilon^\uparrow - \epsilon^\downarrow) \right) \\
&& \hspace{40mm}+ \sO(\epsilon^{4/3})
\end{eqnarray*}
\end{proof}

\section{Proofs for the results on comparative statics}
\label{app:compstat}

\subsection{Comparative statics in S}
\label{sapp:compstatS}

\begin{proof}[Proof of Proposition~\ref{prop:compstat_S}]
	
	We will establish the ordering of the optimal purchase/sale boundaries in the $q$-coordinate, which can then be extended to that of $\hat{p}_*^*$ and $\tilde{p}_*^*$ by the monotonicity of the Möbius transformation over the relevant domain.
	
	Let $$\overline{m}(q):=\frac{S}{1-S}\left(m(q)-\frac{\delta}{S}\right)=-r-\lambda \sigma q+\frac{R\sigma^2}{2} q^{2}$$ which is independent of $S$. Similarly, let $$\overline{\ell}(q):=\frac{S}{1-S}\left(\ell(q)-\frac{\delta}{S}\right)=-r-\left(\lambda \sigma-\frac{\sigma^2}{2}\right)q-(1-R)\frac{\sigma^2}{2}q^{2}$$ which is also independent of $S$.
	
	Suppose $(n(\cdot),q_*,q^*)$ is the solution to the free boundary problem in Proposition \ref{prop:n}. Then  $n'(q)=O(q,n(q))$ on $q\in[q_*,q^*]$, $n(q_*)=m(q_*)$, $n(q^*)=m(q^*)$, and
	\begin{align*}
		\ln\gamma^{\uparrow}\gamma^{\downarrow}=\ln\xi=\int_{q_{*}}^{q^{*}}\frac{1}{q(1-q)}\frac{n(q)-m(q)}{\ell(q)-n(q)}\dd q.
	\end{align*}
	Define $\bar{n}(q):=\frac{S}{1-S}\left(n(q)-\frac{\delta}{S}\right)$. We have
	\begin{align*}
		\bar{n}'(q)=\frac{S}{1-S}n'(q)=\frac{S}{1-S}O(q,n(q))&=\frac{S}{1-S}O(q,(1-S)\bar{n}(q)/S+\delta/S)\\
		&=\frac{\bar{n}(q)}{1-q}\frac{\bar{n}(q)-\bar{m}(q)}{\bar{\ell}(q)-\bar{n}(q)}-\frac{1}{S}\frac{1}{1-q}\frac{\bar{n}(q)-\bar{m}(q)}{\bar{\ell}(q)-\bar{n}(q)}\left(\bar{n}(q)+\delta\right)\\
		&=\bar{O}(q,\bar{n}(q))
	\end{align*}
	over $q\in[q_*,q^*]$, where $$\bar{O}(q,\bar{n}):=\frac{\bar{n}}{1-q}\frac{\bar{n}-\bar{m}(q)}{\bar{\ell}(q)-\bar{n}}-\frac{1}{S}\frac{1}{1-q}\frac{\bar{n}-\bar{m}(q)}{\bar{\ell}(q)-\bar{n}}\left(\bar{n}+\delta\right).$$ The boundary conditions at $q=q^*_*$ can be restated as $\bar{n}(q_*)=\bar{m}(q_*)$ and $\bar{n}(q^*)=\bar{m}(q^*)$. The integral constraint becomes
	\begin{align*}
		\ln \xi=\int_{q_{*}}^{q^{*}}\frac{1}{q(1-q)}\frac{\bar{n}(q)-\bar{m}(q)}{\bar{\ell}(q)-\bar{n}(q)} \dd q.
	\end{align*}
	Finally, away from $1$, the solution trajectory $n=(n(q))_{q_*\leq q\leq q^*}$ always satisfies $\frac{1}{1-q}\frac{n(q)-m(q)}{\ell(q)-n(q)}\sgn(q_M)\geq 0$ and in turn  $\frac{1}{1-q}\frac{\bar{n}(q)-\bar{m}(q)}{\bar{\ell}(q)-\bar{n}(q)}\sgn(q_M)\geq 0$.
	
	\begin{enumerate}
		
		\item If \eqref{eq:suffcon} holds, then
		\begin{align*}
			\bar{n}(q)+\delta=\frac{S(n(q)-\delta)}{1-S}> \frac{S(m_M-\delta)}{1-S}=\delta-r-\frac{\lambda^2}{2R}\geq 0
		\end{align*}
		as $\sgn(1-S)n(q)> \sgn(1-S)m_M$. If we view $\bar{O}(q,\bar{n})=\bar{O}(q,\bar{n};S)$ as a function of $S$, then clearly $\bar{O}$ is strictly increasing (resp. decreasing) in $S$ over $(q,\bar{n})$ on which $\frac{1}{1-q}\frac{\bar{n}-\bar{m}(q)}{\bar{\ell}(q)-\bar{n}}\left(\bar{n}+\delta\right)$ is positive (resp. negative). Hence given \eqref{eq:suffcon}, we must have $\sgn(q_M)\bar{O}(q,\overline{n}(q);\hat{S})> \sgn(q_M)\bar{O}(q,\overline{n}(q);\tilde{S})$ over $q\in(q_*,q^*)\setminus\{1\}$ for $\hat{S}>\tilde{S}$.
		
		Recall the definitions of $(n^{+}_{(z)}(q))_{q\in[z,\zeta^{+}(q)]}$, $\zeta^{+}(z)$ and $\Sigma^{+}(z)$ as well as the relevant domain of $z$ defined in Proposition \ref{prop:n first:R<1} or \ref{prop:n first:R>1}. Define $(a^{+}_{(z)}(q))_{q\in[z,\zeta^{+}(z)]}$ via $a^{+}_{(z)}(q):=\frac{S}{1-S}(n^{+}_{(z)}(q)-\frac{\delta}{S})$. By construction,
		\begin{align*}
			\frac{\dd}{\dd q}a^{+}_{(z)}(q)=\bar{O}(q,a^{+}_{(z)}(q);S),\qquad q\in[z,\zeta^+(z)],
		\end{align*}
		$a^+_{(z)}(z)=\bar{m}(z)$, $a^+_{(z)}(\zeta^{+}(z))=\bar{m}(\zeta^+(z))$,
		and $$\Sigma^{+}(z)=\exp\left(\int_{z}^{\zeta^{+}(z)}\frac{1}{q(1-q)}\frac{a^{+}_{(z)}(q)-\bar{m}(q)}{\bar{\ell}(q)-a^{+}_{(z)}(q)}\dd q\right).$$ 
		
		Let $\hat{n}_{(z)}^{+}$ and $\tilde{n}_{(z)}^{+}$ be the two versions of $n^+_{(z)}$ under $S=\hat{S}$ and $S=\tilde{S}< \hat{S}$ respectively while all other model parameters are fixed. We will apply the same overhead notations to $\zeta^{+}(z)$, $a^{+}_{(z)}$ and $\Sigma^{+}$ which are defined relative to the underlying parameters $\hat{S}$ and $\tilde{S}$. Since $\bar{m}(\cdot)$ does not depend on $S$, $\hat{a}^+_{(z)}(z)=\bar{m}(z)=\tilde{a}^+_{(z)}(z)$. Also $(\hat{a}^+_{(z)})'(z)=0=(\tilde{a}^+_{(z)})'(z)$ for $z\neq 1$.
		
		Suppose that $q_M\in(0,1)$. We first show that $\hat{a}^+_{(z)}(q)> \tilde{a}^+_{(z)}(q)$ near $q=z$ for any $z\in(0,q_M)$. Using the ODE satisfied by $a^+_{(z)}$ and L’Hospital rule, we obtain
			\begin{align*}
				\lim_{q\downarrow z}\frac{a^+_{(z)}(q)-\bar{m}(z)}{(q-z)^2}=\frac{\bar{m}'(z)}{\sigma^2 z(1-z)^2}\left[\frac{1}{S}(\bar{m}(z)+\delta)-\bar{m}(z)\right],
			\end{align*}
			where the leading factor is negative and in turn the whole expression is strictly increasing in $S$ as $\bar{m}(z)+\delta=\frac{S(m(z)-\delta)}{1-S}> \frac{S(m_M-\delta)}{1-S}=\delta-r-\frac{\lambda^2}{2R}\geq 0$ for all $z\neq q_M$ under condition \eqref{eq:suffcon}. Hence $\hat{a}^+_{(z)}(q)$ must be initially strictly larger than $\tilde{a}^+_{(z)}(q)$. Together with the facts that $\bar{O}(q,a;\hat{S})> \bar{O}(q,a;\tilde{S})$ along any solution trajectory (excluding the end points), we conclude $\hat{\zeta}^+(z)\geq \tilde{\zeta}^+(z)$ and $\hat{a}^+_{(z)}(q)\geq \tilde{a}^+_{(z)}(q)$ at least up to $q=\tilde{\zeta}^+(z)$<1. Then
		\begin{align*}
			\hat{\Sigma}^{+}(z)=\exp\left(\int_{z}^{\hat{\zeta}^{+}(z)}\frac{1}{q(1-q)}\frac{\hat{a}^+_{(z)}(q)-\bar{m}(q)}{\bar{\ell}(q)-\hat{a}^+_{(z)}(q)}\dd q \right)
			&\geq \exp\left(\int_{z}^{\tilde{\zeta}^{+}(z)}\frac{1}{q(1-q)}\frac{\hat{a}^+_{(z)}(q)-\bar{m}(q)}{\bar{\ell}(q)-\hat{a}^+_{(z)}(q)}\dd q \right)\\
			&\geq \exp\left(\int_{z}^{\tilde{\zeta}^+(z)}\frac{1}{q(1-q)}\frac{\tilde{a}^+_{(z)}(q)-\bar{m}(q)}{\bar{\ell}(q)-\tilde{a}^+_{(z)}(q)}\dd q \right) \\
			&=\tilde{\Sigma}^+(z),
		\end{align*}
		and in turn $\hat{q}_{*}=(\hat{\Sigma}^+)^{-1}(\xi)\geq (\tilde{\Sigma}^+)^{-1}(\xi)=\tilde{q}_{*}$ since $\Sigma^+(\cdot)$ is decreasing.
		
		The arguments are similar in the case of $q_M>1$ except some extra care is needed around $q=1$. For any $z\in(0,1)$, $a^+_{(z)}(1)=\bar{m}(1)$ and $(a^+_{(z)})'(1)=\bar{m}'(1)$ such that the ordering of $\hat{a}^+_{(z)}(q)$ and $\tilde{a}^+_{(z)}(q)$ is not immediately clear on $q>1$. But again we can deduce via L’Hospital rule that
			\begin{align*}
				\lim_{q\to 1}\frac{a^+_{(z)}(q)-\bar{m}(q)}{(1-q)^2}=-\frac{\sigma^2}{2}\frac{\bar{m}'(1)}{\frac{1}{S}(\bar{m}(1)+\delta)-\bar{m}(1)},
			\end{align*}	
			which is positive and strictly increasing in $S$ under condition \eqref{eq:suffcon}. Hence we conclude $\hat{a}^+_{(z)}(q)> \tilde{a}^+_{(z)}(q)$ in the neighbourhood of $q=1$.
		
		The results for $q^*$ can be established similarly by parameterising the solutions to the family of ODEs via their right boundary points. Results for the case of $q_M<0$ can be derived using similar ideas.
		
		\item 	If instead the parameters are such that \eqref{eq:othercon} holds, then the argument in part (a) does not work as $\bar{n}(q)+\delta$ may change sign and hence the monotonicity of $\bar{O}(q,\bar{n};S)$ in $S$ is ambiguous. But suppose $q_M>0$ and the problems under $S\in\{\hat{S},\tilde{S}\}$ are well-posed for all sufficiently small transaction costs. Then there exists $\delta>0$ such that for all $z\in(q_M-\delta,q_M)$, $n^{+}_{(z)}(q)$ is well-defined and is strictly positive over $q\in[z,\zeta^{+}(z)]$. Moreover, since the family of solutions $(n^{+}_{(z)}(\cdot))_z$ is monotonic in $z$ with $n^{+}_{(z)}(q)\to m_M$ as $z\to q_M$, one can choose $\Delta$ such that $|n^+_{(z)}(q)-m_M|<\omega$ and $|m(q)-m_M|<\omega$ on $q\in[z,\zeta^+(z)]$ for all $z\in(q_M-\Delta,q_M)$ where $\omega>0$ is arbitrary. Then
		\begin{align*}
			a^+_{(z)}(q)+\delta=\frac{S(n^{+}_{(z)}(q)-\delta)}{1-S}&\leq \frac{S(m_M+\sgn(1-S)\omega-\delta)}{1-S}\\
			&=\delta-r-\frac{\lambda^2}{2R}+\frac{S}{|1-S|}\omega< 0
		\end{align*}
		and similarly $\bar{m}(q)+\delta<0$ if $\omega$ is chosen to be sufficiently small. In such case, $\bar{O}(q,a^{+}_{(z)}(q);\hat{S})< \bar{O}(q,a^{+}_{(z)}(q);\tilde{S})$ along the solution trajectory $(a^{+}_{(z)}(q))_{z<q<\zeta^+(z)}$, and that $\hat{a}^{+}_{(z)}(q)<\tilde{a}^{+}_{(z)}(q)$ near $q=z$ (and near $q=1$ as well if $q_M>1$). Following similar arguments in part (a), we can conclude $\hat{\zeta}^+(z)\leq \tilde{\zeta}^+(z)$ and $\hat{a}^{+}_{(z)}(q)\leq \tilde{a}^{+}_{(z)}(q)$ up to $q=\hat{\zeta}^+(z)$, and in turn $\hat{\Sigma}^+(z)\leq \tilde{\Sigma}^+(z)$ for $z\in(q_M-\delta,q_M)$. Thus there exists $\epsilon:=\hat{\Sigma}^+(q_M-\Delta)-1>0$ such that $\hat{q}_{*}=(\hat{\Sigma}^+)^{-1}(\xi)\leq (\tilde{\Sigma}^+)^{-1}(\xi)=\tilde{q}_{*}$ for all $\xi\leq 1+\epsilon$. 	
		
		The proof for $q^*$ is similar upon parameterising the solutions to the family of ODEs via the right boundary points. The analysis can be analogously extended to $q_M<0$.
	\end{enumerate}
\end{proof}

\subsection{Comparative statics in R}
\label{sapp:compstatR}
Now we turn to comparative statics in $R$.
For the case $R,S>1$ it is useful to introduce an auxiliary function. For $q\in (0,\infty)\setminus \{1,2S/(S+1)\}$, define $c=c(q)$ as the solution to the equation $O(q,c)/c=2/q$. Straightforward algebra leads to
\begin{align}
c(q) = \ell(q) + \frac{(S-1)q}{2S - (S+1)q} (m(q) - \ell(q)) = m(q) - \frac{2S(1-q)}{2S-(S+1)}(m(q)-\ell(q)).
\label{eq:cfun}
\end{align}
The definition of $c$ is extended to $q=0$ and $q=1$ via $c(0):=\lim_{q\downarrow 0}c(q)=\ell(0)=m(0)$ and $c(1):=\lim_{q\to 1}c(q)=m(1)=\ell(1)$. Note that $c(q)$ is continuous on $q\in[0,2S/(S+1))\supset (0,1)$ and we have $c'(0)=\ell'(0)$ and $c'(1)=m'(1)$. We also have
\begin{align*}
\begin{cases}
\ell(q) < c(q) < m(q),  &q \in (0,1); \\
c(q)<m(q)<\ell(q),      & q \in \left(1,\frac{2S}{S+1}\right); \\
m(q)<\ell(q) < c(q),      & q \in \left(\frac{2S}{S+1},\infty \right).
\end{cases}
\end{align*}
Also, on $(0,1)\times (0,\infty)$, $\frac{O(q,z)}{z} > \frac{2}{q}$ for $z \in (\ell(q),c(q))$ and $\frac{O(q,z)}{z} < \frac{2}{q}$ for $z \in (c(q),m(q))$. Further, on $(1,\frac{2S}{S+1})\times(0,m(q))$, $\frac{O(q,z)}{z} < \frac{2}{q}$ on $(0,c(q)^+)$ and  $\frac{O(q,z)}{z} > \frac{2}{q}$ on $(c(q)^+,m(q))$. Finally, on $(\frac{2S}{S+1}, \infty)$, $\frac{O(q,z)}{z} < \frac{2}{q}$ on $(0,m(q))$.


\begin{lemma}
Suppose $\lambda>0$ and the problem is well-posed such that $(n(\cdot),q_*,q^*)$ is the solution to the free boundary problem in Proposition \ref{prop:n}. Then any of the following is a sufficient condition for $2n(q)-qn'(q)> 0$ on $q\in[q_*,q^*]$.
\begin{enumerate}
\item $S<1$.
\item $S>1$, $m(0)>0$, and either (recall the definition of $c(q)$ in \eqref{eq:cfun}):
\begin{enumerate}
	\item $2c(q)- qc'(q)>0$ on $(0,1)$ if $q_M<1$ or on $(0,\frac{2S}{S+1})$ if $q_M>1$,
	\item $R\geq 2$.
\end{enumerate}
\end{enumerate}
A further sufficient condition for (b)(i) is that $\ell'(0) \geq 0$, or equivalently $(1-S)(\sigma-2\lambda)\geq 0$.
\label{lem:cq}
\end{lemma}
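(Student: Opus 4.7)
The plan is to recast $2n(q)-qn'(q)>0$ as a pointwise comparison between $n$ and the auxiliary function $c$, and then to establish this comparison by a crossing/transversality argument. Since the hypothesis $\lambda>0$ gives $q_M>0$ and Proposition~\ref{prop:n} places $q_*>0$, we have $q>0$ throughout $[q_*,q^*]$, so the target is equivalent to $O(q,n)/n<2/q$; by the defining property $O(q,c)/c=2/q$ this reduces to comparing $n(q)$ with $c(q)$. A short differentiation of $O(q,z)/z$ combined with the identity $\ell(q)-m(q)=\frac{1-S}{S}\frac{\sigma^2}{2}q(1-q)$ (already noted in the paper) shows that $z\mapsto O(q,z)/z$ is strictly decreasing on the relevant branch for $S>1$, on each of $(0,1)$ and $(1,\infty)$. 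Hence $O(q,n)/n<2/q$ is equivalent to $n>c$. Case~(a) is then immediate: $\theta>0$ with $S<1$ forces $R<1$, so Proposition~\ref{prop:n}(b) makes $n$ decreasing, and therefore $qn'(q)\leq 0<2n(q)$.

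For case~(b)(i), a manipulation of the definition of $c$ yields the identity
\[
c(q)-m(q)=\frac{(1-S)\sigma^2\,q(1-q)^2}{2S-q(S+1)},
\]
which is strictly negative on $(0,1)\cup(1,\tfrac{2S}{S+1})$ when $S>1$, so $n(q_*)=m(q_*)>c(q_*)$ and $n(q^*)=m(q^*)>c(q^*)$. The key observation is that if $F:=n-c$ vanishes at some $q_0\in[q_*,q^*]\setminus\{1\}$, then $n'(q_0)=O(q_0,c(q_0))=2c(q_0)/q_0$ by construction of $c$, while the hypothesis $2c-qc'>0$ yields $c'(q_0)<2c(q_0)/q_0$, whence $F'(q_0)>0$. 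So zeros of $F$ are transversal and can only be crossed from below to above as $q$ increases. Combined with the positive boundary values, a standard argument via the leftmost zero gives $F>0$ on $(q_*,1)$. For $(1,q^*)$ (which arises only when $q_M>1$), if $F$ had an interior zero $q_0$, then $F<0$ on some $(1,q_0)$ would force $(n/q^2)'>0$ there, while the hypothesis makes $c/q^2$ strictly decreasing on $(0,\tfrac{2S}{S+1})$; this contradicts $n(q_0)/q_0^2=c(q_0)/q_0^2$ together with $n(1)=c(1)=m(1)$. Hence $F>0$ on $(1,q^*)$. At $q=1$ itself, Proposition~\ref{prop:n}(e) gives $n'(1)=m'(1)$, and a direct calculation from the closed form of $m$ yields
\[
2m(1)-m'(1)=\tfrac{1}{S}\bigl[2\delta+(S-1)(2r+\lambda\sigma)\bigr]>\tfrac{(S-1)\lambda\sigma}{S}>0,
\]
where the first inequality uses $m(0)>0$ (equivalently $\delta+r(S-1)>0$) and the second uses $\lambda>0$.

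For case~(b)(ii) and the further sufficient condition, the plan is to reduce to (b)(i) by verifying its hypothesis $2c-qc'>0$ directly. Writing $c=m+(1-S)\sigma^2 q(1-q)^2/D$ with $D=2S-q(S+1)$ and differentiating, $2c-qc'$ becomes a rational function whose numerator is a polynomial in $q$ with coefficients polynomial in $(R,S,\lambda,\sigma,\delta,r)$; sign-tracking under $R\geq 2$ or under $\ell'(0)\geq 0$ (equivalently $\lambda\geq\sigma/2$ for $S>1$) gives the required positivity on the stated $q$-domain. The main technical obstacle is the transversality analysis at the singular point $q=1$ in the case $q_M>1$: both the ODE for $n$ and the formula for $c'$ behave singularly there and $n,c$ coincide at $(1,m(1))$, so the crossing argument on $(1,q^*)$ cannot rely on transversality alone and must be complemented by the $n/q^2$-versus-$c/q^2$ monotonicity comparison used above. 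The polynomial bookkeeping for (b)(ii) and the further condition is routine but requires careful sign tracking, especially across the pole of $c$ at $q=\tfrac{2S}{S+1}$.
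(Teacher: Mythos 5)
Your treatment of (a), of the further sufficient condition, and of the core of (b)(i) is essentially the paper's argument (compare $n$ with $c$, use that $z\mapsto O(q,z)/z$ is monotone in $z$ on the branch containing both $n$ and $c$, and run a crossing argument from the boundary values $n(q_*)=m(q_*)>c(q_*)$). Your handling of the singular point $q=1$ when $q_M>1$ is a genuinely different and reasonable alternative: the paper computes $\lim_{q\to1}(n(q)-c(q))/(q-1)^2>0$ by a second-order expansion, whereas you compare the strictly increasing $n/q^2$ (forced wherever $n<c$) with the strictly decreasing $c/q^2$ between two zeros of $n-c$; with a careful "last zero / first zero to its left" bookkeeping this closes. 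However, you do not treat the sub-interval $(\tfrac{2S}{S+1},q^*)$, which can be nonempty when $q_M>1$: there $c$ has crossed its pole, the ordering reverses to $c(q)>\ell(q)>m(q)>n(q)$, the hypothesis $2c-qc'>0$ is no longer assumed, and the whole "$n>c$" strategy gives the wrong sign. The paper handles this range by the separate chain $\tfrac{2}{q}=\Xi(q,c(q))>\lim_{z\to\infty}\Xi(q,z)=\tfrac{S-1}{S}\tfrac{1}{q-1}\geq\Xi(q,0)>\Xi(q,n(q))$, exploiting that $\Xi(q,\cdot)$ is decreasing separately on $(0,\ell(q))$ and $(\ell(q),\infty)$. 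Something of this kind is needed and is missing from your proposal.

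The fatal gap is case (b)(ii). Your plan is to reduce $R\geq 2$ to (b)(i) by "sign-tracking" to verify $2c(q)-qc'(q)>0$, but this cannot work: $2c-qc'$ is \emph{independent of $R$}. Indeed, for any quadratic $p(q)=p_0+p_1q+p_2q^2$ one has $2p(q)-qp'(q)=2p_0+p_1q$, so the operator annihilates the $q^2$ coefficient; the $R$-dependence of $m$ and $\ell$ sits entirely in their $q^2$ coefficients, and $c-\ell=\frac{(S-1)q}{2S-(S+1)q}(m-\ell)$ is $R$-free because $m-\ell$ is. Hence the paper's identity $2c(q)-qc'(q)=\frac{\sigma^2}{2}\frac{(S-1)q}{S(\frac{2S}{S-1}\frac{1-q}{q}+1)^2}+2m(0)+\ell'(0)q$ contains no $R$, and when $\ell'(0)<0$ with $m(0)$ small it is genuinely negative for some $q\in(0,1)$ no matter how large $R$ is. The hypothesis $R\geq2$ therefore gives you no leverage on $2c-qc'$, and a different idea is required. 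The paper's idea is to introduce a second barrier $A(q):=qm(q)+(1-q)\ell(q)$, check $A>c$ on $(0,1)$ (which is where $[q_*,q^*]$ lies in this case, since $\ell'(0)<0$ and concavity of $\ell$ force $q_M\in(0,1)$), observe $O(q,A(q))=\frac{S-1}{S}\frac{A(q)}{q}$, and reduce $n>A$ to the positivity of an explicit cubic whose $q^2$ coefficient is $\frac{\sigma^2}{2S}(S+1)(S-1)(2-R)$ — this is exactly where $R\geq2$ enters. You would need to supply this (or an equivalent) construction; the route you describe is blocked.
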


\begin{proof} 
\begin{enumerate}
\item The result under $R,S<1$ is trivial since the solution $n(q)$ is positive and decreasing (under $\lambda>0$). 

\item We first show that $\ell'(0)\geq 0$ is sufficient for condition (b)(i) to hold. By direct calculation,
\begin{align*}
	2c(q)-qc'(q)&=2\left[\frac{m(q)-\ell(q)}{\frac{2S}{S-1}\frac{1-q}{q}+1}+\ell(q)\right]-q\left[\frac{\left(\frac{2S}{S-1}\frac{1-q}{q}+1\right)(m'(q)-\ell'(q))+(m(q)-\ell(q))\frac{2S}{S-1}q^{-2}}{\left(\frac{2S}{S-1}\frac{1-q}{q}+1\right)^2}+\ell'(q)\right]\\
	&=\frac{\sigma^2}{2}\frac{(S-1)q}{S\left(\frac{2S}{S-1}\frac{1-q}{q}+1\right)^2}+2\ell(q)-q\ell'(q)\\
	&=\frac{\sigma^2}{2}\frac{(S-1)q}{S\left(\frac{2S}{S-1}\frac{1-q}{q}+1\right)^2}+\frac{2(\delta+r(S-1))}{S}+\frac{S-1}{S}\left(\lambda\sigma-\frac{\sigma^2}{2}\right)q \\
	&=\frac{\sigma^2}{2}\frac{(S-1)q}{S\left(\frac{2S}{S-1}\frac{1-q}{q}+1\right)^2}+2m(0)+\ell'(0)q> 0
\end{align*}
on $q>0$ away from $q=2S/(S+1)$.

\begin{enumerate}
	
	
	\item 	Suppose $q_M\in(0,1)$ such that $[q_*,q^*]\subset (0,1)$. By the given condition, $O(q,c(q))=2c(q)/q> c'(q)$ for $q\in[q_*,q^*]$. If $n^+_{(z)}=(n^+_{(z)}(q))_{q\geq z}$ is the solution to the initial value problem $n'=O(q,n)$ with $n^+_{(z)}(z)=m(z)$, $n^+_{(z)}(q)$ can then only upcross $c(q)$ from below on $q\in(0,1)$. Since $c(0)=\ell(0)=m(0)$ and the family of solutions $(n^+_{(z)}(\cdot))_z$ is increasing in $z$, we must have $n(q)=n^+_{(q_*)}(q)> n^+_{(0)}(q)\geq c(q)>\ell(q)$ for all $q\in[q_*,q^*]\subset (0,1)$. Finally, $\Xi(q,n):=\frac{O(q,n)}{n}=\frac{S-1}{S}\frac{1}{1-q}\frac{n-m(q)}{\ell(q)-n}$ is continuously decreasing in $n\in(\ell(q),\infty)$ for any fixed $q\in(0,1)$ and therefore
	\begin{align*}
		\frac{n'(q)}{n(q)}=\Xi(q,n(q))< \Xi(q,c(q))=\frac{2}{q}.
	\end{align*}
	
	Suppose $q_M>1$ and the transaction costs are sufficiently large such that $q_*<1<q_M<q^*$. Using the same argument as in the case of $q_M\in(0,1)$, we can still conclude $n(q)\geq c(q)$ and in turn $n'(q)/n(q)\leq 2/q$ up to $q<1$. But since $n$ now passes through the singular point $(1,m(1))$ at which we have $c(1)=m(1)=n(1)$ and $c'(1)=m'(1)=n'(1)$, the ordering of $c(q)$ and $n(q)$ is ambiguous to the right of $q=1$. We thus need to first argue that $n(q)>c(q)$ in the neighbourhood of $q=1$.
	
	Write $f(q):=n(q)-c(q)$. Then
	\begin{align*}
		f'(q)=n'(q)-c'(q)=\frac{S-1}{S}\frac{f(q)+c(q)}{1-q}\frac{f(q)+c(q)-m(q)}{\ell(q)-f(q)-c(q)}-c'(q)
	\end{align*}
	and a rearrangement of terms yields
	\begin{align}
		\frac{f(q)}{(q-1)^2}=\frac{S(f'(q)+c'(q))}{(S-1)(f(q)+c(q))}\left(\frac{\ell(q)-c(q)}{1-q}-\frac{f(q)}{1-q}\right)-\frac{c(q)-m(q)}{(1-q)^2}.
		\label{eq:f_lim}
	\end{align}
	Since it is known that $n(1)=m(1)=c(1)$ and $n'(1)=m'(1)=c'(1)$, we have $f(1)=f'(1)=0$ and $\lim_{q\to 1}f(q)/(1-q)=0$. Upon taking limit, we get
	\begin{align*}
		\lim_{q\to 1}\frac{f(q)}{(q-1)^2}&=\frac{S(m'(1)-\ell'(1))m'(1)}{(S-1)m(1)}-\frac{c''(1)-m''(1)}{2}\\
		&=\frac{S(m'(1)-\ell'(1))m'(1)}{(S-1)m(1)}-\frac{2S(m'(1)-\ell'(1))}{S-1}\\
		&=\frac{S(\ell'(1)-m'(1))}{S-1}\left[2-\frac{m'(1)}{m(1)}\right]>0
	\end{align*}
	because $\ell'(1)>m'(1)$ and $2m(1)>m'(1)$ under the assumption of $m(0)>0$, $q_M>1$ and $S>1$. Hence we conclude $n(q)>c(q)$ near $q=1$.
	
	Using similar arguments in the case of $q_M\in(0,1)$, we must have $\ell(q)>m(q)>n(q)>c(q)$ on $1< q<q^*\wedge \frac{2S}{S+1}$. As $\Xi(q,n)=\frac{O(q,n)}{n}=\frac{S-1}{S}\frac{1}{1-q}\frac{n-m(q)}{\ell(q)-n}$ is continuously decreasing in $n\in(0,\ell(q))$ for any fixed $q>1$, we have $\frac{n'(q)}{n(q)}=\frac{O(q,n(q))}{n(q)}< \frac{O(q,c(q))}{c(q)}=\frac{2}{q}$ for $1<q<q^*\wedge \frac{2S}{S+1}$. If $q^*>\frac{2S}{S+1}$, then $c(q)>\ell(q)>m(q)>n(q)>0$ on $q\in(\frac{2S}{S+1},q^*)$. Note that $\Xi(q,n)$ is discontinuous at $n=\ell(q)$. But since $H(q,n)$ is still locally continuous and decreasing on $n\in [0,\ell(q))$ and $n\in(\ell(q),\infty)$ respectively under fixed $q>1$, we have
	\begin{align*}
		\frac{2}{q}=\Xi(q,c(q))> \lim_{n\to\infty} \Xi(q,n)=\frac{S-1}{S}\frac{1}{q-1}\geq \frac{S-1}{S}\frac{1}{q-1}\frac{m(q)}{\ell(q)}= \Xi(q,0)> \Xi(q,n(q))=\frac{n'(q)}{n(q)}
	\end{align*}
	on $\frac{2S}{S+1}<q<q^*$. By continuity of $n(q)$ and $n'(q)$, the above inequality holds at $q=\frac{2S}{S+1}$ and $q=1$ as well.
	
	Finally, the case of $1\leq q_*<q_M<q^*$ under small transaction costs can be handled similarly using the fact that the solution $n(q)$ in such case is bounded below by $n^+_{(1)}(q)$.

	\item If $m(0) > 0$ and $\ell'(0) \geq 0$ then we are in Case b(i), so we only need to consider $m(0) > 0$ and $\ell'(0) < 0$. Since $\ell$ is concave we have $m(1)=\ell(1)<\ell(0)=m(0)$ so necessarily $q_M \in (0,1)$ and $[q_*,q^*] \subset (0,1)$. In what follows we only consider functions over $(0,1)$.
	
	Define $A(q) := qm(q)+(1-q)\ell(q)$. $A$ is chosen so that $A(z)=c(z)$ and $A'(z)=c'(z)$ for $z \in \{0,1\}$, and to be a simple function.
	It is easily seen that $A(q)>c(q)$. So, if we can show that $n$ is always above $A$ it will follow that $2n(q)-qn'(q)>0$ on $(q_*,q^*)$ as required.
	
	Note that $O(q,A(q)) = \frac{(S-1)}{S} \frac{A(q)}{q}$ so that if $n=A$ then $n'>A'$ is equivalent to $\frac{(S-1)}{S} \frac{A(q)}{q}> A'$. So, a sufficient condition for $n>A$ on $(0,1)$ is (here we use $m(q)-\ell(q) = Dq(1-q)$ where $D = \frac{(S-1)}{S} \frac{\sigma^2}{2}$)
	\begin{eqnarray*}
		\lefteqn{\frac{(S-1)}{S} \frac{A(q)}{q}  > \ell'(q) + (m(q)- \ell(q)) + q  (m'(q)- \ell'(q)) } \\
		& \Leftrightarrow & (S-1) ( \ell(q) + q(m(q)- \ell(q)) ) > Sq \ell'(q) + Sq(m(q)- \ell(q)) + S q^2  (m'(q)- \ell'(q)) \\
		& \Leftrightarrow & (S-1) \ell(q)  > Sq \ell'(q) + q(m(q)- \ell(q)) + S q^2  (m'(q)- \ell'(q)) \\
		& \Leftrightarrow & (S-1) \ell(0) + (S-1)q \ell'(0) +(S-1) \frac{q^2}{2} \ell''(0)  > Sq (\ell'(0) + q \ell''(0)) + Dq^2(1-q) + S D q^2(1-2q)  \\
		& \Leftrightarrow & (S-1) \ell(0)  > q \ell'(0) + \frac{q^2}{2} \ell''(0)(2S - (S-1)) + Dq^2((1-q) + S(1-2q)) \\
		& \Leftrightarrow & (S-1) \ell(0)  > q \ell'(0) + \frac{q^2}{2} \ell''(0)(S+1) + Dq^2((S+1)-q(1+2S)).
	\end{eqnarray*}	
	Thus we only need to establish $(S-1)\ell(0)>\sup_{q\in(0,1)} \varphi(q)$ where $\varphi(q)$ is a cubic function defined via
\[ \varphi(q):=q\ell'(0)+ \frac{\sigma^2}{2S}(S+1)(S-1)(2-R) q^2- \frac{\sigma^2}{2S} (S-1)(1+2S)q^3,
\] where we use that $\frac{1}{2} \ell''(0)+D = \frac{\sigma^2}{2S}(S-1)(2-R)$. Note that $\varphi(0)=0<(S-1)\ell(0)$ and $\varphi'(0)=\ell'(0)<0$. Hence we are done if the coefficient of $q^2$ is negative, which is equivalent to $R\geq 2$.

\end{enumerate}

\end{enumerate}

\end{proof}

\begin{proof}[Proof of Proposition~\ref{prop:CompstatR}]
We will show that $\hat{p}_*\leq \tilde{p}_*$ and  $\hat{p}^*\leq \tilde{p}^*$ under the stated conditions in Lemma \ref{lem:cq}. The last part of the proposition will then follow trivially.
	
In what follows, we write
\begin{align*}
m(q)&=m(q;R)=\frac{\delta-r(1-S)}{S}-\frac{1-S}{S}\lambda \sigma q+R\frac{1-S}{S}\frac{\sigma^2}{2}q^{2}=m(q;0)+R\frac{1-S}{S}\frac{\sigma^2}{2}q^{2}, \\
\ell(q)&=\ell(q;R)=\frac{\delta-r(1-S)}{S}-\frac{1-S}{S}\left(\lambda\sigma-\frac{\sigma^2}{2}\right)q-(1-R)\frac{1-S}{S}\frac{\sigma^2}{2}q^{2}=\ell(q;0)+R\frac{1-S}{S}\frac{\sigma^2}{2}q^{2},
\end{align*}	
where $m(q;0)$ and $\ell(q;0)$ do not depend on $R$. Note that $\ell(q)-m(q)=\ell(q;0)-m(q;0)$.

Let $f(q):=\frac{n(q)-m(q)}{\ell(q)-n(q)}$ where $n$ is a solution to $n'=O(q,n)$. Then
\begin{align*}
f'(q)&=\frac{(\ell(q)-n(q))(n'(q)-m'(q))-(n(q)-m(q))(\ell'(q)-n'(q))}{(\ell(q)-n(q))^2}\\
&=\frac{1+f(q)}{\ell(q)-m(q)}\left[\frac{S-1}{S(1-q)}f(q)(m(q)+\ell(q)f(q))-m'(q)-\ell'(q)f(q)\right]\\
&=\frac{1+f(q)}{\ell(q;0)-m(q;0)}\Biggl[\frac{S-1}{S(1-q)}f(q)\left(m(q;0)+R\frac{1-S}{S}\frac{\sigma^2}{2}q^{2}+\left(\ell(q;0)+R\frac{1-S}{S}\frac{\sigma^2}{2}q^{2}\right)f(q)\right)\\
&\qquad -m'(q;0)-R\frac{1-S}{S}\sigma^2 q-\left(\ell'(q;0)+R\frac{1-S}{S}\sigma^2 q\right)f(q)\Biggr]\\
&=\Psi(q,f(q);0)-\frac{R (1+f(q))^2}{S(1-q)^2}\left((1-S)qf(q)+2S(1-q)\right)=:\Psi(q,f(q);R)
\end{align*}
for some function $\Psi(q,f;R)$ with $\Psi(q,f;0)$ being independent of $R$. In particular, $\Psi$ depends on $R$ linearly and we have $$\frac{\partial}{\partial R}\Psi(q,f;R)=- \frac{(1+f)^2}{S(1-q)^2}((1-S)qf+2S(1-q)).$$

Along a solution trajectory of the free boundary value problem,
\begin{align*}
(1-S)qf(q)+2S(1-q)&=(1-S)q\frac{n(q)-m(q)}{\ell(q)-n(q)}+2S(1-q)\\
&=-Sq(1-q)\frac{O(q,n(q))}{n(q)}+2S(1-q)\\
&=\frac{S(1-q)}{n(q)}\left(2n(q)-qn'(q)\right).
\end{align*}
Then if any condition in Lemma \ref{lem:cq} holds such that $2n-qn'>0$, we have $\sgn\left(\frac{\partial}{\partial R}\Psi(q,f(q);R)\right)=\sgn(q-1)$.

Recall the definitions of $(n^{+}_{(z)}(q))_{q\in[z,\zeta^{+}(q)]}$, $\zeta^{+}(z)$ and $\Sigma^{+}(z)$ defined in Proposition \ref{prop:n first:R<1} or \ref{prop:n first:R>1}. Set $f^{+}_{(z)}(q):=\frac{n^+_{(z)}(q)-m(q)}{\ell(q)-n^+_{(z)}(q)}$. Note that $\sgn(f^+_{(z)}(q))=\sgn(1-q)$. By construction,
\begin{align*}
\frac{\dd}{\dd q}f^{+}_{(z)}(q)=\Psi(q,f^{+}_{(z)}(q);R),\qquad q\in[z,\zeta^+(z)],
\end{align*}
$f^{+}_{(z)}(z)=0=f^{+}_{(z)}(\zeta(z))$ and $$\Sigma^{+}(z)=\exp\left(\int_z^{\zeta^+(z)}\frac{f^+_{(z)}(q)}{q(1-q)}\dd q\right).$$
Fix $\tilde{R}< \hat{R}$, and we will use the corresponding overhead notations to denote the particular quantities evaluated under $R=\tilde{R}$ and $R=\hat{R}$ respectively similar to what we have done in the proof of Proposition \ref{prop:compstat_S}.

If $\tilde{R}<\hat{R}$ are such that $\tilde{q}_M\geq 1\geq \hat{q}_M$, then we obviously have $\hat{q}^*\leq 1\leq \tilde{q}^*$. It is therefore sufficient to only consider values of $\tilde{R},\hat{R}$ such that $\tilde{q}_M, \hat{q}_M<1$ or $\tilde{q}_M, \hat{q}_M>1$.

If $\tilde{q}_M,\hat{q}_M<1$, then the domain of a solution to the free boundary problem must be contained within $(0,1)$. Since $\Psi(q,f^+_{(z)}(q);R)$ is strictly decreasing in $R$ on $q\in(0,1)$ and $\hat{f}^{+}_{(z)}(z)=0=\tilde{f}^{+}_{(z)}(z)$ with
\begin{align}
(f^+_{(z)})'(z)=-\frac{m'(z;0)+R\frac{1-S}{S}\sigma^2 z}{\frac{1-S}{S}\frac{\sigma^2}{2}z(1-z)}
\label{eq:fz}
\end{align}
which is decreasing in $R$ such that $(\hat{f}^{+}_{(z)})'(z)<(\tilde{f}^{+}_{(z)})'(z)$ for $z\in(0,1)$, we must have $\hat{\zeta}(z)\leq \tilde{\zeta}(z)<1$ and $0\leq  \hat{f}_{(z)}(q)\leq \tilde{f}^+_{(z)}(q)$ for $q\in[z,\hat{\zeta}^+(z)]$. Then
\begin{align*}
\tilde{\Sigma}^+(z)&
= \exp\left(\int_{z}^{\tilde{\zeta}^+(z)}\frac{\tilde{f}^+_{(z)}(q)}{q(1-q)}\dd q\right)
\geq \exp\left(\int_{z}^{\hat{\zeta}^+(z)}\frac{\tilde{f}^+_{(z)}(q)}{q(1-q)}\dd q\right)
\geq \exp\left(\int_{z}^{\hat{\zeta}^+(z)}\frac{\hat{f}^+_{(z)}(q)}{q(1-q)}\dd q\right)
=\hat{\Sigma}^+(z)
\end{align*}
and then $\hat{q}_{*}=(\hat{\Sigma}^+)^{-1}(\xi)\leq (\tilde{\Sigma}^+)^{-1}(\xi)=\tilde{q}_{*}$.

Suppose $\tilde{q}_M, \hat{q}_M>1$. In case of $z> 1$ such that the domain of the solution lives on a subset of $(1,\infty)$, $\Psi(q,f^+_{(z)}(q);R)$ is now increasing in $R$ and  $(\hat{f}^{+}_{(z)})'(z)>(\tilde{f}^{+}_{(z)})'(z)$ for $z>1$ using \eqref{eq:fz}. Hence $\tilde{\zeta}(z)\geq \hat{\zeta}(z)> q_M$ and $0\geq \hat{f}^+_{(z)}(q)\geq \tilde{f}^+_{(z)}(q)$ for $1< z\leq q\leq \hat{\zeta}^+(z)$. Then we can deduce $\tilde{\Sigma}^+(z)\geq \hat{\Sigma}^+(z) $ on $z> 1$. The same conclusion can be extended to $z=1$ upon showing that $(\hat{f}^{+}_{(1)})'(1)>(\tilde{f}^{+}_{(1)})'(1)$. It is sufficient to verify that $$\lim_{q\to 1}\frac{f(q)}{q-1}=\frac{S}{1-S}\frac{m'(1)}{m(1)}=\frac{R\sigma^2-\lambda\sigma}{R\frac{1-S}{S}\frac{\sigma^2}{2}+\frac{\delta-r(1-S)}{S}-\frac{1-S}{S}\lambda\sigma}$$ is increasing in $R$. One can check that the first order derivative of this expression with respect to $R$ is proportional to
\begin{align*}
	\frac{\delta-r(1-S)}{S}-\frac{\lambda\sigma(1-S)}{2S}=m(0)+\frac{m'(0)}{2}=\ell(1)+\frac{1-S}{2S}R\sigma^2(q_M-1),
\end{align*}
which is strictly positive in both cases of $S<1$ (as $\ell(1)>0$ under well-posedness and we work with $q_M>1$) and $S>1$ (since $m'(0)>0$, and $m(0)>0$ as a a required condition).

Finally, for $\tilde{q}_M, \hat{q}_M>1$ and $z<1$ such that $\tilde{\zeta}^+(z)=\tilde{\zeta}^+(1)\geq\hat{\zeta}^+(1)=\hat{\zeta}^+(z)$, we have $\tilde{f}^+_{(z)}(q)\geq \hat{f}^+_{(z)}(q)\geq 0$ on $q\leq 1$, $\tilde{f}^+_{(z)}(1)= \hat{f}^+_{(z)}(1)=0$ and $0\geq \hat{f}^+_{(z)}(q)= \hat{f}^+_{(1)}(q)\geq\tilde{f}^+_{(1)}(q)= \tilde{f}^+_{(z)}(q)$ on $1\leq q\leq \hat{\zeta}(1)$. Then
\begin{align*}
\tilde{\Sigma}^+(z)
= \exp\left(\int_{z}^{\tilde{\zeta}^+(z)}\frac{\tilde{f}^+_{(z)}(q)}{q(1-q)}\dd q  \right)
&=\exp\left(\int_{z}^{1}\frac{\tilde{f}^+_{(z)}(q)}{q(1-q)}\dd q+\int_{1}^{\tilde{\zeta}^+(1)}\frac{\tilde{f}^+_{(z)}(q)}{q(1-q)}\dd q\right)\\
&\geq\exp\left(\int_{z}^{1}\frac{\hat{f}^+_{(z)}(q)}{q(1-q)}\dd q+\int_{1}^{\hat{\zeta}^+(1)}\frac{\hat{f}_{(z)}(q)}{q(1-q)}\dd q\right)
=\hat{\Sigma}^+(z).
\end{align*}
Hence $\tilde{\Sigma}^+(z)\geq \hat{\Sigma}^+(z) $ for all $z$ which establishes $\hat{q}_{*}\leq \tilde{q}_{*}$.

The result for the sale boundary can again be established by parametrising the solutions via their right boundary points.
\end{proof}

\small
\bibliography{SDU-TCfeb24}
\bibliographystyle{amsplain}

\end{document}